\newcommand{\gate}{g_}
\newcommand{\Reals}{\mathbb{R}}
\def\bbox{\ensuremath{\mathrm{bbox}}\xspace}
\newcommand{\WLMan}{\ensuremath{\mathrm{WD}_1}}
\newcommand{\WLEuc}{\ensuremath{\mathrm{WD}_2}}
\newcommand{\WLMax}{\ensuremath{\mathrm{WD}_{\infty}}}
\newcommand{\WBV}{\ensuremath{\mathrm{WBV}}}
\newcommand{\WBS}{\ensuremath{\mathrm{WBS}}}
\newcommand{\WS}{\ensuremath{\mathrm{WS}}}
\def\smallminus{\hbox{\footnotesize-}}
\newcommand\descr[1]{{\ensuremath{\thinmuskip=0mu\let\m\smallminus#1}}}
\def\fwd[#1]{\ensuremath{\left[\begin{smallmatrix}#1\end{smallmatrix}\right\}}}
\def\rev[#1]{\ensuremath{\left\{\begin{smallmatrix}#1\end{smallmatrix}\right]}}
\def\edge[#1]{\ensuremath{\begin{smallmatrix}#1\end{smallmatrix}}}
\def\vtx#1{\ensuremath{\left(\begin{smallmatrix}#1\end{smallmatrix}\right)}}
\def\xor<#1,#2>{\ensuremath{\frac{#1}{#2}}}
\newcommand\curvename[1]{\leavevmode\hbox{\normalfont\ttfamily #1}}
\newtheorem{lemma}{Lemma}
\newtheorem{corollary}{Corollary}
\newtheorem{theorem}{Theorem}
\newtheorem{finding}{Finding}
\newenvironment{proof}{Proof:}{\qed}
\def\squareforqed{\hbox{\rlap{$\sqcap$}$\sqcup$}}
\def\qed{\ifmmode\squareforqed\else{\unskip\nobreak\hfil
\penalty50\hskip1em\null\nobreak\hfil\squareforqed
\parfillskip=0pt\finalhyphendemerits=0\endgraf}\fi}
\newenvironment{howfound}{How found:}{\qed}
\title{How many three-dimensional Hilbert curves are there?}
\author{Herman Haverkort\\Eindhoven University of Technology}
\date{1 October 2016}
\begin{document}
\maketitle

\begin{abstract}
Hilbert's two-dimensional space-filling curve is appreciated for its good locality-preserving properties and easy implementation for many applications. However, Hilbert did not describe how to generalize his construction to higher dimensions. In fact, the number of ways in which this may be done ranges from zero to infinite, depending on what properties of the Hilbert curve one considers to be essential.

In this work we take the point of view that a Hilbert curve should at least be self-similar and traverse cubes octant by octant. We organize and explore the space of possible three-dimensional Hilbert curves and the potentially useful properties which they may have. We discuss a notation system that allows us to distinguish the curves from one another and enumerate them. This system has been implemented in a software prototype, available from the author's website.
Several examples of possible three-dimensional Hilbert curves are presented, including a curve that visits the points on most sides of the unit cube in the order of the two-dimensional Hilbert curve; curves of which not only the eight octants are similar to each other, but also the four quarters; a curve with excellent locality-preserving properties and endpoints that are not vertices of the cube; a curve in which all but two octants are each other's images with respect to reflections in axis-parallel planes; and curves that can be sketched on a grid without using vertical line segments. In addition, we discuss several four-dimensional Hilbert curves.
\end{abstract}

\section{Introduction}\label{sec:introduction}
\begin{figure}[b]
\centering
\includegraphics[width=\hsize]{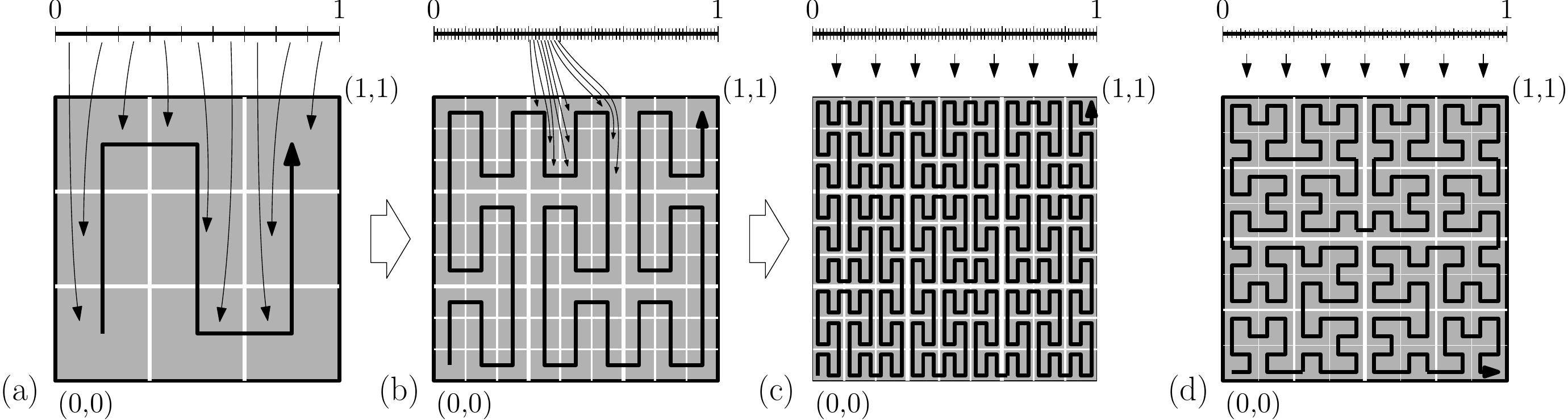}
\caption{(a--c) A sketch of Peano's space-filling curve. (d) A sketch of Hilbert's space-filling curve.}
\label{fig:peano2d}\label{fig:hilbert2d}
\end{figure}

A space-filling curve in $d$ dimensions is a continuous, surjective mapping from $\Reals$ to $\Reals^d$. In the late 19th century Peano~\cite{Peano} was the first to present such a mapping. It can be described as a recursive construction that maps the unit interval $[0,1]$ to the unit square $[0,1]^2$. The unit square is divided into a grid of $3 \times 3$ square cells, while the unit interval is subdivided into nine subintervals. Each subinterval is then matched to a cell; thus Peano's curve traverses the cells one by one in a particular order. The procedure is applied recursively to each subinterval-cell pair, so that within each cell, the curve makes a similar traversal (see Figure~\ref{fig:peano2d}a--c). By carefully reflecting and/or rotating the traversals within the cells, one can ensure that each cell's first subcell touches the previous cell's last subcell. The result is a fully-specified, continuous, surjective mapping from the unit interval to the unit square. This mapping can be extended to a mapping from $\Reals$ to $\Reals^2$ by inverting the recursion, recursively considering the unit interval and the unit square as a subinterval and a cell of a larger interval and a larger square.

In response to Peano's publication, Hilbert~\cite{Hilbert} sketched a space-filling curve based on subdividing a square into only four squares (Figure~\ref{fig:hilbert2d}d). Since then, quite a number of space-filling curves have appeared in the literature~\cite{boxquality,Sagan}, and space-filling curves have been applied in diverse areas such as indexing of multidimensional points~\cite{asano,Kamel,LawderKing,Liao}, load balancing in parallel computing~\cite{Bungartz,Harlacher}, improving cache utilization in computations on large matrices~\cite{matrixcomputations} or in image rendering~\cite{voorhies}, finite element methods~\cite{Bader}, image compression~\cite{imagecompression}, and combinatorial optimization~\cite{tsp}---to give only a few examples of applications and references. The function of the space-filling curve typically lies in providing a way to traverse points or cells of a square or a higher-dimensional space in such a way that consecutive elements in the traversal tend to lie very close to each other, and elements that lie very close to each other tend to be close to each other in the traversal order. In other words, the space-filling curve \emph{preserves locality}: this effect is captured by various metrics which we will discuss in Section~\ref{sec:localityproperties}.

For many applications, Hilbert's curve, rather than Peano's, appears to be the curve of choice, sometimes for its better locality-preserving properties (points close to each other along the curve tend to be close to each other in the plane)~\cite{voorhies}, but more commonly for the fact that Hilbert's curve is based on subdividing squares into only four subsquares. The latter property does not only make the Hilbert curve well suitable for the traversal of quadtrees and of grids whose width is a power of two, but it also matches very well with binary representations of coordinates of points. In particular, the cell in which a given point $p$ lies can be determined by inspecting the binary representations of the coordinates of $p$ bit by bit, and as a consequence, the order in which two points $p$ and $q$ appear along the curve can be determined without relatively time-consuming arithmetic such as divisions.

Peano's two-dimensional curve, based on a subdivision of a square into nine squares, generalizes in a natural way to a three-dimensional curve, based on a subdivision of a cube into 27 cubes (Peano also described this), or even a $d$-dimensional curve, based on a subdivision of a hypercube into $3^d$ hypercubes. However, generalization of the Hilbert curve to higher dimensions is not as straightforward---Hilbert's publication does not discuss it. Naturally, a generalization to three dimensions would be based on subdividing cubes into eight cells. The tricky part is how to choose the traversals within the cells, so that each cell's first subcell touches the previous cell's last subcell and continuity of the mapping is ensured.

Butz's solution to this problem~\cite{Butz} is fairly well-known, but many other solutions are possible. Documentation of existing applications or implementations of three-dimensional Hilbert curves is not always explicit about the fact that a particular, possibly arbitrary, curve was chosen out of many possible three-dimensional Hilbert curves. However, different curves have different properties: which three-dimensional Hilbert curve would constitute the best choice depends on what properties of a Hilbert curve are deemed essential and what qualities of the space-filling curve one would like to optimize for a given application. This gave rise to efforts to set up frameworks to describe such curves~\cite{Alber} and to analyse differences in their properties so that one can identify optimal curves for different applications~\cite{chochia,gotsman,niedermeier}, including recent efforts by my co-authors and myself~\cite{hyperorthogonal,jea}. However, the scope of these studies has been fairly limited, each of them considering only a subset of possible Hilbert curves and focusing on one particular quality to optimize.

\paragraph{Contents of this article}
In this work we dive into the question what defines a Hilbert curve. Different answers may unlock different worlds of three-dimensional space-filling curves. Is each of them as good as any other? What do the different curves have in common and what are their differences? Can we enumerate them within reasonable time to analyse their properties? Can we also answer these questions for higher-dimensional Hilbert curves?

\begin{figure}
\centering
\includegraphics[width=\hsize]{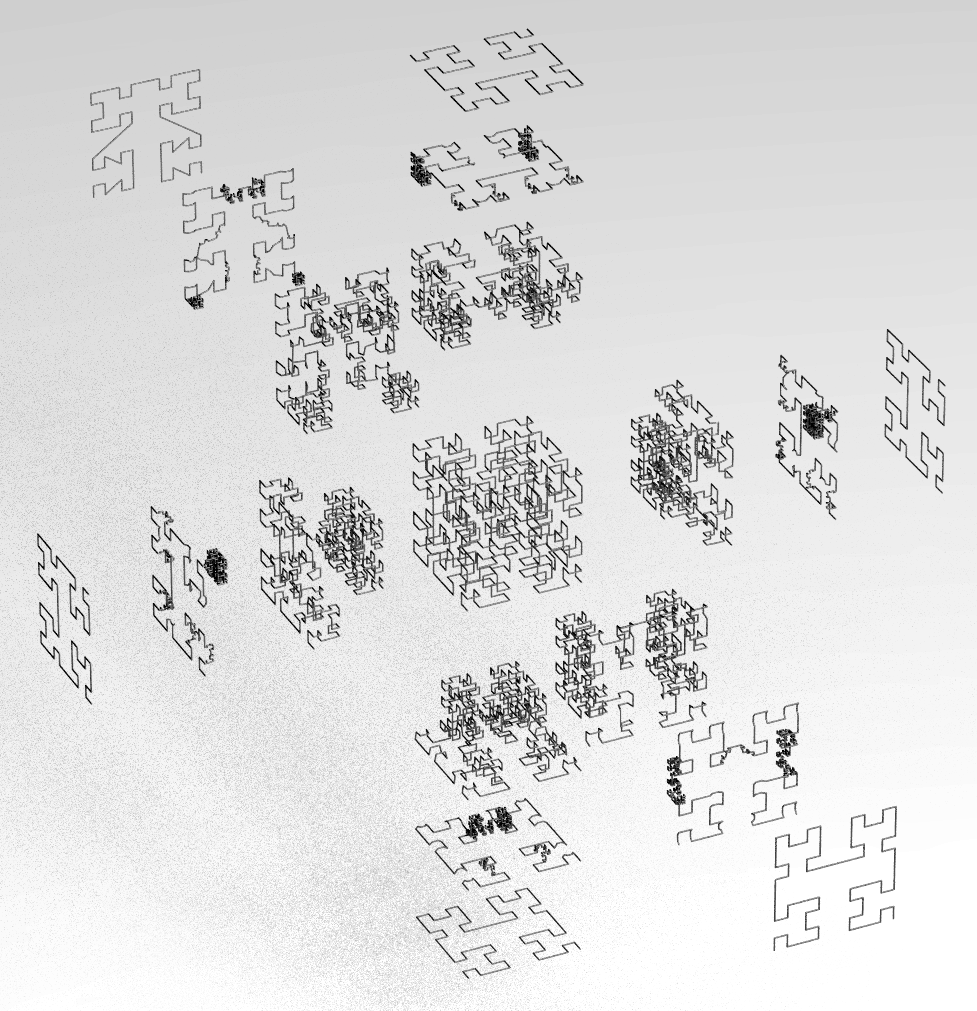}
\caption{The three-dimensional \emph{harmonious Hilbert curve}. In the centre of the figure, there is a (rather impenetrable) sketch of the order in which the curve traverses points in a $8 \times 8 \times 8$ grid. To the left we see what happens if one contracts the curve onto the left facet of the $8 \times 8 \times 8$ cube, maintaining only the points on that facet and skipping all other points of the cube. The result corresponds to a two-dimensional Hilbert curve. Similarly, we find two-dimensional Hilbert curves on the right, top, bottom and front facets---only on the back facet we find another pattern. The harmonious Hilbert curve is the only three-dimensional Hilbert curve that has two-dimensional Hilbert curves on five facets. Most three-dimensional Hilbert curves, including all those depicted in Figure~\ref{fig:fivecurves}, do not match a two-dimensional Hilbert curve on any facet.}
\label{fig:allharmonious}
\end{figure}

The goal of this work is to explore and organize the space of possible three-dimensional Hilbert curves and the properties which they may have, to find interesting three-dimensional space-filling curves, and to generate ideas for further generalization to four or more dimensions. Among the newly discovered curves in the present article are:\begin{itemize}
\item the three-dimensional \emph{harmonious Hilbert} curve (sketched in Figure~\ref{fig:allharmonious} and Figure~\ref{fig:edgecrossingcurves}a), which has unique the property that the points on five of the six two-dimensional facets of the unit cube are visited in the order of the two-dimensional Hilbert curve
    (in four dimensions we found such properties to be relevant to R-tree construction~\cite{jea});
\item a curve (sketched in Figure~\ref{fig:edgecrossingcurves}d) of which not only the eight octants are similar to each other, but also the four quarters and the two halves, and which minimizes the worst-case relative size of the boundary of any curve section (a quality measure relevant to load-balancing applications~\cite{hungershoefer});
\item a curve (sketched in Figure~\ref{fig:fivecurves}(centre) and Figure~\ref{fig:vertexedgegatedcurves}a) which, similar to the two-dimensional Hilbert curve, is only rotated in the first and the last octant, whereas the curve within each of the remaining octants is obtained from the complete curve by a combination of only scaling, translation, reversal and/or reflection in axis-parallel planes;
\item curves along which consecutive subcubes are never directly on top of each other (Figures \ref{fig:building}, \ref{fig:vertexedgegatedcurves}c and \ref{fig:vertexedgegatedcurves}d):
    if one sketches the curve by connecting the centre points of the cells in a regular grid in the order in which they are traversed by the curve, then there are no vertical edges.
\end{itemize}
Some more examples are shown in Figures \ref{fig:fivecurves}, \ref{fig:edgecrossingcurves}, \ref{fig:facetcrossingcurves}, \ref{fig:vertexedgegatedcurves}, \ref{fig:boringcurves}, and~\ref{fig:facetgatedcurve}.

\begin{figure}
\centering
\includegraphics[width=\hsize]{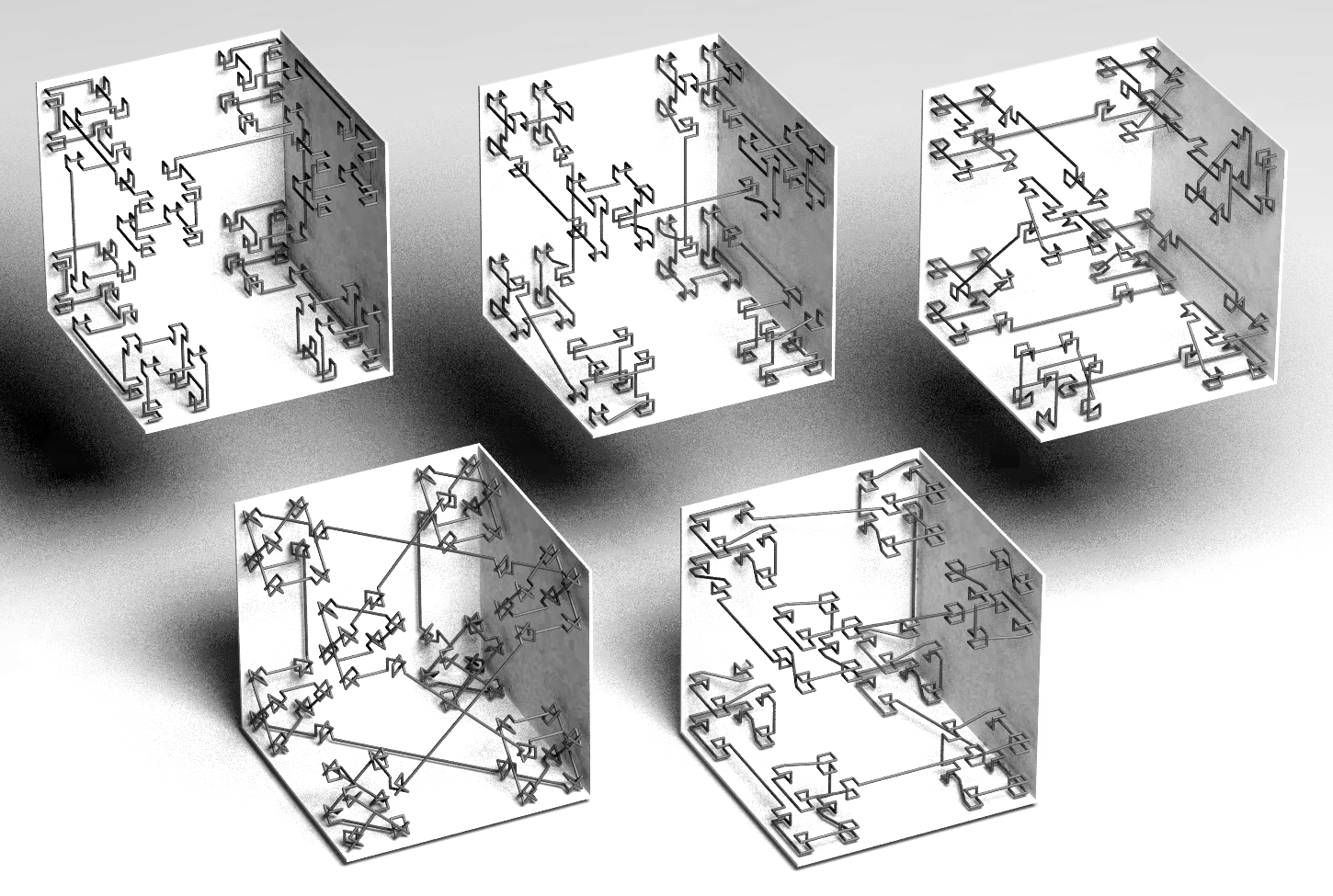}
\caption{Sketches of the order in which five different three-dimensional Hilbert curves traverse the points in a $8 \times 8 \times 8$ grid. Compared to Figure~\ref{fig:allharmonious}, the sketches have been made more legible by introducing extra spacing between the eight octants and between the eight suboctants within each octant. In clockwise order, starting at the top left: Butz's curve; the base-camp curve; a curve with many sections that fill four cubes in a row; a curve with a helix-shaped base pattern; and a curve with many self-intersections.}
\label{fig:fivecurves}
\end{figure}

Furthermore, this article sets up a notation and naming system that is compact, yet sufficiently powerful to distinguish between 10\,694\,807 different three-dimensional Hilbert curves (modulo rotation, reflection, translation, scaling and reversal), assigning a unique name to each such curve. The system comes with a prototype of a software tool that can enumerate the curves, or determine the name of a curve from the order in which it traverses the cubes in a grid. This may facilitate the automatic identification, verification and comparison of curves implemented in existing code, whose documentation does not always explicitly specify exactly what three-dimensional Hilbert curve is used, out of the many possible curves.

This article is structured as follows.

In Section~\ref{sec:notation}, I describe a notation system that allows us to describe Hilbert curves and discuss their properties. We discuss the characteristic properties of the two-dimensional Hilbert curve and their possible generalizations to higher dimensions in Section~\ref{sec:properties}. From the (generalized) properties of the two-dimensional Hilbert curve, we select some as defining properties for Hilbert curves in arbitrary dimensions. In support of this selection and as a warming-up for what follows, we prove that in two dimensions, the known Hilbert curve is the unique curve that has all of the defining properties (Section~\ref{sec:twodimensional}). At the heart of our proof is a case distinction by different possible locations of the end points of the curve. We find that in two dimensions, the only combination of end points that can be realized by a curve that has all of the selected properties consists of two vertices on the same edge of the square.

We then turn to exploring the space of three-dimensional Hilbert curves. A straightforward encoding of Hilbert curve descriptions in the notation presented in Section~\ref{sec:notation} does not allow us to enumerate such curves efficiently. To overcome this problem, we set up a framework for a more compact naming scheme for three-dimensional curves in Section~\ref{sec:scheme}, which will also make symmetries in curves easier to recognize. In Section~\ref{sec:inventory} we fill in the details, again making a case distinction by different possible locations of the end points of the curve. We prove that only a limited number of end points are possible, explain how to enumerate the names of the possible curves for each possible combination of end points, and show examples. Next we see how we can establish or verify the presence or absence of combinations of certain properties in curves in Section~\ref{sec:observations}, and I report on the locality-preserving properties of the curves. Section~\ref{sec:software} briefly describes a prototype of a software tool to enumerate, identify, analyse and sketch the curves.

Having established a way to explore and structure the space of three-dimensional Hilbert curves, we can now try to answer the title question of this article in Section~\ref{sec:howmanyin3D}: how many three-dimensional Hilbert curves are there? We discuss four-dimensional curves in Section~\ref{sec:howmanyin4D}, and conclude with a discussion of the implications of our findings and questions raised by them in Section~\ref{sec:evaluation}.

Illustrated examples of curves appear throughout this article. Appendix~\ref{apx:examples} gives the definitions and lists properties of all of these curves.

This article extends, improves and replaces most of my brief preliminary manuscript ``An inventory of three-dimensional Hilbert space-filling curves''~\cite{inventory}\footnote{However, the present article does not cover the previous manuscript entirely. The previous manuscript~\cite{inventory} focuses more on certain metrics of locality-preserving properties and includes some results on non-self-similar, ``poly-Hilbert'' curves that are not covered here.}.

\section{Defining self-similar traversals}\label{sec:notation}

\subsection{Defining self-similar traversals by figure}\label{sec:definitionbyfigure}
We can define a \emph{self-similar traversal} of points in a $d$-dimensional cube as follows. We consider the unit cube $C$ to be subdivided into $2^d$ subcubes of equal size. We specify a \emph{base pattern}: an order in which the traversal visits these subcubes. Let $C_1,...,C_{2^d}$ be the subcubes indexed by the order in which they are visited. Moreover, we specify, for each subcube $C_i$, a transformation $\sigma_i$ that maps the traversal of the cube as a whole to the traversal of $C_i$. More precisely, each $\sigma_i$ can be thought of as a triple $(\gamma_i, \rho_i, \chi_i)$, where $\gamma_i: C \rightarrow C$ is one of the $2^d d!$ symmetries of the unit cube, $\rho_i: C \rightarrow C_i$ translates the unit cube and scales it down to map it to $C_i$, and $\chi_i: [0,1] \rightarrow [0,1]$ is a function that specifies whether or not to reverse the direction of the traversal: it is defined by $\chi_i(t) = t$ for a forward traversal, and by $\chi_i(t) = 1-t$ for a reversed traversal.

When $d=2$ or $d=3$, it is feasible to give such a specification in a graphical form, as follows. We draw a cube, and indicate, by a thick arrow along the vertices of the cube, the order in which its vertices, and hence its $2^d$ \emph{first-level} subcubes $C_1,...,C_{2^d}$, are visited by the traversal. This is the \emph{first-order approximating curve} (see Figure~\ref{fig:graphicalnotation}a). In fact, we can omit the unit cube from the drawing, as it is implied by the arrow. Inside the cube, we draw the \emph{second-order approximating curve}: a polygonal curve that connects the centres of the $4^d$ \emph{second-level} subcubes of the unit cube in the order in which they are visited by the traversal (Figure~\ref{fig:graphicalnotation}b). Finally, we mark, with an open dot, the vertex that represents $C_1$, and the vertices that represent the corresponding second-level subcubes within their respective first-level subcubes. The arrow head on the first-order approximating curve is now redundant and can be removed (Figure~\ref{fig:graphicalnotation}c).

\begin{figure}
\centering
\includegraphics[width=\hsize]{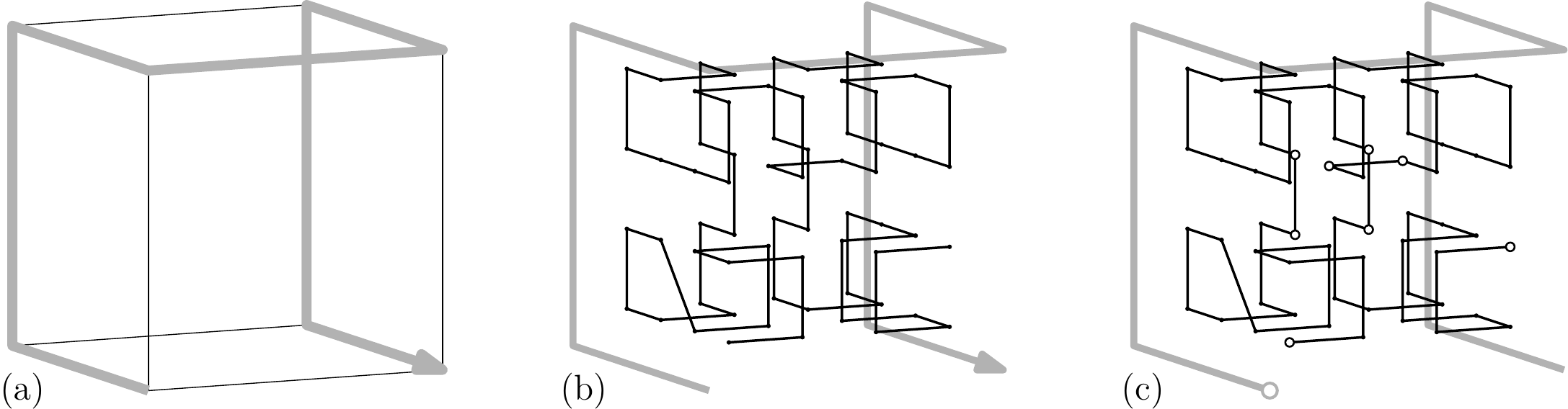}
\caption{Example of a graphical definition of a three-dimensional Hilbert curve. (a) First stage: the first-order approximating curve. (b) Second stage: the second-order approximating curve. (c) Third stage: marking the vertex representing the first first-level subcube and the vertices representing the corresponding second-level subcubes.}
\label{fig:graphicalnotation}
\end{figure}

Note how the open dots specify the direction functions $\chi_i$: if, within a given subcube $C_i$, the marked vertex is the first one visited by the second-order approximating curve, then $\chi_i(t) = t$; if the marked vertex is the last one visited by the second-order approximating curve, then $\chi_i(t) = 1 - t$. Given $\chi_i$, the transformations $\gamma_i$ and $\rho_i$ are implied by the shapes of the first- and second-order approximating curves: these curves show how the base pattern (and hence, the whole traversal) is rotated and/or reflected in each octant. If the first-order approximating curve is asymmetric (as in Figures \ref{fig:facetcrossingcurves}efh and~\ref{fig:vertexedgegatedcurves}e), the functions $\chi_i$ are implied by the drawing of the second-order curve even without the dots, but we draw the dots nevertheless for clarity. If the second-order approximating curve is symmetric (as in Figures \ref{fig:edgecrossingcurves}abdefh and \ref{fig:facetcrossingcurves}abcdg), the whole traversal is symmetric, and the dots are without effect---in this case we omit the dots to emphasize the symmetry. If the first-order approximating curve is symmetric but the second-order approximating curve is not (as in Figures \ref{fig:edgecrossingcurves}cg, \ref{fig:vertexedgegatedcurves}abcd, \ref{fig:boringcurves} and~\ref{fig:facetgatedcurve}) the dots are necessary for the unambiguous definition of a self-similar traversal: Figure~\ref{fig:needthedots} illustrates how moving a dot on the second-order approximating curve leads to differences in the third-order approximating curve.

\begin{figure}
\hbox to\hsize{%
\vbox{\hsize=0.2\hsize
\hbox{\rlap{(a)}\includegraphics[width=\hsize,page=1]{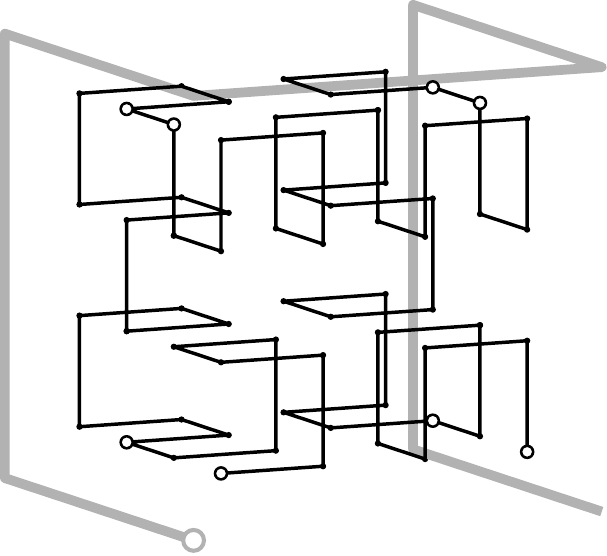}}
\hbox{\rlap{(b)}\includegraphics[width=\hsize,page=2]{alpha-and-wrong.pdf}}
}\hfill
\hbox to .35\hsize{\rlap{\includegraphics[width=0.35\hsize]{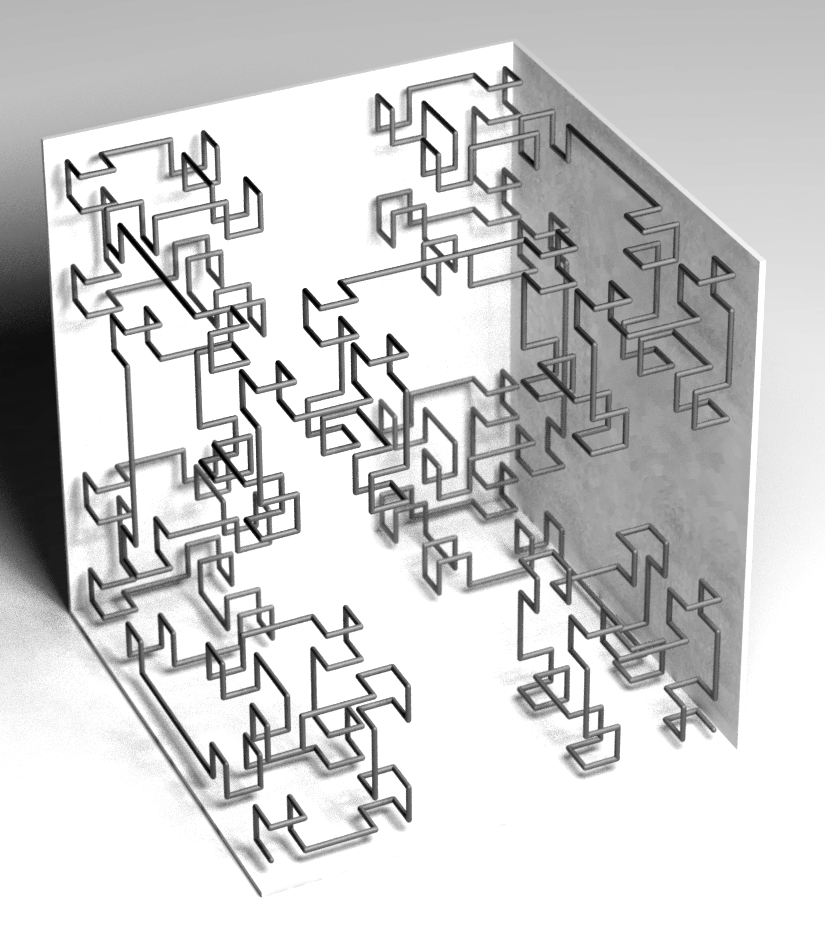}}(c)\hss}\hfill
\hbox to .35\hsize{\rlap{\includegraphics[width=0.35\hsize]{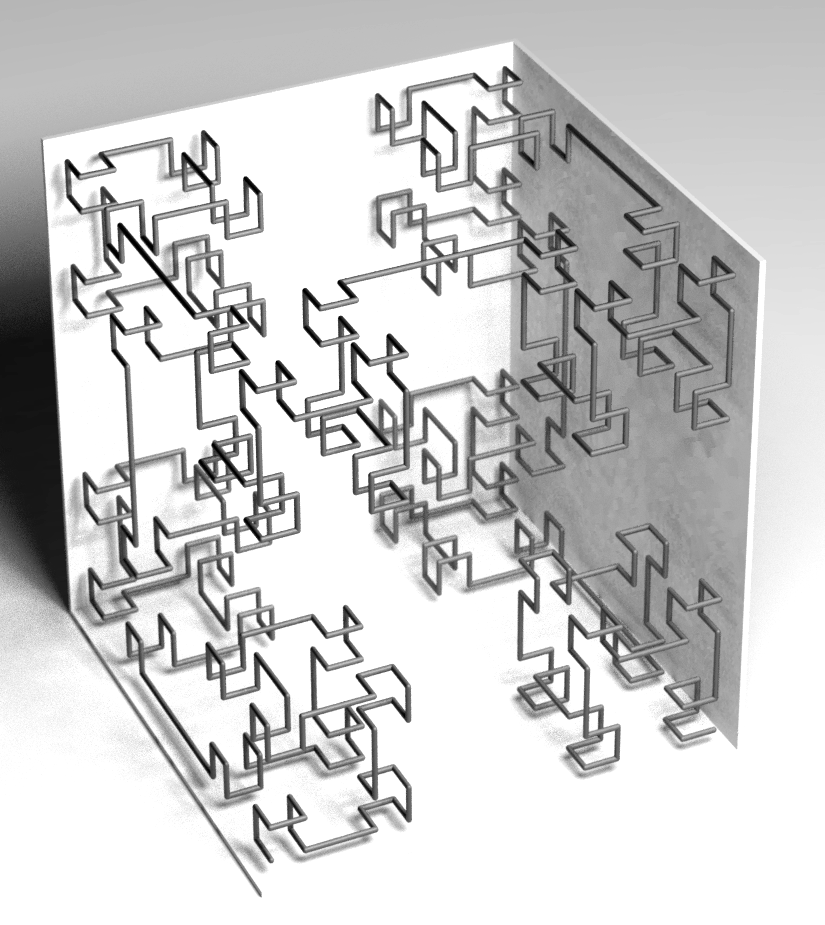}}(d)\hss}%
}
\caption{(a,b) Two subtly different definitions of three-dimensional Hilbert curves: the only difference is the location of the open dot in the last octant. (c,d) The corresponding third-order approximating curves, which differ in the last octant.}
\label{fig:needthedots}
\end{figure}

\subsection{Mapping the unit interval to the unit cube}\label{sec:mapping}
As illustrated in Figure~\ref{fig:hilbert2d}, we can think of a traversal as mapping segments of the unit interval to subcubes of the unit cube~$C$. For a given level of refinement $k$, consider the unit interval subdivided into $2^{kd}$ segments of equal length, and the unit cube subdivided into $2^{kd}$ subcubes of equal size. Let $s(i,k)$ be the $i$-th segment of the unit interval, that is, the interval $[(i-1)\cdot 2^{-kd}, i\cdot 2^{-kd}]$. Let $c(i,k)$ be the $i$-th subcube in the traversal. We can determine $c(i,k)$ from the transformations $\gamma, \rho$ and $\chi$ as follows. If $k = 0$, then $i$ must be 1 and $c(i,k) = C$. Otherwise, let $z = 2^{d(k-1)}$ be the number of subcubes within a first-level subcube, let $b = \lceil i/z\rceil$ be the index of the first-level subcube that contains $c(i,k)$, and let $j$ be the index of $c(i,k)$ within $C_b$. More precisely, if $\chi_b$ indicates a forward traversal of $C_b$, then $j = i - (b-1)z$, and if $\chi_b$ indicates a reverse traversal of $C_b$ then $j = bz - i + 1$. Then we have $c(i,k) = \rho_b(\gamma_b(c(j,k-1)))$, and the traversal maps the segment $s(i,k)$ to the cube $c(i,k)$.

As $k$ goes to infinity, the segments $s(i,k)$ and the cubes $c(i,k)$ shrink to points, and the traversal defines a mapping from points on the unit interval to points in the unit cube. By construction, the mapping is surjective. However, it may be ambiguous, as some points in the unit interval lie on the boundary between segments for any large enough $k$. We may break the ambiguity towards the left or towards the right, by considering segments to be relatively open on the left or on the right side, respectively. In the first case, for a given $k$, we consider a point $t$ on the unit interval to be part of the $i$-th interval with $i = \lceil 2^{kd}t\rceil$, and we define a mapping $\tau^{-}: (0,1] \rightarrow C$ to points in the unit cube by $\tau^{-}(t) = \lim_{k\rightarrow\infty} c(\lceil 2^{kd}t\rceil, k)$. In the second case, we consider $t$ to be part of the $i$-th interval with $i = \lfloor 2^{kd}t\rfloor + 1$, and we define a mapping $\tau^{+}: [0,1) \rightarrow C$ by $\tau^{+}(t) = \lim_{k\rightarrow\infty} c(\lfloor 2^{kd}t\rfloor + 1, k)$.

\subsection{Defining self-similar traversals by signed permutations}\label{sec:definitionbypermutations}
To define the mappings $\tau^{-}$ and $\tau^{+}$, all we need to do is to specify, for each $i \in \{1,...,2^d\}$, the transformation $\gamma_i$, the location of $C_i$ (or, to the same effect, $\rho_i$), and the orientation function $\chi_i$. This can be done in a graphical way, as explained above, but this approach is not suitable for automatic processing of traversals in software (or for four- and higher-dimensional traversals, for that matter). For that purpose, we need a numeric notation system. The numeric systems used in this article is based on ideas from Bos as incorporated in our work on hyperorthogonal well-folded Hilbert curves~\cite{hyperorthogonal}, adapted to suit the broader class of curves discussed in the present article. I will now explain this notation system.

We specify the base pattern by indicating, for each of the subcubes $C_i$ with $1 < i \leq 2^d$, where it lies relative to the previous subcube $C_{i-1}$. Let $c_i$ be the centre point of $C_i$; the position of $C_i$ relative to $C_{i-1}$ can then be expressed by the vector $v_i = c_i - c_{i-1}$. We use square brackets to index the elements of a vector, so $v_i$ is a column vector with elements $v_i[1], v_i[2], ... v_i[d]$. However, in our notation system, we specify $v_i$ in a more compact way, namely by a set of numbers $V_i \subset \{-1,...,-d\} \cup \{1,...,d\}$ such that $v_i[j] < 0$ if and only if $-j \in V_i$; $v_i[j] > 0$ if and only if $j \in V_i$; and $v_i[j] = 0$ if and only if $j, -j \notin V_i$. For an example, see Figure~\ref{fig:basepatternnotation}. Note how $V_i = \{ j \}$ can be interpreted as: move forward along the $j$-th coordinate axis to get from $C_{i-1}$ to $C_i$, while $V_i = \{ -j \}$ means: move back along the $j$-th coordinate axis, and $V_i = \{ j_1, j_2 \}$ indicates a diagonal move, simultaneously moving forward in coordinates $j_1$ and $j_2$.

\begin{figure}
\centering
\includegraphics{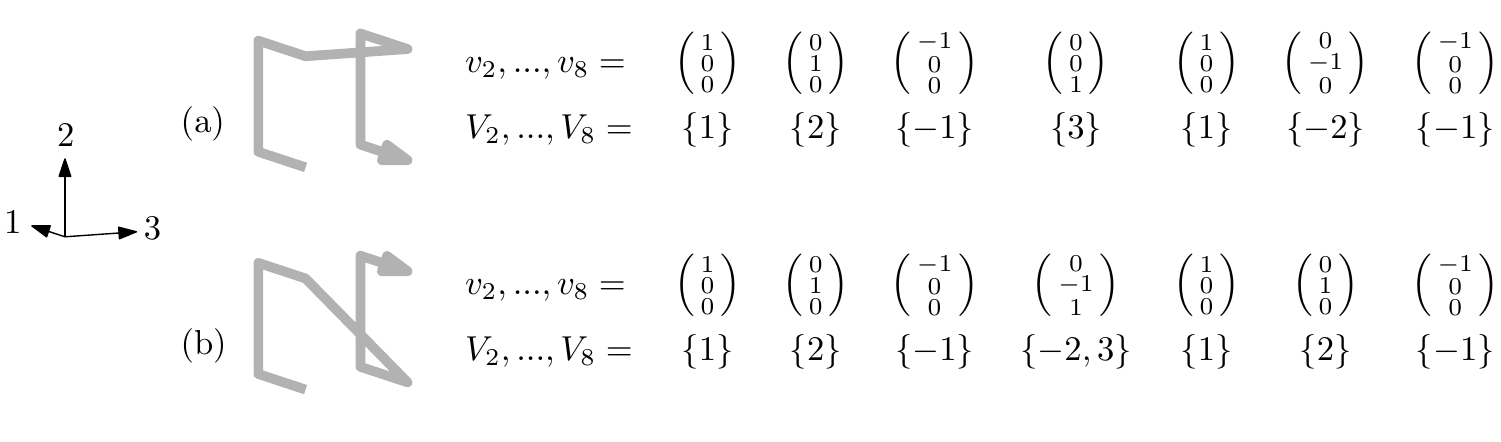}
\caption{Two examples of a base pattern, the corresponding vectors $v_2,...,v_8$ and ``moves'' $V_2,...,V_8$.}
\label{fig:basepatternnotation}
\end{figure}

Assume the unit cube is centered at the origin. Each transformation $\gamma_i: C \rightarrow C$ is a symmetry of the unit cube and can be interpreted as a matrix $M_i$ such that $\gamma_i(x) = M_i x$, where $x$ is a point given as a column vector of its coordinates. Each row and each column of $M_i$ contains exactly one non-zero entry, which is either $1$ or $-1$. We specify such a matrix by a signed permutation of row indices, that is, a sequence of numbers $\Pi_i = \pi_i[1],...,\pi_i[d]$ whose absolute values are a permutation of $\{1,...,d\}$ and which corresponds to the matrix in the following way: the non-zero entry of column $j$ is in row $|\pi_i[j]|$ and has the sign of $\pi_i[j]$. We write the sequence $\pi_i[1],...,\pi_i[d]$ between $[$ and $\}$ to specify a forward traversal ($\chi_i(t) = t$), whereas we write the sequence $\pi_i[1],...,\pi_i[d]$ between $\{$ and $]$ to specify a reverse traversal ($\chi_i(t) = 1-t$). For example, the traversal from Figure~\ref{fig:needthedots}ac has the following permutations, in order from $C_1$ to $C_8$:\[
[ 3, 2, 1 \}, [ 3, 1, 2 \}, \{ 3, 1, -2 ], [ -2, -1, 3 \}, \{ -2, -1, -3 ], [ -3, 1, -2 \}, \{ -3, 1, 2 ], \{ 2, -3, 1 ].
\]
Note how our notation facilitates mapping the base pattern to the order in which the suboctants of $C_i$ are visited. For example, if $j$ is positive, a move $\{ j \}$, forward along the $j$-th coordinates axis, translates to a move $\{ \pi_i[j] \}$ within $C_i$. If we define $\pi_i[-j] = -\pi_i[j]$, then the translation also works for negative values of $j$.

A complete self-similar traversal order is now specified by listing the signed and directed permutations $\Pi_1,...,\Pi_{2^d}$, with, between each pair of consecutive permutations $\Pi_{i-1}$ and $\Pi_i$, the set $V_i$ that gives the location of $C_i$ relative to $C_{i-1}$. Depending on lay-out requirements, we may omit commas and/or we may write the numbers of a set $V_i$ or a signed permutation $\Pi_i$ below each other instead of from left to right; we will also omit braces around $V_i$. Thus we get the following description of the traversal from Figure~\ref{fig:needthedots}ac:\[
\descr{\fwd[3\\2\\1]\edge[1]\fwd[3\\1\\2]\edge[2]\fwd[3\\1\\\m2]\edge[\m1]\fwd[\m2\\\m1\\3]%
\edge[3]%
\rev[\m2\\\m1\\\m3]\edge[1]\rev[\m3\\1\\\m2]\edge[\m2]\rev[\m3\\1\\2]\edge[\m1]\rev[2\\\m3\\1]}.
\]
Note that we do not specify the location of $C_1$ explicitly, but it can be derived from the sets $V_2,...,V_{2^d}$: $C_1$ is on the low side with respect to coordinate $j$ if and only if $j$ appears in any set $V_i$ before $-j$ does, that is, if there is a set $V_i$ such that $-j \notin V_2,...,V_i$ and $j \in V_i$.

\subsection{Self-similar space-filling curves}\label{sec:selfsimilarcurves}
If a traversal has the property that consecutive segments of the unit interval are always matched to subcubes that touch each other, then, as $k$ increases, the up to two subcubes corresponding to the segments that share a point $t \in [0,1]$ must shrink to the same point $p \in C$. For $t \in (0,1)$, we thus have $\tau^{-}(t) = \tau^{+}(t)$. Moreover, the functions $\tau^{-}$ and $\tau^{+}$ are continuous. The traversal thus follows a \emph{space-filling curve} $\tau: [0,1] \rightarrow [0,1]^d$ given by $\tau(0) = \tau^{+}(0)$, $\tau(t) = \tau^{-}(t) = \tau^{+}(t)$ for $0 < t < 1$, and $\tau(1) = \tau^{-}(1)$. By construction, this curve is self-similar: for each $i \in \{1,...,2^d\}$ and $t \in [0,1]$ we have $\tau((i-1+t)\cdot 2^{-d}) = \rho_i(\gamma_i(\tau(\chi_i(t))))$. Moreover, the mapping is measure-preserving: for any set of points $S \subset [0,1]$ with one-dimensional Lebesgue measure $z$, the image $\bigcup_{x\in S} \tau(x)$ of $S$ has two-dimensional Lebesgue measure $z$.

Recall that in our graphical notation, we used the first-order and the second-order approximating curve. In general, we define the $k$-th-order approximating curve $A_k$ of a space-filling curve $\tau$ as the polygonal curve that connects the centre points of the $2^{kd}$ subcubes in a regular grid in the order in which they appear along $\tau$. In fact, the space-filling curve $\tau$ is equal to the limit of $A_k$ as $k$ goes to infinity. The first-order approximating curve $A_1$ is easy to draw, given a description of the curve in our numerical notation: the sets $V_i$ explicitly specify the directions of the successive edges of $A_1$ (see Figure~\ref{fig:basepatternnotation}). The $2^d-1$ edges of $A_2$ within any octant $C_i$ are also easy to draw, as their directions are obtained by applying the signed permutation $\Pi_i$ to the sets $V_2,...,V_{2^d}$. Note, however, that in $A_2$, the edges \emph{between} the octants do not necessarily have the same directions as in $A_1$. For example, axis-parallel edges may become diagonal, as in Figure~\ref{fig:graphicalnotation}b. Therefore one cannot obtain the edges of $A_2$ by taking the sequence of alternating permutations and edges of $A_1$ that define the curve and merely substituting transformations of $A_1$ for the permutations. This is in contrast to the properties of approximating curves in our work on hyperorthogonal well-folded curves~\cite{hyperorthogonal}, where the specific properties of the curves under study ensured that edges keep their orientation from one approximating curve to the next.

\subsection{Variations}\label{sec:othertraversals}
Traversals can also be defined based on other shapes than squares or cubes, or based on subdivision into fewer or more than $2^d$ parts. Examples in two dimensions include the triangle-based Sierpi\'nski curve~\cite{Sagan}, the fractal-based Gosper flowsnake curve~\cite{flowsnake}, and the nine-part Peano curve~\cite{Peano}. Such traversals are beyond the scope of this article, although our notation system is powerful enough to describe some of them (see Figure~\ref{fig:trianglecurves} for some examples).

\begin{figure}
\centering
\hbox to\hsize{%
\rlap{(a)}\vbox{\hsize=0.3\hsize
\includegraphics[width=0.8\hsize]{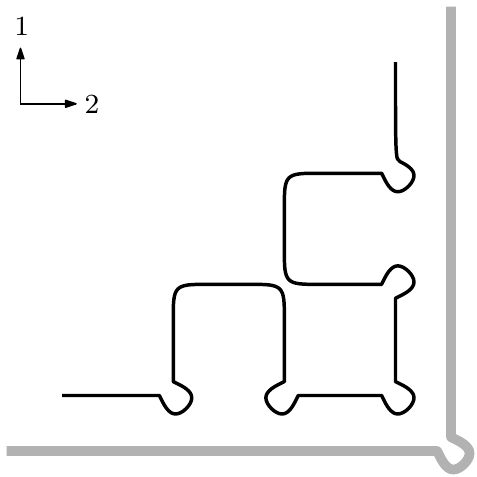}\\
$\descr{\fwd[1\\2]\edge[2]\rev[1\\\m2]\fwd[\m2\\1]\edge[1]\rev[\m2\\\m1]}$
}\hfill
\rlap{(b)}\vbox{\hsize=0.65\hsize
\includegraphics[width=0.9\hsize]{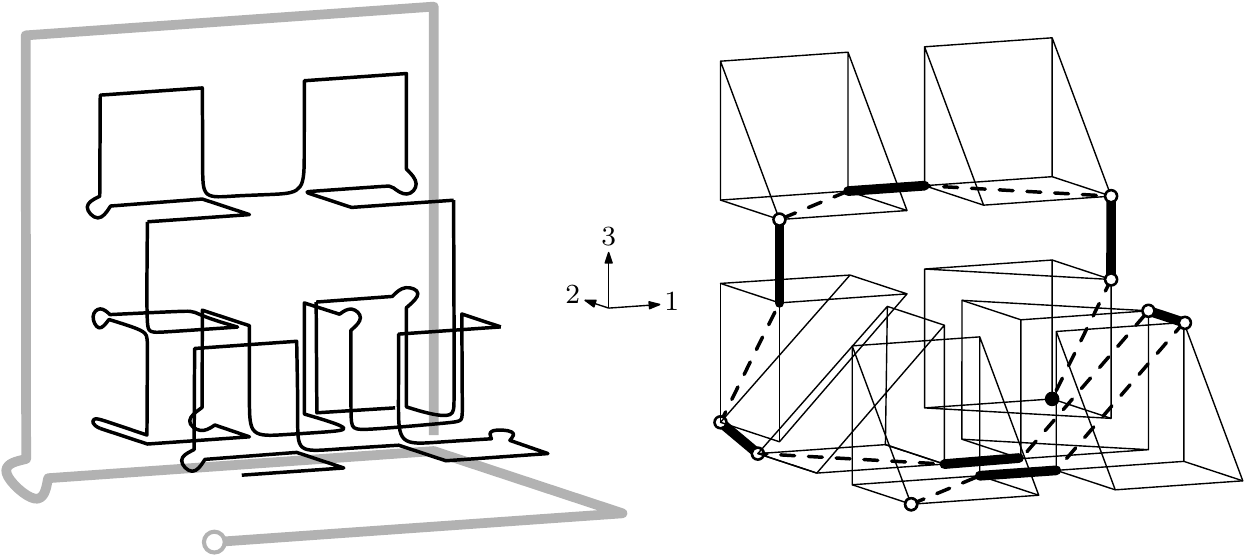}\\
$\descr{\fwd[1\\2\\3]\edge[1]\rev[\m1\\\m3\\\m2]\edge[2]\fwd[\m3\\\m1\\2]\edge[\m1]\rev[\m2\\1\\3]\fwd[\m2\\3\\1]\edge[3]\fwd[1\\2\\3]\edge[1]\rev[\m1\\2\\3]\edge[\m3]\fwd[\m3\\2\\\m1]}$
}}
\caption{(a) Definition of the Sierpi\'nski curve, which fills an isosceles right triangle. Note how the approximating curves visit vertices multiple times, since each visit corresponds to filling only half of the corresponding square in the underlying grid. In numeric notation, staying at a vertex for a second visit is indicated by an empty move between two signed permutations.
(b) Definition of a novel curve that fills the extrusion of an isosceles right triangle. It is the only eight parts' self-similar face-continuous (see Section~\ref{sec:generalproperties}) curve that fills this shape. Here, too, the approximating curves visit vertices multiple times. The diagram on the right, included for completeness, is explained in Section~\ref{sec:inventory}.}
\label{fig:trianglecurves}
\end{figure}

Non-self-similar traversals may be constructed from a set of multiple traversals in which each subcube is traversed by a scaled-down, rotated, reflected and/or reversed copy of a traversal from the given set; examples in two dimensions include Wierum's $\beta$- and $\Omega$-curves~\cite{betaomega} and the $AR^2W^2$-curve~\cite{asano}, which is constructed from a set of four curves~\cite{arrwwid,boxquality}. Non-self-similar traversals may also be constructed by concatenating rotated, reflected and/or reversed copies of a self-similar traversal: an example is Moore's cyclic variation of the Hilbert curve~\cite{moore}. The graphical and numerical notation systems described above suffice for self-similar traversals, but would have to be extended or adapted to be able to describe non-self-similar traversals. For the graphical notation, I describe such extensions in other work~\cite{inventory,arrwwid,boxquality}; I omit such extensions here, because in the next section, we will restrict the scope of this article to self-similar curves.

\section{Properties of Hilbert curves}\label{sec:properties}

\subsection{Essential properties of three-dimensional Hilbert curves}\label{sec:essentialproperties}

Within this publication, we restrict the discussion to traversals $\tau$ that are:\begin{itemize}
\item \emph{octant-based}: each of the $2^d$ subcubes of the unit cube is the image under $\tau$ of a consecutive interval within $[0,1]$;
\item \emph{self-similar}: the traversal $\tau$ restricted to any of the $2^d$ subcubes can be obtained by a linear transformation from the complete traversal $\tau$, as described in the previous section;
\item \emph{continuous}: this implies that if we consider a regular grid of $2^{kd}$ subcubes of the unit cube in the order which they are traversed by $\tau$, for any integer $k$, then consecutive subcubes in the traversal always touch each other.
\end{itemize}
In Section~\ref{sec:twodimensional} we will see that in two dimensions, these three properties constitute a minimal set of properties that uniquely defines the two-dimensional Hilbert curve. Therefore, one could say that any three-dimensional curve that fulfills these properties must be a three-dimensional Hilbert curve. This is indeed the approach which we will take in this article: we will call the properties of being octant-based, self-similar and continuous the three essential properties of Hilbert curves, and henceforth, we will consider a traversal to be a \emph{Hilbert curve} if and only if it has these three properties. This choice is justified in more detail in Section~\ref{sec:justification}.

The two-dimensional Hilbert curve also has other interesting, non-defining properties, which we might want to see in three-dimensional curves as well, for example to meet requirements of applications, to facilitate generalizations to even more dimensions, or simply to avoid confusion. Unfortunately, we can always think of a combination of properties of the two-dimensional curve that cannot be realized in three dimensions. Without the context of a particular application, we cannot decide a priori which properties to prefer at the expense of others.
Therefore, in this article, I will regard all additional properties to be optional. In Section~\ref{sec:optionalproperties} below we discuss a number of such properties and how to generalize them to three or more dimensions.

\subsection{Optional properties of three-dimensional Hilbert curves}\label{sec:optionalproperties}

Below is a list of non-defining properties of the two-dimensional Hilbert curve, stated in a dimension-independent way. The listed properties may be useful in higher dimensions as well. In Sections \ref{sec:inventory} and~\ref{sec:observations} we discuss what three-dimensional Hilbert curves have some of these properties.

\subsubsection{General properties}\label{sec:generalproperties}

\paragraph{Face-continuity.}
We say a space-filling curve is \emph{face-continuous}\footnote{Bader~\cite{Bader} uses the term \emph{face-connected}. I prefer \emph{face-continuous} because I find \emph{face-connected} easy to confuse with my definition of \emph{facet-gated} (see Section~\ref{sec:gateproperties}).} if, for any section of the curve, the interior of the region filled by that section is connected. In other words, for any $0 \leq a < b \leq 1$, the interior of the set $\bigcup_{t=a}^b \tau(t)$ must be connected. Concretely, for the case of $d$-dimensional Hilbert curves, this means that, at any level of recursion, cubes that are consecutive along the curve must share a $(d-1)$-dimensional face (hence the name), or equivalently, all edges of the approximating curves $A_1,A_2,...$, as defined in Section~\ref{sec:notation}, are axis-parallel.
Face-continuity thus generalizes the property of two-dimensional Hilbert curves that consecutive squares always share an edge.

Face-continuity may be considered instrumental in achieving good locality-preserving properties---see Section~\ref{sec:localityproperties}. However, requiring face-continuity also severely restricts the combinatorial possibilities for assembling a cube-filling curve from similar curves in each of eight octants. Under certain circumstances, better properties might be achieved by trading face-continuity for combinatorial flexibility.

\paragraph{Hyperorthogonality.}
Recall that a $d$-dimensional Hilbert curve can be described by a series of approximating polygonal curves $A_k$, whose edges connect the centres of consecutive cubes along the curve in a grid of $2^{dk}$ subcubes of the unit cube. We can identify the \emph{unsigned orientation} of an edge or a line $e$ by an unordered pair of antipodal points on the unit sphere, such that $e$ is parallel to the line through these points. We say that a $d$-dimensional Hilbert curve is \emph{hyperorthogonal} if and only if, for all positive integers $k$ and for all $n \in \{0,...,d-2\}$, the unsigned orientations of each sequence of $2^n$ consecutive edges of $A_k$ are those of exactly $n+1$ different axes of the Cartesian coordinate system~\cite{hyperorthogonal}.

Hyperorthogonality can be understood as a stronger (more restrictive) generalization of the two-dimensional Hilbert curve's property that consecutive squares always share an edge. This property of the two-dimensional curve can also be phrased as: each edge between the centres of consecutive squares must be parallel to an axis of the coordinate system. This is exactly what hyperorthogonality requires in the case $n = 0$, and this case is what hyperorthogonality boils down to if $d=2$. In three dimensions, hyperorthogonality requires the same (and thus, face-continuity), and adds the case $n = 1$: any pair of consecutive edges of an approximating curve must be orthogonal to each other. As we will see in Section~\ref{sec:observationslocality}, hyperorthogonal three-dimensional Hilbert curves have good locality-preserving properties, and Bos and I found that, for a certain metric of locality-preservation, this generalizes to higher dimensions~\cite{hyperorthogonal}.

\paragraph{Symmetry.}
A traversal order $\tau$ is \emph{symmetric} if there is an isometric transformation $\gamma$ such that $\tau^{+}(t) = \gamma(\tau^{-}(1-t))$ for all $t \in [0,1)$, and $\tau^{-}(t) = \gamma(\tau^{+}(1-t))$ for all $t \in (0,1]$. For a continuous traversal order, this is equivalent to $\tau(t) = \gamma(\tau(1-t))$ for all $t \in [0,1]$, and hence, $\tau(t) = \gamma(\tau(1-t)) = \gamma(\gamma(\tau(t))$. Thus, the curve $\tau$ is equal to its own reverse under the transformation $\gamma$, which must, in general, be a rotary reflection that is its own inverse.
Symmetry can have advantages for the implementation of efficient algorithms operating on the curve, since it allows the algorithm designer to choose between geometric transformations or reversing the direction, whatever is easiest to implement.

\paragraph{Metasymmetry.}
We say a traversal is \emph{metasymmetric} if there is a (not necessarily symmetric) linear transformation that maps the first half of the curve to the second half, and each half is metasymmetric itself. The property of being metasymmetric can be understood as a stronger (more restrictive) generalization of the two-dimensional Hilbert curve's symmetry and self-similarity: symmetry implies that sections of the curve of length $1/2$ are similar to each other; self-similarity implies that sections of length $1/2^d$ are similar to each other; metasymmetry requires for all positive integers $n$ that sections of the curve of length $1/2^n$ are similar to each other. Note, however, that, in deviation from the definition of plain symmetry, we do not require the similarities to be captured by symmetric transformations, that is, transformations that are their own inverse. Neither the two-dimensional Hilbert curve, nor any three-dimensional Hilbert curve, would fulfill a stronger definition of metasymmetry that requires each half of the curve to be fully symmetric in itself, that is, consisting of two quarters that can be mapped onto each other by a transformation that is its own inverse.

\paragraph{Palindromy.}
Consider an octant-wise traversal of the cube, and an \emph{interior facet}, that is, a facet $F$ between two octants $C_i$ and $C_j$, where $i < j$. For any $k \geq 1$, define $K = 4^{k-1}$ and consider $F$ subdivided into a regular grid of $K$ squares. Let $F_{i,1},...,F_{i,K}$ be these squares in the order in which the traversal visits the adjacent subcubes of $C_i$, and let $F_{j,1},...,F_{j,K}$ be the same squares in the order in which the traversal visits the adjacent subcubes of $C_j$. We say a traversal is \emph{facet-palindromic} if, for each interior facet $F$ between two octants $C_i$ and $C_j$ (note that there are twelve such facets), and for each level $k$, we have $F_{i,t} = F_{j,K+1-t}$. In other words, for any interior facet $F$, the order in which $F$ is traversed the second time around (during the traversal of $C_j$) is exactly the opposite of the order in which $F$ is traversed the first time around (during the traversal of $C_i$).

Palindromy is a property that allows simple and elegant implementations of finite element methods that use only stacks for storage of intermediate results---the so-called stack-and-stream method~\cite{Bader}. The two-dimensional Hilbert curve is facet-palindromic (with respect to the four edges between the quadrants). A three-dimensional facet-palindromic octant-wise continuous traversal is not known. When we consider the second-order approximating curves of the Hilbert curves in Figure~\ref{fig:imposters}, these curves appear to be facet-palindromic.\footnote{Thus these curves demonstrate that Bader's arguments (\cite{Bader}, p229) for the non-existence of palindromic three-dimensional Hilbert curves are inconclusive with respect to the definition of palindromy used here.} Unfortunately, the third-order approximating curves show violations of palindromy.

\begin{figure}[b]
\centering
\includegraphics[width=\hsize]{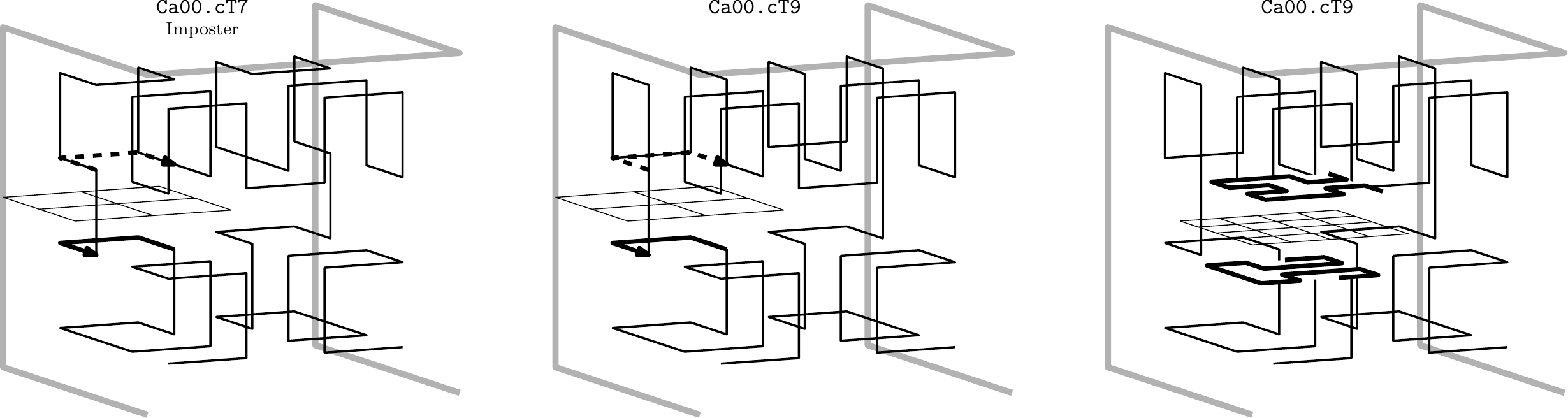}
\caption{Two curves, which we will later learn to identify as \curvename{Ca00.cT7} and \curvename{Ca00.cT9}, that seem palindromic at first sight. For example, consider the interior facet shared by the second and the third octant, as indicated in the figures. We see that the second time, the order in which we visit the four subsquares of this facet (dashed arrow) is exactly the opposite of the order in which we visit those subsquares the first time around (solid arrow). The reader may verify that also on the other eleven interior facets between the octants, the four quadrants are visited in the exact opposite order the second time around. However, if we expand the recursion and consider the subdivision of facets into sixteen squares, we find that the traversal orders from below and from above do not match on the facet between the first and the fourth octant. The right figure illustrates this for \curvename{Ca00.cT9}; for the other curve, \curvename{Ca00.cT7}, the situation is similar.}
\label{fig:imposters}
\end{figure}

\paragraph{Maximum facet-harmony.}
We say a $d$-dimensional traversal $\tau$ \emph{harmonizes} with an $n$-dimensional traversal $\tau'$ on a given $n$-dimensional subset $F$ of the unit cube, if $\tau$ restricted to the points of $F$ constitutes an isometric copy of $\tau'$. On all one-dimensional faces (edges) of the square, the two-dimensional Hilbert curve harmonizes with the unique and trivial one-dimensional Hilbert curve: the one-dimensional Hilbert curve traverses a line segment from one end to the other, and the two-dimensional curve visits the points on each edge of the square in order from one vertex to the other. Unfortunately, no three-dimensional Hilbert curve can harmonize with the two-dimensional Hilbert curve on each side of the cube (for a proof, see Appendix~\ref{apx:nofullharmony}), but it is possible to get five sides (and all edges) right, as we see in Figure~\ref{fig:allharmonious}. Therefore we say that a three-dimensional Hilbert curve has \emph{maximum facet-harmony} if it harmonizes with the two-dimensional Hilbert curve on five sides.

Note that harmony cannot be verified by only looking at the order in which the second-level subcubes are traversed and this may sometimes be misleading: one needs to make sure that the $(d-1)$-dimensional Hilbert order on the facets is maintained also when the grid is refined recursively. For example, Figure~\ref{fig:imposter-harmony} shows a three-dimensional Hilbert curve whose second-order approximating curve matches the two-dimensional Hilbert curve on five sides, but in recursion, harmony with the two-dimensional Hilbert curve is maintained on only on one of these sides.

\begin{figure}
\centering
\includegraphics[width=.9\hsize]{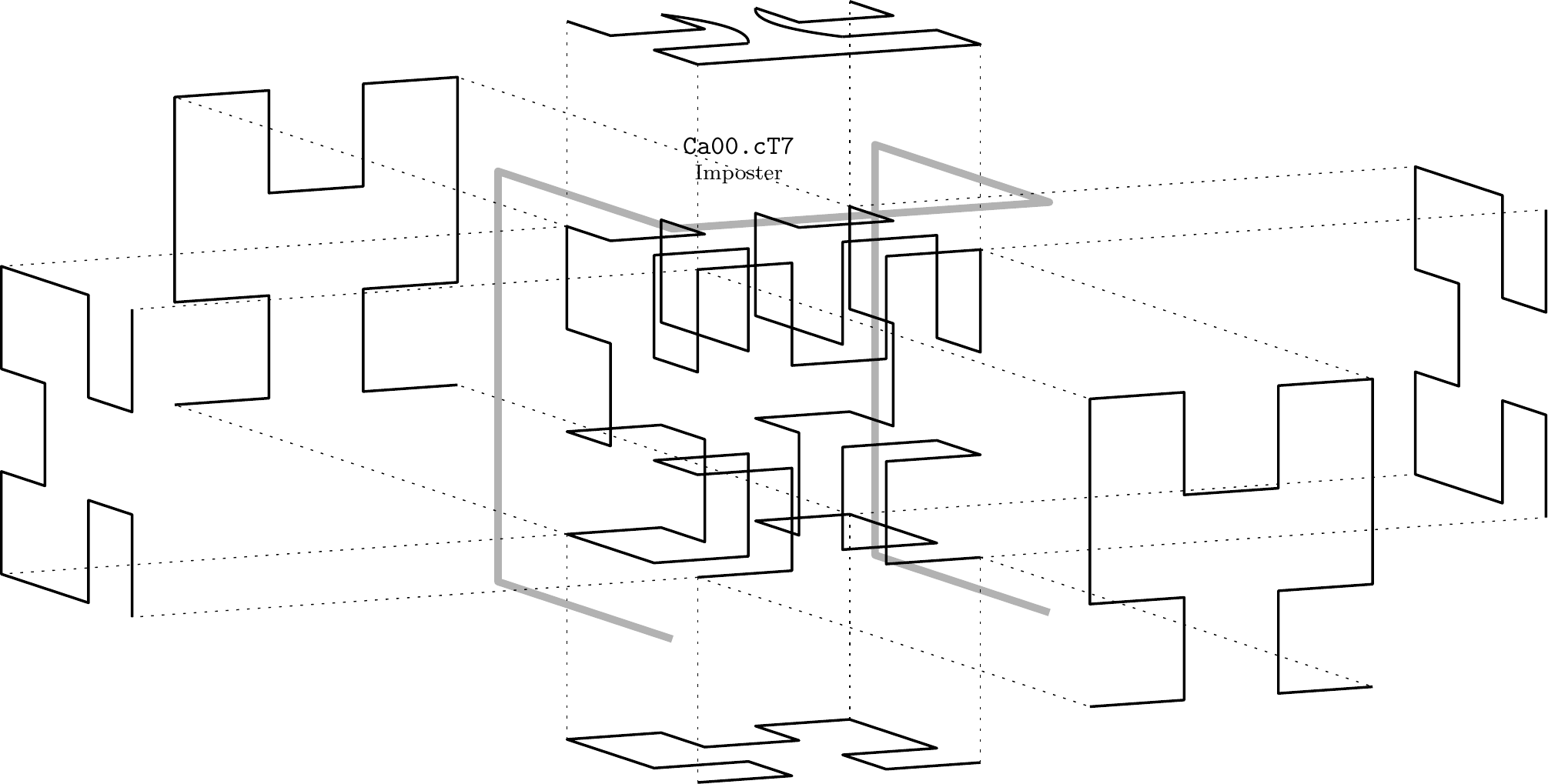}
\caption{This figure shows the second-order approximating curve of the three-dimensional Hilbert curve which we will later learn to identify as \curvename{Ca00.cT7}. The curve seems to harmonize with the two-dimensional Hilbert curve on five sides: all but the top facet. However, the fourth quadrant of the left facet is an image of the broken top facet, and thus, the third-order approximating curve of the left facet will not match the two-dimensional Hilbert curve. Similarly, in the third-order approximating curve, the harmony with the right facet is broken, and in the fourth-order approximating curve, the front and bottom facets will not match the two-dimensional Hilbert curve anymore either. Only on the back facet of the cube, this ``Imposter'' curve will actually harmonize with the two-dimensional Hilbert curve.}
\label{fig:imposter-harmony}
\end{figure}

Interest in harmonization properties arose from an application to the construction of R-trees, where it was desirable to use a traversal of the four-dimensional cube that, for points on a certain two-dimensional face of the cube, would degenerate to a two-dimensional Hilbert curve~\cite{jea}.

\paragraph{Full interior-diagonal harmony.}
A $d$-dimensional Hilbert curve has full interior-diagonal harmony if it harmonizes with the trivial one-dimensional Hilbert curve on all $2^{d-1}$ interior diagonals. Specifically, a three-dimensional Hilbert curve $\tau$ has full interior-diagonal harmony if, for each of the four interior diagonals, $\tau$ visits the points on the diagonal in order from one end to the other.

\paragraph{Well-foldedness.}
Let $G(d)$ denote the $d$-dimensional well-folded approximating curve, defined as follows: $G(0)$ is a single vertex, and $G(d)$, for $d > 0$, is the concatenation of $G(d-1)$, an edge in the direction of the $d$-th coordinate axis, and the reverse of $G(d-1)$. For example, $G(3)$ is the curve shown in Figure~\ref{fig:basepatternnotation}a. A Hilbert curve is \emph{well-folded}~\cite{hyperorthogonal} if its first-order approximating curve is $G(d)$ (modulo rotation, reflection and/or reversal). Note that the successive orientations of the edges in $G(d)$ indicate exactly which bits change when proceeding from one number to the next in the $d$ bits' binary reflected Gray code.

The regular structure of $G(d)$ provides a good basis for defining a family of Hilbert curves for any number of dimensions. Moreover, it can be instrumental in efficient computations with the curve.

One way to exploit well-foldedness is in the computation of an inverse of $\tau$, as demonstrated before by Bos and myself~\cite{hyperorthogonal}. An inverse of $\tau$ is a mapping $\tau^{-1}: [0,1]^d \rightarrow [0,1]$ such that $\tau(\tau^{-1}(x)) = x$). Such a mapping can be used to order points along the curve. To compute the order, one can maintain an interval $T$ for any point $p$ such that $\tau^{-1}(p) \in T$. Initially, one sets $T$ equal to $[0,1]$. Well-foldedness makes it possible to narrow down $T$ in steps: each step inspects only one bit of one coordinate of $p$ and then halves the size of $T$. To determine the order in which different points appear along the curve, one narrows down their corresponding intervals just enough so that they become disjoint and their order can be determined.

Another way to exploit well-foldedness is demonstrated by Lawder's algorithm~\cite{Lawder} to compute $\tau(t)$ for a given $t$ and vice versa, when $\tau$ is Butz's $d$-dimensional Hilbert curve. Lawder's algorithm exploits the properties of the binary reflected Gray code when using bitwise exclusive-or operations to translate between one-dimensional and $d$-dimensional coordinates in binary representation.

\subsubsection{Properties regarding specific points}\label{sec:gateproperties}

\paragraph{Being vertex-gated.}
For a traversal $\tau$, we call $\tau^{+}(0)$ and $\tau^{-}(1)$ the \emph{entrance gate} and the \emph{exit gate} of the traversal. A gate is a \emph{vertex gate}, an \emph{edge gate}, or a \emph{facet gate}, respectively, if, among all faces of the unit cube, the lowest-dimensional face that contains the gate is a vertex, an edge, or a $(d-1)$-dimensional facet. The two-dimensional Hilbert curve is \emph{vertex-gated}: both of its gates are vertex gates. An edge-gated variant has appeared in the literature and was found to have better locality-preserving properties according to some metrics, but that curve is not self-similar~\cite{boxquality,hungershoefer,betaomega,yoon}. In three dimensions, we may consider the possibilities of \emph{vertex-gated, edge-gated, and facet-gated} curves (where both gates are vertex gates, edge gates, or facet gates, respectively), and \emph{vertex-edge-gated, vertex-facet-gated, and edge-facet-gated} curves (where the two gates have the two different types mentioned). I am not aware of any advantages of disadvantages of specific gate types for any practical purpose, but, as we will see later, case distinctions by gate type will be very useful in analysing what three-dimensional Hilbert curves exist and what other properties they have.

\paragraph{Being edge-crossing.}
We say a traversal is \emph{edge-, facet-, or cube-crossing}, respectively, if, among all faces of the unit cube, the lowest-dimensional face that contains \emph{both} gates is an edge, a $(d-1)$-dimensional facet, or the full cube, respectively. Similar to gate types, the ``crossing type'' may not be interesting by itself, but distinctions by crossing type will be instrumental in obtaining the results in this article.

\paragraph{Being centred.}
We say a curve $\tau$ is \emph{centred} if $\tau(1/2)$, the point half-way along the curve, is the centre of the $d$-dimensional cube. 

\subsubsection{Properties of the transformations within the octants}\label{sec:octantproperties}

\paragraph{Preserving order.}
We say a self-similar traversal is \emph{order-preserving} if it can be defined without reversals, that is, $\chi_i(t) = t$ for all $i \in \{1,...,2^d\}$.

Order-preserving curves are arguably less complicated to understand and use (but not necessarily more efficient) than curves that contain reversals. Existing literature on space-filling curves tends to allow (use) or disallow reversal without discussing it. Alber and Niedermeier only considered order-preserving curves in their work on higher-dimensional Hilbert curves~\cite{Alber}. Asano et al.~\cite{asano} and Wierum~\cite{betaomega} implicitly used reversal in the description of their (non-self-similar) two-dimensional quadrant-based curves.

Note that if a traversal is symmetric, the reversed curve cannot be distinguished from a suitably rotated and/or reflected, non-reversed copy. Therefore one can choose to define the transformations in the octants with only the symmetries of the cube and no reversals. Thus, symmetric traversals are always order-preserving.

\paragraph{Isotropy.}
We say a face-continuous Hilbert curve, that is, a Hilbert curve whose approximating curves $A_k$ have only axis-parallel edges, is \emph{edge-isotropic} if, in the limit as $k$ goes to infinity, there is an equal number of edges of $A_k$ parallel to each axis~\cite{jaffer}.
We say a, not necessarily face-continuous, traversal is \emph{pattern-isotropic} if, in the limit as $k$ goes to infinity, each transformation of the base pattern, modulo reversal, occurs equally often among the transformations in the $2^{kd}$ subcubes of the unit cube. (Clearly, for face-continuous curves, pattern-isotropy implies edge-isotropy.)

Note that we do not take the direction in which the pattern is traversed into account. For edge-isotropy, this would not make a difference: as $k$ goes to infinity, the net amount of travel in the direction of each axis, relative to the total length of $A_k$, approaches zero; therefore, parallel to each axis, there must be an equal number of edges in each direction. For pattern-isotropy, if we would take the direction into account, the two-dimensional Hilbert curve would not qualify. For example, the two-dimensional Hilbert curve traverses some squares from the bottom left to the top left corner, but never from the top left corner to the bottom left corner.

Isotropy, like fairness~\cite{mokbel}, may be instrumental in ensuring that the performance of applications that order objects along a space-filling curve does not depend on the orientation of patterns in the data, since an isotropic or fair space-filling curve does not favour any particular orientation. Moon et al.~\cite{moon} proved that the two-dimensional Hilbert curve, along with certain generalizations to higher dimensions, is edge-isotropic.

Note that we have not defined edge-isotropy for non-face-continuous curves. This would require dealing with a number of subtleties\footnote{In general, edges in approximating curves can have 13 different (unsigned) orientations: 3 orientations parallel to the coordinate axes; 6 orientations parallel to facet diagonals; and 4 orientations parallel to interior diagonals. What conditions would we impose on the relations between the frequency of edges in each orientation? One solution could be to consider the three groups of edges separately, depending on whether edges are parallel to edges, facet diagonals or interior diagonals of the unit cube. Taking the direction of the traversal into account can now make a difference. Furthermore, as observed in Section~\ref{sec:selfsimilarcurves}, edges may change orientation from one level or refinement to the next.} and it is not a priori clear what it is the most meaningful way to do so.

\paragraph{Shifting coordinates.}
We call a signed permutation $\Pi_i$ that encodes a transformation $\gamma_i$ a \emph{shift} if the permutation, without the signs, is either the identity permutation or a rotation in the permutation-sense of the word. In other words, $P_i$ is a shift if and only if, for all $j \in \{1,...,d\}$, we have $|\pi_i[j]| = |\pi_i[d]| + j \pmod d$. We say a self-similar traversal is coordinate-shifting if it can be defined in such a way that, for all $i \in \{1,...,2^d\}$, the signed permutation $\Pi_i$ that defines $\gamma_i$ is a shift.

Implementations of higher-dimensional Hilbert curves, such as Butz's~\cite{Butz,Lawder,MooreCode}, often exploit this property to avoid having to code for arbitrary permutations of the coordinates.

\paragraph{Standing}
We say a self-similar traversal is \emph{standing} if it can be defined in such a way that, for fixed $m, n \in \{1,...,d\}$ and for all $i \in \{1,...,2^d\}$, the signed permutation that encodes the transformations $\gamma_i$, without the signs, is either the identity permutation or swaps only the $m$-th and the $n$-th coordinate. Note that in two dimensions, any traversal is, trivially, both coordinate-shifting and standing, but in three or more dimensions, these two properties are mutually exclusive.

Similar to coordinate-shifting traversals, standing traversals may be easier to employ efficiently because an implementation does not need to be capable of handling all $d!$ possible permutations of the coordinate axes. The term ``standing'' derives from the fact that such curves can be drawn in a way that keeps the third coordinate vertical.\footnote{Thus, in the approximating curves, similarities between sections remain recognizable if edges in the horizontal $mn$-plane are drawn in a different style as compared to edges that travel in the third dimension, for example, gangways versus stairs, as in Figure~\ref{fig:building}.}

\begin{figure}
\centering
\includegraphics[width=\hsize]{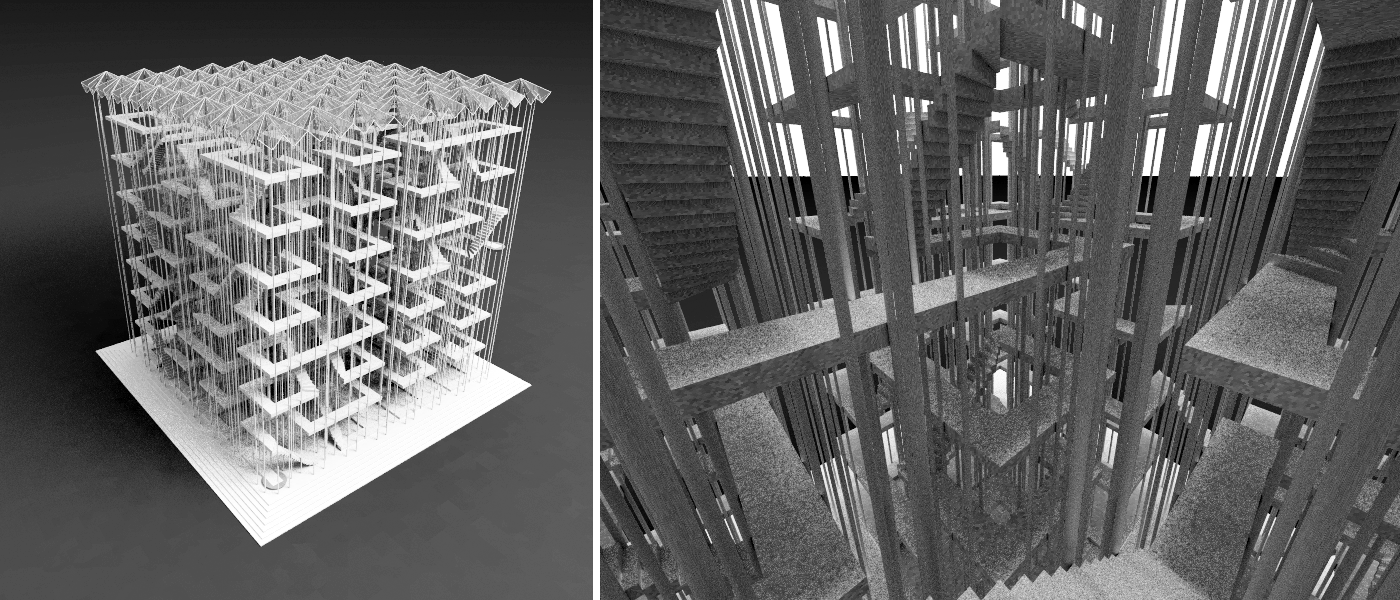}
\caption{An impression of a standing three-dimensional Hilbert curve (which we will later identify as \curvename{Cd00.ct.7h}) with the edges of the third-order approximating curve drawn as gangways and stairs. Left: view from the outside. Right: view inside. Colour pictures and more examples are available from the author's website at http://spacefillingcurves.net/.}
\label{fig:building}
\end{figure}

\subsection{Locality-preserving properties}\label{sec:localityproperties}
\label{sec:measures}

The space-filling curves discussed in this article are, by construction, \emph{measure-preserving}: the $d$-dimensional volume of the image of an interval $[a,b]$ under a traversal $\tau$ is equal to the length of the interval, that is, $b-a$.
Such space-filling curves tend to have \emph{locality-preserving} properties: points that are close to each other along the traversal, that is, in the domain of $\tau$, tend to be close to each other in $d$-dimensional space, that is, in the image of $\tau$, and vice versa. Many authors have worked on quantifying the locality-preserving properties of space-filling curves in general, and the Hilbert curve and its generalizations to higher dimensions in particular.

More specifically, some authors have studied bounds on the (worst-case or average) distance between two points in $d$-dimensional space as a function of their distance along the curve~\cite{chochia,faloutsos,gotsman,niedermeier,niedermeier-manhattan}. These studies have been motivated by, among others, applications to load balancing in parallel computing. Other metrics consider the shapes of curve sections: I and other researchers have tried to calculate bounds on the (worst-case or average) perimeter, diameter, or bounding-box size of sections of the curve as a function of the volume of the curve section~\cite{boxquality,hungershoefer,betaomega}, again motivated by applications to load balancing or to the organization of spatial data in external memory.

To define such metrics more precisely, we need the following definitions. Given two points $p$ and $q$ in the unit cube, let $\delta_i(p,q)$ be the $L_i$-distance between $p$ and $q$. Given a Hilbert curve $\tau$ and two points $a$ and $b$ in the unit interval, let $C(a,b) = \bigcup_{t=a}^b \tau(t)$ be the set of points that appear on the curve between $\tau(a)$ and $\tau(b)$. Given a set $S$ of $d$-dimensional points, let $\mathrm{vol}(S)$, $\mathrm{diam}_i(S)$, $\bbox(S)$, $\mathrm{bball}_i(S)$, and $\mathrm{surface}(S)$ be the volume, $L_i$-diameter, the minimum axis-parallel bounding box, the minimum bounding $L_i$-ball, and the $(d-1)$-dimensional measure of the boundary of the set $S$, respectively. We can now define the following quality measures of a $d$-dimensional space-filling curve, where in each case, the maximum is taken over all pairs $a,b \in [0,1]$ with $a \leq b$:\begin{itemize}
\item \emph{$L_i$-dilation} or $\mathrm{WL}_i$ (for $i \in \{1,2,\infty\}$): the maximum of $\delta_i(\tau(a),\tau(b))^d/(b-a)$;
\item \emph{$L_i$-diameter ratio} or $\mathrm{WD}_i$ (for $i \in \{1,2,\infty\}$): the maximum of $\mathrm{diam}_i(C(a,b))^d/(b-a)$;
\item \emph{$L_i$-bounding ball ratio} or $\mathrm{WBB}_i$ (for $i \in \{1,2,\infty\}$): the maximum of $\mathrm{vol}(\mathrm{bball}_i(C(a,b)))/(b-a)$;
\item \emph{surface ratio} or \WS: the maximum of $(\mathrm{surface}(C(a,b))/2d)^{d/(d-1)}/(b-a)$;
\item \emph{bounding-box volume ratio} or \WBV: the maximum of $\mathrm{vol}(\mathrm{bbox}(C(a,b)))/(b-a)$;
\item \emph{bounding-box surface ratio} or \WBS: the maximum of $(\mathrm{surface}(\mathrm{bbox}(C(a,b)))/2d)^{d/(d-1)}/(b-a)$.
\end{itemize}
In fact, the $L_i$-dilation and the $L_i$-diameter ratio of a space-filling curve are equal for any~$i$, and
the $L_\infty$-diameter ratio and the $L_\infty$-bounding ball ratio are always equal as well (for proofs, see Appendix~\ref{apx:boundingball}). I conjecture that the same holds for the $L_2$-diameter ratio and the $L_2$-bounding ball ratio, but I can prove this only for two-dimensional space-filling curves (see Appendix~\ref{apx:boundingball}) and I have not found a proof for three-dimensional space-filling curves.

In a previous publication on two-dimensional space-filling curves we described algorithms to compute bounds on $\mathrm{WL}_i$, \WBV, and \WBS\ for any given curve~\cite{boxquality}. We have also implemented higher-dimensional versions of these algorithms, including an algorithm to compute \WS~\cite{sasburg}, and used these algorithms to analyse the curves discussed in the next sections of this article. I will present the results in Section~\ref{sec:observationslocality}. Note, however, that it is not really clear how meaningful differences between curves on metrics of locality-preservation are, as the metrics tend to be the result of formalizing a much simplified account of what may be relevant for applications. Moreover, in practice, metrics that consider averages rather than worst cases may be more relevant, but average-case metrics are non-trivial to define~\cite{boxquality} and tend to be much more difficult to compute efficiently and accurately for large numbers of curves~\cite{sasburg}. Nevertheless, if we can establish that a possible three-dimensional Hilbert curve is particularly good (or bad) according to some metric of locality-preservation, then, it is, of course, an interesting curve to study: we may want to inspect such curves to see what qualitative properties of their structure cause it to perform so well (or badly) according to these metrics.

Other types of locality-preservation metrics studied in the literature include bounds on the average distance between points along the curve as a function of their distance in $d$-dimensional space~\cite{fishburn,mitchison,wierumpathlength,yoon}. However, non-trivial worst-case bounds are not possible in this case: there will always be pairs of points that are very close to each other in $d$-dimensional space but very far apart along the curve~\cite{gotsman}. Mokbel et al.\ define metrics that capture to what extent a traversal differs from sorting points in ascending order by one coordinate, and how these differences are distributed over the $d$ coordinates~\cite{mokbel}. One may also consider the number of contiguous sections of the curve that are needed to adequately cover any given query window in the unit cube~\cite{asano,arrwwid,moon,xu}. As I established through Observation~3 and Theorem~9 in my previous work on this topic~\cite{arrwwid}, if the query window is a cube, seven or eight sections of any three-dimensional Hilbert curve are sufficient and in the worst-case necessary for an approximate cover. An exact cover requires an unbounded number of curve sections in the worst case, unless one assumes the query range to be aligned with the grid of $2^{kd}$ subcubes at a particular depth~$k$~\cite{moon,xu}. Either way, it is questionable whether these worst-case metrics of cover quality capture the differences between the curves within the scope of this article well. Attempts at average-case analysis~\cite{arrwwid,moon,xu} suggest that what really matters are the orientations of the edges of the approximating curves: axis-parallel edges, modelling face-continuous curves, are good; curves with diagonal edges may be less good.

\subsection{Justification of the essential properties}\label{sec:justification}
In this section I will further justify the choice of octant-based self-similarity as the property that distinguishes three-dimensional Hilbert curves from other space-filling curves. In other words, this section is about why I use the label ``Hilbert curve'' in the way I do. The reader who is convinced already that the octant-based self-similar space-filling curves are a category of space-filling curves worth studying and who does not care too much about what to call them, may prefer to skip this section.

I considered three ways of generalizing the definition of Hilbert's space-filling curve to three dimensions: (i) face-continuous octant-based space-filling curves; (ii) vertex-gated, face-continuous, octant-based space-filling curves; (iii) self-similar octant-based space-filling curves.

\paragraph{(i) Face-continuous octant-based space-filling curves}
In the article in which Hilbert presents his continuous traversal of a square, Hilbert describes it as following a recursive subdivision into quadrants, and writes that each square along the curve should share an edge with the previous square\footnote{``die Reihenfolge der Quadrate [ist] so zu w\"ahlen [], dass jedes folgende Quadrat sich mit einer Seite an das vorhergehende anlehnt.''~\cite{Hilbert}}. Face-continuity is a possible generalization of the latter condition to higher dimensions. However, note that it is not enough to unambiguously define the two-dimensional Hilbert curve as we know it. If all we require is that the curve be face-continuous and quadrant-based, then, in every refinement step, we can choose \emph{any} of the subsquares of the starting square from the previous level as our new starting square. Wierum's $\beta\Omega$-curve~\cite{betaomega} would qualify as a Hilbert curve, along with an infinite number of other curves, in two-dimensional space already.

\paragraph{(ii) Vertex-gated, face-continuous, octant-based space-filling curves}
To disambiguate the definition of Hilbert's two-dimensional curve, we could add the condition that the curve be vertex-gated. Thus, the two-dimensional Hilbert curve is uniquely defined (see Theorem~\ref{thm:uniqueqfcvg} in Section~\ref{sec:twodimensional}). As we will discuss in Section~\ref{sec:howmanyin3D}, in three dimensions, infinitely many curves would qualify.

\paragraph{(iii) Self-similar octant-based space-filling curves}
Another way to disambiguate the definition of Hilbert's two-dimensional curve is to require that the curve be self-similar. This condition, together with the requirement that the traversal is quadrant-based and that each square touches the previous square in at least a \emph{vertex}, is enough to uniquely determine the two-dimensional Hilbert curve (see Theorem~\ref{thm:uniqueqss} in Section~\ref{sec:twodimensional}). The main message of Peano's and Hilbert's publications was that, surprisingly at the time, there are \emph{continuous} surjective mappings from one- to higher-dimensional space. Assuming that Hilbert indeed intended to define a \emph{self-similar} traversal that visits the square quadrant by quadrant, Hilbert had to include a condition that would narrow the scope to the only \emph{continuous} traversal of this type. For that purpose, in the two-dimensional setting, it did not matter whether he required that each square share at least a vertex with the previous one, or an edge: there is only one solution. However, in three dimensions it makes a difference. Given the context, we may understand the shared-edges condition merely as Hilbert's instruction to ensure continuity at all, not specifically face-continuity, and therefore we generalize it to higher dimensions by requiring that consecutive cubes always share at least one vertex.

\bigskip
Given these three options, in this article we choose the third one: we define a three-dimensional Hilbert curve as a self-similar, continuous, octant-by-octant traversal. The restriction to self-similar curves ensures compact descriptions that allow for efficient analysis of the curves and effective use in software. By avoiding the other options' restrictions to face-continuous (and possibly vertex-gated) curves, we can discover interesting curves that we would have missed otherwise.

\paragraph{The one-dimensional Hilbert curve} With the essential properties as we define them, the one-dimensional Hilbert curve is also well-defined as the only self-similar, continuous, half-by-half traversal (modulo reversal): it is simply the curve $\tau_1: [0,1] \rightarrow [0,1]$ defined by $\tau_1(t) = t$, traversing the unit line segment from one end to the other.

\section{Necessary and sufficient conditions in two dimensions}
\label{sec:twodimensional}

In this section we prove that the two-dimensional Hilbert curve is a) the only quadrant-wise self-similar space-filling curve, and b) the only quadrant-wise face-continuous vertex-gated space-filling curve.

\begin{theorem}\label{thm:uniqueqss}
The quadrant-wise self-similar square-filling curve is unique.
\end{theorem}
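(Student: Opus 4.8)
The plan is to pin down, one constraint at a time, the base pattern and the eight candidate transformations per octant, until nothing survives but the classical Hilbert curve. Write $C_1,\dots,C_4$ for the quadrants in the order the traversal visits them, $\sigma_i=(\gamma_i,\rho_i,\chi_i)$ for the transformation carrying $\tau$ onto its copy inside $C_i$, and $p=\tau^{+}(0)$, $q=\tau^{-}(1)$ for the entrance and exit gates. Self-similarity says that the first point of $\tau$ inside $C_i$ is $\rho_i(\gamma_i(p))$ if $\chi_i$ is forward and $\rho_i(\gamma_i(q))$ if it is reversed, and symmetrically for the last point; continuity (in the ``consecutive cubes touch'' form derived in Section~\ref{sec:selfsimilarcurves}) forces the last point inside $C_{i-1}$ to coincide with the first point inside $C_i$ for $i\in\{2,3,4\}$, while the first point inside $C_1$ is $p$ and the last point inside $C_4$ is $q$.

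First I would observe that the shared point of $C_{i-1}$ and $C_i$ lies in $\overline{C_{i-1}}\cap\overline{C_i}$, hence on $\partial C_{i-1}$ and on $\partial C_i$; since each $\sigma_i$ is a similarity sending the unit square onto $C_i$, and therefore $\partial([0,1]^2)$ onto $\partial C_i$, and since $C_2$ contributes one such shared point as its entrance and another as its exit, both $p$ and $q$ must lie on $\partial([0,1]^2)$. This rules out gates in the interior of the square and leaves, up to a symmetry of the square and up to swapping the roles of $p$ and $q$ (i.e.\ reversal), exactly three cases for the pair of gate types: (vertex, vertex), (vertex, edge-interior), and (edge-interior, edge-interior). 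Here I would run the announced case distinction. In each case the equation ``first point inside $C_1$ equals $p$'', together with the known location of $C_1$ (the quadrant whose closure contains $p$), pins $\gamma_1$ and $\chi_1$ down to a handful of possibilities: when $\chi_1$ is forward, $p$ must be the fixed point of the contraction $\rho_1\circ\gamma_1$, which for each of the eight symmetries $\gamma_1$ is a specific point of $C_1$, and only a few of those eight fixed points lie on $\partial([0,1]^2)$; when $\chi_1$ is reversed, $\rho_1(\gamma_1(q))=p$ plays the same role, and the analogous fixed-point/gate condition on $\sigma_4$ handles the exit end. Propagating the continuity conditions through $C_2$, $C_3$, $C_4$, one checks that in every branch except (vertex, vertex) with $p$ and $q$ on a common edge, some forced exit point of a quadrant fails to lie in $\overline{C_{i-1}}\cap\overline{C_i}$ for any admissible choice of $C_i$ --- a contradiction.

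In the surviving case I would normalise $p$ to the corner $(0,0)$ and $q$ to the corner $(1,0)$; then $C_1$ is the quadrant at $(0,0)$ and $C_4$ the quadrant at $(1,0)$, so $\{C_2,C_3\}$ is the pair of upper quadrants. The ``diagonal'' order, in which $C_2$ is the quadrant diagonally opposite $C_1$, is impossible, because then the exit of $C_1$ would have to be the single point $(\tfrac12,\tfrac12)$ shared by $C_1$ and $C_2$, which one checks is neither $\rho_1(\gamma_1(p))$ nor $\rho_1(\gamma_1(q))$ for any admissible $\gamma_1,\chi_1$ that also realises the gate condition at $C_1$. Hence the base pattern is the ``$\cap$''-shaped U-turn, and, working left to right along it, the continuity conditions force $\gamma_1$ to be the reflection in the main diagonal, $\gamma_2$ and $\gamma_3$ to be the identity, and $\gamma_4$ to be the reflection in the anti-diagonal (all realisable with forward orientation, up to the residual ``mirror-and-reverse'' symmetry that fixes the ordered pair of gates) --- that is, exactly the Hilbert curve. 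Conversely, that curve is quadrant-based, self-similar and continuous by the construction recalled in Section~\ref{sec:selfsimilarcurves}.

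The step I expect to be the main obstacle is the elimination of the edge-interior gate cases in the second paragraph: it is a finite but tedious propagation of the touching condition around all four quadrants, and it requires care with reversals. Because a symmetric curve equals a rotated or reflected copy of its own reverse, several formally different $(\gamma_i,\chi_i)$-assignments describe the very same traversal, so the bookkeeping must be organised so as not to mistake an alternative description of the Hilbert curve for a genuinely different solution --- and, conversely, not to discard a branch merely because it superficially looks different.
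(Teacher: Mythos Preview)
Your approach is essentially the same as the paper's: a case distinction on the possible gate locations, followed by propagating the touching condition through the four quadrants to eliminate all configurations except vertices on a common edge, and then verifying uniqueness in that surviving case. The paper simply subdivides your three macro-cases into six explicit sub-cases (splitting your vertex--vertex case by same-edge vs.\ diagonal, your vertex--edge case by incident vs.\ non-incident edge, and your edge--edge case by adjacent vs.\ opposite edges) and carries out each elimination explicitly rather than leaving the propagation as an acknowledged obstacle; your preliminary observation that both gates must lie on $\partial([0,1]^2)$ is a nice addition that the paper takes for granted.
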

\begin{proof}
To prove the theorem, we consider all combinations of gates $\tau(0)$ and $\tau(1)$ that could be considered:\begin{itemize}
\item[(i)] vertex gates at opposite ends of the same edge;
\item[(ii)] vertex gates at opposite ends of a diagonal;
\item[(iii)] one edge gate and one vertex gate at the end of the same edge;
\item[(iv)] one edge gate and one vertex gate that does not lie on the same edge;
\item[(v)] two edge gates on adjacent edges;
\item[(vi)] two edge gates on opposite edges.
\end{itemize}
We analyze these cases one by one. In all cases, we try to follow the curve through the four quadrants, assuming, without loss of generality, that we start in the lower left quadrant. The various cases are illustrated in Figure~\ref{fig:nootherqss}.

\begin{figure}
\centering
\includegraphics[width=\hsize]{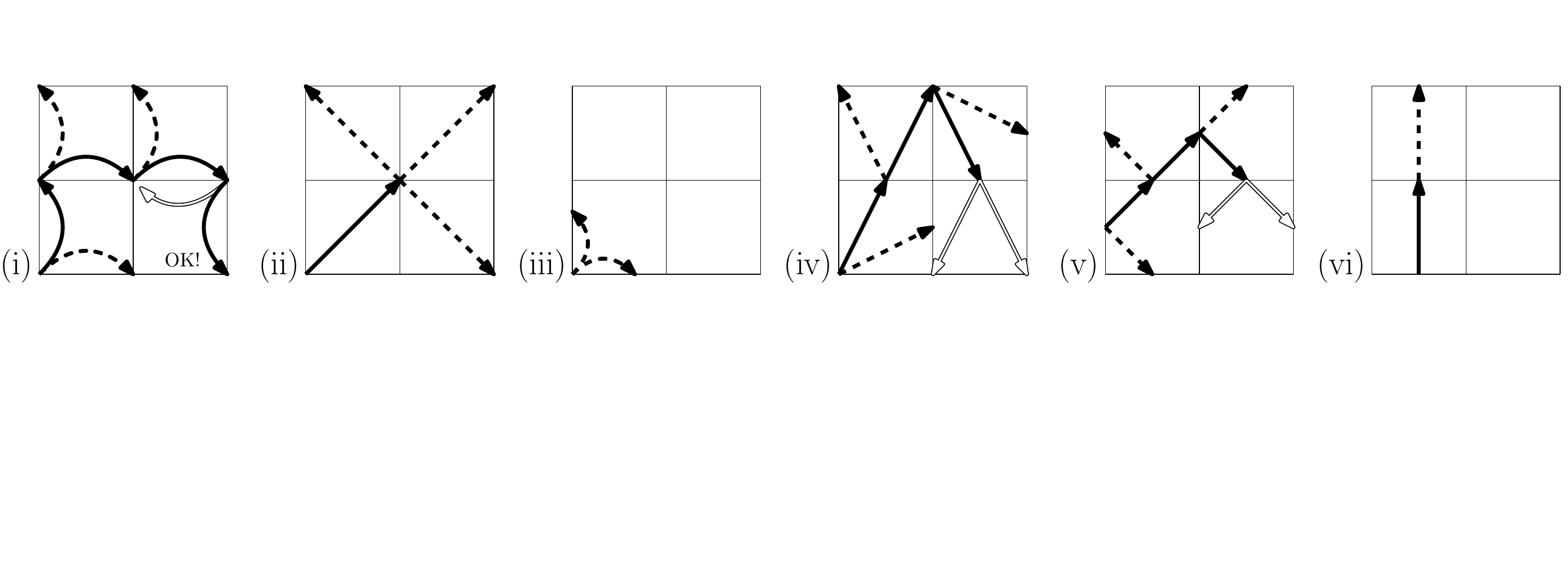}
\caption{This figure shows, for each of the six possible combinations of gate types for a square-filling curve, what sequences of gates between the quadrants we could realize assuming that the curve is quadrantwise self-similar. Solid arrows indicate a feasible sequence. Dashed arrows indicate dead ends leading to a point where we cannot connect to the next quadrant (either because the point is not incident on any other quadrant, or because it is only incident on the quadrant that must be the last to be visited but other unvisited quadrants remain). Hollow arrows lead to an exit gate of the fourth quadrant that is not consistent with the exit gate of the complete curve, under the assumptions of the case.}
\label{fig:nootherqss}
\end{figure}

\begin{itemize}
\item[(i)] \emph{Vertex gates at opposite ends of the same edge.}\\
Without loss of generality, assume the gates are located in the lower left and the lower right quadrant, so the lower left quadrant is the first to be traversed, and the lower right quadrant is the last to be traversed. We enter the lower left quadrant in the lower left corner, so we must leave it either through its lower right corner (in the middle of the bottom edge of the unit square) or through its upper left corner (in the middle of the left edge of the unit square). In the first case we would immediately enter the lower right quadrant, but this contradicts the assumption that this is the last quadrant to be traversed. So the only eligible case is the second case: we leave the lower left quadrant through its upper left corner in the middle of the left edge of the unit square. There we enter the upper left quadrant, which we must then leave through its lower right corner (the centre point of the unit square) in order to be able to connect to the third quadrant. This must then be the upper right quadrant (since the lower right quadrant must be the last), which we leave in its lower right corner in the middle of the right edge of the unit square, where we connect to the lower right quadrant.
Thus, the locations of the entrance and exit gates of all quadrants are unambiguously determined by the locations of the entrance and the exit gate of the unit square. By induction, it follows that the complete curve is uniquely determined by the choice of the edge that contains the gates---leading to four curves that are all equal modulo isometric transformations.
\item[(ii)] \emph{Vertex gates at opposite ends of a diagonal.}\\
We enter the lower left quadrant in the lower left corner, and leave it at its upper right corner, which is the centre of the unit square. Now, no matter which quadrant we traverse next, we must enter it at the centre of the unit square and leave it at a corner of the unit square. But there, there is no third quadrant to enter. Hence, with vertex gates at opposite ends of a diagonal, we cannot construct a self-similar curve.
\item[(iii)] \emph{One vertex gate and one edge gate on an incident edge.}\\
We enter the lower left quadrant in the lower left corner. Then we must leave it in the interior of either its bottom or its left edge. But there is no second quadrant to enter there. Hence, with this combination of gates, we cannot construct a self-similar curve.
\item[(iv)] \emph{One vertex gate and one edge gate on a non-incident edge.}\\
We enter the lower left quadrant in the lower left corner. Without loss of generality, assume we leave it through its top edge. Then the second quadrant must be the upper left quadrant, which we enter at its bottom edge, and leave at its top left or top right corner. At the top left corner, there is no third quadrant to connect to, so we must leave the second quadrant at its top right corner, and enter the upper right quadrant there. We leave through the bottom edge, entering the lower right and last quadrant, which we must then leave either at its bottom left or its bottom right vertex. But those points lie on an edge of the unit square that is incident to the entrance gate in the lower left corner, which contradicts the conditions of this case. Hence, with this combination of gates, we cannot construct a self-similar curve.
\item[(v)] \emph{Two edge gates on adjacent edges.}\\
If the gates lie on adjacent, that is, orthogonal edges, then the orientations of the edges containing the gates must alternate as we follow the curve from the entrance gate of the first octant to the exit gate of the last octant. But then the exit gate of the last octant lies on an edge of the same orientation as the entrance gate of the first octant, which contradicts the assumption that the gates of the unit cube lie on edges of different orientations. Hence, with this combination of gates, we cannot construct a self-similar curve.
\item[(vi)] \emph{Two edge gates on opposite edges.}\\
If the gates lie on opposite, that is, parallel edges, no curve can cross both the horizontal and the vertical centre line of the cube to reach the top right quadrant. Hence, with this combination of gates, we cannot construct a self-similar curve.
\end{itemize}
Thus, the only case that works out, is case (i), and it does so in a unique way.
\end{proof}

The conditions of Theorem~\ref{thm:uniqueqss} constitute a minimal set that uniquely defines the Hilbert curve. If we drop any of the conditions, there are other traversals that fulfill the remaining conditions: Peano's curve \cite{Peano} is a self-similar square-filling curve that is not quadrant-based; Wierum's $\beta\Omega$-curve \cite{betaomega} is a quadrant-based square-filling curve that is not self-similar; the Z-order traversal \cite{Morton} constitutes a quadrant-wise self-similar square-filling traversal that is not a curve (it is discontinuous)\footnote{Alternatively, the Z-order traversal can be modelled as a curve by including straight line segments to bridge the discontinuities, as in Lebesgue's space-filling curve \cite{Lebesgue}. However, then it does not comply with our framework of quadrant-based traversals as described in Section~\ref{sec:notation}. In other words, we can model Z-order as a curve but then it is not measure-preserving and not strictly quadrant-by-quadrant, or we can model Z-order as a measure-preserving quadrant-by-quadrant traversal, but then it is discontinuous.}.

\begin{theorem}\label{thm:uniqueqfcvg}
The quadrant-based face-continuous vertex-gated space-filling curve is unique.
\end{theorem}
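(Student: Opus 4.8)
The plan is to mirror the structure of the proof of Theorem~\ref{thm:uniqueqss}, again proceeding by a case distinction on the types and relative positions of the entrance gate $\tau(0)$ and the exit gate $\tau(1)$. Since a quadrant-based face-continuous curve must have consecutive quadrants sharing an \emph{edge} (not merely a vertex), and since the curve is vertex-gated, the entrance and exit gates are vertices of the unit square. The candidate configurations are therefore only two: (i) two vertices at opposite ends of the same edge, and (ii) two vertices at opposite ends of a diagonal. The goal is to show that (ii) is impossible and that (i) forces a unique curve.

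First I would dispose of case (ii). Assume without loss of generality that the first quadrant is the lower-left one and that the entrance gate is its lower-left corner. Because quadrants are entered and exited along shared edges, and the four quadrants must be visited in some order with each consecutive pair edge-adjacent, the sequence of quadrants must be a Hamiltonian path in the $2\times 2$ grid graph of quadrants; up to symmetry this is forced to be lower-left, upper-left, upper-right, lower-right (the ``U-shape''), since the only Hamiltonian paths in a $C_4$ are the ones that omit exactly one edge. Hence the exit gate lies in the lower-right quadrant. But the gates are supposed to be antipodal on a diagonal of the unit square, which would put the exit gate in the upper-right quadrant --- a contradiction. So case (ii) cannot occur.

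Now for case (i): the gates are the two endpoints of one edge of the square, say the bottom edge, so the first quadrant is the lower-left and the last is the lower-right. As just argued, the quadrant visiting order is the U-shape lower-left, upper-left, upper-right, lower-right. The key step is to check that within each quadrant the pair (entrance vertex, exit vertex) is forced. We enter the lower-left quadrant at the global lower-left corner; to leave it we must pass to the upper-left quadrant, so we exit through the shared edge, and face-continuity together with the vertex-gated self-similarity of the sub-curve (each quadrant carries an isometric copy of the whole curve, which is vertex-gated) forces the exit to be the vertex that is the midpoint of the left edge of the square, i.e.\ the top-left corner of the lower-left quadrant --- the only vertex of that quadrant lying on the shared edge and not equal to the already-used entrance vertex's antipode within the quadrant. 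Propagating this: enter the upper-left quadrant at its bottom-left corner, exit at the square's centre; enter the upper-right quadrant at the centre, exit at the midpoint of the right edge; enter the lower-right quadrant there and exit at the global lower-right corner. Thus all quadrant gates are determined, and by induction on the refinement level the whole curve is determined uniquely up to the choice of the edge carrying the gates, i.e.\ up to isometry. Since the classical Hilbert curve realizes this configuration, it is the unique such curve.

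The main obstacle is the local argument in case~(i) that the entrance/exit vertex of each quadrant is forced: one must argue carefully that, given face-continuity (consecutive subsquares share an edge at \emph{every} level), vertex-gatedness of the sub-curves, and the already-fixed entrance vertex of a quadrant, there is exactly one admissible exit vertex --- namely the one lying on the edge shared with the next quadrant, and not coinciding with any location that would make the next transition infeasible. This is essentially the same bookkeeping as in the proof of Theorem~\ref{thm:uniqueqss}, but one must be sure that face-continuity (rather than the weaker vertex-touching condition) is what rules out the spurious branches, and that it does so at every recursion level; I would phrase it as: at each level the current subsquare must share an edge with the next, the entrance and exit of a vertex-gated sub-curve are diagonally opposite vertices of that subsquare, and only one of the two diagonals is compatible with both the fixed entrance and the required edge-adjacency to the successor.
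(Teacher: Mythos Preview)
Your argument has a genuine gap: you invoke ``the vertex-gated self-similarity of the sub-curve (each quadrant carries an isometric copy of the whole curve)'', but self-similarity is \emph{not} among the hypotheses of Theorem~\ref{thm:uniqueqfcvg}. That is precisely what distinguishes it from Theorem~\ref{thm:uniqueqss}: here the curve is only assumed to be quadrant-based (recursively), face-continuous, and vertex-gated, and a~priori the pattern in each quadrant could be different from the pattern of the whole. Your case~(i) analysis therefore does not go through as written: you cannot conclude that the sub-curve in a quadrant is itself vertex-gated, nor that its gates sit at particular vertices of the quadrant, merely from the global hypotheses. (There is also a slip in your final paragraph: you write that the entrance and exit of a sub-curve are ``diagonally opposite vertices of that subsquare'', which contradicts your own correct observation earlier that they are edge-adjacent.)

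The paper's proof avoids this by working directly with the approximating curves $A_k$ and arguing by induction on the refinement level, without ever assuming that sub-curves resemble the whole. Face-continuity forces the $\Pi$-shape at every level; then, given the order at level $k-1$ and the entry sub-square $C_{i,1}$ of each square $C_i$, the placement of the $\Pi$-pattern inside $C_i$ is determined by $C_{i,1}$ together with the location of the next square $C_{i+1}$. Vertex-gatedness is used only to pin down the very first and very last sub-squares at each level. This is the missing idea in your attempt: rather than reducing to a self-similar situation, one shows level-by-level that no choice is ever available. Your case~(ii) elimination is fine and essentially the same in spirit.
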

\begin{proof}
A face-continuous quadrant-wise space-filling curve must traverse the squares of any grid of $2^k$ times $2^k$ squares one by one, such that each square (except the first) shares an edge with the previous square. Otherwise the curve would contain a section that fills two squares that are consecutive along the curve but do not share an edge: such a section would have a disconnected interior, and thus the curve would not be face-continuous. In particular, this means that $A_1$ must have the familiar $\Pi$-shape (see Figure~\ref{fig:nootherqfcvg}a), modulo reflection and rotation.

\begin{figure}
\centering
\includegraphics[width=\hsize]{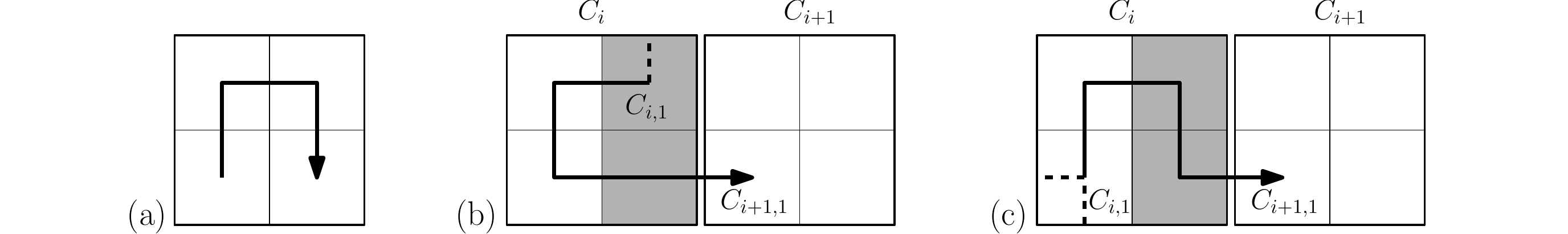}
\caption{(a) The base pattern of any quadrant-by-quadrant, face-continuous curve. (b,c) The course of $A_k$ within a $(k-1)$-level subsquare $C_i$ is uniquely determined by the $k$-level square $C_{i,1}$ where the curve enters $C_i$, and the location of the next $(k-1)$-level square $C_{i+1}$. The shaded squares are $X$ and $Y$, that is, the subsquares of $C_i$ that are adjacent to $C_{i+1}$.}
\label{fig:nootherqfcvg}
\end{figure}

Furthermore, given the order in which the squares of a $2^{k-1}$ by $2^{k-1}$ grid are traversed, for $k \geq 2$, the $k$-th order approximating curve $A_k$ is uniquely determined as follows. Let $C_1,...,C_{4^{k-1}}$ be the squares of the $2^{k-1}$ by $2^{k-1}$ grid, in the order in which they are visited. The first subsquare of $C_1$ must be the one in the corner of the unit cube. Now let, for any $i$, the first subsquare of $C_i$ be given as $C_{i,1}$, and let the two subsquares of $C_i$ that touch $C_{i+1}$ be labelled $X$ and $Y$. If $C_{i,1}$ is $X$ or $Y$, then we must put the $\Pi$-pattern inside $C_i$ such that it starts at $C_{i,1}$ and ends at $Y$ or $X$, respectively, to be able to make the connection to $C_{i+1}$ (see Figure~\ref{fig:nootherqfcvg}b). Otherwise, we must put the $\Pi$-pattern in $C_i$ such that it starts at $C_{i,1}$ and ends at the unique square out of $X$ and $Y$ that shares an edge with $C_{i,1}$ (Figure~\ref{fig:nootherqfcvg}c). The first subsquare of $C_{i+1}$ must now be the one that shares an edge with the last subsquare of $C_i$. Thus, the course of $A_k$ through $C_1,...,C_{4^{k-1}-1}$ follows by induction. The rotation or reflection of the $\Pi$-pattern inside $C_{4^{k-1}}$ follows from the requirement that we end in the corner of the unit cube. This is always possible, since we enter $C_{4^{k-1}}$ in another subsquare, which, by a simple parity argument, can be shown to be adjacent to the corner subsquare.
\end{proof}

The conditions of Theorem~\ref{thm:uniqueqfcvg} constitute a minimal set that uniquely defines the Hilbert curve. If we drop any of the conditions, there are other traversals that fulfill the remaining conditions: Peano's curve \cite{Peano} is a face-continuous vertex-gated space-filling curve that is not quadrant-based; the AR$^2$W$^2$-curve \cite{asano} is a quadrant-based vertex-gated space-filling curve that is not face-continuous; and Wierum's $\beta\Omega$-curve \cite{betaomega} is a quadrant-based face-continuous space-filling curve that is not vertex-gated.

\section{A naming scheme for three-dimensional Hilbert curves}\label{sec:scheme}

\subsection{A five-stage approach that highlights symmetries}\label{sec:fivestages}

A three-dimensional self-similar, octant-based traversal is defined by the order in which the octants are visited, together with the transformations in each of the octants. As we saw in Section \ref{sec:definitionbypermutations}, each such traversal is defined by a string of at most 45 numbers from $\{1,-1,2,-2,3,-3\}$, 8 opening brackets (rectangular or curly) and 8 complementary brackets (rectangular or curly). Thus, there can be only a finite number of such traversals. However, enumerating them by simply trying all possible permutations of the octants, together with all possible combinations of reflections, rotations and reversals in the octants (note that there are $2^d \cdot d! \cdot 2 = 96$ choices per octant), would be infeasible. Moreover, it would be quite difficult to recognize such things as symmetric curves or pairs of curves that share the same sequence of gates between octants.

Therefore, in this section we set up an alternative approach. We will distinguish five levels of detail in the description of the curves. On the coarsest level, we specify a \emph{partition}, that is, which octants lie in the first half of the traversal and which octants lie in the second half. On the next level, we specify the \emph{base pattern}, that is, in which order the octants in each half are visited. On the third level, we specify the locations of the entrance and exit gates of the curve. On the fourth level, we specify the \emph{gate sequence}, that is, the locations of the entrance gates (first points visited) and exit gates (last points visited) of all octants. On the fifth level, we specify the remaining details of the transformations of the curve within the octants. This generalizes the approach taken by Alber and Niedermeier~\cite{Alber}, who effectively consider the second, fourth and fifth level of detail. However, they encoded only the 920 face-continuous, vertex-gated, order-preserving curves\footnote{Alber and Niedermeier counted 1\,536 curves with these properties, since they counted some curves twice which we consider to be equivalent: see Footnote~\ref{fn:alber920} in Section~\ref{sec:howmanyin3D} for details.}, whereas we will encode 10,694,807 different curves, as we will see later. We need to make the descriptions of each level much more powerful than in their work to allow us to describe curves that are not face-continuous, not vertex-gated and/or not order-preserving effectively.

Recall that we consider curves that can be transformed into each other by rotation, reflection, translation, scaling or reversal to be equivalent. We will set up our naming scheme such that we give a unique name to exactly one curve of each equivalence class.

Our five-stage approach allows us to enumerate curves by generating them with increasing amount of detail. We will see that when we get to the third level, there are only a limited number (a few hundred) possibilities. For each choice for a third-level specification, we can generate all feasible gate sequences, making sure that every triple of an octant with its entrance gate and its exit gate can be obtained by at least one transformation of the unit cube with its entrance and exit gates, and each octant's entrance gate matches the previous octant's exit gate. Then, for each possible gate sequence, we can enumerate all options for filling in the remaining details of the curve: these options consist in all combinations of independent choices for which transformation to use in which octant, out of all transformations that map the gates of the unit cube to the gates of the octants.

\subsection{The general format}\label{sec:generalformat}

In general, our curve names follow the pattern $Pcmn.gh.st.op.qr$. In this pattern, $P$ is a uppercase letter specifying the partition, $m$ and $n$ are hexadecimal digits specifying the order of octants within each half; $c$ is a lowercase letter specifying how the two halves fit together, and thus, $Pcmn$ encodes the base pattern. Next are two letters $g$ and $h$, specifying the location of the gates $\tau(0)$ and $\tau(1)$, and two hexadecimal digits $s$ and $t$, specifying the locations of the gates between the octants. Thus, $Pcmn.gh.st$ encodes a gate sequence. The remaining digits, $op.qr$, specify the transformations within the octants. Many curves have shorter names: depending on the gate sequence, the number of digits actually used to specify the transformations within the octants can be zero, two or four.

The first symbol in each pair ($m,g,s,o,q$) concerns the first half in the curve; the second symbol in each pair ($n,h,t,p,r$) concerns the second half. This is implemented such that a name that follows the pattern $Pcmm.gg.ss.oo.qq$ describes a curve of which the first and the second half are the same, modulo a transformation which is effectively encoded by $c$. As we will see later, the second-last pair of digits, $oo$, is redundant for such curves and therefore, for a symmetric curve, we may use a condensed form of the name that follows the pattern $Pcmm.gsq$.

Hexadecimal digits (represented by $s,t,o,p,q$ and $r$ in the aforementioned patterns) are used to specify gates or transformations within one half of the curve. Typically these hexadecimal digits arise from encoding one bit of information per octant, which results in a number that is actually better read as a binary number rather than a hexadecimal number. Therefore, for hexadecimal digits we use symbols that are reminiscent of their binary equivalents, as displayed in Table~\ref{tab:hexadecimal}. The four bits of a hexadecimal digit, in order from most significant to least significant (first encoded octant to last encoded octant) are represented by the absence (0) or the presence (1) of, respectively, a vertical stroke on the right, a high horizontal stroke, a horizontal stroke in the centre, and a low horizontal stroke. If the first bit is zero, there is a vertical stroke on the left. In practice, we approximate the shapes thus composed by standard letters and digits as shown in Table~\ref{tab:hexadecimal}.

\begin{table}
\caption{Symbols for hexadecimal digits}\label{tab:hexadecimal}
\def\zero{\rlap{\vrule height1.5ex width0.2ex}}
\def\rightone{\rlap{\kern0.8ex\vrule height1.5ex width0.2ex}}
\def\highone{\rlap{\vrule height1.5ex depth-1.3ex width1ex}}
\def\middleone{\rlap{\vrule height0.85ex depth-0.65ex width1ex}}
\def\lowone{\rlap{\vrule height0.2ex width1ex}}
\centering\begin{tabular}{|ccc@{  }>{\ttfamily}c|ccc@{  }>{\ttfamily}c|ccc@{  }>{\ttfamily}c|ccc@{  }>{\ttfamily}c|}
\hline
dec & bin & \multicolumn{2}{c|}{symbol} &
dec & bin & \multicolumn{2}{c|}{symbol} &
dec & bin & \multicolumn{2}{c|}{symbol} &
dec & bin & \multicolumn{2}{c|}{symbol} \\
\hline
\hline
0 & 0000 &  \zero                               & I & 4 & 0100 &  \zero\highone                       & T & 8 & 1000 &  \rightone                           & X & 12 & 1100 & \rightone\highone                   & 7 \\
1 & 0001 &  \zero\lowone                        & L & 5 & 0101 &  \zero\highone\lowone                & C & 9 & 1001 &  \rightone\lowone                    & J & 13 & 1101 & \rightone\highone\lowone            & Z \\
2 & 0010 &  \zero\middleone                     & h & 6 & 0110 &  \zero\highone\middleone             & P & 10 & 1010 & \rightone\middleone                 & 4 & 14 & 1110 & \rightone\highone\middleone         & 9 \\
3 & 0011 &  \zero\middleone\lowone              & b & 7 & 0111 &  \zero\highone\middleone\lowone      & E & 11 & 1011 & \rightone\middleone\lowone          & d & 15 & 1111 & \rightone\highone\middleone\lowone  & 3 \\
\hline
\end{tabular}
\end{table}

In the following sections, we will describe the details of our curve naming scheme level by level.

\subsection{First two levels: encoding the base pattern}\label{sec:basepatterns}
\label{sec:basepatternames}

A base pattern is identified by a string of four symbols $Pcmn$. I will first describe what values these symbols can have and how to interpret them. After that, we will discuss how each possible base pattern has a unique name.

\paragraph{Decoding a base pattern identifier}
The first symbol $P$ is one of $\{\curvename{C},\curvename{L},\curvename{N},\curvename{S},\curvename{X},\curvename{Y}\}$ and indicates which four octants are traversed by the first half of the curve. The six possibilities are described in Figure~\ref{fig:partitions} (first row) and Table~\ref{tab:partitions}. In the table, a vector \vtx{x\\y\\z} represents the octant that includes the unit cube vertex $(x/2,y/2,z/2)$, assuming a unit cube of volume 1, centered at the origin. Table~\ref{tab:partitions} and Figure~\ref{fig:partitions} (second row) also give a standard order in which these octants are traversed---we will see shortly how different traversal orders are encoded by the third symbol of the base pattern name.

\begin{figure}
\centering
\includegraphics{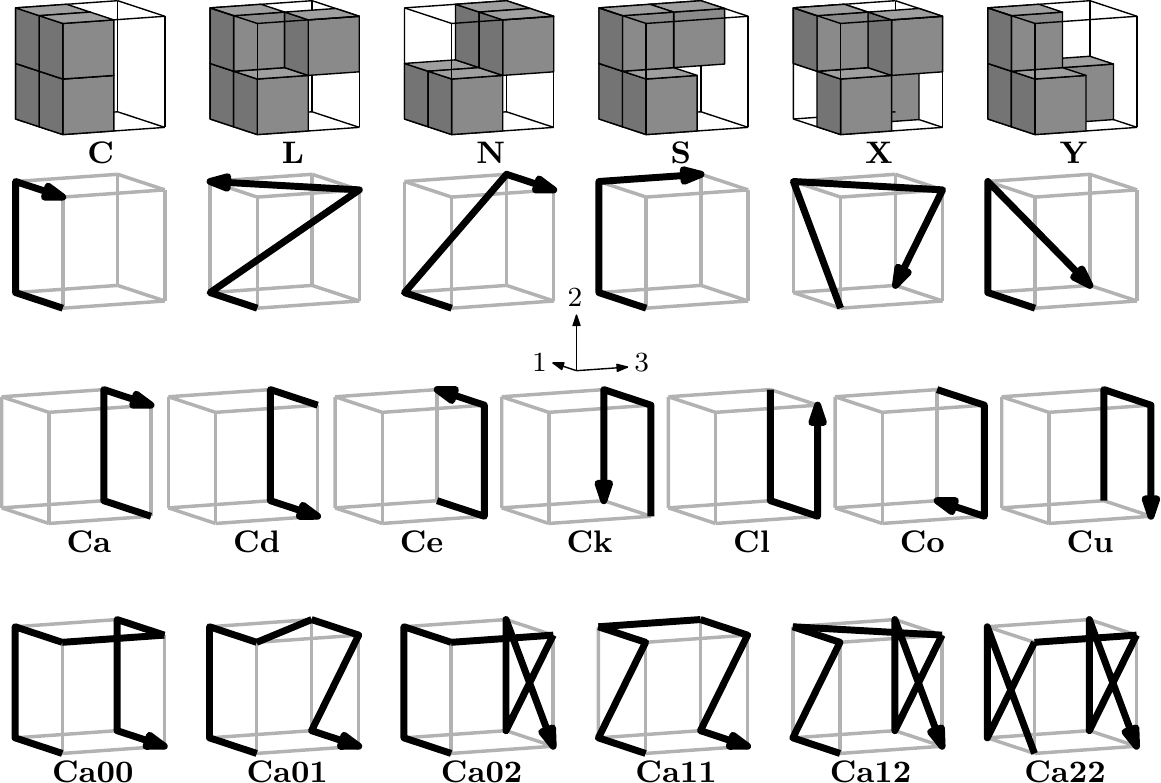}
\caption{Encoding of base patterns. First row: the six options for the set of octants in the first half of the traversal. Second row: the standard order in which these octants are traversed. Third row: the seven ways in which a standard order for the second half of a C-pattern can be obtained by a transformation of the first half. Fourth row: six examples of how the final octant order is obtained by permuting the traversal order in the first and the second half, and finally reversing the second half.}
\label{fig:partitions}
\end{figure}

\begin{table}
\caption{Possible combinations of octants visited by the first half of a traversal}\label{tab:partitions}
\centering\def\arraystretch{2}\begin{tabular}{|>{\ttfamily}cl|>{\ttfamily}cl|}
\hline
\textrm{symbol} & octants (w.l.o.g.) & \textrm{symbol} & octants (w.l.o.g.) \\\hline\hline
C   & \vtx{-1\\-1\\-1} \vtx{+1\\-1\\-1} \vtx{+1\\+1\\-1} \vtx{-1\\+1\\-1} & S   & \vtx{-1\\-1\\-1} \vtx{+1\\-1\\-1} \vtx{+1\\+1\\-1} \vtx{+1\\+1\\+1} \\
L   & \vtx{-1\\-1\\-1} \vtx{+1\\-1\\-1} \vtx{-1\\+1\\+1} \vtx{+1\\+1\\-1} & X   & \vtx{-1\\-1\\-1} \vtx{+1\\+1\\-1} \vtx{-1\\+1\\+1} \vtx{+1\\-1\\+1} \\
N   & \vtx{-1\\-1\\-1} \vtx{+1\\-1\\-1} \vtx{+1\\+1\\+1} \vtx{-1\\+1\\+1} & Y   & \vtx{-1\\-1\\-1} \vtx{+1\\-1\\-1} \vtx{+1\\+1\\-1} \vtx{+1\\-1\\+1} \\
\hline
\end{tabular}
\end{table}

\begin{table}
\caption{Encoding of the symmetries of the unit cube}\label{tab:transformations}
\centering\begin{tabular}{|@{\,}>{\ttfamily}rll@{\,}|}
\hline
\textrm{symbol} & \multicolumn{2}{l|}{symmetry} \\
\hline
\hline
a & [ 1, 2,-3] & reflection in plane orthogonal to 3rd coordinate axis \\
b & [ 1,-2, 3] & reflection in plane orthogonal to 2nd coordinate axis \\
c & [-1, 2, 3] & reflection in plane orthogonal to 1st coordinate axis \\
d & [ 1,-2,-3] & 180 degrees' rotation around line parallel to 1st coordinate axis \\
e & [-1, 2,-3] & 180 degrees' rotation around line parallel to 2nd coordinate axis \\
g & [-1, 3, 2] & 180 degrees' rotation around line through edge midpoints $(0,-1,-1)$ and $(0,1,1)$\\
h & [-1,-3,-2] & 180 degrees' rotation around line through edge midpoints $(0,-1,1)$ and $(0,1,-1)$\\
i & [ 3,-2, 1] & 180 degrees' rotation around line through edge midpoints $(-1,0,-1)$ and $(1,0,1)$\\
j & [-3,-2,-1] & 180 degrees' rotation around line through edge midpoints $(-1,0,1)$ and $(1,0,-1)$\\
k & [ 2, 1,-3] & 180 degrees' rotation around line through edge midpoints $(-1,-1,0)$ and $(1,1,0)$\\
l & [-2,-1,-3] & 180 degrees' rotation around line through edge midpoints $(-1,1,0)$ and $(1,-1,0)$\\
o & [-1,-2,-3] & point reflection with respect to the centre of the cube \\
q & [ 1, 3,-2] & 90 degrees' rotation around line parallel to 1st coordinate axis \\
r & [ 3, 2,-1] & 90 degrees' rotation around line parallel to 2nd coordinate axis \\
s & [ 2,-1, 3] & 90 degrees' rotation around line parallel to 3rd coordinate axis \\
u & [ 2,-1,-3] & \curvename{a} combined with \curvename{s} \\
w & [-3,-1,-2] & reflection combined with 120 degrees' rotation around interior diagonal \\
x & [ 3,-1, 2] & reflection combined with 120 degrees' rotation around interior diagonal \\
y & [-3, 1, 2] & reflection combined with 120 degrees' rotation around interior diagonal \\
z & [ 3, 1,-2] & reflection combined with 120 degrees' rotation around interior diagonal \\
\hline
\end{tabular}
\end{table}

The second symbol $c$ is a lower-case letter that specifies a rotary reflection that is a symmetry of the unit cube and maps the set of octants in the first half of the traversal to the set of octants in the second half of the traversal. Thus, we also get a standard order for the traversal of the octants in the second half of the curve. The possible values for $c$ and the corresponding symmetries are listed in Table~\ref{tab:transformations}: each symmetry is given as a signed permutation (as described in Section~\ref{sec:definitionbypermutations}) enclosed in square brackets. The third row of Figure~\ref{fig:partitions} gives an example of the resulting traversal orders for the octants in the second half of a traversal.

The third symbol, $m$, in the name of a base pattern indicates the permutation of the octants within the first half of the pattern. The default is that the first four octants, as encoded by the first symbol, are visited in the order indicated in the last column of Table~\ref{tab:partitions} and the second row of Figure~\ref{tab:partitions}. Other orders are indicated by a pentadecimal number according to Table~\ref{tab:permutations}. For example, if a name of a base pattern starts with \curvename{Se9}, then the first four octants are the 2nd, 3rd, 4th, and 1st from those listed with S in Table~\ref{tab:partitions}, so, the first four octants, in order, are those with vertex coordinates $\frac12(+1,-1,-1)$, $\frac12(+1,+1,-1)$, $\frac12(+1,+1,+1)$, and $\frac12(-1,-1,-1)$. For more examples, refer to the fourth row of Figure~\ref{fig:partitions}.

\begin{table}
\caption{Encoding of permutations of octants in one half of a traversal}\label{tab:permutations}
\centering\begin{tabular}{|>{\ttfamily}rl|>{\ttfamily}rl|>{\ttfamily}rl|}
\hline
\textrm{symbol} & traversal order & \textrm{symbol} & traversal order & \textrm{symbol} & traversal order \\
\hline\hline
0 & 1st 2nd 3rd 4th & 6 & 2nd 1st 3rd 4th & c & 3rd 1st 2nd 4th \\
1 & 1st 2nd 4th 3rd & 7 & 2nd 1st 4th 3rd & d & 3rd 1st 4th 2nd \\
2 & 1st 3rd 2nd 4th & 8 & 2nd 3rd 1st 4th & e & 3rd 2nd 1st 4th \\
3 & 1st 3rd 4th 2nd & 9 & 2nd 3rd 4th 1st & & \\
4 & 1st 4th 2nd 3rd & a & 2nd 4th 1st 3rd & & \\
5 & 1st 4th 3rd 2nd & b & 2nd 4th 3rd 1st & & \\
\hline
\end{tabular}
\end{table}

The fourth symbol describing the base pattern indicates the permutation of the octants within the second half of the traversal, in reverse order, so from the eighth back to the fifth octant of the complete base pattern. The default is that the eighth octant back to the fifth, in order, are the ones corresponding to the octants listed in Table~\ref{tab:partitions}, in order, under the transformation indicated by the second symbol of the base pattern name. In other words, the traversal order for the second half is obtained by taking the ordered set of octants specified by $P$, applying the transformation specified by $c$, followed by the permutation specified by $n$, and finally reversing the order.

The fourth row of Figure~\ref{fig:partitions} shows several examples. Note that the approximating curve of base pattern \curvename{Ca00} is the curve $G(3)$ (see Section~\ref{sec:optionalproperties}), and thus, \curvename{Ca00} identifies the base pattern of well-folded curves.

\paragraph{Selecting a unique name for each base pattern}
The previous discussion of how to decode a base pattern name may raise two questions. First, for some base patterns there may be multiple ways to encode them: how do we select a unique name for a pattern, so that patterns that are equivalent modulo reflections, rotations and/or reversal get the same name? Second, can we give a name to \emph{each} possible base pattern in this way?

To deal with the first question, we restrict the names of base patterns to those that are implicitly listed in Table~\ref{tab:octantorders}. A base pattern name is valid if it meets the following three conditions:\begin{itemize}
\item the transformation $c$ should be listed in the second column in the row for partition $P$;
\item the permutations $m$ and $n$ should be listed in the third column in the row for partition $P$;
\item if $c$ is lexicographically smaller than '\curvename{p}', then $m \leq n$.
\end{itemize}

\begin{table}
\caption{All possible base patterns by name. A name identifies a base pattern if and only if it consists of a symbol from the first columm, a symbol from the second column, a symbol from the third column and another symbol from the third column of the same row, subject to the following restriction: if the second symbol denotes a symmetric transformation, then the third and the fourth symbol should be in lexicographical order. In effect, this means that names with the third and fourth symbol out of order must start with \curvename{Cu} or \curvename{Yz}\strut.}\label{tab:octantorders}
\centering\begin{tabular}{|>{\ttfamily}l>{\ttfamily}l>{\ttfamily}lrr|}
\hline
\textrm{partition} & \textrm{transformations} & \textrm{permutations} & \multicolumn{2}{c|}{number of patterns}\\
 &&& symmetric & asymmetric \\
\hline
\hline
C    & adeklou           & 012               & 18           &  27 \\
L    & al                & 012345678cde      & 24           & 132 \\
N    & ae                & 012345            & 12           &  30 \\
S    & ei                & 0123456789ab      & 24           & 132 \\
X    & abcghijkloqrswxyz & 0                 & 10           &   7 \\
Y    & hikoz             & 0236              & 16           &  40 \\
\hline
\multicolumn{3}{|r}{total}                    &104           & 368 \\
\hline
\end{tabular}
\end{table}

To understand why this gives us a unique name for each equivalence class of base patterns, the following observations are helpful.

First, up to rotation and reflection, there are indeed exactly six possibilities for how octants can be divided between the first half and the second half of the traversal, as listed in Table~\ref{tab:partitions}. The possibilities can easily be analysed by distinguishing between three cases: (i) there is a plane that separates the first half from the other (type \curvename{C}); (ii) there is a plane that separates three octants in the first half from the fourth (types \curvename{L}, \curvename{S}, and \curvename{Y}); (iii) any axis-parallel plane through the centre of the cube has two octants from each half on each side (types \curvename{N} and \curvename{X}). Henceforth, we assume any base pattern or traversal is rotated and/or reflected such that the first four octants have the coordinates as indicated in the table, where coordinates \vtx{x\\y\\z} indicate the octant that includes the unit cube vertex $(x/2,y/2,z/2)$.

Second, each of the sets identified by \curvename{C}, \curvename{L}, \curvename{N}, \curvename{S}, \curvename{X} and \curvename{Y} has certain symmetries in itself. This limits the number of permutations we need to encode with the third symbol of the base pattern name. For example, for \curvename{S}-patterns, we only need to encode permutations that start with the first or the second octant in the set---if we would want to start with the third or the fourth octant, we would instead apply the rotary reflection $[-3,-2,-1]$ to the whole pattern, so that we swap the first and the second octant with the fourth and the third octant, respectively. Thus, for \curvename{S}-patterns, the third symbol can be restricted to the range $\{\curvename{0},...,\curvename{b}\}$. By a similar argument, the values of the fourth symbol are restricted in the same way: any permutation outside the given range can always be obtained by combining a permutation within the given range with a rotary reflection of the second half of the pattern---in effect changing the choice of the transformation encoded by the second symbol. As indicated in Table~\ref{tab:partitions}, the required set of permutations is different for each partition, because it depends on the geometric arrangement of the octants within one half.

Third, if the transformation specified by the second symbol $c$ is symmetric (that is, equal to its own inverse), then applying it to the base pattern $Pcmn$ and reversing the order results in the base pattern $Pcnm$. Thus, if $n \neq m$, then these are two different names for the same equivalence class of base patterns. In that case we choose the lexicographically smallest name. Hence the third condition on base pattern names---note that among the transformations listed in Table~\ref{tab:transformations}, the symmetric transformations are exactly those with a symbol lexicographically smaller than '\curvename{p}'.

Fourth, with the second symbol we only need to be able to specify a limited subset of the 48 symmetries of the unit cube. This is because many symmetries of the unit cube do not map any of the sets \curvename{C}, \curvename{L}, \curvename{N}, \curvename{S}, \curvename{X} or \curvename{Y} to their complement, or they are redundant, because we could use another transformation to describe a reversed and/or reflected version of the same base pattern.
This is why only seven, not eight different transformations are applied to partition \curvename{C}.

Table~\ref{tab:octantorders} also lists the numbers of symmetric base patterns and the numbers of asymmetric base patterns per row. Note that a pattern is symmetric if and only if the transformation that maps the first half to the second half is symmetric (that is, $c \in \{\curvename{a},...,\curvename{o}\}$), and the first and the second half are permuted in the same way (that is, $m = n$).

Appendix~\ref{apx:verification} explains an automated approach to verifying that the table is correct and complete, providing exactly one name for each equivalence class of base patterns. Figure~\ref{fig:realizablebasepatterns} in Appendix~\ref{apx:schemes} shows all base patterns that we found to be realizable by three-dimensional Hilbert curves.

\subsection{Third level: encoding the entrance and exit gates}\label{sec:gates}
In our naming scheme, the encoding of the base pattern is followed by two symbols that encode the entrance and the exit gate, respectively.
These symbols are given in Table~\ref{tab:gatesymbols}. Note that the interpretation of the exit gate symbol is subject to the transformation that maps the octants in the first half of the order to the octants in the second half (see Figure~\ref{fig:gates} for an example).

\begin{table}
\caption{Gate symbols}\label{tab:gatesymbols}
\centering\begin{tabular}{|>{\ttfamily}lll|}
\hline
\textrm{symbol} & intuition & gate location \\
\hline
\hline
c      & corner & at vertex \\
r      & radial & on interior of edge parallel to 1st coordinate axis \\
v      & vertical & on interior of edge parallel to 2nd coordinate axis \\
t      & transverse & on interior of edge parallel to 3rd coordinate axis \\
f      & front  & on face orthogonal to 1st coordinate axis \\
g      & ground & on face orthogonal to 2nd coordinate axis \\
s      & side   & on face orthogonal to 3rd coordinate axis \\
\hline
\end{tabular}
\end{table}

\begin{figure}
\centering
\includegraphics{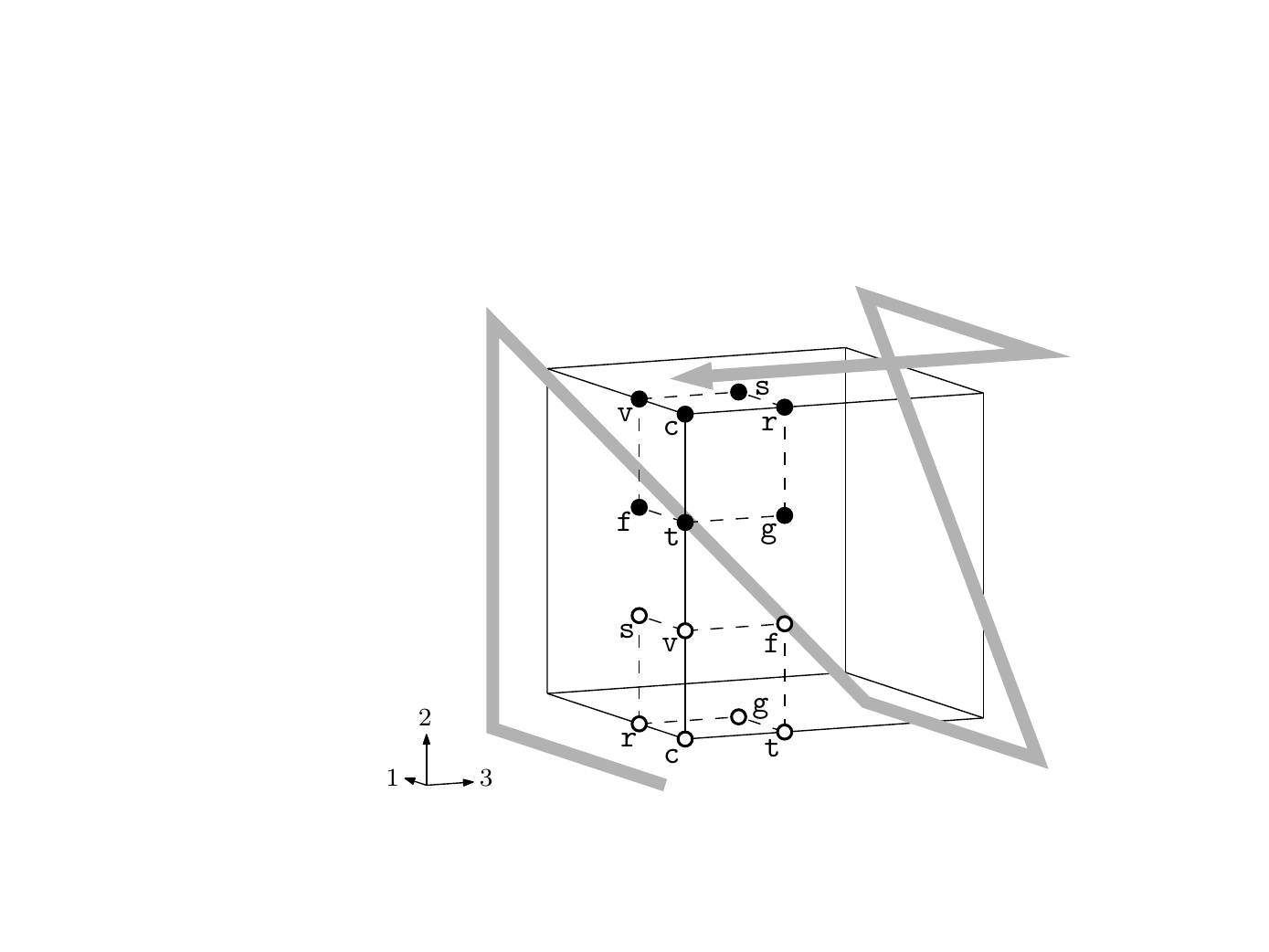}
\caption{Encoding of possible gate locations for the entrance gate (open dots) and the exit gate (closed dots) of a curve with base pattern \curvename{Yz00}.}
\label{fig:gates}
\end{figure}

Observe that we only encode the topological location of the gates, but not where exactly the edge and facet gates are located on the respective edges and facets. In the following section, we will find that the realizable combinations of entrance and exit gates for the unit cube are actually fairly limited, enough so that the topological location of the gates is sufficient to determine their exact location:\begin{itemize}
\item edge and facet gates in vertex-edge or vertex-facet gated curves are located at the midpoint of the edge or facet, respectively (Theorems~\ref{thm:vertex-edge-gates} and \ref{thm:vertex-facet-gates});
\item edge gates in edge-gated curves are located at distance 1/3 from the closest endpoint of the edge (Lemma~\ref{lem:edge-edge-gates-distance});
\item facet gates in facet-gated curves are located at distance 1/3 from two adjacent edges of the facet (Corollary~\ref{cor:facet-facet-gates-distance});
\item edge-facet-gated curves do not exist (Theorem~\ref{thm:edge-facet-gates}).
\end{itemize}

If a curve is symmetric in the base pattern but not in the locations of the gates, then we encode the curve in the direction that results in the name that comes first in lexicographical order. For example, we use \curvename{Ca00.gs}, not \curvename{Ca00.sg}.

\section{Inventory of three-dimensional Hilbert curves}\label{sec:inventory}

To set up the fourth and fifth levels of our naming scheme, we need to distinguish a number of cases depending on the locations of the entrance and exit gates, similar to the proof of Theorem~\ref{thm:uniqueqss}. In the next subsections we will discuss vertex-gated, vertex-edge-gated, vertex-facet-gated, edge-gated, edge-facet-gated, and facet-gated curves, respectively. In each subsection we will establish where exactly the entrance and exit gates must be located relative to each other, so that we can enumerate the corresponding gate sequences and the curves that implement them efficiently, and define an appropriate encoding.

\paragraph{Results}
In Section~\ref{sec:vertexgated} we find that there are 178 gate sequences for vertex-gated curves. The symmetric placement of the gates makes it possible to reflect and/or reverse the subcurves within the eight octants independently from each other. Thus, each gate sequence is realized by up to $(2 \cdot 2)^8 = 65\,536$ curves, some of which are equivalent, leading to a total of 10,691,008 different vertex-gated curves. These include all face-continuous curves save one, all order-preserving curves, all symmetric curves, and many well-folded curves. The vast possibilities in shaping the octants allow us to construct a hyperorthogonal curve, a maximally facet-harmonious curve, and many coordinate-shifting curves. However, none of the vertex-gated curves are centred or standing.

The vertex-edge-gated curves, discussed in Section~\ref{sec:vertexedgegated}, are much more flexible with respect to the placement of the connecting gates between the octants. They allow 2\,758 gate sequences, but the positions of the gates completely determine the transformations within the octants, so there is only one curve per gate sequence. Among them we find centred curves, well-folded curves, and standing curves. However, none of the vertex-edge-gated curves are face-continuous, order-preserving, symmetric, or coordinate-shifting.

There is only one facet-gated curve (Section~\ref{sec:facetgated}). It is face-continuous, well-folded, hyperorthogonal, and pattern-isotropic. It is of particular interest because it can be generalized to higher dimensions~\cite{hyperorthogonal}, and was found to have excellent locality-preserving properties (Section~\ref{sec:observationslocality}). One could view the curve as a three-dimensional version of the two-dimensional facet-gated (but non-self-similar) $\beta\Omega$-curve~\cite{betaomega}, which was shown to have slightly better locality-preserving properties than the original Hilbert curve~\cite{boxquality,betaomega,wierumpathlength,yoon}. However, in three and more dimensions the construction is actually simpler, namely self-similar, using only generalized $\beta$-shapes and no $\Omega$-shapes.

The vertex-facet-gated curves are few (Section~\ref{sec:vertexfacetgated}: 1,024 curves from 4 gate sequences), the edge-gated curves even fewer (Section~\ref{sec:edgegated}: 16 curves), and I have not been able to identify any particularly interesting curves among them. Edge-facet-gated curves cannot be realized (Section~\ref{sec:edgefacetgated}).

In total, we find that there are 10,694,807 different three-dimensional Hilbert curves.

\paragraph{Terminology and notation}
With each combination of entrance and exit gates, the fourth level, the gate sequence, is encoded with two hexadecimal digits after the third level. The first digit is for the first half of the curve; the second digit is for the second half of the curve as seen from the other end---that is, the second digit is for the first half of the reversed curve under the inverse of the transformation encoded by the second symbol of the base pattern name. In each case the two halves will be encoded in the same way. Therefore, in the following subsections, we will only describe how to encode the gates in the first half of the curve as a four digits' binary number, which is then written as a single hexadecimal digit. Typically, but not always, we use one bit per octant, in order of decreasing significance as we traverse the octants starting from the entrance gate. So the two hexadecimal digits, if written as eight binary digits, from left to right, encode the first, second, third, fourth, eighth, seventh, sixth and fifth octant, in that order.

In the cases in which the fifth level, concrete curves, needs one or two further pairs of digits, the same principles apply: the first digit is for the first half of the curve, and we use one bit per octant, starting with the most significant bit for the octant at the entrance or exit gate.

The following subsections include a number of figures that sketch three-dimensional Hilbert curves. For each curve there are two diagrams (see, for example, Figure~\ref{fig:edgecrossingcurves}). The drawing on the left is an annotated drawing of the second-order approximating curve and defines the space-filling curve in the way described in Section~\ref{sec:definitionbyfigure}. The drawing on the right is an exploded view of the eight octants that clarifies the gate sequence. In the exploded views, bold solid lines connect the exit gate of an octant with the entrance gate of the next octant---note that in the unexploded reality, these points coincide. Dashed lines connect the entrance gate of an octant with the exit gate of the same octant. For asymmetric curves, open dots mark the octant gates that correspond to the entrance gate of the entire curve: an open dot thus marks the exit or the entrance gate of an octant, depending on whether the transformation that maps the entire curve to the curve within the octant involves reversal or not.

Note that the curves are not always drawn with the same orientation of the coordinate system: some of the curves are rotated and/or reflected for a better view. For example, Figures \ref{fig:facetcrossingcurves}d and~\ref{fig:vertexedgegatedcurves}c show curves with the same base pattern \curvename{Cd00}, but one is rotated and reflected with respect to the other. To clarify this, each figure includes a drawing that indicates the positive direction of each coordinate axis.

Throughout the following subsections, given a traversal $\tau$, we use $\gate i$ to denote $\tau(i/8)$, that is, the exit gate of the $i$-th octant and the entrance gate of the $(i+1)$-st octant.

\subsection{Vertex-gated curves}\label{sec:vertexgated}

\begin{theorem}\label{thm:vertex-vertex-gates}
The gates of vertex-gated three-dimensional Hilbert curves are located at opposite ends of either an edge or a facet-diagonal of the cube.
\end{theorem}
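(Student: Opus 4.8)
The plan is to mimic the structure of the proof of Theorem~\ref{thm:uniqueqss}: since a vertex-gated three-dimensional Hilbert curve has both $\tau(0)$ and $\tau(1)$ at vertices of the cube, and there are only three topologically distinct ways two vertices of a cube can be related (endpoints of an edge, endpoints of a facet-diagonal, endpoints of an interior diagonal), it suffices to rule out the interior-diagonal case. So the real content is: \emph{no self-similar, continuous, octant-based traversal of the cube can have its entrance gate and exit gate at opposite ends of an interior diagonal of the cube.}

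First I would set up notation: assume without loss of generality that the entrance gate is the vertex $v_0 = \tfrac12(-1,-1,-1)$ and the exit gate is the antipodal vertex $v_7 = \tfrac12(1,1,1)$, so that the first octant $C_1$ is the one containing $v_0$ and the last octant $C_8$ is the one containing $v_7$; in particular $C_1$ and $C_8$ are diagonally opposite octants. By self-similarity, the transformation $\sigma_1$ maps the entire curve into $C_1$, so the image of $\tau(0)$ under $\sigma_1$ is the corner of $C_1$ at $v_0$, which forces the entrance gate of $C_1$ to be $v_0$; and the exit gate of $C_1$ — being the image under $\sigma_1$ of $\tau(1)$, which sits at the antipode of $\tau(0)$ within the cube — must be the vertex of $C_1$ diagonally opposite to $v_0$, i.e.\ the \emph{center of the cube} $\gate 1 = (0,0,0)$. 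The key geometric obstruction now appears: $\gate 1$ is the center of the unit cube, a point incident to \emph{all eight} octants, but the continuity requirement says the entrance gate of $C_2$ must coincide with $\gate 1$, and then by the same self-similarity argument applied to $C_2$ (entrance gate at a vertex of $C_2$, exit gate at the diagonally opposite vertex of $C_2$), the exit gate $\gate 2$ of $C_2$ must be the vertex of $C_2$ antipodal to the center — i.e.\ one of the eight \emph{outer corners} of the unit cube. From such an outer corner there is only one octant incident (namely $C_2$ itself), so the curve cannot continue to a third octant $C_3 \ne C_2$ while seven octants remain unvisited. This is exactly the argument used in case (ii) of Theorem~\ref{thm:uniqueqss}, lifted to three dimensions.

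I expect the main obstacle to be making the ``each octant's gates are at antipodal vertices of that octant'' step fully rigorous, and in particular being careful that it is self-similarity (not just vertex-gatedness of the whole curve) that forces this: the transformation $\sigma_i = (\gamma_i, \rho_i, \chi_i)$ is a cube symmetry composed with a scaling/translation and possibly a reversal, so the pair of gates of $C_i$ is the $\sigma_i$-image of the pair $\{\tau(0),\tau(1)\}$, and since the latter is a pair of antipodal cube vertices and cube symmetries preserve the antipodal-vertex relation, the gates of every octant are a pair of antipodal vertices of that octant. One subtlety worth a sentence: we must also note that after the obstruction rules out the interior-diagonal case, the theorem's positive claim (edge or facet-diagonal) is not asserting both are realizable, merely that these are the only topological options left for a vertex-gated curve, which is immediate since the three relations (edge, facet-diagonal, space-diagonal) exhaust the ways two distinct vertices of a cube can be situated, and coincident gates are impossible because $\tau(0) \ne \tau(1)$ for a measure-preserving surjection onto a set of positive measure. (One could alternatively remark that a symmetric curve would need $\tau(0)$ and $\tau(1)$ distinct as well.) I would close by noting that the edge case is realized by the known well-folded curves and the facet-diagonal case by curves such as \curvename{Ca00.cc}, so neither remaining option is vacuous, though that remark is not needed for the theorem itself.
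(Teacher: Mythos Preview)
Your proposal is correct and takes essentially the same approach as the paper: the paper's proof simply observes that the only case to exclude is gates at opposite ends of an interior diagonal, and that this is ruled out by the argument completely analogous to case~(ii) of Theorem~\ref{thm:uniqueqss}, which is precisely the argument you have written out in detail. One inconsequential slip in your closing aside: \curvename{Ca00.cc} names edge-crossing gate sequences, not facet-crossing ones---curves such as \curvename{Cd00.cc} would be the right example for the facet-diagonal case---but as you note, this remark is not needed for the theorem.
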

\begin{proof}
The theorem only excludes the possibility of gates at opposite ends of an interior diagonal. The proof that this is not possible is completely analogous to case (ii) in the proof of Theorem~\ref{thm:uniqueqss}.
\end{proof}

In other words, vertex-gated curves are either edge-crossing or facet-crossing, but not cube-crossing. Whether the curve is edge-crossing or facet-crossing, depends on whether the first and the last octant of the base pattern lie along the same edge of the unit cube, or only on the same facet.

\subsubsection{Curves that are vertex-gated and edge-crossing}\label{sec:edgecrossing}

\paragraph{Gate sequences}
With vertex-gated, edge-crossing curves, the locations of the gates are restricted by the fact that they have to lie relative close to each other: subsequent gates must lie at opposite ends of an octant edge. In particular, $\gate 0$ is at a corner of the unit cube, $\gate 1$ must be in the middle of the edge of the unit cube that is shared by the first and the second octant, and $\gate 2$ must be in the middle of the facet of the unit cube that is shared by the second and third (and, necessarily, also the first) octant. For the first half of the curve, this leaves only $\gate 3$ and $\gate 4$ to be specified. The gate $\gate 3$ must lie either in the centre of the cube, or at the midpoint of the edge of the unit cube that is shared by the third and fourth octant. The gate $\gate 4$ must lie at the midpoint of a facet of the unit cube that is shared by the fourth and the fifth octant.

We keep the encoding of the first half of the traversal independent of the second half (which contains the fifth octant), and choose the following solution for the first half: the first bit encodes $\gate 3$ (0 if in the centre; 1 if on an edge), the remaining three bits in order of decreasing significance give the three coordinates of $\gate 4$ in order of increasing index (for each coordinate: 0 if in the centre, 1 if not).

For example, consider the gate sequence of the curve in Figure~\ref{fig:edgecrossingcurves}g, identified by $Pcmn.gh.st = \curvename{Si00.cc.LT}$. First consider the first half of the curve. The gate $\gate 3$ between the third and the fourth octant is the centre of the cube, so the first bit is a zero. The gate $\gate 4$ between fourth and the fifth octant has coordinates $(0,0,1)$, so the next three bits are 001, and therefore $s = 0001 = \curvename{L}$ (see Table~\ref{tab:hexadecimal}). The gate $\gate 5$ between the third-last (sixth) and the fourth-last (fifth) octant is also the centre of the cube, so the first bit of $t$ is a zero. The gate $\gate 4$ has coordinates $(0,0,1)$; under the inverse of transformation $c = \curvename{i} = [3,-2,1]$ the coordinates are $(1,0,0)$, and therefore the next three bits of $t$ are 100. Thus $t = 0100 = \curvename{T}$.

Gate sequences for vertex-gated, edge-crossing curves can easily be enumerated by exhaustive search: it turns out there are 29 such gate sequences, which realize 14 different base patterns. Among these gate sequences are \curvename{Ca00.cc.hh} (Hil$^3_1$.A(b) from Alber and Niedermeier~\cite{Alber}), \curvename{Ca00.cc.h4} (Hil$^3_1$.A(d)), \curvename{Ca00.cc.TT} (Hil$^3_1$.A(c)), \curvename{Ca00.cc.44} (Hil$^3_1$.A(a)), and \curvename{Si00.cc.LT} (Hil$^3_1$.B(a),  reversed, and Hil$^3_1$.B(b)). From the 29 gate sequences, 10 sequences are symmetric and 19 sequences are asymmetric. A full list is given in Appendix~\ref{apx:schemes}, Table~\ref{tab:vertexgated}.

\paragraph{Curves}
To fully specify a vertex-gated curve, we need to specify how exactly the curve is transformed within each octant. For each octant $C_i$, the locations of the gates are given by the gate sequence. This still leaves two binary choices per octant. The first choice is whether to use a forward or a reverse copy of the curve, that is, whether to map the entrance gate of $\tau$ to $\gate {i-1}$ and the exit gate to $\gate i$, or the other way around. The second choice is whether to use only rotation, scaling, translation and/or reversal to map $\tau$ to $\tau[(i-1)/8,i/8]$, or to employ also reflection. The difference between the two options for the second choice is a reflection in the diagonal plane that bisects $C_i$ and contains both gates $\gate {i-1}$ and $\gate i$.

We encode these choices by two pairs of hexadecimal digits following the name of the gate sequence. The general pattern of a curve name is then $Pcmn.\curvename{cc}.st.op.qr$, where $Pcmn.\curvename{cc}.st$ identifies the gate sequence, $o$ and $p$ specify the reversals in the first and the second half of the curve, respectively, and $q$ and $r$ specify the reflections in the first and the second half, respectively. Reversals and reflections are specified using one bit per octant. For the first half of the curve, 0 means non-reversed/non-reflected; 1~means reversed/reflected. For the second half of the curve, the meaning of 0 and 1 is modified according to the transformation that is used in the base pattern encoding, see Table~\ref{tab:ccencoding}.

\begin{table}
\caption{Interpretation of the reversal and reflection bits in the names of vertex-gated curves.}
\begin{tabular}{|l|cccc|}
\hline
& \multicolumn{2}{c}{reversal encoding} & \multicolumn{2}{c|}{reflection encoding}\\
& 0 & 1 & 0 & 1 \\
\hline
\hline
first half                             & non-reversed & reversed     & non-reflected & reflected     \\
second half, $c$ maintains orientation & reversed     & non-reversed & non-reflected & reflected     \\
second half, $c$ induces reflection    & reversed     & non-reversed & reflected     & non-reflected \\
\hline
\end{tabular}
\label{tab:ccencoding}
\end{table}

For example, consider the curve \curvename{Ca00.cc.44.hh.db} in Figure~\ref{fig:edgecrossingcurves}c. The first reversal digit is '\curvename{h}', binary 0010 (see Table~\ref{tab:hexadecimal}), so among the first four octants, only the third octant is reversed---in correspondence with the placement of the open dots in the figure. The second reversal digit is '\curvename{h}', binary 0010, but for the second half of the curve, the meaning of 0 and 1 is flipped, as is the order of the bits. Thus, among the last four octants, the third-\emph{last} octant is the only one that is \emph{not} reversed. The digit '\curvename{d}', binary 1011, tells us that that the curve is reflected in the first, third and fourth octant, but not in the second octant. The digit '\curvename{b}', binary 0011, should be interpreted with the meaning of 0 and 1 flipped (because the transformation $\curvename{a} = [1,2,-3]$ induces a reflection) and counting from the back, so the last and the second-last octant (corresponding to the zeros in 0011) are reflected, and the third-last and fourth-last octant (the ones in 0011) are not.

Figure~\ref{fig:edgecrossingcurves} shows several other examples of vertex-gated, edge-crossing curves. At first sight, most of the chosen examples may look very similar, but they each have unique properties that are not shared by the other curves.

\paragraph{Assigning unique names to curves}
If the gate sequence is symmetric then there is some redundancy in the encoding scheme that we just defined. Let $Pcmm.\curvename{cc}.ss$ denote a symmetric gate sequence for vertex-gated, edge-crossing curves. To avoid an unnecessary case distinction, note that Table~\ref{tab:vertexgated} in Appendix~\ref{apx:schemes} tells us that $c$ must denote the transformation $\curvename{a} = [1,2,-3]$, which induces a reflection. Let $\overline{x}$ denote the bitwise complement of a hexadecimal digit $x$.

First, observe that $P\curvename{a}mm.\curvename{cc}.ss.op.qr$ and $P\curvename{a}mm.\curvename{cc}.ss.\overline{p}\overline{o}.\overline{r}\overline{q}$ now encode curves that are reversed and reflected copies of each other, and should therefore get the same name. In such cases we will only use the lexicographically smallest name (interpreting digits by value, not by symbol).

Second, note that for symmetric curves, the two binary choices per octant do not give us four, but only two options per octant, because the curve equals its reversed reflection. Therefore we will not use reversal in the first half of the curve and not use non-reversal in the second half of the curve, and always give symmetric curves names of the form $P\curvename{a}mm.\curvename{cc}.ss.\curvename{II}.qq$ (recall that $\curvename{I} = 0000$). Alternatively, we may use the shorthand $P\curvename{a}mm.\curvename{c}sq$.

It remains to filter out the names that employ reversal to encode a symmetric curve. Consider a vertex-gated, edge-crossing curve that has a symmetric gate sequence and name $P\curvename{a}mm.\curvename{cc}.ss.op.qr$. Since transformation \curvename{a} induces a reflection, reversing the curve within an octant, that is, mapping $\tau(0)$ to $\gate i$ and $\tau(1)$ to $\gate {i-1}$, rather than the other way around, results in a reflection of the base pattern. Therefore, if we write \xor<x,y> to denote the digit that is obtained by applying the bitwise exclusive-or operation to the digits $x$ and $y$ in binary representation, then $P\curvename{a}mm.\curvename{cc}.ss.op.qr$ and $P\curvename{a}mm.\curvename{cc}.ss.\curvename{II}.\xor<o,q>\xor<p,r>$ are the same with respect to the order in which the second-level subcubes are traversed. If $\xor<o,q> = \xor<p,r>$, then that order is symmetric, so the whole curve is symmetric, and $P\curvename{a}mm.\curvename{cc}.ss.\curvename{II}.\xor<o,q>\xor<p,q>$ is its name. Conversely, $P\curvename{a}mm.\curvename{cc}.ss.op.qr$ names an asymmetric curve if and only if
$P\curvename{a}mm.\curvename{cc}.ss.op.qr$ is lexicographically smaller than $P\curvename{a}mm.\curvename{cc}.ss.\overline{p}\overline{o}.\overline{r}\overline{q}$ \emph{and} \xor<o,q> differs from \xor<p,r>.

\paragraph{Counting the curves}
As explained above, given the gate sequence, we need to make two binary choices per octant, which makes 16 choices for the full curve. Thus we get $2^{16} = 65\,536$ curves per gate sequence. However, if the gate sequence is symmetric, then the number of different curves is smaller. Symmetric curves have names ending in II followed by two identical digits: thus there are 16 such curves for each symmetric gate sequence. Asymmetric curves from symmetric gate sequences have 16 choices for each of the digits, except that we need to exclude the choice for the last digit that results in $\xor<o,q> = \xor<p,r>$, and we need to divide by two because we only keep the lexicographically smallest version out of $P\curvename{a}mm.\curvename{cc}.ss.op.qr$ and $P\curvename{a}mm.\curvename{cc}.ss.\overline{p}\overline{o}.\overline{r}\overline{q}$. Thus there are $16 \cdot 16 \cdot 16 \cdot 15 / 2 = 30\,720$ asymmetric curves from each symmetric gate sequence. In total we have 30\,736 curves for each of 10 symmetric gate sequences and 65\,536 different curves for each of 19 asymmetric gate sequences: 1\,552\,544 different edge-crossing vertex-gated curves in total.

\begin{figure}
\noindent
\raisebox{1ex}{(a)}\includegraphics[width=0.48\hsize,page=1]{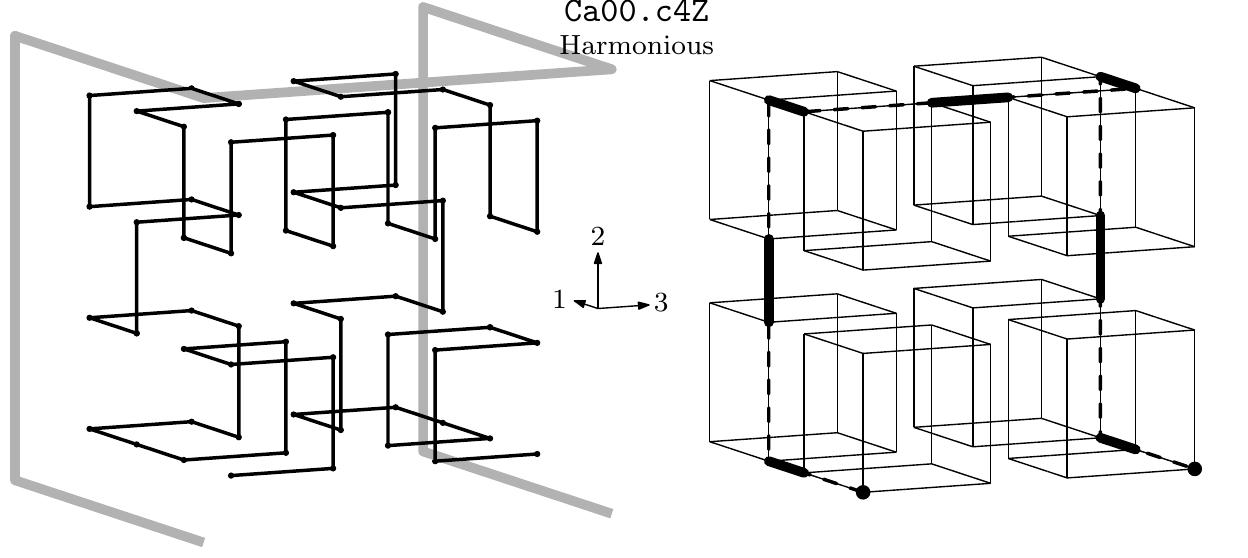}\hfill \raisebox{1ex}{(b)}\includegraphics[width=0.48\hsize,page=2]{curves.pdf}\\[1.8ex]
\raisebox{1ex}{(c)}\includegraphics[width=0.48\hsize,page=6]{curves.pdf}\hfill \raisebox{1ex}{(d)}\includegraphics[width=0.48\hsize,page=3]{curves.pdf}\\[1.8ex]
\raisebox{1ex}{(e)}\includegraphics[width=0.48\hsize,page=4]{curves.pdf}\hfill \raisebox{1ex}{(f)}\includegraphics[width=0.48\hsize,page=18]{curves.pdf}\\[1.8ex]
\raisebox{1ex}{(g)}\includegraphics[width=0.48\hsize,page=23]{curves.pdf}\hfill
\raisebox{1ex}{(h)}\includegraphics[width=0.48\hsize,page=5]{curves.pdf}
\caption{Examples of vertex-gated, edge-crossing curves.
(a)~The Harmonious Hilbert curve, with maximum facet-harmony.
(b)~Butz's coordinate-shifting curve.
(c)~Alfa, the vertex-gated hyperorthogonal curve, which has excellent locality-preserving properties, in particular with respect to bounding boxes and $L_\infty$-dilation (see Section~\ref{sec:observationslocality}).
(d)~The Sasburg curve, a metasymmetric curve with optimal score on the worst-case surface ratio.
(e)~A realization of another symmetric gate sequence for face-continuous curves. An asymmetric gate sequence \curvename{Ca00.cc.h4} for face-continuous curves can be constructed by combining the left half of the gate sequence in Figure e~with the right half of the sequence in Figure~b.
(f)~The Imposter curve, whose approximating curve $A_2$ wrongfully suggests maximum facet-harmony and palindromy.
(g)~A curve with full interior-diagonal harmony.
(h)~Another metasymmetric curve.}
\label{fig:edgecrossingcurves}
\end{figure}

\subsubsection{Curves that are vertex-gated and facet-crossing}\label{sec:facetcrossing}

\begin{figure}
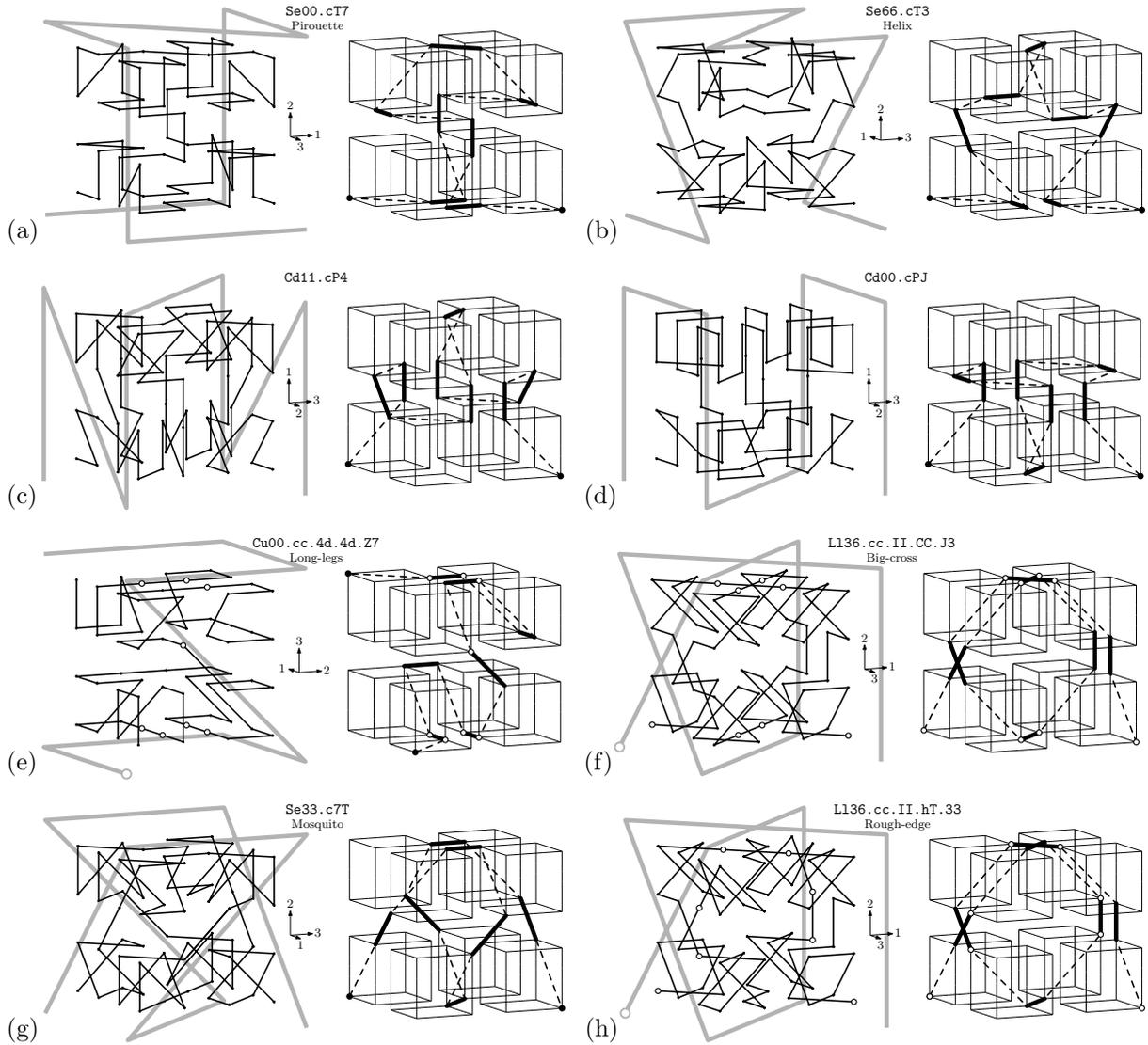

\noindent
\raisebox{1ex}{(a)}\includegraphics[width=0.48\hsize,page=8]{curves.pdf}\hfill \raisebox{1ex}{(b)}\includegraphics[width=0.48\hsize,page=9]{curves.pdf}\\[2ex]
\raisebox{1ex}{(c)}\includegraphics[width=0.48\hsize,page=7]{curves.pdf}\hfill \raisebox{1ex}{(d)}\includegraphics[width=0.48\hsize,page=24]{curves.pdf}\\[2ex]
\raisebox{1ex}{(e)}\includegraphics[width=0.48\hsize,page=11]{curves.pdf}\hfill \raisebox{1ex}{(f)}\includegraphics[width=0.48\hsize,page=20]{curves.pdf}\\[2ex]
\raisebox{1ex}{(g)}\includegraphics[width=0.48\hsize,page=10]{curves.pdf}\hfill
\raisebox{1ex}{(h)}\includegraphics[width=0.48\hsize,page=21]{curves.pdf}
\caption{Examples of vertex-gated, facet-crossing curves.
(a,b,c,d) Metasymmetric curves with full interior-diagonal harmony.
(e) An attempt to maximize the number of times four consecutive subcubes in the $4\times 4\times 4$ grid lie in a row.
(f) An attempt to maximize the number of times four consecutive subcubes in the $4\times 4\times 4$ grid lie on a diagonal.
(g) A crazy-looking, but symmetric, curve.
(h) An attempt to maximize the worst-case surface ratio (for example, the curve contains four consecutive octants that do not share any octant facets).}
\label{fig:facetcrossingcurves}
\end{figure}

\paragraph{Gate sequences}
The entrance gate $\gate 0$ must lie at a corner of the unit cube, so the coordinates of $\gate 0$ sum up to $\frac12 \pmod 1$. By induction, one can now show that all gates $\gate i$ for $0 \leq i \leq 8$ must have the same coordinate sum modulo 1, and thus, they must all be corners or facet midpoints of the unit cube. Moreover, since the gates $\gate i$ for $1 \leq i \leq 7$ must connect two octants, they cannot be corners of the unit cube, so each gate $\gate i$, for $1 \leq i \leq 7$, must be the midpoint of a facet of the unit cube that is shared by the $i$-th octant and the next.

Note that none of the base patterns from Table~\ref{tab:octantorders} start with a pair of octants that only differ in the third coordinate\footnote{The first halves of the \curvename{C}, \curvename{L}, \curvename{N}, and \curvename{X} patterns do not even contain \emph{any} pair of octants that only differ in the third coordinate. In the first half of an \curvename{S} pattern, the third and fourth octant in the default order differ only in the third coordinate, but no base pattern has a permutation of \curvename{S} that starts with those two octants. In the first half of a \curvename{Y} pattern, the second and fourth octant in the default order differ only in the third coordinate, but no base pattern has a permutation of \curvename{Y} that starts with those two octants.}. Therefore, for $\gate 1$, there is never a choice between the midpoint of a facet orthogonal to the first coordinate axis and the midpoint of a facet orthogonal to the second coordinate axis: if there is a choice, it is between the midpoint of a facet orthogonal to the third coordinate axis and the midpoint of a facet orthogonal to another coordinate axis. Therefore it suffices to encode whether $\gate 1$ lies on a facet that is orthogonal to the third coordinate axis. We implement this by letting the first bit encode the third coordinate of $\gate 1$ (0 if zero, 1 if non-zero).

For the remaining gates, note that $\gate i$ must be one of the two unit cube facet midpoints that are incident on the $i$-th octant and differ from $\gate {i-1}$. The three facet midpoints incident on the $i$-th octant can be distinguished by their non-zero coordinate. Thus we encode the location of $\gate i$ (for $2 \leq i \leq 4$) as follows. If the third coordinate of $\gate {i-1}$ is zero, then the $i$-th bit encodes the third coordinate of $\gate i$ (0 if zero, 1 if non-zero) to distinguish between the midpoint of a facet orthogonal to the third coordinate axis and the midpoint (not $g_{i-1}$) of a facet orthogonal to another coordinate axis. If the third coordinate of $\gate {i-1}$ is non-zero, then the $i$-th bit encodes the second coordinate of $\gate i$, to distinguish between the midpoint of a facet orthogonal to the first coordinate axis and the midpoint of a facet orthogonal to the second coordinate axis.

Gate sequences for vertex-gated, facet-crossing curves can easily be enumerated by exhaustive search: it turns out there are 149 such gate sequences, which realize 54 different base patterns. From the 149 gate sequences, 18 sequences are symmetric and 131 sequences are asymmetric. A full list is given in Appendix~\ref{apx:schemes}, Table~\ref{tab:vertexgated}.

\paragraph{Curves}
Given a gate sequence, vertex-gated, facet-crossing curves are specified in the same way as vertex-gated, edge-crossing curves. Note, however, that there is a subtle difference in the conditions for asymmetric curves from symmetric gate sequences. As one can see in Table~\ref{tab:vertexgated}, all symmetric, vertex-gated, facet-crossing gate sequences are symmetric by transformation \curvename{d} or \curvename{e} from Table~\ref{tab:transformations}, which is a rotation without reflection. In this respect the facet-crossing gate sequences differ from the edge-crossing gate sequences, where reversal of a symmetric gate sequence induced reflection of the base pattern. Consequently, names of asymmetric curves are a bit easier to recognize. If $Pcmm.\curvename{cc}.ss$ is a symmetric gate sequence for vertex-gated, facet-crossing curves, then $Pcmm.\curvename{cc}.ss.op.qr$ names an asymmetric curve if and only if $Pcmm.\curvename{cc}.ss.op.qr$ is lexicographically smaller than $Pcmm.\curvename{cc}.ss.\overline{p}\overline{o}.rq$ \emph{and} the last two digits are simply different.

Some examples of vertex-gated, facet-crossing curves are shown in Figure~\ref{fig:edgecrossingcurves}.

\paragraph{Counting the curves}
The curves can be counted as with vertex-gated, edge-crossing gate sequences. We have 30\,736 curves for each of 18 symmetric gate sequences and 65\,536 different curves for each of 131 asymmetric sequences: 9\,138\,464 different facet-crossing vertex-gated curves in total.

\subsection{Vertex-edge-gated curves}\label{sec:vertexedgegated}

We first establish where the gates lie relative to each other:

\begin{theorem}\label{thm:vertex-edge-gates}
If one gate of a three-dimensional Hilbert curve is at a vertex and the other gate lies in the interior of an edge, then the edge gate lies exactly in the middle of the edge.
Moreover, the two gates lie on a common facet of the unit cube, but not on a common edge of the unit cube.
\end{theorem}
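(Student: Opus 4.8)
The plan is to trace the curve octant by octant, as in the proof of Theorem~\ref{thm:uniqueqss}, but to organize the case analysis around the \emph{type} of each gate rather than its exact location. Without loss of generality assume the entrance gate $\tau(0)$ is the vertex gate, at a vertex $V$ of the unit cube, and the exit gate $\tau(1)$ is the edge gate, at relative position $\lambda\in(0,1)$ along some edge $\epsilon$ of the unit cube. By self-similarity the subcurve in an octant $C_i$ is a copy of $\tau$ (possibly reversed), so its two gates are the images of $\tau(0)$ and $\tau(1)$ under the similarity $\rho_i\circ\gamma_i$. Hence each gate of $C_i$ is either a \emph{v-gate}---a vertex of $C_i$, the image of $V$---or an \emph{e-gate}---a point in the relative interior of an edge of $C_i$, at relative position $\lambda$, the image of the edge gate; which of the entrance and exit gate of $C_i$ is of which type is decided by $\chi_i$.

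The first thing I would prove is a local fact about each junction $\gate i = \tau(i/8)$ for $1\le i\le 7$: \emph{$\gate i$ cannot be a v-gate of $C_i$ and an e-gate of $C_{i+1}$, nor vice versa.} Indeed $\gate i$ lies in the shared face $\phi = C_i\cap C_{i+1}$, which is a facet, an edge, or a vertex of both octants, and a short case analysis on $\dim\phi$ rules this out: if $\phi$ is a facet, a vertex of one octant lying on $\phi$ is a corner of $\phi$, hence also a vertex of the other octant; if $\phi$ is an edge, $\gate i$ is an endpoint of $\phi$, again a vertex of both; if $\phi$ is a vertex, $\gate i$ equals that vertex. In all three cases $\gate i$ cannot lie in the relative interior of an edge of the other octant. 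It follows that the exit type of $C_i$ equals the entrance type of $C_{i+1}$, so the orientations $\chi_i$ alternate along the traversal. Since $\gate 0 = V$ is a vertex of the unit cube lying in $C_1$, it is a vertex of $C_1$, so the entrance gate of $C_1$ is a v-gate---which forces $\chi_1$ to be forward, the entrance of a reversed copy of $\tau$ being an e-gate. Alternation then makes $\chi_i$ forward for odd $i$ and reversed for even $i$; in particular $\chi_8$ is reversed, so the exit gate of $C_8$, which is $\tau(1)$, is a v-gate of $C_8$, that is, a vertex of $C_8$. But a vertex of an octant that lies in the interior of the unit-cube edge $\epsilon$ can only be the midpoint of $\epsilon$, so $\lambda = 1/2$: the edge gate lies at the midpoint of its edge.

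For the remaining claim I would split into the three possibilities for the position of $V$ relative to $\epsilon$: (a) $V$ is an endpoint of $\epsilon$ (equivalently, $V$ and the edge gate lie on a common edge); (b) $V\notin\epsilon$ but $V$ and $\epsilon$ lie on a common facet; (c) $V$ and $\epsilon$ lie on no common facet. Case (b) is precisely the assertion, so it suffices to eliminate (a) and (c). In case (a), within $C_1$ the subcurve runs from the unit-cube corner $V$ (which the similarity $\rho_1\circ\gamma_1$ fixes) to $\gate 1$; since $\chi_1$ is forward, $\gate 1$ is an e-gate of $C_1$, and by the junction fact it is also an e-gate of $C_2$, so---using $\lambda=1/2$---$\gate 1$ is the midpoint of an edge $e_1$ that is an edge of both $C_1$ and $C_2$ (two distinct octant edges of the $2\times2\times2$ grid cannot share an interior point). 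By self-similarity $e_1$ is positioned relative to the corner $V$ of $C_1$ exactly as $\epsilon$ is relative to $V$ in $\tau$, namely incident to $V$. But an edge of $C_1$ incident to the unit-cube vertex $V$ is one half of a unit-cube edge, hence lies on the boundary of the unit cube and is an edge of $C_1$ alone---contradicting that it is also an edge of $C_2$. Case (c) I would dispose of by following the curve through the octants and tracking which octant each $\gate i$ can connect to, as in cases (ii), (iv) and (vi) of the proof of Theorem~\ref{thm:uniqueqss}: the alternating structure together with the junction fact leaves only a bounded set of candidate gate sequences starting at $C_1$ (at $V$), and none of them can end at $C_8$ with $\tau(1)$ in the required skew position. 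Alternatively one can argue with separating planes through the centre of the cube.

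The step I expect to be the main obstacle is case (c). Unlike the edge-crossing case, which is closed off at once by the observation that the relevant octant edge lies on the boundary of the unit cube, the cube-crossing case appears to need an explicit, if elementary, enumeration of the few feasible gate sequences together with a verification that every branch dead-ends; the junction case analysis and the bookkeeping about which octant edges can coincide are routine but also require some care.
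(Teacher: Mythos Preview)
Your proposal is correct and follows the paper's line: the junction/alternation argument that forces $\chi_i$ to alternate and hence makes $\tau(1)$ a vertex of $C_8$ (so $\lambda=\tfrac12$) is exactly the paper's first step, stated there in one sentence; your cases (a), (b), (c) are the paper's cases (i), (iii), (ii); and your disposal of (a) is essentially the paper's reference back to case (iii) of Theorem~\ref{thm:uniqueqss}, just spelled out in more detail.

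The one place where the paper has a sharper idea than your outline is case~(c). Rather than enumerate gate sequences directly, the paper first extracts a structural consequence of the cube-crossing hypothesis: within each octant $C_i$, the v-gate and the e-gate lie on opposite sides with respect to every centre plane of the unit cube (if the v-gate is on the inner facet of $C_i$ relative to a centre plane $P$, then the three ``far'' edges of $C_i$ are all off $P$, so the e-gate is off $P$ too). Hence one can never enter and leave an octant across the same centre plane---equivalently, no two consecutive edges of $A_1$ make an acute angle. This \emph{no-turns condition} restricts the base pattern to just seven possibilities (\curvename{Ca00}, \curvename{Cd00}, \curvename{Cl00}, \curvename{Lacc}, \curvename{Ll1c}, \curvename{Na00}, \curvename{Si00}), five of which are dismissed immediately because their first and last octants are adjacent and so cannot accommodate a cube-crossing pair of gates; only \curvename{Cd00} and \curvename{Cl00} need a short ad~hoc check. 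Your ``separating planes'' alternative is precisely this observation; developing it is what turns case~(c) from an open-ended enumeration into a handful of lines.
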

\begin{proof}
Assume the entrance gate is at a vertex and the exit gate is in the interior of an edge (the reverse configuration is analogous). This implies that the traversal must be reversed in every second octant to be able to match the gates between the octants. Thus the traversal ends at a \emph{vertex} of the eighth octant that lies in the interior of an edge of the unit cube---thus it actually lies in the middle of that edge. We can now distinguish three possible locations for the exit gate.

(i) The exit gate lies on one of the three edges incident on the entrance gate. This is impossible by the same argument as for case (iii) in the proof of Theorem~\ref{thm:uniqueqss} (after traversing the first octant within a cube, it would be impossible to connect to the second octant).

\begin{figure}
\centering\includegraphics[page=2]{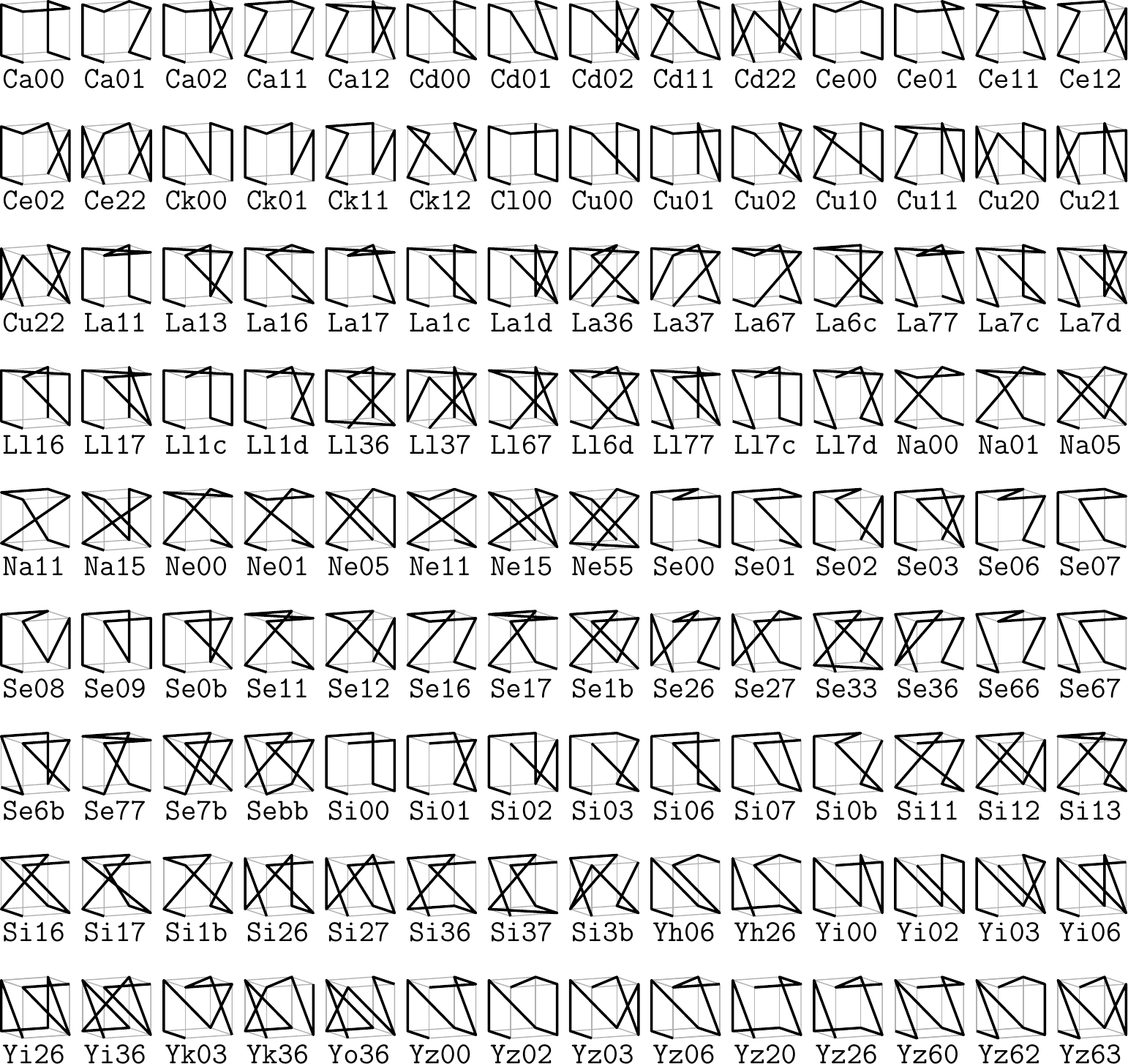}
\caption{Base patterns in which no octant is separated from its predecessor and its successor by a single axis-parallel plane (\curvename{Lacc} drawn upside down).}
\label{fig:wideturnpatterns}
\end{figure}

(ii) The exit gate lies on one of the three edges that are not on any of the unit cube facets that contain the entrance gate. This implies that one can never enter and leave an octant across the same axis-parallel centre plane of the cube. We will call this the \emph{no-turns condition}; it is equivalent to the condition that no pair of consecutive edges of $A_1$ makes an acute angle. The reader may now verify that the only base patterns that comply with the no-turns condition are \curvename{Ca00}, \curvename{Cd00}, \curvename{Cl00}, \curvename{Lacc}, \curvename{Ll1c}, \curvename{Na00}, and \curvename{Si00}\footnote{Half base patterns that comply are \curvename{C}*\curvename{0}*, \curvename{L}*\curvename{1}*, \curvename{L}*\curvename{c}*, \curvename{N}*\curvename{0}*, \curvename{N}*\curvename{5}*, and \curvename{S}*\curvename{0}*, but \curvename{N}*\curvename{5}* cannot be extended to a complete base pattern that complies with the no-turns condition.}, see Figure~\ref{fig:wideturnpatterns}.

Note that for any of these patterns to be realizable under the conditions of case (ii), the exit gate must be a vertex of the last octant that lies in the middle of an edge of the unit cube that is not on a common unit cube facet with the the first octant. In other words, if the first octant is the lower left front octant, then the exit gate must lie in the middle of the top right edge, the top back edge, or the right back edge. With \curvename{Ca00}, \curvename{Lacc}, \curvename{Ll1c}, \curvename{Na00} and \curvename{Si00}, this is not possible, because the first and the last octant are adjacent, and the last octant does not have any vertices on the top right, top back, or right back edge. With \curvename{Cd00}, the last octant has one such vertex (the middle of the top right edge), but it cannot be used as an exit gate since it lies on the octant facet that is shared by the second-last and the last octant and which must therefore contain the last octant's entrance gate (see Figure~\ref{fig:nocubecrossingvertexedgegated}a). Finally, the reader may verify that the only feasible gate sequence for the beginning of \curvename{Cl00} would put $\gate 2$ in the middle of the left back edge and $\gate 4$ in the middle of the front face. But thus, $\gate 4$, the fifth octants' entrance gate, lies on the octant facet that is shared by the fifth and the sixth octant and must, therefore, also contain the fifth octants' exit gate (see Figure~\ref{fig:nocubecrossingvertexedgegated}b). Thus, none of the base patterns that comply with the no-turns condition can be realized under the conditions of case (ii).

\begin{figure}
\centering
\includegraphics{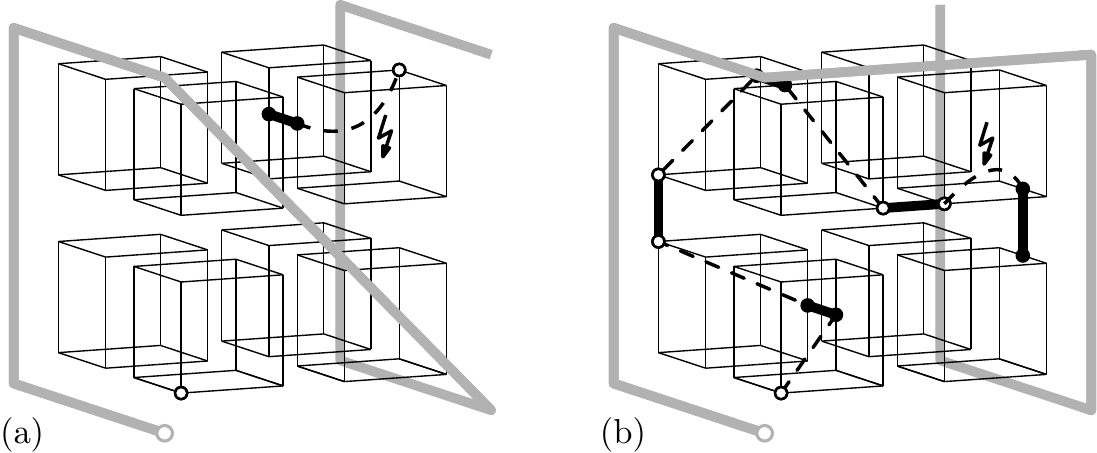}
\caption{(a) No cube-crossing vertex-edge-gated curve with base pattern \curvename{Cd00} could connect the last octant's entrance gate to the full curve's exit gate. (b) No cube-crossing vertex-edge-gated curve with base pattern \curvename{Cl00} could connect the fifth octant's entrance gate to the six octant's entrance gate.}
\label{fig:nocubecrossingvertexedgegated}
\end{figure}

(iii) The exit gate lies on a common unit cube facet, but not on a common unit cube edge with the entrance gate. This is the only possibility that remains.
\end{proof}

\paragraph{Gate sequences}
Theorem~\ref{thm:vertex-edge-gates} leaves only one possibility for the positions of the entrance gate and the exit gate relative to each other. Given only the location of the entrance gate $\gate {i-1}$ of an octant $C_i$, there are up to six different transformations of the complete curve that could map it to the curve within the octant, such that one of the gates $\gate 0$ or $\gate 8$ is mapped to $\gate {i-1}$. Correspondingly, there are up to six different possibilities for the location of the exit gate $g_i$ of $C_i$. However, the reader may verify that any particular facet of $C_i$ contains at most two of the possible locations for the exit gate $\gate i$, so at most two possible locations for $\gate i$ can be adjacent to the next octant. In fact, given the location of the entrance gate on an octant and the octant facet (if any) shared with the next octant, we only need to know whether the traversal within the octant is reflected to fully determine the location of the next gate.

Therefore, gate sequences are encoded with one bit per octant that simply indicates whether the traversal within the octant is reflected (1) or not (0). Note that thus, unfortunately, the encoding of the gate sequence in the first half of the curve is not independent from the second half: it depends on where the exit gate $\gate 8$ is located (on which edge), and it depends on the location of the fifth octant (to encode~$\gate 4$). Recall that reflections in the second half of the curve are subject to the transformation encoded by the second symbol of the base pattern name. Consider, for example, the curve \curvename{La13.cv.II} in Figure~\ref{fig:vertexedgegatedcurves}e: its reflections are encoded by $\curvename{II} = 0000\,0000$, which means no reflections. However, the second half of the curve is still subject to the reflection \curvename{a}, so that all octants in the second half of the pattern do in fact contain a reflected copy of the curve.

The combination of vertex and edge gates allows great flexibility in how to connect things up. An exhaustive search brought up a wide variety of gate sequences, 2\,758 in number, which realize 112 different base patterns. The gate sequences are listed in Appendix~\ref{apx:schemes}, Tables \ref{tab:vertexedgegated1} and~\ref{tab:vertexedgegated2}: for each entry in the table, the first column gives a prefix (third-level description), and the second and third column give a number of possibilities for the first and the second symbol, respectively, of the gate sequence specification. Each combination of a prefix, one symbol from the second column, and one symbol from the third column, constitutes a gate sequence name for a vertex-edge-gated curve.

\paragraph{Curves}
The location of the gates $\gate 0, ..., \gate 8$ leaves no freedom with respect to the rotations, reflections and/or reversals of the traversals within the octants. Therefore, there is only one curve per gate sequence, and a gate sequence name suffices to identity a curve. Some examples are shown in Figure~\ref{fig:vertexedgegatedcurves}. Note the examples of centred, standing curves: as we will prove in Section~\ref{sec:observationsorientation} (Corollary~\ref{cor:standing}) and Appendix~\ref{apx:verifyobservations} (Theorem~\ref{thm:CimpliesVE}), these properties cannot be obtained with vertex-gated curves.

\begin{figure}
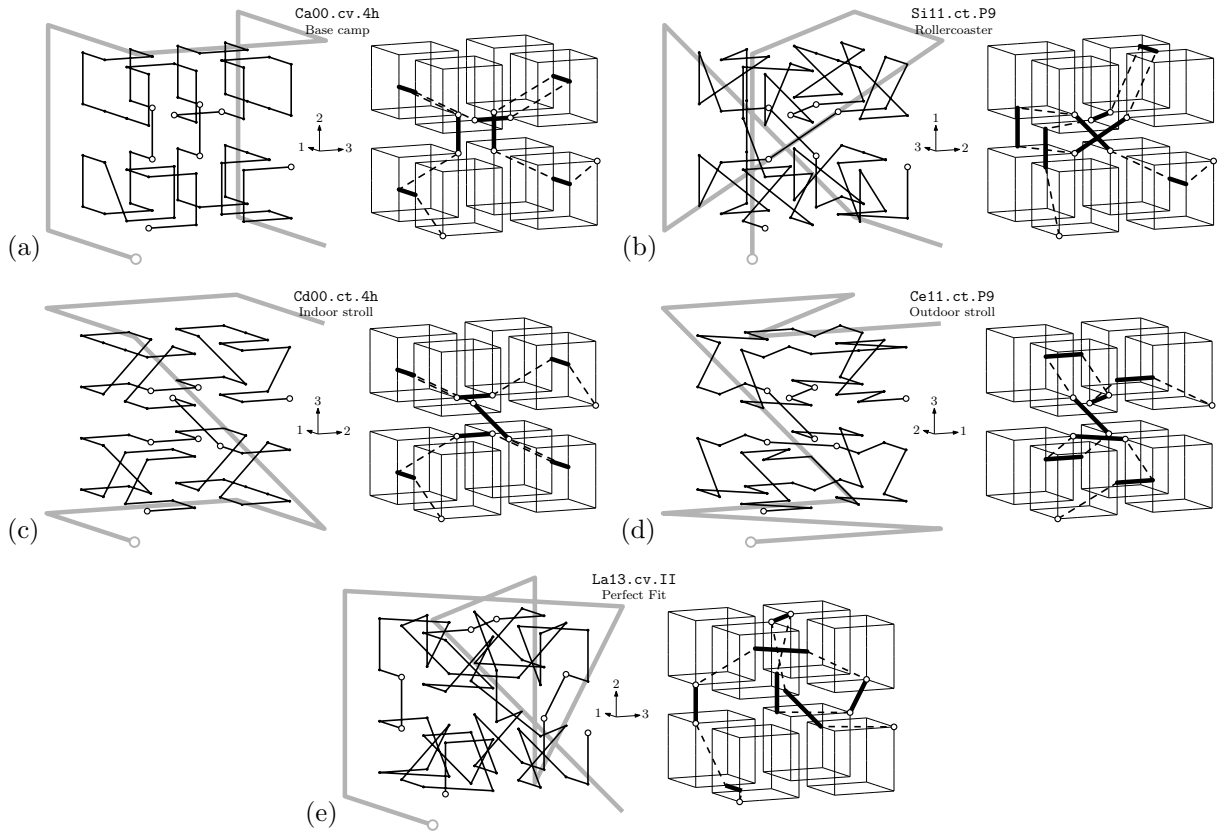

\noindent
\hbox to\hsize{\raisebox{1ex}{(a)}\includegraphics[width=0.48\hsize,page=12]{curves.pdf}\hfill \raisebox{1ex}{(b)}\includegraphics[width=0.48\hsize,page=13]{curves.pdf}}\\[2ex]
\hbox to\hsize{\raisebox{1ex}{(c)}\includegraphics[width=0.48\hsize,page=17]{curves.pdf}\hfill \raisebox{1ex}{(d)}\includegraphics[width=0.48\hsize,page=19]{curves.pdf}}\\[2ex]
\hbox to\hsize{\hfill \raisebox{1ex}{(e)}\includegraphics[width=0.48\hsize,page=22]{curves.pdf}\hfill}
\caption{Examples of vertex-edge-gated curves.
(a) The centred, standing, well-folded curve that may have the most regular structure of any Hilbert curve.
(b) A crazy curve with many diagonal connections through the centre. It exhibits full harmony on the interior diagonals. Its worst-case locality-preserving properties are among the worst of all Hilbert curves.
(c,d) Two standing curves that would make for a relatively leisurely stroll if built as a three-dimensional labyrinth: there are no vertical edges, and a minimal number of sloped edges (only one per octant). These properties remain true in recursion.
(e) The only Hilbert curve that I found to be uniquely defined by its base pattern, \curvename{La13}: there is only one way to fit the octants together in this pattern.}
\label{fig:vertexedgegatedcurves}
\end{figure}

\subsection{Vertex-facet-gated curves}\label{sec:vertexfacetgated}

\begin{theorem}\label{thm:vertex-facet-gates}
If one gate of a three-dimensional Hilbert curve is at a vertex and the other lies in the interior of a facet, then the facet gate lies exactly in the middle of the facet. Moreover, the two gates do not lie on a common facet of the unit cube.
\end{theorem}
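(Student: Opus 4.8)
The plan is to argue as in the proof of Theorem~\ref{thm:vertex-edge-gates}: first fix the direction of the curve by a harmless normalization, then propagate the self-similar structure down through the octants. Without loss of generality assume the entrance gate $\gate 0=\tau(0)$ is the vertex gate $V$ and the exit gate $\gate 8=\tau(1)$ is the facet gate $W$, which lies in the relative interior of a facet $G$ of the unit cube (the reverse configuration follows by applying the argument to the reversed curve); place the unit cube at $[0,1]^3$. For an octant $C_i$, the subtraversal inside $C_i$ is $\rho_i\circ\gamma_i$ applied to $\tau$ or to its reverse, so writing $\sigma_i=\rho_i\circ\gamma_i$, the ordered pair of gates $(\gate{i-1},\gate i)$ equals $(\sigma_i(V),\sigma_i(W))$ if $C_i$ is traversed forward and $(\sigma_i(W),\sigma_i(V))$ if it is reversed.

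The crux is a coordinate-parity bookkeeping showing that $C_1,C_3,C_5,C_7$ are traversed forward while $C_2,C_4,C_6,C_8$ are reversed. A symmetry of the cube sends each coordinate value in $\{0,1\}$ to $\{0,1\}$ and each value in $(0,1)$ to $(0,1)$, and $\rho_i$ halves all coordinates and translates by a corner of $C_i$ (coordinates in $\{0,\tfrac12\}$); hence $\sigma_i(V)$, the image of a cube vertex, has all coordinates in $\{0,\tfrac12,1\}$, whereas $\sigma_i(W)$, the image of an interior-of-facet point, has exactly one coordinate in $\{0,\tfrac12,1\}$ and two outside it. Consequently $\sigma_i(V)$, whenever it lies in an octant, is a vertex of that octant, while $\sigma_i(W)$ is never a vertex of any octant. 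Now $\gate 0=V$ is a vertex of the unit cube lying in $C_1$, hence a vertex of $C_1$, so $\gate 0=\sigma_1(V)$: thus $C_1$ is forward and $\gate 1=\sigma_1(W)$. In $C_2$ the entrance gate $\gate 1$ is then not a vertex, so $\gate 1=\sigma_2(W)$: thus $C_2$ is reversed and $\gate 2=\sigma_2(V)$. But $\gate 2=\sigma_2(V)$, whose coordinates all lie in $\{0,\tfrac12,1\}$, lies in $C_3$ and is therefore a vertex of $C_3$, which returns us to the starting situation; iterating yields the claimed alternation.

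Since $C_8$ is reversed, its exit gate $\gate 8=\sigma_8(V)$ is a vertex of $C_8$, so all three coordinates of $W=\gate 8$ lie in $\{0,\tfrac12,1\}$. As $W$ lies in the interior of $G$, exactly one of its coordinates is in $\{0,1\}$ and the other two are in $(0,1)$; the only element of $\{0,\tfrac12,1\}\cap(0,1)$ is $\tfrac12$, so those two coordinates equal $\tfrac12$ and $W$ is the midpoint of $G$ --- the first claim. For the ``moreover'' part I would revisit $C_1$: here $\gate 1=\sigma_1(W)$ lies in the relative interior of the facet $\sigma_1(G)$ of $C_1$ (as $W$ lies in the relative interior of $G$ and $\sigma_1$ is a similarity of the cube onto $C_1$), and $\sigma_1(G)$ must be an interior facet of $C_1$, because the relative interior of an exterior facet of $C_1$ meets no other octant whereas $\gate 1$ is shared with $C_2$. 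On the other hand $\gate 0=\sigma_1(V)=V$ is the corner of $C_1$ --- the unique vertex of $C_1$ that is also a vertex of the unit cube --- and a vertex of this kind lies on none of the interior facets of $C_1$; hence $\sigma_1(V)\notin\sigma_1(G)$, and since $\sigma_1$ is a bijection, $V\notin G$.

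I expect the main obstacle to be making the parity bookkeeping fully airtight: one has to verify that at every octant the type of the incoming gate (a vertex of that octant, as opposed to a point in the relative interior of one of its facets) genuinely forces the direction of the traversal there, and in particular that an interior-of-facet gate can never accidentally coincide with a vertex of the next octant. Once the alternation is secured, the facet-midpoint statement is immediate, and the non-common-facet statement is a short consequence of the structure of $C_1$, entirely in the spirit of case~(iii) in the proof of Theorem~\ref{thm:uniqueqss}.
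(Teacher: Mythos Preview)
Your proof is correct and follows essentially the same route as the paper. The paper's own proof is a one-line reference back to the beginning of the proof of Theorem~\ref{thm:vertex-edge-gates} (with edges replaced by facets) together with its case~(i); what you have done is spell out that argument in full, making the alternation of forward and reversed octants rigorous via the coordinate-parity bookkeeping, and phrasing the ``moreover'' part as a direct implication (the corner vertex of $C_1$ lies on no interior facet of $C_1$) rather than as the contradiction in case~(iii) of Theorem~\ref{thm:uniqueqss}.
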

\begin{proof}
The proof is completely analogous to the beginning of proof Theorem~\ref{thm:vertex-edge-gates} (replacing edges by facets) up to and including case (i), leaving only the option of a facet gate in the middle of a facet that does not contain the entrance gate.
\end{proof}

\paragraph{Gate sequences}
Note that Theorem~\ref{thm:vertex-facet-gates} leaves only one possibility for the positions of the entrance and exit gates relative to each other.
As with the hypothetical vertex-edge-gated curves in case (ii) in the proof of Theorem~\ref{thm:vertex-edge-gates}, the no-turns condition applies. Therefore, only the base patterns \curvename{Ca00}, \curvename{Cd00}, \curvename{Cl00}, \curvename{Lacc}, \curvename{Ll1c}, \curvename{Na00}, and \curvename{Si00} need to be considered, from which \curvename{Ca00}, \curvename{Cd00}, \curvename{Na00}, and \curvename{Si00} can be eliminated right away because the last octant would have to have the entrance and the exit gate on the same octant facet. Traversing the octants one by one, starting from the vertex gate, forward and reverse copies of the curve must alternate so that they can connect at alternating vertex and facet gates. Thus, gates $\gate 0, \gate 2, \gate 4, \gate 6$ and $\gate 8$ will be at octant vertices and gates $\gate 1, \gate 3, \gate 5$ and $\gate 7$ will be in the centres of the octant facets that are shared by the octants they connect. This rules out the base patterns \curvename{Lacc} and \curvename{Ll1c}, because in these patterns, the seventh and the eight octant, who share $\gate 7$, do not share an octant facet.

This leaves only one possible base pattern, \curvename{Cl00}. The reader may now verify that only four gate sequences are possible, with gates $\gate 1, \gate 3, \gate 5$ and $\gate 7$ fixed as explained above, and $\gate 4$ in a fixed position on an edge of the unit cube. Only for $\gate 2$ and $\gate 6$ there is a choice: each of them can be either in the middle of an edge (\curvename{e}) or in the middle of a facet (\curvename{f}) of the unit cube. Thus we get the gate sequences \curvename{Cl00.cf.ee}, \curvename{Cl00.cf.ef}, \curvename{Cl00.cf.fe}, and \curvename{Cl00.cf.ff}.

\paragraph{Curves}
For each octant $C_i$, given the location of the gates $\gate {i-1}$ and $\gate i$, there is still freedom whether or not to reflect the traversal in the diagonal plane that contains both gates. This is encoded with one bit per octant: 0 if the transformation of the whole curve to the curve within the octant can be obtained without reflection; 1 if it requires a reflection. Thus, each gate sequence allows exactly $2^8 = 256$ different curves, and there are 1024 vertex-facet-gated curves in total. An example is shown in Figure~\ref{fig:boringcurves}a.

\begin{figure}
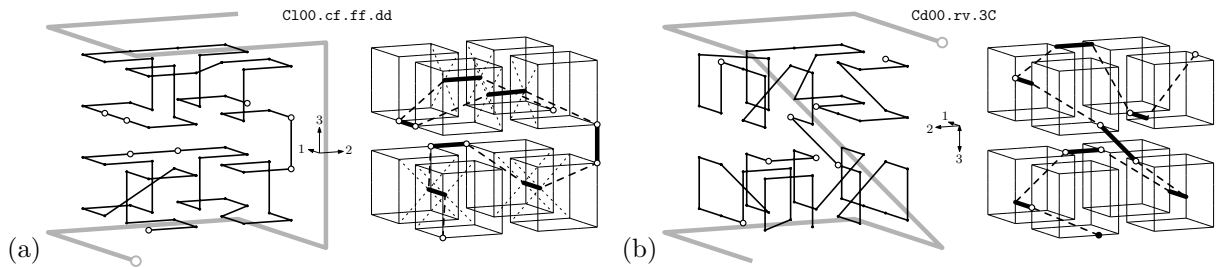

\noindent
\raisebox{1ex}{(a)}\includegraphics[width=0.48\hsize,page=14]{curves.pdf}\hfill \raisebox{1ex}{(b)}\includegraphics[width=0.48\hsize,page=15]{curves.pdf}
\caption{Examples of: (a) a standing vertex-facet-gated curve. (b) an edge-gated curve.}
\label{fig:boringcurves}
\end{figure}

\subsection{Edge-gated curves}\label{sec:edgegated}

For completeness, I have to discuss the edge-gated curves. Unfortunately, whereas combining edge gates with vertex gates unlocked a world of curves with interesting properties that could not be achieved with vertex gates alone, having edge gates at both ends seems to be a severe restriction. The analysis of the possible locations of the gates in such configurations ultimately takes more space than the description of the curves themselves, which number only sixteen and in which I have not discovered any particularly interesting properties. The impatient reader is welcome to follow me to Section~\ref{sec:facetgated} on facet-gated curves immediately; the curious reader may consider consulting Section~\ref{sec:software} to learn about the software tools that can be used to explore the sixteen ugly ducklings to see if there is a swan in there after all.

We first determine where the gates lie on their respective edges in Lemma~\ref{lem:edge-edge-gates-distance} below. After that we determine where the entrance and the exit gate lie relative to each other in Theorem~\ref{thm:edge-edge-gates}.

\begin{lemma}\label{lem:edge-edge-gates-distance}
If both gates lie on the interiors of edges, then each gate lies at distance 1/3 to the closest vertex of the unit cube.
\end{lemma}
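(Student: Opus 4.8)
The plan is to exploit self-similarity to set up a fixed-point equation for the position of an edge gate along its edge. Write $\lambda \in (0,1)$ for the fractional distance of the entrance gate $\tau^{+}(0)$ from the nearer endpoint of its edge of the unit cube, and similarly $\mu \in (0,1)$ for the exit gate $\tau^{-}(1)$. Since $\tau$ is octant-based and self-similar, the entrance gate of the whole curve is the image under $\rho_1\gamma_1$ of one of the two gates of $\tau$, sitting in the corresponding octant $C_1$; and because the whole entrance gate lies in the interior of an edge of the unit cube (not at a vertex), the octant $C_1$ must be positioned so that this image lies on an edge of $C_1$ collinear with the unit-cube edge carrying the gate. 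First I would argue that $C_1$ is one of the two octants meeting that unit-cube edge, and that the gate of $\tau$ mapped into $C_1$ is itself an edge gate lying on the half-length octant edge collinear with the unit-cube edge. This gives a relation of the form $\lambda = \tfrac12\lambda'$ or $\lambda = \tfrac12(1+\lambda')$ (equivalently $\lambda = 1 - \tfrac12\lambda'$ after reorienting), where $\lambda'$ is the fractional distance of whichever gate of $\tau$ got mapped in. The same analysis at the exit end gives the analogous relation for $\mu$.

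The key step is then to chase which gate maps to which. There are two sub-cases: either the entrance gate of $\tau$ maps (scaled by $1/2$) to the entrance gate of the whole curve, or the exit gate does (this is the ``forced reversal in alternating octants'' phenomenon already used in the proofs of Theorems~\ref{thm:vertex-edge-gates} and~\ref{thm:vertex-facet-gates}). Tracking the parity of reversals along the octant order and using that both gates of $\tau$ are edge gates, I would show that the only internally consistent possibilities reduce to a pair of coupled linear equations in $\lambda$ and $\mu$ whose coefficients are $\pm\tfrac12$ and constants in $\{0,1\}$. Concretely the dominant case yields something like $\lambda = \tfrac12\mu$ together with $\mu = \tfrac12\lambda$ composed over the recursion, or $\lambda = 1-\tfrac12\lambda$, etc.; collecting these, each of $\lambda$ and $\mu$ satisfies an equation of the form $x = a \pm \tfrac12 x$ with $a\in\{0,\tfrac12,1\}$, and the unique solution in $(0,1)$ is $x = 1/3$ (or $2/3$, i.e.\ distance $1/3$ to the \emph{other} endpoint — but ``closest vertex'' makes these the same statement). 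Ruling out the degenerate solutions $x=0$ and $x=1$ is automatic since those are vertices, contradicting the hypothesis that the gates lie in edge interiors.

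The main obstacle I expect is the bookkeeping in the chase: I must confirm that the octant receiving a scaled copy of a gate genuinely has an octant edge collinear with the relevant unit-cube edge (so that a factor of exactly $\tfrac12$, and not some diagonal distortion, is what relates $\lambda$ to $\lambda'$), and I must verify that the two fixed-point equations for the two ends are mutually consistent rather than over-determined — equivalently, that at least one edge-gated gate sequence actually exists, which we know a posteriori since there are sixteen such curves. A clean way to organize this is to argue purely about the first and last octants: the first octant contributes the ``$\lambda = \tfrac12(\text{something})$'' relation where the something is itself a fractional edge-distance \emph{of the same curve}, so by self-similarity it is again $\lambda$ or $\mu$; doing the same at the last octant closes the system. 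I would present the case distinction compactly (forward vs.\ reversed copy in $C_1$, and likewise in $C_8$), note that in every surviving case the resulting linear recurrence forces the fractional distance to $1/3$, and remark that the geometric admissibility of exactly one of these cases is exactly what Theorem~\ref{thm:edge-edge-gates} will pin down next.
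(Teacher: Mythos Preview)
Your fixed-point idea is the right starting point, and from $C_1$ alone it does force one of the distances to $1/3$. But the system does not close from $C_1$ and $C_8$ as you claim. If $a \neq z$ then forward and reversed copies must alternate; with $C_1$ carrying a forward copy (so your relation reads $a/2 = 1/2 - a$, hence $a = 1/3$), the eighth octant $C_8$ carries a reversed copy, and the exit $g_8$ of the whole curve is the image of the \emph{entrance} gate of $\tau$. Its distance to the nearest vertex of $C_8$ is therefore $a/2 = 1/6$, but that nearest vertex can a priori be either the unit-cube corner $u_8$ (yielding $z = 1/6$) or the edge midpoint (yielding $z = 1/2 - 1/6 = 1/3$). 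In this case your equation for $\mu$ reads $\mu \in \{\lambda/2,\ (1-\lambda)/2\}$, not anything of the decoupled form $\mu = c \pm \tfrac12\mu$ that you assert; and the spurious pair $(a,z) = (1/3, 1/6)$ is self-consistent under recursion into the last suboctant of $C_8$, its last sub-suboctant, and so on, so iterating your local relation never eliminates it.

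The paper fills this gap with a global invariant that your plan does not supply. Let $p_i$ be the octant vertex closest to $g_i$. By self-similarity, $p_i - p_{i-1}$ is (up to the isometry $\gamma_i$) always a half-scale image of $u_8 - u_0$, so the number of coordinates in which $p_{i-1}$ and $p_i$ differ is the same for every $i$; hence the coordinate sum of $p_i$ modulo $1$ shifts by the same fixed amount at each of the eight steps and returns to its starting value afterwards. Since $p_0$ is necessarily an edge midpoint (coordinate sum $\equiv 0 \pmod 1$), so is $p_8$; this excludes $p_8 = u_8$ and forces $1/2 = z + a/2$, hence $z = 1/3$. The ``parity of reversals'' you invoke is only the forward/backward alternation; the decisive parity is this coordinate-sum invariant running across all eight octants, not just the first and last.
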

\begin{proof}
Let $a > 0$ and $z > 0$ be the distances of the entrance and the exit gate, respectively, to the closest vertex of the unit cube.
Let $a_0,...,a_8$ be the distance of $\gate 0,...,\gate 8$, respectively, to the closest octant vertex. If $a \neq z$, forward and reverse copies of the curve must alternate to match up, so we can distinguish two cases: (i) $a_0,a_2,a_4,a_6,a_8 = a/2$ and $a_1,a_3,a_5,a_7 = z/2$, or (ii) $a_0,a_2,a_4,a_6,a_8 = z/2$ and $a_1,a_3,a_5,a_7 = a/2$. If case (ii) applies, then case (i) applies to the reverse of the curve; if $a = z$, case (i) also applies; so we may assume that case (i) applies without loss of generality. In particular, we have $a_0 = a_8 = a/2$.

Since the curve does not have vertex gates, the gates $\gate 0$ and $\gate 8$ cannot lie at a vertex of the first or last octant, respectively, and therefore they cannot lie exactly in the middle of an edge of the unit cube. So we have $a < 1/2$ and $z < 1/2$. Now we can unambiguously define $u_0$ and $u_8$ as the vertices of the unit cube that are closest to $\gate 0$ and $\gate 8$, respectively, and we define $p_0,...,p_8$ as the octant vertices closest to $\gate 0,...,\gate 8$, respectively. Since $a_0 = a/2 \neq a$, the octant vertex $p_0$ cannot be $u_0$ but must be a midpoint of an edge of the unit cube; therefore the coordinates of $p_0$ sum up to 0, modulo~1. Between each pair of points $p_{i-1},p_i$, for $i \in \{1,...,8\}$, some coordinates may change by 1/2 and others remain equal. By the self-similarity of the curve, the number of coordinates that change is always the same and it is either even or odd, so between each pair of points $p_{i-1},p_i$, the sum of the coordinates changes either by 0 or by 1/2, modulo 1. Thus, summed over eight pairs $p_{i-1},p_i$, for $i \in \{1,...,8\}$, the coordinate sum changes by 0, modulo 1, and hence $p_8$, like $p_0$, must be a midpoint of an edge of the unit cube, not a vertex, that is, not $u_8$.

Now let $|xy|$ denote the distance between the points $x$ and $y$. We have $|u_0 p_0| = |u_0 \gate 0| + |\gate 0 p_0|$, so $1/2 = a + a_0 = a + a/2$ and, therefore, $a = 1/3$. Similarly, we have $|u_8 p_8| = |u_8 \gate 8| + |\gate 8 p_8|$, so $1/2 = z + a_0 = z + a/2 = z + 1/6$, and therefore, $z = 1/3 = a$.
\end{proof}

\begin{theorem}\label{thm:edge-edge-gates}
If both gates of a three-dimensional Hilbert curve lie on the interiors of edges, then the octants that contain the gates lie on a common facet of the unit cube, but not on a common edge of the unit cube; one of the gates lies on that facet shared by the first and the last octant, whereas the other gate lies on an edge orthogonal to the shared facet. Each gate lies at distance 1/3 to the closest vertex of the unit cube.
\end{theorem}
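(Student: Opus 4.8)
The final sentence of the statement --- that each gate lies at distance $1/3$ from the nearest vertex of the cube --- is exactly Lemma~\ref{lem:edge-edge-gates-distance}, so the plan is to establish only the relative position of the two edge gates. I would start from the configuration produced in the proof of that lemma: after possibly reversing the curve we may assume its case~(i), so that $C_1,C_3,C_5,C_7$ are traversed forward and $C_2,C_4,C_6,C_8$ in reverse, and both $g_0$ and $g_8$ lie in the relative interior of an edge of the unit cube, at distance $1/3$ from the incident cube vertex and at distance $1/6$ from the midpoint of that cube edge (which is an octant vertex). Normalise coordinates so that $C_1=[0,\tfrac12]^3$ and $g_0=(\tfrac13,0,0)$; the cube edge carrying $g_0$ is then $[0,1]\times\{0\}\times\{0\}$.

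The first real step is to exploit the self-similarity relation at the interval endpoints. Since $C_1$ is traversed forward, evaluating $\tau(t/8)=\rho_1(\gamma_1(\tau(t)))$ at $t=0$ gives $g_0=\rho_1(\gamma_1(g_0))$, so $g_0$ is the unique fixed point of the $\tfrac12$-contraction $\rho_1\gamma_1$; with $g_0=(\tfrac13,0,0)$ this forces $\gamma_1$ to be the reflection in the midplane orthogonal to the first axis, up to a possible swap of the other two axes. Evaluating the analogous relations in every octant shows that each gate $g_i$ ($0\le i\le 8$) is the image of $g_0$ or $g_8$ under the similarity attached to an octant it bounds, hence lies in the relative interior of an octant edge at distance $1/6$ from one of that edge's endpoints. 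Because $g_i$ is at distance $1/6$ from a vertex of $C_i$ and from a vertex of $C_{i+1}$, both on the same octant edge of length $\tfrac12$, these two vertices must coincide; in particular consecutive octants always share a whole octant edge, so they differ in at most two of the three coordinate halves. A parallel relation $g_8=\rho_8\gamma_8(g_0)$ at the last octant constrains $\gamma_8$ similarly, and rules out several of the cube edges on which $g_0$ or $g_8$ could otherwise lie.

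With these restrictions, I would finish exactly as in cases~(ii)--(iii) of the proofs of Theorems~\ref{thm:uniqueqss} and~\ref{thm:vertex-edge-gates}: follow the curve octant by octant starting from $g_0=(\tfrac13,0,0)$, determining at each step the only octant edges on which the next gate can sit and hence which octant may follow, and checking whether the exit gate forced in the eighth octant can be a legal edge gate of the whole curve in the normalised position. Every relative position of the two end edges other than the claimed one --- the two gates on a common cube edge; $C_1$ and $C_8$ sharing a cube edge; $C_1$ and $C_8$ antipodal; and so on --- leads to a dead end, either because no unvisited octant is incident to the required octant edge or because the forced exit gate is incompatible with $g_8$ being an edge gate. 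The one surviving possibility is that $C_1$ and $C_8$ are diagonally opposite on a common facet $F$ of the cube, with $g_0$ on an edge of $F$ and $g_8$ on a cube edge orthogonal to $F$; that this configuration is actually realised (by the sixteen curves of Section~\ref{sec:edgegated}) is established by the exhaustive enumeration, not by this argument.

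I expect the main obstacle to be the breadth of this last case analysis over base patterns, since a priori there are many ways the eight octants might be chained together. The two structural observations above --- that $\gamma_1$ (and, dually, $\gamma_8$) is essentially forced, and that every inter-octant gate lies in the interior of an octant edge shared by both adjacent octants --- are what collapse the ``which octant may follow which'' question to a short list and make the elimination routine rather than unwieldy.
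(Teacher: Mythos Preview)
Your plan is workable, and the structural observations you make (the fixed-point determination of $\gamma_1$, and that each intermediate gate sits in the interior of an octant edge shared by both neighbouring octants) are correct and useful. However, the paper's proof is organised quite differently and avoids the octant-by-octant tracking you anticipate as the bottleneck.

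The paper opens with an observation you do not make: the two edge gates cannot lie on \emph{parallel} cube edges, because then by self-similarity every $g_i$ would sit on an edge with that same orientation, and the curve could never cross the centre plane orthogonal to it. This single remark already cuts the relative-position cases substantially. The paper then splits only by how $C_1$ and $C_8$ meet (shared cube edge, shared facet only, or antipodal), and disposes of each case with a short geometric argument rather than a full trace: an $L_\infty$-distance bound (the distance from $g_0$ to $g_1$ must be at most $1/3$, which is too small to reach the second octant when $g_0$ lies on the shared cube edge); a two-step gate computation showing $g_2$ lands on a cube edge where no third octant is available; and, for the antipodal case, a layer-parity argument on the four horizontal slabs of second-level subcubes, showing the traversal cannot finish in the top half.

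So your fixed-point analysis of $\gamma_1$ and $\gamma_8$ is a genuine alternative ingredient, but it buys you less than the paper's ``no parallel edges'' observation does: the latter is what makes the remaining case distinction short. Your approach would succeed, but the case analysis you flag as the main obstacle is exactly what the paper's three shortcuts are designed to eliminate.
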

\begin{proof}
First note that the entrance and the exit gates cannot lie on parallel edges, since then, by induction, all gates $g_0,...,g_8$ would have to lie on parallel edges, and the traversal would never be able to cross the centre plane of the unit cube that is orthogonal to those edges. We now distinguish three cases for the possible locations of the first and the last octant relative to each other: they may (i) share an edge of the unit cube, or (ii) only a facet, or (iii) they lie opposite of each other on an interior diagonal.

\begin{figure}
\centering
\includegraphics[width=\hsize]{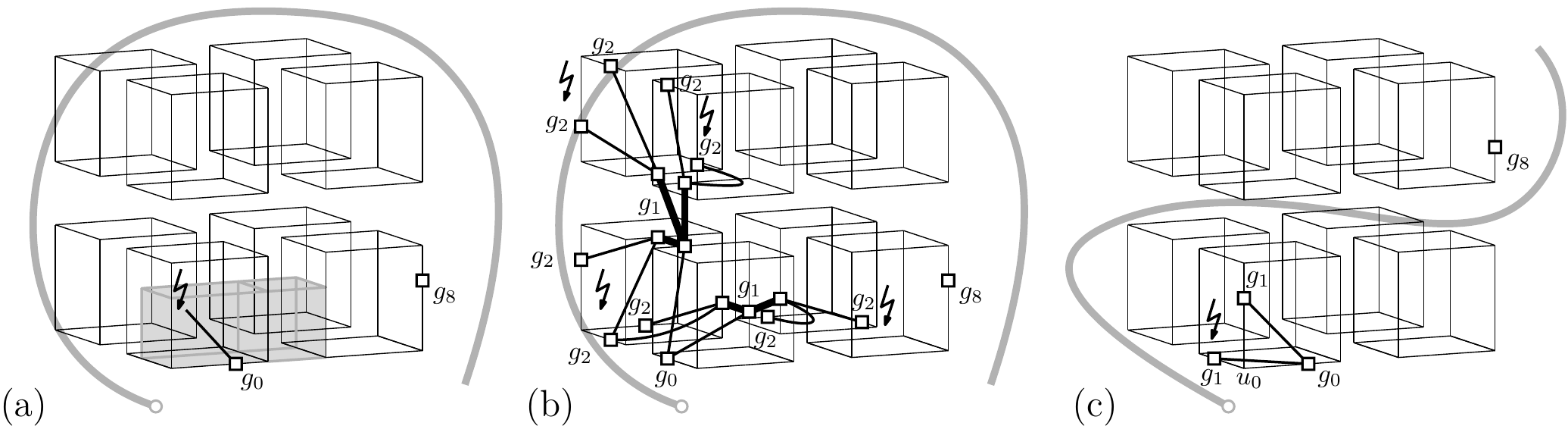}
\caption{Impossible gate combinations for edge-gated curves. (a) Case (i)(a): if $g_0$ lies on the unit cube edge shared by the first and the last octant, then $g_1$ must lie at $L_\infty$-distance at most $1/3$ from $g_0$, that is, within the shaded box. But that box does not reach any octant other than the first and the last. (b) Case (i)(b): showing the possible locations for $g_1$ and $g_2$. None of the possible locations for $g_2$ allows a connection to a third octant. (c) Case (ii)(a): showing the possible locations for $g_1$. None of the possible locations for $g_1$ allows a connection to a second octant.}
\label{fig:impossible-edge-gated-curves}
\end{figure}

Case (i): the first and last octant lie on the same edge of the unit cube. We distinguish two subcases: (a) the entrance or the exit gate lies on the shared edge of the unit cube, and (b) neither the entrance nor the exit gate lies on the shared edge. In case (a), suppose the entrance gate lies on the shared edge (the case of an exit gate on the shared edge is symmetric); see Figure~\ref{fig:impossible-edge-gated-curves}a. The $L_\infty$-distance between $g_0$ and $g_8$ would then be at most 2/3, and thus, the $L_\infty$-distance between $g_0$ and $g_1$ would have to be at most 1/3. But the $L_\infty$-distance between $g_0$ and the second octant is 1/2, so this is not possible. In case (b), illustrated by Figure~\ref{fig:impossible-edge-gated-curves}b, recall that the entrance and exit gate do not lie on parallel edges. A traversal within an octant that starts on an edge of the unit cube, at distance 1/3 from the corner of the unit cube, can now only end on an axis-parallel line through the centre of the unit cube at distance 1/3 from the centre, and vice versa. Thus $\gate 2$, like $\gate 0$, must lie on an edge of the unit cube, where no connection to the third octant is possible. So case (i) cannot occur.

Case (ii): the first octant and the last octant lie on the same facet of the unit cube. Since the entrance and exit gates cannot lie on parallel edges, at most one of them lies on an edge orthogonal to the unit cube facet shared by the first and the last octant. We distinguish two subcases: (a) both the entrance and the exit gate lie on the shared facet of the unit cube, and (b) only one of the gates lies on the shared facet. In case (a), following the curve through the first octant, it follows that $g_1$ lies on an edge of the unit cube, at distance 1/3 from $u_0$ (see Figure~\ref{fig:impossible-edge-gated-curves}c). So no connection to the second octant is possible, and case (a) cannot occur. Case (b) is entirely realizable, as we will see below and illustrate in Figure~\ref{fig:boringcurves}b.

Case (iii): the first and the last octant lie opposite of each other on an interior diagonal of the unit cube. In this case, consider the four layers of second-level subcubes, and without loss of generality, suppose the traversal starts in the bottom layer. With every traversal of an octant in the bottom half of the unit cube, one moves from the bottom layer to the second layer or vice versa, so after traversing all octants in the bottom half (possibly visiting octants in the top half in between), one ends in the bottom layer with no more octants to go to. However, the traversal must end in an octant that spans the top two layers, so this case is not realizable.
\end{proof}

\paragraph{Gate sequences}
Theorem \ref{thm:edge-edge-gates} leaves only one possibility for the positions of the gates $g_0$ and $g_8$ relative to each other. Furthermore, by the same arguments as in the proof of Theorem~\ref{thm:vertex-edge-gates}, case (ii), the fact that the gates do not lie on a common facet restricts the possible base patterns to \curvename{Ca00}, \curvename{Cd00}, \curvename{Cl00}, \curvename{Lacc}, \curvename{Ll1c}, \curvename{Na00}, and \curvename{Si00}. From these, only \curvename{Cd00} has the first and the last octant on a common unit cube facet, but no a common unit cube edge, as required by Theorem~\ref{thm:edge-edge-gates}. So all edge-gated curves have base pattern \curvename{Cd00}. Moreover, by Theorem~\ref{thm:edge-edge-gates}, one of the gates is on an edge orthogonal to the facet of the unit cube that is shared by the first and the last octant, so the names of all edge-gated curves start with either \curvename{Cd00.rt} or \curvename{Cd00.rv}.

\begin{figure}
\centering
\includegraphics[width=\hsize]{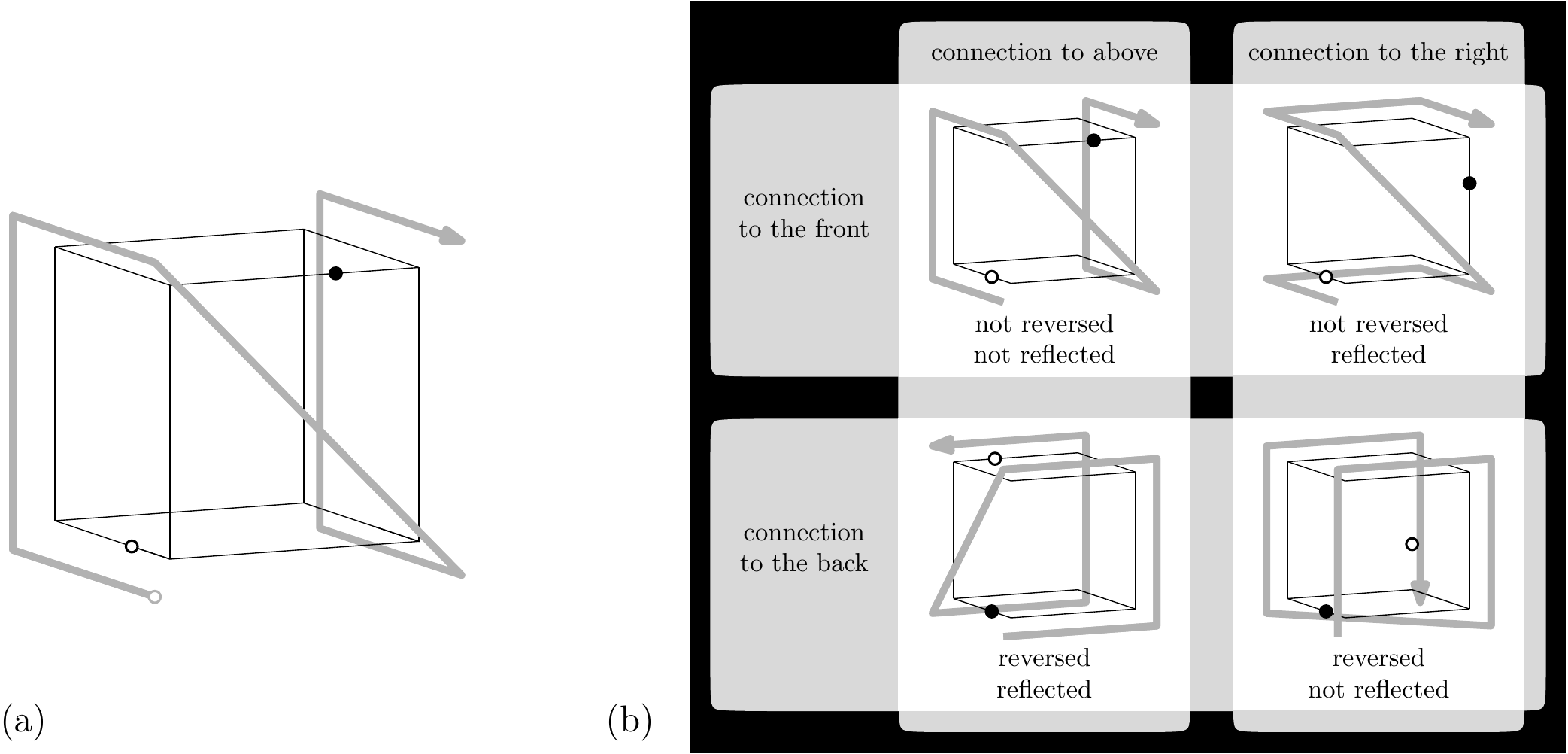}
\caption{(a) The location of the entrance and exit gates of a \curvename{Cd00.rt}-curve. (b) Example of how reflection in a given octant distinguishes between the two possible ways to reach the next octant. Assume (without loss of generality, modulo rotary reflections) that the entrance gate of the given octant is on the bottom left edge at 1/3 of the way from the front vertex to the back vertex. Then there are four possible locations for the exit gate, as shown in the four figures. On each of the top facet, right facet, front facet and back facet of the octant, we find exactly two candidate locations for the exit gate: one that is reached by a (possibly reversed) non-reflected traversal and one that is reached by a (possibly reversed) reflected traversal.}
\label{fig:edgegates}
\end{figure}

Given the location of the entrance gate of an octant and the octant facet (if any) that is shared with the next octant, we only need to know whether the traversal within the octant is reflected to fully determine the location of the next gate (Figure~\ref{fig:edgegates} illustrates this for \curvename{Cd00.rt}-curves; the situation for \curvename{Cd00.rv}-curves is similar). Therefore, gate sequences are encoded with one bit per octant that simply indicates whether the traversal within the octant is reflected (1) or not (0).

An exhaustive search brought up 16 gate sequences, as listed in Appendix~\ref{apx:schemes}, Table~\ref{tab:edgegated}.

\paragraph{Curves}
The location of the gates $\gate 0, ..., \gate 8$ leaves no freedom with respect to the rotations, reflections and/or reversals of the traversals within the octants. Therefore, there is only one curve per gate sequence, and a gate sequence name suffices to identity a curve. An example is shown in Figure~\ref{fig:boringcurves}b.

\subsection{Edge-facet-gated curves}\label{sec:edgefacetgated}

\begin{theorem}\label{thm:edge-facet-gates}
There is no three-dimensional Hilbert curve with one gate in the interior of an edge and one gate in the interior of a facet.
\end{theorem}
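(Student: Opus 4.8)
The plan is to run a reversal-parity argument, in the spirit of the openings of the proofs of Theorems~\ref{thm:vertex-edge-gates} and~\ref{thm:edge-edge-gates}, but keyed to the \emph{type} of a gate (whether the smallest face of a subcube containing it is an edge or a facet) rather than to its exact position. Suppose for contradiction that such a curve $\tau$ exists, and write $r_i\in\{0,1\}$ for the reversal bit of octant $C_i$ (so $\chi_i$ reverses iff $r_i=1$). Without loss of generality let $\tau(0)$ be the edge gate and $\tau(1)$ the facet gate (otherwise replace $\tau$ by its reverse). The first step is to show the $r_i$ alternate. For $1\le i\le 7$ the connecting point $\gate i$ lies in the common face $C_i\cap C_{i+1}$; the smallest face of $C_i$ containing $\gate i$ and the smallest face of $C_{i+1}$ containing $\gate i$ are both equal to the smallest face of that common box containing $\gate i$, so $\gate i$ has the same type (edge-interior or facet-interior; never a vertex, since $\tau(0)$ and $\tau(1)$ are each interior to a face) whether read in $C_i$ or in $C_{i+1}$. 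But a symmetry of the cube composed with a scaling–translation preserves the face-type of a point, so read in $C_i$ the point $\gate i$ is a copy of $\tau(1)$ (facet-type) when $r_i=0$ and of $\tau(0)$ (edge-type) when $r_i=1$, while read in $C_{i+1}$ it is a copy of $\tau(0)$ when $r_{i+1}=0$ and of $\tau(1)$ when $r_{i+1}=1$. Matching types forces $r_i+r_{i+1}=1$. To anchor the alternation, note that $\gate 0=\tau(0)$ is edge-interior of the unit cube; if $C_1$ were reversed, $\gate 0$ would be a copy of the facet-type point $\tau(1)$, hence in the relative interior of a facet of $C_1$, which lies either on the boundary of the unit cube (making $\gate 0$ facet-interior of the unit cube) or interior to it --- never edge-interior of the unit cube. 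Hence $r_1=0$, and the alternation gives $r_8=1$.

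Consequently $C_8$ is a reversed copy, so its exit $\gate 8=\tau(1)$ --- the facet gate --- is a similar copy of the edge-type point $\tau(0)$ and is therefore edge-interior of $C_8$. To reconcile ``edge-interior of $C_8$'' with ``facet-interior of the unit cube'', the edge of $C_8$ carrying $\tau(1)$ must be one of the six edges of $C_8$ that lie inside a facet $\Phi$ of the unit cube, namely a segment from the midpoint of a unit-cube edge to the centre of $\Phi$. Pushing the self-similarity down the all-eighths address of $\tau(1)$ (every subcube in that chain is again a reversed copy, whose first octant sits at the vertex opposite to the relevant edge end) pins $\tau(0)$ to distance $1/3$ from its nearest unit-cube vertex along its edge --- exactly as in Lemma~\ref{lem:edge-edge-gates-distance} --- and correspondingly places $\tau(1)$ at a definite interior point of $\Phi$ on such a facet-embedded edge of $C_8$. (By the mirror-image argument one reaches the same situation if the facet gate is taken as the entrance: then $r_8=0$, $r_1=1$, and the facet entrance is a copy of the edge-type point in $C_1$.)

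The main obstacle is the finish: the parity argument by itself does \emph{not} yet contradict, precisely because an edge-interior point of an octant can still be facet-interior of the whole cube. The contradiction has to come from the combinatorics of the base pattern, in the style of the case analyses in Theorems~\ref{thm:uniqueqss}, \ref{thm:vertex-edge-gates} and~\ref{thm:edge-edge-gates}. The plan for this part is: argue, as in case (ii) of Theorem~\ref{thm:vertex-edge-gates} and in the proof of Theorem~\ref{thm:vertex-facet-gates}, that the alternation (every octant is entered and left at opposite-type gates, the facet-type ones sitting in the relative interiors of octant facets) forces the no-turns condition on $A_1$, restricting the base pattern to \curvename{Ca00}, \curvename{Cd00}, \curvename{Cl00}, \curvename{Lacc}, \curvename{Ll1c}, \curvename{Na00}, \curvename{Si00}; then use that $C_1$'s corner lies on the unit-cube edge through $\tau(0)$ and that $C_8$ must meet the facet $\Phi$ through $\tau(1)$ along one of its facet-embedded edges, and go through these seven patterns one by one. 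Each should fail for a concrete reason analogous to those already encountered: for the patterns whose first and last octants are adjacent along a unit-cube edge, one cannot even connect $C_1$ to $C_2$ (as in case (iii) of Theorem~\ref{thm:uniqueqss}); for \curvename{Cd00} and \curvename{Cl00}, the entrance gate of the last octant --- which by the reversal pattern must lie in the facet shared with the seventh octant --- cannot at the same time be reached and lie on a facet-embedded edge of $C_8$ (cf.\ Figure~\ref{fig:nocubecrossingvertexedgegated} and the \curvename{Cd00}/\curvename{Cl00} sub-cases of Theorem~\ref{thm:vertex-edge-gates}). I expect this pattern-by-pattern verification, where one must track the exact gate positions (the $1/3$'s and $1/6$'s) and not merely their types, to be the bulk of the work and the real difficulty of the proof.
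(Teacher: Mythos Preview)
Your opening --- the alternation of reversals, anchored at $r_1=0$, and the consequence that $\tau(1)$ lies on an edge of $C_8$ interior to a unit-cube facet --- is correct and matches the paper. After that, however, the paper finishes in a few lines, without any base-pattern enumeration.

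The paper's finishing move: since $g_1$ is the facet-type point and both $C_1$ (forward) and $C_2$ (reversed) realise it as a scaled copy of $\tau(1)$ on their shared octant facet, the transformations in $C_1$ and $C_2$ differ only by the reflection across that shared facet and possibly the reflection in the axis-parallel plane through the facet centreline carrying $g_1$. Both of these reflections send axis-parallel lines to parallel lines, so the edge-type gates $g_0$ and $g_2$ lie on \emph{parallel} octant edges. Iterating, $g_0,g_2,g_4,g_6,g_8$ are all on parallel edges; in particular the unit-cube edge through $\tau(0)$ is parallel to the facet through $\tau(1)$. Rotate so that these edges are vertical: then every $g_i$ lies in the interior of a vertical octant edge or a vertical octant facet, so the curve can never cross the horizontal centre plane of the unit cube. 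That is the whole contradiction.

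Your proposed route via the no-turns condition and a seven-pattern case analysis may be salvageable, but the no-turns step is not justified as written. In case~(ii) of Theorem~\ref{thm:vertex-edge-gates} and in Theorem~\ref{thm:vertex-facet-gates} that condition follows because the two gates are first shown to sit on faces that share no common facet of the cube; here you have not yet established where the edge gate lies relative to the facet gate, so you cannot rule out that some octant is entered and left across the same centre plane. The paper in fact \emph{derives} the relative position (edge parallel to the facet) from the parallel-edges argument, rather than assuming anything about it up front. Even if you could justify no-turns, the subsequent case analysis would have to track the off-centre $1/3$-positions throughout, which is considerably more work than the paper's three-sentence parallel-edges obstruction.
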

\begin{proof}
Suppose there would be a three-dimensional Hilbert curve with the entrance gate in the interior of an edge and the exit gate in the interior of a facet. This implies that the traversal must be reversed in every second octant to be able to match the gates between the octants. Thus the traversal ends on an \emph{edge} of the eighth octant that lies in the interior of a facet of the unit cube---thus the exit gate lies on an axis-parallel centreline of that facet, but not in the centre point of the facet. Call the facet that contains the exit gate the back face.

Then the first and the second octant must be connected back to back and they must have the same transformations, apart from a reflection in a plane parallel to their shared octant facet and possibly a reflection in a plane containing the back face centreline with the connecting gate. Thus, $\gate 0$ and $\gate 2$ must lie on parallel edges. Following the traversal through the subsequent octants, we find that $\gate 0$, $\gate 2$, $\gate 4$, $\gate 6$ and $\gate 8$ are all on parallel edges. Hence $\gate 0$, the entrance gate of the unit cube, is on an edge that is parallel with the facet that contains the exit gate.

Now rotate the curve such that the edge with the entrance gate is vertical. We now find that all gates $\gate 0,...,\gate 8$ lie in the interior of vertical octant edges or facets. But then the curve cannot connect the bottom half of the unit cube with the top half of the unit cube. Therefore it is impossible to realize a gate sequence with one gate in the interior of an edge and the other gate in the interior of a face.
\end{proof}

\subsection{Facet-gated curves}\label{sec:facetgated}

\begin{theorem}\label{thm:facet-facet-gates}
There is only one facet-gated three-dimensional Hilbert curve.
\end{theorem}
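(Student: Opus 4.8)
The plan is to follow the same four-step pattern used for the edge-gated and vertex-facet-gated families: (1) pin down where a facet gate sits inside its facet; (2) pin down how the entrance facet and the exit facet of the whole curve lie relative to each other; (3) use this to cut the admissible base patterns down to a single candidate; and (4) show that, for that base pattern together with those gate positions, all remaining freedom is eliminated, so that the curve is unique. As a by-product I would then identify the unique curve and check that it has the properties advertised earlier (face-continuous, well-folded, hyperorthogonal, pattern-isotropic).

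For step (1) I would prove the statement behind Corollary~\ref{cor:facet-facet-gates-distance} --- that each facet gate lies at distance exactly $1/3$ from each of two adjacent edges of its facet --- by adapting the proof of Lemma~\ref{lem:edge-edge-gates-distance} and running it separately in the two coordinate directions inside the facet. As there, tracing the curve octant by octant shows that the offsets of the successive octant gates $g_0,\dots,g_8$ from the nearest octant vertices of their facets take only a bounded set of values; a coordinate-sum-parity argument forces those nearest octant vertices to be edge midpoints of the unit cube rather than cube vertices; and collinearity of a cube vertex, the gate, and the relevant octant vertex then yields the equation $1/2 = a + a/2$, so $a = 1/3$ in each direction, and similarly at the other end.

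For steps (2) and (3) I would run a case analysis on the two facets carrying the gates, in the style of Theorem~\ref{thm:edge-edge-gates}: they lie either on the same facet of the unit cube, on two adjacent facets, or on two opposite facets. The same-facet case is excluded by a parity argument of the kind used in case (iii) of the proof of Theorem~\ref{thm:edge-edge-gates} --- all first-level octants meeting that facet lie in one half of the cube, and, together with the $1/3$-positioning over-constraining the connecting gates, this leaves no consistent way to thread the curve through the other half and back. The remaining two cases are treated via a condition analogous to the no-turns condition of case (ii) of the proof of Theorem~\ref{thm:vertex-edge-gates} (and of the proof of Theorem~\ref{thm:vertex-facet-gates}): the alternation of forward and reverse copies forced by the facet gates means one can never enter and leave an octant across the same axis-parallel centre plane, which restricts the base pattern to the seven candidates \curvename{Ca00}, \curvename{Cd00}, \curvename{Cl00}, \curvename{Lacc}, \curvename{Ll1c}, \curvename{Na00}, and \curvename{Si00} of Figure~\ref{fig:wideturnpatterns}. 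Propagating the alternation of copies together with the $1/3$-offset facet gates all the way around the eight octants then eliminates every candidate except \curvename{Ca00}, forcing the two gates onto opposite facets of the cube.

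For step (4), with the base pattern fixed to \curvename{Ca00} and the two facet gates pinned to their $1/3$ positions on opposite facets, I would encode the gate sequence by one reflection bit per octant exactly as in Sections~\ref{sec:vertexedgegated} and~\ref{sec:edgegated}; checking that $g_0,\dots,g_8$ close up consistently leaves exactly one admissible bit pattern, and, as in those sections, once all of $g_0,\dots,g_8$ are fixed there is no remaining choice of symmetry or reversal inside the octants. Hence there is exactly one facet-gated curve, which I would exhibit explicitly and verify to be the self-similar three-dimensional generalized-$\beta$ curve, cf.~\cite{hyperorthogonal}. The main obstacle is the geometry of steps (2)--(3): ruling out the same-facet and adjacent-facet configurations and, above all, showing that among the seven no-turns base patterns only \curvename{Ca00} admits a consistent propagation of the alternating copies and the $1/3$-positioned facet gates around all eight octants. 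Everything downstream of that point is routine bookkeeping, entirely parallel to the edge-gated and vertex-edge-gated subsections.
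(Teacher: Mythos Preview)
Your overall four-step plan mirrors the structure the paper uses elsewhere, but the execution of steps (2) and (3) contains two genuine errors, and you miss the key shortcut that makes the paper's proof short.

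First, your conclusion in step (3) that the two facet gates end up on \emph{opposite} facets is wrong. The paper shows the exact opposite: if the gates lay on parallel facets, then by induction all of $g_0,\dots,g_8$ would lie on parallel octant facets and the curve could never cross the centre plane orthogonal to those facets. The unique facet-gated curve \curvename{Ca00.gs} has its gates on \emph{adjacent} (perpendicular) facets. So your case analysis is aimed at eliminating the wrong case.

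Second, the mechanism you invoke to reach the no-turns condition---``the alternation of forward and reverse copies forced by the facet gates''---does not apply here. That alternation is forced in the vertex-edge, vertex-facet and edge-facet settings precisely because the two gates are of \emph{different} types and must match across octant boundaries. With two facet gates there is no such forced alternation, so your route to the seven-pattern list of Figure~\ref{fig:wideturnpatterns} breaks down.

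The paper's proof avoids both difficulties by a single observation you overlook: a facet-gated curve is automatically \emph{face-continuous}, since consecutive subcubes must share the facet that contains the connecting gate. This cuts the admissible base patterns to just \curvename{Ca00}, \curvename{Cl00}, \curvename{Si00} (those with axis-parallel $A_1$). Non-parallel gates then force hyperorthogonality; \curvename{Cl00} is killed by the layer-parity argument; \curvename{Si00} is killed by a short $A_2$/$A_3$ case check; and the \curvename{Ca00} case is the well-folded hyperorthogonal facet-gated curve, already shown unique in~\cite{hyperorthogonal}. Your step (1), the $1/3$-offset computation, is not needed for the existence/uniqueness argument at all; the paper records it only as Corollary~\ref{cor:facet-facet-gates-distance}, read off after the fact.
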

\begin{proof}
In a previous manuscript~\cite{inventory}, I gave a proof similar to that of Theorem~\ref{thm:edge-edge-gates}, but considerably more complicated to take the seemingly greater number of degrees of freedom in the locations of the gates into account. In contrast, here I will give a mostly independent and much shorter proof, building on the recent results from my work with Bos~\cite{hyperorthogonal}.

First observe that any facet-gated curve must be face-continuous, since consecutive subcubes in a grid must always share a subcube facet for their exit and entrance gate to match up. This limits the possible base patterns to those whose approximating curves $A_1$ only have axis-parallel edges: \curvename{Ca00}, \curvename{Cl00}, and \curvename{Si00} (see Figure~\ref{fig:wideturnpatterns}).

Moreover, the gates cannot be on parallel facets, otherwise, by induction, all of the gates $\gate 0, \gate 1, ...$ must be on parallel octant facets and the traversal could never cross the centre planes of the unit cube that are orthogonal to the facets that contain the gates. Hence, the gates are on non-parallel facets, and thus, no pair of consecutive edges in an approximating curve can be collinear. It follows that, in three dimensions, facet-gated curves are not only face-continuous, but even hyperorthogonal.

We now consider the three possible base patterns one by one.

For base pattern \curvename{Ca00}, which identifies well-folded curves, there is exactly one facet-gated hyperorthogonal curve, as proven by Bos and myself~\cite{hyperorthogonal}.

Base pattern \curvename{Cl00} cannot be realized with face-continuous curves, by the same arguments as for case (iii) of the proof of Theorem~\ref{thm:edge-edge-gates}.

\begin{figure}
\centering
\includegraphics[width=\hsize]{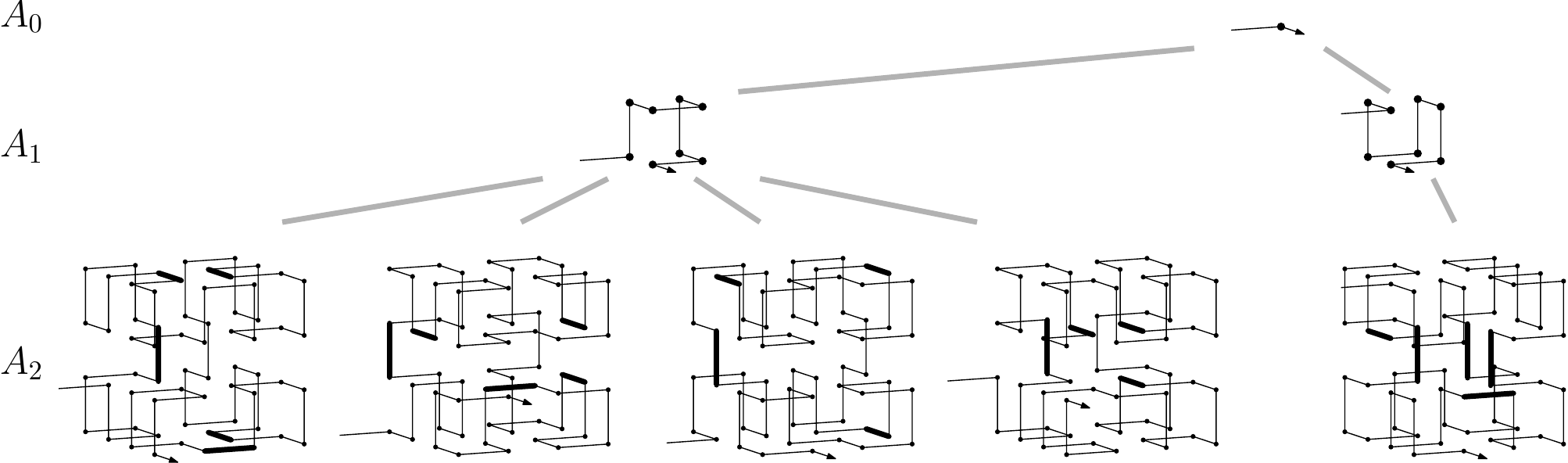}
\caption{Trying and failing to construct a facet-gated curve with base pattern \curvename{Si00}.}
\label{fig:impossiblefacetgatedSi00}
\end{figure}

We investigate base pattern \curvename{Si00} by trying to draw the approximating curves $A_0, A_1, A_2, A_3$, each extended with an entry edge and an exit edge orthogonal to the facets that contain the gates. Approximating curve $A_0$, consisting of an entry edge, one vertex, and an exit edge, is trivial. There are two different ways (modulo rotation, reflection and/or reversal) to construct a matching approximating curve $A_1$, using base pattern \curvename{Si00}, and without collinear edges, as shown in Figure~\ref{fig:impossiblefacetgatedSi00}. From these, we can construct five different curves $A_2$ while maintaining that the curve in each octant is similar to $A_1$. However, now one can see that we will not be able to construct matching curves $A_3$ while maintaining continuity: if we replace the curve in each octant by a copy of $A_2$, then the edges drawn fat in Figure~\ref{fig:impossiblefacetgatedSi00} will break. Hence, there are no facet-gated curves with base pattern \curvename{Si00}.

If follows that the only facet-gated three-dimensional Hilbert curve is the facet-gated three-dimensional hyperorthogonal well-folded curve which was described by Bos and myself~\cite{hyperorthogonal}. Bos and I also calculated the locations of the gates and thus we get:
\end{proof}

\begin{corollary}\label{cor:facet-facet-gates-distance}
If both gates lie on the interiors of facets, then each gate lies at distance 1/3 to the closest two edges of the facet of the unit cube that contains it.
\end{corollary}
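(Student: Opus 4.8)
The plan is to lean on Theorem~\ref{thm:facet-facet-gates}: there is a \emph{single} facet-gated three-dimensional Hilbert curve, namely the hyperorthogonal well-folded curve with base pattern \curvename{Ca00} constructed by Bos and myself~\cite{hyperorthogonal}. Hence it suffices to locate the two gates of that one specific curve. The quickest route is simply to quote the gate coordinates computed in~\cite{hyperorthogonal}; what follows is a short self-contained derivation for completeness, in the spirit of the proof of Lemma~\ref{lem:edge-edge-gates-distance}, which also makes clear why the value $1/3$ appears.

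First I would exploit self-similarity to pin the entrance gate down as a fixed point. Since the traversal restricted to the first octant $C_1$ equals $\rho_1\circ\gamma_1$ applied to $\tau\circ\chi_1$, the entrance gate $\gate 0=\tau(0)$ of the whole curve coincides with the entrance gate of that scaled-down copy: $\gate 0=\rho_1(\gamma_1(\gate 0))$ when $\chi_1$ is forward, and $\gate 0=\rho_1(\gamma_1(\gate 8))$ when $\chi_1$ is reversed. In the latter case one follows the octant cycle one more step and finds that $\gate 0$ (or, equivalently, $\gate 8$) is still the unique fixed point of a composition of maps of the form $\rho_i\circ\gamma_i$. In every case the relevant map is a contraction $x\mapsto\tfrac12 Mx+c$ with $M$ a signed permutation matrix and $c$ the centre of an octant, so it has a unique fixed point, which can be solved for one coordinate at a time. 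The exit gate is treated identically.

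Next I would read off the coordinates. Place the unit cube as $[-\tfrac12,\tfrac12]^3$ and assume without loss of generality that $\gate 0$ lies in the interior of the facet $x_1=\tfrac12$, so its first coordinate is $\tfrac12$ and its other two coordinates lie strictly between $-\tfrac12$ and $\tfrac12$. Inspecting the fixed-point equation in coordinate $1$ forces $M$ to fix that coordinate with a $+$ sign, because only then does the equation admit the boundary solution $\pm\tfrac12$ (every other option pushes $(\gate 0)_1$ into the interior or outside the cube, a contradiction). In each of the remaining two coordinates, a $+$ sign would again force a boundary value, contradicting interiority; hence the sign is $-$, the equation becomes $p=-\tfrac12 p\pm\tfrac14$, and $p=\pm\tfrac16$. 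Thus within its facet $\gate 0$ lies at distance $\tfrac12-\tfrac16=\tfrac13$ from each of the two edges of that facet nearest to it, and those two edges are adjacent; the exit gate is handled the same way, proving the corollary.

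The main obstacle is controlling exactly which signed permutation $M$ (equivalently, which $\gamma_1$ and which direction $\chi_1$) occurs, and in particular ruling out the subcase in which $M$ swaps the two coordinates spanning the facet, which a priori could yield fixed-point coordinates other than $\pm\tfrac16$. This is settled either by invoking the detailed structure of the hyperorthogonal well-folded curve established in~\cite{hyperorthogonal} (which fixes the octant transformations), or by using hyperorthogonality directly---no two consecutive edges of an approximating curve are collinear---together with the no-turns condition used in the proof of Theorem~\ref{thm:vertex-edge-gates}, to eliminate those cases. Either way the computation closes and gives the stated distance $1/3$.
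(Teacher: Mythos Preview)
Your proposal is correct and takes essentially the same approach as the paper: reduce via Theorem~\ref{thm:facet-facet-gates} to the single facet-gated curve, and then cite the gate coordinates already computed in~\cite{hyperorthogonal}. The paper does exactly this and nothing more; your additional fixed-point sketch is a bonus the paper does not include (and, as you note, closing it cleanly would still require reading off the actual first-octant transformation from~\cite{hyperorthogonal}).
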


The only facet-gated curve, as known from my work with Bos~\cite{hyperorthogonal}, is shown in Figure~\ref{fig:facetgatedcurve}. Its name is simply \curvename{Ca00.gs}. Since there is only one facet-gated curve, its name does not need to be more specific than this: there is no need to set up an encoding for different gate sequences.

As we will see in Section~\ref{sec:observationslocality}, the facet-gated hyperorthogonal curve has excellent locality-preserving properties: it is the unique best three-dimensional Hilbert curve with respect to the worst-case bounding-box surface and $L_2$-dilation measures. On each of the other metrics calculated, the curve is within 4\% from optimal.

\begin{figure}
\centering\noindent
\includegraphics[width=0.48\hsize,page=16]{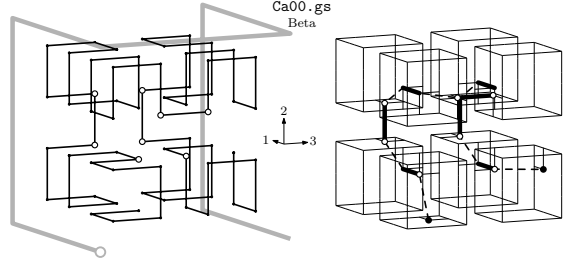}
\caption{The facet-gated curve.}
\label{fig:facetgatedcurve}
\end{figure}

\section{Observations on properties of three-dimensional Hilbert curves}\label{sec:observations}

In this section I will present some observations on the properties of the three-dimensional Hilbert curves that we can generate by enumerating their names according to the principles laid out in the previous sections. In Sections \ref{sec:observationsoptionalproperties} and~\ref{sec:observationsorientation}, we will see a number of results that allow us, or our software tool (see Section~\ref{sec:software}), to automatically recognize face-continuous, hyperorthogonal, maximally facet-harmonious, fully interior-diagonal-harmonious, pattern-isotropic or edge-isotropic curves. In Section~\ref{sec:observationslocality}, I present some results of calculations of locality-preserving properties of the generated curves.

Other striking observations about the generated curves include the following. No three-dimensional Hilbert curve follows partition~\curvename{X}. All symmetric curves are vertex-gated curves whose names start with \curvename{Ca}, \curvename{Cd}, \curvename{Ce}, \curvename{La}, \curvename{Ne} or \curvename{Se}. All centred curves are vertex-edge-gated. In fact, for these observations, one can give fairly compact proofs that do not depend on enumerating all curves and testing them. These proofs can be found in Appendix~\ref{apx:verifyobservations}, but they are without further consequence for the work presented in this article.

\subsection{General optional properties}\label{sec:observationsoptionalproperties}

Violations of face-continuity, hyperorthogonality, harmony and palindromy are easy to recognize, but it may not be straightforward to verify that a curve is entirely free of violations and therefore has the aforementioned properties. In this subsection we will discuss easy solutions for these properties and explain why the easy solutions suffice. At the end of this section, we discuss metasymmetry.

\begin{theorem}\label{thm:allfacecontinuous}
The face-continuous three-dimensional Hilbert curves are those whose names start with \curvename{Ca00.cc}, \curvename{Ca00.gs}, or \curvename{Si00.cc}.
\end{theorem}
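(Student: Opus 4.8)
The plan is to prove the two inclusions separately: that every curve whose name starts with \curvename{Ca00.cc}, \curvename{Ca00.gs} or \curvename{Si00.cc} is face-continuous, and that no other curve is.

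\emph{The three families consist of face-continuous curves.} The curve \curvename{Ca00.gs} is the facet-gated curve, which is hyperorthogonal (proof of Theorem~\ref{thm:facet-facet-gates}) and hence face-continuous. For the vertex-gated curves with base pattern \curvename{Ca00} or \curvename{Si00} I would establish face-continuity of every approximating curve $A_k$ by induction on $k$, driven by two facts. First, \curvename{Ca00} and \curvename{Si00} have only axis-parallel edges in $A_1$, so consecutive octants always share a facet. Second, in a vertex-gated curve every gate is a vertex of the cube, so octant~$1$ is the octant incident to the entrance gate and octant~$8$ the octant incident to the exit gate; since every subcurve is a cube-symmetry image of $\tau$, the same holds recursively inside every cell. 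Now assume $A_k$ has only axis-parallel edges and that, for every level-$k$ cell, its first (respectively last) level-$(k{+}1)$ subcell is the subcell incident to that cell's entrance (respectively exit) vertex. The within-cell edges of $A_{k+1}$ are then a scaled cube-symmetry image of the axis-parallel $A_1$, hence axis-parallel; and the edge of $A_{k+1}$ joining the last subcell of a level-$k$ cell $P$ to the first subcell of the next level-$k$ cell $P'$ joins the subcell of $P$ incident to the shared vertex $v \in P\cap P'$ to the subcell of $P'$ incident to $v$, and these two subcells meet the facet $P\cap P'$ in the same quarter, so they are facet-adjacent. This completes the induction, and with it the proof that every \curvename{Ca00.cc}- and \curvename{Si00.cc}-curve is face-continuous.

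\emph{No other curve is face-continuous.} A face-continuous curve has only axis-parallel edges in $A_1$, so its base pattern is \curvename{Ca00}, \curvename{Cl00} or \curvename{Si00}; and \curvename{Cl00} is impossible for a face-continuous curve by essentially the layer-parity argument of case~(iii) in the proof of Theorem~\ref{thm:edge-edge-gates}, already invoked for this purpose in the proof of Theorem~\ref{thm:facet-facet-gates}. So the base pattern is \curvename{Ca00} or \curvename{Si00}. From the inventory of Sections~\ref{sec:vertexfacetgated}--\ref{sec:edgefacetgated}: vertex-facet-gated curves have base pattern \curvename{Cl00} only, edge-gated curves base pattern \curvename{Cd00} only, edge-facet-gated curves do not exist, and no facet-gated curve has base pattern \curvename{Si00}. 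Hence, among curves with base pattern \curvename{Ca00} or \curvename{Si00}, the only candidates beyond the vertex-gated ones (names \curvename{Ca00.cc}$\ldots$, \curvename{Si00.cc}$\ldots$) and the single facet-gated curve \curvename{Ca00.gs} are vertex-edge-gated curves, and it remains to show that none of those is face-continuous. By the proof of Theorem~\ref{thm:vertex-edge-gates}, a vertex-edge-gated curve reverses the traversal in every second octant, so the subcurves in $C_1$ and $C_2$ are a forward and a reversed copy of $\tau$; the image of $\tau$'s edge-interior exit gate is then the connecting gate $\gate1$, which lies at the midpoint of an edge of the shared facet $C_1\cap C_2$. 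The last subcell of $C_1$ and the first subcell of $C_2$ both have $\gate1$ as a vertex but sit on opposite sides of that facet, and iterating the construction inside $C_1$ and $C_2$ (which are again vertex-edge-gated on the same base pattern) one shows, in the spirit of the failed \curvename{Si00} facet-gated construction in the proof of Theorem~\ref{thm:facet-facet-gates}, that the nested ``last cell of $C_1$'' and ``first cell of $C_2$'' approximations converging to $\gate1$ cannot be kept facet-adjacent at every level. This contradicts face-continuity and finishes the proof.

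\emph{Main obstacle.} The routine parts are the axis-parallelism induction for the positive families and the base-pattern reduction, which follow readily from results already established. The real difficulty is the last step: a clean proof that no vertex-edge-gated curve with base pattern \curvename{Ca00} or \curvename{Si00} is face-continuous. I expect this to require either a careful finite-level obstruction analogous to the \curvename{Si00} facet-gated argument, or --- perhaps cleaner --- a self-contained lemma that a face-continuous three-dimensional Hilbert curve must have both gates of the same type (both vertices, or both facet-interior points), which would immediately exclude every mixed-gate curve.
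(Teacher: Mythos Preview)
Your overall structure matches the paper's: reduce to base patterns \curvename{Ca00}, \curvename{Cl00}, \curvename{Si00}; dispose of \curvename{Cl00} by the layer argument; handle the facet-gated curve separately; prove face-continuity of the vertex-gated \curvename{Ca00}/\curvename{Si00} curves by the induction you describe; and then exclude the vertex-edge-gated curves. The positive direction and the reductions are fine and essentially the same as in the paper.

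The gap is exactly where you flag it: the vertex-edge-gated case. Your sketch (``the nested last cell of $C_1$ and first cell of $C_2$ cannot be kept facet-adjacent at every level'') is not a proof, and the analogy with the \curvename{Si00} facet-gated obstruction does not transfer in any obvious way---there you had a small finite search over five candidate $A_2$'s, whereas here you would face all vertex-edge-gated curves on two base patterns. Your proposed alternative lemma (``both gates must be of the same type'') would do the job, but you give no argument for it either.

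The paper closes this gap with a one-paragraph parity argument on the second-order approximating curve that you are missing. Index the $64$ level-two subcubes by $(i,j,k)\in\{1,\dots,4\}^3$ and give $v_{ijk}$ the parity of $i+j+k$. The curve $A_2$ has $63$ edges; if the curve were face-continuous, every edge of $A_2$ would be axis-parallel and hence connect vertices of opposite parity, so the start and end vertices of $A_2$ would have opposite parity. But by Theorem~\ref{thm:vertex-edge-gates} the vertex gate sits at a corner of the unit cube (so $A_2$ starts at $v_{111}$) and the edge gate is the midpoint of a unit-cube edge lying on a common facet with, but not on a common edge with, the vertex gate; since in \curvename{Ca00} and \curvename{Si00} the last octant shares a unit-cube edge with the first, $A_2$ must end at $v_{124}$. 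Both $v_{111}$ and $v_{124}$ have odd parity, a contradiction. This is the clean argument you were looking for, and it avoids any level-by-level obstruction analysis.
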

\begin{proof}
A curve can only be face-continuous if each octant except the first shares an octant facet with the previous octant. There are only three base patterns that satisfy this requirement: \curvename{Ca00}, \curvename{Cl00}, and \curvename{Si00} (see Figure~\ref{fig:wideturnpatterns}).

With base patterns \curvename{Ca00} and \curvename{Si00}, the first and the last octant lie next to each other along an edge of the unit cube. By the theorems from Section~\ref{sec:inventory}, traversals for which this is the case may only be realized as vertex-gated, as vertex-edge-gated, or as facet-gated curves. We consider these case one by one.

Vertex-gated curves: note that if octants $C_i$ and $C_{i+1}$ share an octant facet $F$ and connect in a vertex gate $\gate i$, then the suboctants of $C_i$ and $C_{i+1}$, respectively, that connect in $\gate i$ must also share a suboctant facet, namely the appropriately sized subfacet of $F$ in the corner at $\gate i$. Hence, by induction, all vertex-gated curves whose names start with \curvename{Ca00.cc} or \curvename{Si00.cc} are face-continuous.

Vertex-edge-gated curves: consider the curve $A_2$ that sketches the traversal of the second-level subcubes. Let $v_{ijk}$ be the vertex of $A_2$ in the $i$-th layer, $j$-th row, $k$-th column, where $i, j, k \in \{1,...,4\}$. We define the \emph{parity} of a vertex $v_{ijk}$ as the parity of $i + j + k$. Note that $A_2$ has an odd number of edges, and, if the curve is face-continuous, each edge connects vertices of different parity. Thus, the first and the last vertex of $A_2$ must have different parity. From Theorem~\ref{thm:vertex-edge-gates} we know that, without loss of generality, we may number layers, rows and columns such that $A_2$ starts at $v_{111}$ and ends at $v_{124}$ or $v_{134}$. Given that the base pattern \curvename{Ca00} or \curvename{Si00} ends with on octant that lies next to the first octant along an edge of the unit cube, $A_2$ actually has to end at $v_{124}$, but $v_{124}$ has the same parity as $v_{111}$. This contradicts the conditions of a face-continuous traversal. Hence, no vertex-edge-gated curve is face-continuous.

Facet-gated curves: these are necessarily face-continuous, because consecutive octants must always share the octant facet that contains the gate between them. In particular, the only facet-gated curve, \curvename{Ca00.gs}, is face-continuous.

With base pattern \curvename{Cl00}, the first and the last octant are opposite of each other on an interior diagonal of the unit cube. By the theorems from Section~\ref{sec:inventory}, this can only be realized by vertex-facet-gated curves. However, starting from the vertex gate, we find that a face-continuous traversal is not possible by the same argument as in case (iii) of the proof of Theorem~\ref{thm:edge-edge-gates}.
\end{proof}

\begin{finding}\label{fnd:hyperorthogonal}
The only hyperorthogonal three-dimensional Hilbert curves are \curvename{Ca00.cc.44.hh.db} and \curvename{Ca00.gs}.
\end{finding}
\begin{howfound}
\curvename{Ca00.cc.44.hh.db} and \curvename{Ca00.gs} were proven to be hyperorthogonal in earlier work~\cite{hyperorthogonal}. All other face-continuous three-dimensional Hilbert curves were found to have a pair of consecutive collinear edges in the third-order approximating curve $A_3$, thus violating the conditions of hyperorthogonality.
\end{howfound}

\begin{finding}\label{fnd:harmonious}
The only three-dimensional Hilbert curve with maximum facet-harmony is \curvename{Ca00.c4Z}.
\end{finding}
\begin{howfound}
One can easily verify by induction that \curvename{Ca00.c4Z} (see Figure~\ref{fig:edgecrossingcurves}a) is consistent with the two-dimensional Hilbert curve on all facets except the back facet. The crucial observation to use is the following: the transformations that map the curve as a whole to the curves within the octants are such that the back facet of the unit cube is mapped to octant facets that lie either in the interior of the unit cube, or on the back. Thus, the violations of two-dimensional Hilbert order that show up on the back facets, do not show up on any of the other facets in recursion. I found that for all other three-dimensional Hilbert curves, inconsistency with the two-dimensional Hilbert curve can be established for at least two facets by inspecting the approximating curve $A_3$. (In fact, for all curves other than the ``Imposter'' \curvename{Ca00.cT7}, such violations are already visible in $A_2$).
\end{howfound}

The following lemma is straightforward to prove:
\begin{lemma}\label{lem:decidediagonalharmony}
A curve has full interior-diagonal harmony if and only if on each interior diagonal, the second-level subcubes are visited in order.
\end{lemma}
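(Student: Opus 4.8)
The plan is to prove the two implications separately, after reformulating \emph{full interior-diagonal harmony} level by level. For an interior diagonal $D$ of the unit cube and an integer $n\ge 0$, call $\tau$ \emph{$D$-monotone at level $n$} if the $2^n$ subcubes of the grid of $2^{3n}$ equal subcubes that meet $D$ are visited by $\tau$ in an order agreeing, up to reversal, with the order in which they occur along $D$. I claim $\tau$ has full interior-diagonal harmony if and only if it is $D$-monotone at every level $n$, for each of the four interior diagonals $D$. This is routine: the segments $D\cap S$, over all subcubes $S$ of the level-$n$ grid meeting $D$, partition $D$ in the along-$D$ order; each such $S$ is the image under $\tau$ of a single subinterval of $[0,1]$, since $\tau$ is octant-based at every level of recursion; passing from level $n$ to $n+1$ is consistent (a full diagonal of a cube meets exactly two of its eight sub-octants, consecutively, so the coarse order is determined by the fine one and the orientation is stable); hence, as $n\to\infty$, the conditions together say exactly that $\tau$ restricted to $D$ is a monotone traversal of the segment $D$, i.e. an isometric copy of the trivial one-dimensional Hilbert curve. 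The forward direction of Lemma~\ref{lem:decidediagonalharmony} is now just the case $n=2$; only the converse requires work.

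For the converse, suppose $\tau$ is $D$-monotone at level $2$ for every interior diagonal $D$. I would show by induction on $n\ge 2$ (base case $n=2$ being the hypothesis) that $\tau$ is $D$-monotone at level $n$ for every $D$; the cases $n=0,1$ then follow by coarsening the level-$2$ statement. Two geometric facts drive the step. First, every octant $C$ meets exactly one interior diagonal of the unit cube and meets it in a \emph{full} interior diagonal of $C$ --- running from a vertex of $C$ to the cube's centre $O$ --- while the other three interior diagonals touch $C$ only at $O$. Second, the symmetries of the cube permute its interior diagonals. Fix $D$ and let $C_a,C_b$ be the two octants through which $D$ passes, with $D$ running from a vertex $A$ of $C_a$, through $O$, to a vertex $B$ of $C_b$; coarsening the level-$2$ hypothesis to octants, one of $C_a,C_b$ is visited entirely before the other, say $C_a$ first. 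The level-$(n+1)$ subcubes meeting $D$ that lie in $C_a$ are precisely the level-$n$ subcubes of $C_a$ meeting the full interior diagonal $D\cap C_a$ of $C_a$. By self-similarity (Section~\ref{sec:selfsimilarcurves}) the traversal of $C_a$ is a scaled, rotated-or-reflected, possibly reversed copy of $\tau$; under the corresponding transformation $D\cap C_a$ pulls back to one of the four interior diagonals $D'$ of the unit cube (using that cube symmetries permute interior diagonals), so the induction hypothesis applied to $D'$ shows those subcubes are visited in $(D\cap C_a)$-monotone order. The same holds inside $C_b$.

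The one delicate point --- and the main obstacle --- is gluing the two halves. A priori the monotone order inside $C_a$ could run from $O$ to $A$ instead of from $A$ to $O$, and similarly inside $C_b$; of the four orientation combinations only ``$A\to O$ inside $C_a$, then $O\to B$ inside $C_b$'' yields a globally $D$-monotone order of all $2^{n+1}$ subcubes (recall $C_a$ precedes $C_b$ in time). To fix the orientation, coarsen the within-$C_a$ order just obtained to the level-$1$ subdivision of $C_a$: it remains monotone along $D\cap C_a$, hence is either (the subcube of $C_a$ near $A$, then the one near $O$) or its reverse. But those two subcubes of $C_a$ are exactly the two second-level subcubes of the unit cube meeting $D$ that lie in $C_a$, and the level-$2$ hypothesis for $D$, together with $C_a$ being visited before $C_b$, forces them into the order ``near $A$, then near $O$''. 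So the fine order inside $C_a$ runs $A\to O$, and symmetrically the fine order inside $C_b$ runs $O\to B$; concatenating gives $D$-monotonicity at level $n+1$, completing the induction and the lemma. This bookkeeping also explains why level $2$, not level $1$, is the right threshold: level $1$ determines the order of the octants but leaves the orientation inside each octant free, whereas level $2$ pins down both.
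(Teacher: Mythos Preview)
The paper does not actually supply a proof of this lemma: it is introduced with the sentence ``The following lemma is straightforward to prove'' and then left without argument. Your write-up therefore cannot be compared against a paper proof, but it does supply a complete and correct argument where the paper offers none.

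Your proof is sound. The reformulation of full interior-diagonal harmony as $D$-monotonicity at every level is the natural one, and the induction step is carried out correctly. The two geometric observations you isolate --- that each octant meets exactly one interior diagonal of the unit cube, and meets it in a full interior diagonal of the octant --- are exactly what makes the self-similarity apply cleanly, and your use of the level-$2$ hypothesis (always available throughout the induction, since it is the standing assumption) to pin down the orientation inside each of the two octants is the right way to handle the gluing. Your closing remark that level~$1$ alone would not suffice, because it fixes the order of the two octants but not the orientation within each, is a genuinely clarifying point that the paper does not make explicit.
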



\begin{finding}
There is no palindromic three-dimensional Hilbert curve.
\end{finding}
\begin{howfound}
All three-dimensional Hilbert curves were found to show violations of palindromic conditions in $A_3$.
(In fact, for all curves other than the ``Imposter'' \curvename{Ca00.cT7} and \curvename{Ca00.cT9}, such violations are already visible in $A_2$.)
\end{howfound}

Let $[X]$ denote any character from the string $X$.

\begin{finding}\label{fnd:metasymmetry}
The only metasymmetric three-dimensional Hilbert curves are the curves
$\curvename{Ca00.cT}[\curvename{bC47}]$, $\curvename{Ca11.cT}[\curvename{IPJ3}]$, $\curvename{Cd00.cP}[\curvename{IPJ3}]$, $\curvename{Cd11.cP}[\curvename{bC47}]$, $\curvename{Se00.cT}[\curvename{bPJ7}]$, and $\curvename{Se66.cT}[\curvename{IC43}]$.
\end{finding}
Metasymmetry is an unpleasant property in the following sense. On the one hand, I do not have an easy conclusive argument, short enough to present here, why the 24 curves of Finding~\ref{fnd:metasymmetry} are indeed the only metasymmetric curves. On the other hand, the amount of work involved in tediously verifying the finding by hand is still small enough that it does not warrant the effort involved in implementing an automatic check. For the reader who wishes to verify Finding~\ref{fnd:metasymmetry} by hand, some hints are given in Appendix~\ref{apx:verifyobservations}.

\subsection{Orientation properties}\label{sec:observationsorientation}

\paragraph{Pattern-isotropy}
To be able to analyse isotropy and other orientation properties of the curves, we introduce the following notation.

For a given three-dimensional Hilbert curve $\tau$, let $\Gamma^k(\tau)$ be the set of symmetries of the unit cube that map $\tau$ to the curve (or its reverse) within at least one $k$-th-level subcube (modulo scaling, translation, and reversal). So $\gamma \in \Gamma^1(\tau)$ if and only if there is a first-level octant $C_i$ with $\gamma = \gamma_i$ or $\gamma = \gamma_i \circ \sigma$, where, if applicable, $\sigma$ is a transformation that maps a symmetric curve $\tau$ to its own reverse. Now, for any integer $k \geq 2$, we have $\gamma \in \Gamma^k(\tau)$ if and only if there are $\alpha \in \Gamma^1(\tau)$ and $\beta \in \Gamma^{k-1}(\tau)$ such that $\gamma = \alpha \circ \beta$.

\begin{lemma}\label{lem:isotropy}
A three-dimensional Hilbert curve $\tau$ is pattern-isotropic if and only if there is a $k$ such that $\Gamma^k(\tau)$ is the set of all 48 symmetries of the unit cube.
\end{lemma}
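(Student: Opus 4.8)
The plan is to characterize pattern-isotropy by counting how often each base-pattern transformation (modulo reversal) appears in the limiting distribution of octant transformations, and to show this count is equalized across all $48$ symmetries exactly when the semigroup generated by $\Gamma^1(\tau)$ "fills up" the full symmetry group of the cube.

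First I would make precise what "pattern-isotropic" means in terms of the recursion. A $k$-th level subcube $C$ of the unit cube carries a transformation $\gamma_C \in \Gamma^k(\tau)$: this is the symmetry of the cube (well-defined modulo the reversal-symmetry $\sigma$ of $\tau$, if $\tau$ is symmetric) appearing in the composed transformation that maps $\tau$ to the traversal within $C$. Concretely, if $C$ sits inside the first-level octant $C_i$, inside whose recursive copy of $\tau$ it corresponds to a $(k-1)$-st level subcube $C'$ with transformation $\gamma_{C'}$, then $\gamma_C = \gamma_i \circ \gamma_{C'}$ (up to the bookkeeping of $\sigma$ when reversals intervene)---this is precisely the inductive definition of $\Gamma^k(\tau)$ in the statement. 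Pattern-isotropy says: among the $2^{3k}$ subcubes, the number whose transformation (modulo reversal) equals a given base-pattern transformation $\gamma_i$ tends, as $k\to\infty$, to the same fraction, namely $1/8$ of them if there are $8$ distinct base-pattern transformations modulo reversal, and more generally $1/p$ where $p$ is that number. The key reduction is that the set of transformations actually occurring among subcubes at level $k$, taken modulo reversal, is exactly the set of \emph{left cosets} $\gamma \cdot \langle\text{reversal}\rangle$ realized by products $\gamma_{i_k}\circ\cdots\circ\gamma_{i_1}$---which is exactly $\Gamma^k(\tau)$.

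Next I would run the Markov-chain / random-walk argument. Consider the random variable $\gamma_C$ for a uniformly random level-$k$ subcube $C$. Peeling off one level of recursion shows $\gamma_C = \gamma_{i}\circ\gamma_{C'}$ where $i$ is uniform on $\{1,\dots,8\}$ and $C'$ is a uniform random level-$(k-1)$ subcube, independent of $i$. So the distribution $\mu_k$ of $\gamma_C$ on the group $G$ of $48$ symmetries of the cube (working modulo reversal when needed) evolves by $\mu_k = \nu * \mu_{k-1}$, where $\nu$ is the uniform distribution on the multiset $\{\gamma_1,\dots,\gamma_8\}$. This is a random walk on $G$ (or on $G/\langle\sigma\rangle$). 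The support of $\mu_k$ is exactly $\Gamma^k(\tau)$. Now the pattern-isotropy condition---that each base-pattern transformation occurs equally often in the limit---translates, after pushing forward $\mu_k$ along the map $G\to\{$base-pattern transformations mod reversal$\}$ induced by $\gamma_i\mapsto\gamma_i$, to: $\mu_k$ converges to the uniform distribution on $G$. For a random walk driven by $\nu$, convergence to the uniform distribution on $G$ happens if and only if the semigroup generated by the support of $\nu$ is all of $G$ (aperiodicity is automatic since, e.g., $\Gamma^1(\tau)$ always contains the identity---the first octant's transformation can be normalized to $\mathrm{id}$, or $\sigma$ is absorbed---so the walk is aperiodic and irreducible on the subgroup it generates). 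Since $G$ is finite, the semigroup generated by any generating set equals the subgroup generated, and this subgroup is $G$ itself precisely when $\Gamma^k(\tau)=G$ for some (equivalently, all sufficiently large) $k$. Conversely, if $\Gamma^k(\tau)\subsetneq G$ for every $k$, then the transformations are confined to a proper subset, certain base-pattern transformations never appear (or appear with unequal frequency), and $\tau$ is not pattern-isotropic.

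The two directions then close as follows. If $\Gamma^k(\tau)=G$ for some $k$, the random walk mixes to uniform on $G$, hence every base-pattern transformation modulo reversal occurs with the same limiting frequency, so $\tau$ is pattern-isotropic. If no such $k$ exists, $\bigcup_k\Gamma^k(\tau)$ is a proper subgroup $H$ of $G$; the number of subcubes at level $k$ whose transformation lies outside $H$ is zero, so any base-pattern transformation not conjugate-or-equal-mod-reversal into $H$ has limiting frequency zero, while those inside $H$ split the mass---either way the frequencies cannot all be equal (note at least the identity/first-octant transformation is in $H$, and $H\ne G$ means at least one of the eight base transformations, modulo reversal, lies outside $H$, since the eight $\gamma_i$ together with their products generate $\bigcup_k \Gamma^k(\tau)=H$), so $\tau$ is not pattern-isotropic. \textbf{The main obstacle} I anticipate is handling the reversal bookkeeping cleanly: one must verify that "modulo reversal" can be treated uniformly---either $\tau$ is order-preservable and $\Gamma^k(\tau)\subseteq G$ with an honest group structure, or $\tau$ is symmetric and the reversal transformation $\sigma$ is itself a cube symmetry so that $\Gamma^k(\tau)$ is a union of $\langle\sigma\rangle$-cosets and the quotient $G/\langle\sigma\rangle$ still carries a well-defined walk---and that in the non-symmetric, non-order-preservable case the reversals compose consistently enough that the inductive formula $\Gamma^k=\Gamma^1\circ\Gamma^{k-1}$ from the statement is exactly what governs the support. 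Establishing that "equal limiting frequency" is equivalent to "support is eventually all of $G$" (as opposed to merely "full support of the limiting measure") uses that a finitely-supported random walk on a finite group, once irreducible and aperiodic on the subgroup it generates, has a \emph{uniform} stationary distribution there---a standard fact I would invoke rather than reprove.
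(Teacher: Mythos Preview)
Your approach---a random walk on the 48-element symmetry group $G$ with step distribution uniform on the multiset $\{\gamma_1,\dots,\gamma_8\}$, identifying pattern-isotropy with convergence to the uniform distribution on $G$---is exactly what the paper does (phrased there via the doubly stochastic $48\times 48$ transition matrix $P$). The forward direction is fine: if $\Gamma^k(\tau)=G$ for some $k$, then $P^k$ has all entries positive, the chain is regular, and $P^m u$ converges to uniform.

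The backward direction has a genuine gap. You assert aperiodicity is automatic because ``$\Gamma^1(\tau)$ always contains the identity---the first octant's transformation can be normalized to $\mathrm{id}$.'' This is not correct. The transformations $\gamma_i$ are intrinsic to the curve; passing to an equivalent curve $\sigma\circ\tau$ \emph{conjugates} each $\gamma_i$ by $\sigma$ rather than translating it, so one cannot force any $\gamma_i=\mathrm{id}$ unless it already is (and in many curves, e.g.\ Butz's, none of the eight is the identity). Without aperiodicity, your inference ``no single $\Gamma^k=G$ $\Rightarrow$ $\bigcup_k\Gamma^k$ is a proper subgroup'' breaks down: a periodic walk can have $\bigcup_k\Gamma^k=G$ while every individual $\Gamma^k$ is a proper subset. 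The paper avoids this issue entirely with a direct one-line argument: if $\Gamma^k(\tau)\neq G$ for every $k$, then for every $k$ the frequency vector $P^k u$ has a zero entry, and that alone prevents all 48 entries from converging to $1/48$. No subgroup analysis or aperiodicity is needed. (Your worry about reversal bookkeeping, by contrast, is minor; the paper handles the symmetric case by working with 24 pairs of transformations instead of 48, otherwise verbatim.)
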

\begin{proof}
For ease of explanation, first consider asymmetric curves $\tau$.

By definition, an asymmetric three-dimensional traversal is pattern-isotropic if and only if, in the limit, we see each of the 48 possible transformations of the base pattern equally often.
Let $\pi_1,...,\pi_{48}$ be the symmetries of the unit cube, numbered such that $\pi_1$ is the trivial symmetry (the identity transformation). Let $P$ be the $48 \times 48$ matrix defined by $P_{ij} = z/8$ if exactly $z$ out of the $8$ first-level subcubes have the transformation $\beta$ such that $\beta \circ \pi_j = \pi_i$. Note that for each permutation $\beta$ used in $z$ first-level subcubes, we put an entry with value $z/8$ in each row $i$ (namely at $P_{ij}$, where $\pi_j = \beta^{-1} \circ \pi_i$) and in each column $i$ (namely at $P_{hi}$ where $\pi_h = \beta \circ \pi_i$). Thus each row sums up to one and each column sums up to one. Let $u$ be the 48 elements' vector with $u[1] = 1$ and $u[i] = 0$ for all $i > 1$.
Now, if we choose a subcube from level $m$ at random, the probability that is has transformation $\pi_i$ is given by element $i$ of the vector $P^m u$.

Suppose there is a $k$ such that $\Gamma^k(\tau)$ is complete, that is, it contains all 48 symmetries of the unit cube. Then all entries in the first column of $P^k$ are strictly positive. These entries indicate that any transformation $\pi_i$ can be constructed from the identity transformation $\pi_1$ by composing $k$ transformations from those from the eight octants with $\pi_1$. Then any permutation $\pi_i$ can actually be constructed from any permutation $\pi_j$ in this way, and therefore all entries of $P^k$ are strictly positive. This implies that $P$ is a regular doubly stochastic matrix, and as $m$ goes to infinity, $P^m u$ converges to the vector of which all elements are 1/48.
Conversely, if there is no $k$ such that $\Gamma^k(\tau)$ is complete, then, for any $k$, the vector $P^k u$ contains at least one zero, and thus, by definition, $\tau$ is not pattern-isotropic.

If $\tau$ is symmetric, then the situation is slightly more subtle: because direction is irrelevant in the definition of pattern-isotropy, we should now consider 24 pairs of possible transformations such that the transformations in each pair result in each other's reverse; the traversal is pattern-isotropic if and only if, in the limit, 1/24 of the transformations of the base pattern comes from each pair. Therefore we put a non-zero entry in $P_{ij}$ when $\beta \circ \pi_j = \pi_i$ or $\beta \circ \sigma \circ \pi_j = \pi_i$, where $\sigma$ is the symmetry transformation that maps $\tau$ to its own reverse. We now fill the matrix with multiples of 1/16 rather than multiples of 1/8. Otherwise, the proof goes through verbatim.
\end{proof}

In practice, Lemma~\ref{lem:isotropy} allows us to calculate efficiently whether a curve is pattern-isotropic: we simply calculate $\Gamma^k(\tau)$ for increasing values of $k$, until we find we have completed a cycle, that is, until we have found two values $a < b$ such that $\Gamma^a(\tau) = \Gamma^b(\tau)$. Since the number of different values which $\Gamma^k(\tau)$ can assume is finite, such a cycle must eventually be found---and in practice it is found fast. Then, we can decide whether $\tau$ is pattern-isotropic by checking if $\Gamma^a(\tau)$ contains all 48 symmetries of the unit cube.

\paragraph{Edge-isotropy}
We now discuss how to recognize edge-isotropy, which turns out to be very simple: the edge-isotropic curves are exactly the face-continuous curves. To prove this, we first need to adapt our notation. Recall that a symmetry of the unit cube is given by a signed permutation $\Pi$, which we write as a square-bracketed sequence of three numbers whose absolute values are a permutation of $\{1,2,3\}$. Let $\Gamma^k_*(\tau)$ be the transformations in $\Gamma^k(\tau)$ without the signs, so $\Gamma^k_*(\tau)$ is a subset of $U = \{[1,2,3],[1,3,2],[2,1,3],[2,3,1],[3,1,2],[3,2,1]\}$. We call the transformation $[1,2,3]$ the \emph{identity} transformation, the transformations $[2,3,1]$ and $[3,1,2]$ are \emph{shifts}, and the transformations $[1,3,2]$, $[2,1,3]$ and $[3,2,1]$ are \emph{swaps}. The following lemma is easy to verify by trying all combinations:

\begin{lemma}\label{lem:unsignedisotropy}
If $\Gamma^1_*(\tau)$ contains...
\begin{itemize}
\item[(i)] ...at least both shifts and no swaps, then $\Gamma^k_*(\tau) = \{[1,2,3], [2,3,1], [3,1,2]\}$ for all $k \geq 2$;
\item[(ii)] ...at least one swap and one shift, or at least two swaps and identity, then $\Gamma^k_*(\tau)$ is the complete set $U$ for all $k \geq 3$;
\item[(iii)] ...at least two swaps, no shifts and no identity, then $\Gamma^k_*(\tau) = \{[1,2,3], [2,3,1], [3,1,2]\}$ for all even $k \geq 2$, and $\Gamma^k_*(\tau) = \{[1,3,2], [2,1,3], [3,2,1]\}$ for all odd $k \geq 3$.
\end{itemize}
\end{lemma}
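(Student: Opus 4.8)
The plan is to strip the signs off everything and turn the lemma into a short finite computation in the symmetric group $S_3$. The first step is to notice that forgetting the signs of a signed permutation is a surjective group homomorphism $\phi$ onto $U$, which forms a copy of $S_3$ under composition; like any homomorphism, $\phi$ sends a set composition $A\circ B$ to the set product $\phi(A)\cdot\phi(B)$. Since the definition of $\Gamma^k$ gives $\Gamma^k(\tau)=\Gamma^1(\tau)\circ\Gamma^{k-1}(\tau)$, applying $\phi$ yields $\Gamma^k_*(\tau)=\Gamma^1_*(\tau)\cdot\Gamma^{k-1}_*(\tau)$ and hence, by induction, $\Gamma^k_*(\tau)=A^k$, the $k$-fold set product inside $S_3$ of $A:=\Gamma^1_*(\tau)$ (with $A^0=\{[1,2,3]\}$). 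I would also record that the three sets in the conclusions --- the whole group $U$, the set $A_3=\{[1,2,3],[2,3,1],[3,1,2]\}$ of the identity and the two shifts, and the coset $T=\{[1,3,2],[2,1,3],[3,2,1]\}$ of swaps --- are each unions of conjugacy classes of $S_3$, hence invariant under relabelling the coordinate axes; this lets me argue ``without loss of generality'' by relabelling whenever convenient. What remains is purely group-theoretic: given the membership hypotheses on $A$, determine $A^k$.

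The second step is to collect the elementary product identities in $S_3$ that do all the work, each verified by inspection: $A_3$ is a subgroup and $T$ its nontrivial coset, so $A_3\cdot A_3=A_3$ and $t\cdot A_3=A_3\cdot t=T$ for any swap $t$; the product of the two shifts is $A_3$; for two distinct swaps $t_1,t_2$ one has $\{t_1,t_2\}^2=A_3$, because $t_1t_2$ and $t_2t_1$ are the two distinct shifts; and --- the identity that forces the recursion depth up to $3$ --- for any swap $t$ and any shift $r$ the eight length-three words over $\{t,r\}$ already realize all six elements of $S_3$, so $\{t,r\}^3=U$ (the two remaining swaps appear as $tr^2$ and $r^2t$, the two shifts as $r$ and $trt$, alongside $t^2=r^3=[1,2,3]$).

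Then I would dispatch the three cases. In case~(i) the hypothesis forces $A$ to contain both shifts and to lie inside $A_3$, so $A^2\supseteq A_3$ and, being inside the subgroup $A_3$, $A^2=A_3$; consequently $A^k=A^2\cdot A^{k-2}=A_3\cdot A^{k-2}=A_3$ for all $k\ge 2$, since $A^{k-2}$ is a nonempty subset of $A_3$. In case~(iii) the hypothesis gives $A\subseteq T$ with $|A|\ge 2$, so $A^2=A_3$ (it contains $\{t_1,t_2\}^2=A_3$ and lies in $T\cdot T=A_3$), then $A^3=A\cdot A_3=T$ and $A^4=A\cdot T=A_3$, and the even/odd alternation propagates. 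In case~(ii), if $A$ contains a swap and a shift then $A^3\supseteq\{t,r\}^3=U$, hence $A^k=U$ for $k\ge 3$; and if $A$ contains the identity and two distinct swaps $t_1,t_2$ then $A^2\supseteq A_3\cup\{t_1,t_2\}$ (the identity with $t_1t_2,t_2t_1$ gives $A_3$, and $[1,2,3]\cdot t_i$ gives $t_i$), so $A^3\supseteq A_3\cdot\{[1,2,3],t_1\}=A_3\cup T=U$, and again $A^k=U$ for $k\ge 3$.

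I do not expect a real obstacle, since the lemma is in the end a finite check; the care is all in the bookkeeping. The point most worth double-checking is that the thresholds on $k$ are tight: in case~(ii) one should confirm that depth $2$ does not already suffice --- when $A$ is exactly a swap together with a shift, $A^2$ omits two elements --- which is precisely why the conclusion only claims $k\ge 3$; and in case~(iii) one must check that the odd branch stabilizes only from $k\ge 3$ (for $k=1$, $A$ need not be all of $T$). The other thing to state explicitly is that the reduction is insensitive to the extra reversal-twisted symmetries that enter $\Gamma^1(\tau)$ for symmetric curves, since these only enlarge $\Gamma^1_*(\tau)$ and the case hypotheses are stated directly in terms of $\Gamma^1_*(\tau)$.
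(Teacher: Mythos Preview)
Your proposal is correct. The paper's own ``proof'' is a single sentence --- ``easy to verify by trying all combinations'' --- so you are supplying exactly what the paper omits: a clean write-up of that finite check, organized via the subgroup/coset structure of $S_3$ rather than raw enumeration. The reduction to set powers $A^k$ in $S_3$ via the sign-forgetting homomorphism is the right way to make the verification transparent, and your case analysis is sound. One cosmetic quibble: in the parenthetical justifying $\{t,r\}^3=U$ you write ``$t^2=r^3=[1,2,3]$'', but $t^2$ is a length-$2$ word; the identity in $\{t,r\}^3$ comes from $r^3$ and the element $t$ from $t^3$, so you may want to reword that aside.
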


Lemma~\ref{lem:unsignedisotropy} is very helpful in analyzing vertex-gated curves:

\begin{lemma}\label{lem:unsignedisotropyvertexgated}
If $\tau$ is a vertex-gated three-dimensional Hilbert curve, then case (i), (ii) or (iii) of Lemma~\ref{lem:unsignedisotropy} applies.
\end{lemma}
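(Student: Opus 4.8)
The plan is to prove the statement by contradiction: assume $\tau$ is vertex‑gated and that $\Gamma^1_*(\tau)$ satisfies none of (i), (ii), (iii), and derive that $\tau$ cannot visit all eight octants. First I would reformulate the excluded situation. A quick inspection of the six unsigned transformations in $U$ shows that $\Gamma^1_*(\tau)$ fails all three conditions precisely when it contains at most one non‑identity element; equivalently, there is one fixed permutation $\sigma_0$ of the coordinate axes (possibly $[1,2,3]$ itself) such that the unsigned part $\sigma_i$ of every octant transformation $\gamma_i$ lies in $\{[1,2,3],\sigma_0\}$ — and for a symmetric $\tau$ the same bound on $\Gamma^1(\tau)$, which then also contains the $\gamma_i$ composed with the reversal symmetry, still forces exactly this condition on the $\sigma_i$.

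Next I would set up coordinates and the key displacement identity. Rescale so that the unit cube is $[0,1]^3$ and the octants are axis‑aligned $\tfrac12$‑cubes, and write $g_i=\tau(i/8)$. By self‑similarity the part of $\tau$ inside the $i$‑th octant $C_i$ is a $\tfrac12$‑scaled, $\gamma_i$‑transformed, possibly reversed copy of $\tau$, so $\{g_{i-1},g_i\}=\{\rho_i\gamma_i(g_0),\rho_i\gamma_i(g_8)\}$ and $g_i-g_{i-1}=\pm\tfrac12 M_i d$, where $d=g_8-g_0$ and $M_i$ is the matrix of $\gamma_i$. By Theorem~\ref{thm:vertex-vertex-gates}, $g_0$ and $g_8$ are distinct cube vertices differing by $\pm1$ in a set $D$ of coordinates with $|D|\in\{1,2\}$, and the support of $g_i-g_{i-1}$ equals $\sigma_i(D)\in\{D,\sigma_0(D)\}$. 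Then I would split into two cases according to whether $D\cup\sigma_0(D)$ is all of $\{1,2,3\}$ or not.

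In the first (generic) case some axis $\ell$ lies in no $\sigma_i(D)$ — this always happens when $|D|=1$, and also when $|D|=2$ with $\sigma_0(D)=D$. Then the $\ell$‑th coordinate is never changed inside any octant, so $(g_i)_\ell=(g_0)_\ell\in\{0,1\}$ for all $i$; hence each octant $C_i$ has two vertices (namely $g_{i-1}$ and $g_i$) with $\ell$‑coordinate equal to this fixed value, which forces all eight octants into the single closed half‑cube $\{x_\ell\le\tfrac12\}$ or $\{x_\ell\ge\tfrac12\}$. That half contains only four octants, contradicting octant‑basedness. In the remaining case $D\cup\sigma_0(D)=\{1,2,3\}$; since each $\sigma_i(D)$ has size $|D|$ this forces $|D|=2$ and $\sigma_0(D)\ne D$, so $D\cap\sigma_0(D)$ is a single axis $c^*$ belonging to every $\sigma_i(D)$. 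Then every displacement $g_i-g_{i-1}$ has $c^*$‑component $\pm\tfrac12$, and since all $g_i$ lie in $[0,1]^3$ with $(g_0)_{c^*}\in\{0,1\}$, an easy induction gives $(g_i)_{c^*}=\tfrac12$ for odd $i$ and $(g_i)_{c^*}\in\{0,1\}$ for even $i$. For each octant $C_i$ exactly one of $g_{i-1},g_i$ has $c^*$‑coordinate in $\{0,1\}$, and that value pins down which $c^*$‑half of the cube $C_i$ occupies, giving the sequence of $c^*$‑halves of $C_1,\dots,C_8$ as $\big((g_0)_{c^*},(g_2)_{c^*},(g_2)_{c^*},(g_4)_{c^*},(g_4)_{c^*},(g_6)_{c^*},(g_6)_{c^*},(g_8)_{c^*}\big)$. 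Since $c^*\in D$, the first and last entries are $0$ and $1$ in some order, so the sequence has an odd number of zeros, whereas it must have exactly four — a contradiction.

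I expect the delicate point to be this last case: one must turn the forced alternation of the $g_i$ along axis $c^*$ into statements about the positions of the octants, and the bookkeeping — in particular that the $c^*$‑half of $C_i$ is read from $g_{i-1}$ for odd $i$ but from $g_i$ for even $i$ — is where the argument most easily slips. The initial reformulation also rests on a small finite verification over the six unsigned transformations, of the same routine character as the check announced for Lemma~\ref{lem:unsignedisotropy}.
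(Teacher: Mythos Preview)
Your argument is correct and takes a genuinely different route from the paper. The paper argues directly, splitting into the edge-crossing and facet-crossing subcases of Theorem~\ref{thm:vertex-vertex-gates}: in each subcase it pins down, for geometric reachability reasons, two or three specific pairs of unsigned permutations from which $\Gamma^1_*(\tau)$ must contain at least one element each, and then checks that every such selection lands in case~(i), (ii) or~(iii). Your argument instead works by contradiction and uniformly: after the (correct) reformulation that failure of all three cases means $\{\sigma_i\}\subseteq\{[1,2,3],\sigma_0\}$, you use the displacement identity $g_i-g_{i-1}=\pm\tfrac12 M_i d$ to track the support $\sigma_i(D)$ of each step, and then either find an axis that is never moved (forcing all octants into one half-cube) or, in the $|D|=2$, $\sigma_0(D)\neq D$ case, run the parity count on the $c^*$-halves. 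Both arguments rest on Theorem~\ref{thm:vertex-vertex-gates} to know $|D|\in\{1,2\}$. What your approach buys is that it avoids the edge-/facet-crossing case split and the somewhat ad hoc reachability reasoning; what the paper's approach buys is that it is constructive, telling you exactly which unsigned permutations must be present, which is convenient for the downstream use in Theorem~\ref{thm:FCimpliesEI} and the discussion of coordinate-shifting curves. The point you flagged as delicate (reading the $c^*$-half of $C_i$ from $g_{i-1}$ for odd $i$ and from $g_i$ for even $i$) is indeed the crux, and your bookkeeping there is right: the total number of zeros in $\big((g_0)_{c^*},(g_2)_{c^*},(g_2)_{c^*},(g_4)_{c^*},(g_4)_{c^*},(g_6)_{c^*},(g_6)_{c^*},(g_8)_{c^*}\big)$ is $1+2k$ for some integer $k$, hence never four.
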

\begin{proof}
We distinguish two cases: edge-crossing and facet-crossing curves.

Suppose $\tau$ is edge-crossing, and suppose the edge that connects the entrance and the exit gate is parallel to the third coordinate axis (the cases of the first and the second coordinate axis are similar). For each $i \in \{1,2,3\}$, there must be an octant in which the edge that connects the octant's entrance and exit gate is parallel to the $i$-th coordinate axis, otherwise the traversal cannot reach the opposite octant on the interior diagonal through the first octant. Therefore, $\Gamma^1_*(\tau)$ must contain at least one of $[2,3,1]$ and $[3,2,1]$, at least one of $[1,3,2]$ and $[3,1,2]$, and at least one of $[1,2,3]$ and $[2,1,3]$.

Now suppose $\tau$ is facet-crossing. We call a facet a \emph{$k$-facet} if it is orthogonal to the $k$-th coordinate axis. Suppose the facet that contains the entrance and the exit gate is a 3-facet (the cases of 1- and 2-facets are similar), the entrance gate is at $(-\frac12, -\frac12, -\frac12)$ and the exit gate is at $(\frac12, \frac12, -\frac12)$. We say an octant is \emph{$i$-low} if it is on the same $i$-facet of the unit cube as the entrance gate, and \emph{$i$-high} otherwise. For each $i \in \{1,2\}$, there must be an $i$-low octant in which the facet that contains the octant's entrance and exit gate is an $i$-facet, otherwise the $i$-th coordinates of the exit gates of the four $i$-low octants alternate between $0$ and $-\frac12$, ending with $-\frac12$, and the remaining $i$-high octants, among which the octant that contains the exit gate of $\tau$, cannot be reached. Therefore, $\Gamma^1_*(\tau)$ must contain at least one of $[2,3,1]$ and $[3,2,1]$ and at least one of $[1,3,2]$ and $[3,1,2]$.

Both for edge-crossing and for facet-crossing curves, it follows that case (i), (ii) or (iii) of Lemma~\ref{lem:unsignedisotropy} applies.
\end{proof}

\begin{theorem}\label{thm:FCimpliesEI}
All face-continuous three-dimensional Hilbert curves are edge-isotropic.
\end{theorem}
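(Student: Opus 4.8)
The plan is to reduce to the classification of face-continuous curves and then run an edge-counting argument powered by the random-walk machinery behind Lemmas~\ref{lem:isotropy}--\ref{lem:unsignedisotropyvertexgated}. By Theorem~\ref{thm:allfacecontinuous}, a face-continuous three-dimensional Hilbert curve is either \curvename{Ca00.gs} or has a name beginning with \curvename{Ca00.cc} or \curvename{Si00.cc}. The curve \curvename{Ca00.gs} is pattern-isotropic (Section~\ref{sec:facetgated}), and pattern-isotropy implies edge-isotropy for face-continuous curves, so it needs no further work; the remaining curves are all vertex-gated, and they are the real target.

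First I would set up an edge decomposition. For a face-continuous curve every edge of every approximating curve $A_k$ is axis-parallel, hence carries an unsigned direction in $\{1,2,3\}$. Partition the $8^k-1$ edges of $A_k$ by \emph{scale}: for $j\in\{1,\dots,k\}$ the $7\cdot 8^{j-1}$ scale-$j$ edges are those joining two consecutive level-$j$ subcubes that lie in a common level-$(j-1)$ subcube. Inside a level-$(j-1)$ subcube $S$, the traversal of the eight level-$j$ sub-subcubes is $\tau$'s base pattern transported by the composite cube symmetry $\gamma_S$ of $S$ (possibly reversed, which is immaterial for unsigned directions); so the seven scale-$j$ edges inside $S$ have unsigned directions $\gamma_S(a_1),\dots,\gamma_S(a_7)$, where $(a_1,\dots,a_7)$ is the fixed list of edge-axes of the base pattern. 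For \curvename{Ca00} this list (the pattern $G(3)$) is $1,2,1,3,1,2,1$, with axis multiplicities $(c_1,c_2,c_3)=(4,2,1)$; for \curvename{Si00} it is $1,2,3,2,1,2,3$, with multiplicities $(2,3,2)$; in both cases $c_1+c_2+c_3=7$. Writing $p^{(m)}_\gamma$ for the fraction of level-$m$ subcubes whose composite symmetry has unsigned part $\gamma$, the number of scale-$j$ edges of $A_k$ in direction $d$ equals $8^{j-1}\sum_{\gamma} p^{(j-1)}_\gamma\,c_{\gamma^{-1}(d)}$.

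Next I would analyse the distributions $p^{(m)}$. As in the proof of Lemma~\ref{lem:isotropy}, $m\mapsto p^{(m)}$ is a random walk on the group of unsigned cube symmetries (isomorphic to $S_3$), with a doubly stochastic step matrix supported on $\Gamma^1_*(\tau)$; and for vertex-gated curves Lemma~\ref{lem:unsignedisotropyvertexgated} forces $\Gamma^1_*(\tau)$ into case (i), (ii) or (iii) of Lemma~\ref{lem:unsignedisotropy}. In case (ii) the support generates $S_3$ and contains permutations of both parities (a shift together with a swap, or the identity together with swaps), so the walk is irreducible and aperiodic and $p^{(m)}$ converges to the uniform distribution on $S_3$. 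In case (i) the support lies in, and generates, the unique order-three subgroup (identity and the two shifts), on which the walk is irreducible and aperiodic, so $p^{(m)}$ converges to the uniform distribution on that subgroup. In case (iii) the support consists of swaps, so the walk has period $2$, but its square is an irreducible aperiodic walk on the order-three subgroup (products of two distinct transpositions yield both $3$-cycles and the identity); hence along even $m$ the distribution converges to uniform on the order-three subgroup and along odd $m$ to uniform on the coset of the three transpositions. In each of these limit distributions a direct check gives $\sum_\gamma p_\gamma\,c_{\gamma^{-1}(d)}=7/3$, independent of $d$: over the order-three subgroup $\gamma^{-1}(d)$ runs once over all three axes; over $S_3$ each axis is hit by exactly two permutations; over the three transpositions $d$ is fixed by one and exchanged with each of the other axes by the other two. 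Therefore the number of scale-$j$ edges of $A_k$ in direction $d$ is $\tfrac{7}{3}\,8^{j-1}(1+o(1))$, uniformly in $d$, as $j\to\infty$.

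Summing over scales completes the argument: the number of edges of $A_k$ in direction $d$ is $\sum_{j=1}^k \tfrac{7}{3}8^{j-1}(1+o(1))=\tfrac{8^k-1}{3}(1+o(1))$, since a geometric sum is dominated by its last terms and the $o(1)$ error is uniform in $j$; thus the fraction of edges of $A_k$ parallel to each axis tends to $1/3$, which is edge-isotropy. I expect the main obstacle to be the random-walk bookkeeping on $S_3$, in particular case (iii), where one must pass to the squared walk on the order-three subgroup to obtain irreducibility and aperiodicity and then argue that the odd-step distributions converge to the uniform distribution on the transposition coset; a related point to state carefully is that the unsigned composite symmetry of a level-$m$ subcube is exactly the element of $\Gamma^m_*(\tau)$ produced by the transition matrix of Lemma~\ref{lem:isotropy}, so that the two notions of ``distribution over transformations'' coincide. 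The edge decomposition and the final geometric estimate are routine.
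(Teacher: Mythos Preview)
Your proposal is correct and follows essentially the same route as the paper: reduce via Theorem~\ref{thm:allfacecontinuous} to vertex-gated curves plus \curvename{Ca00.gs}, invoke Lemma~\ref{lem:unsignedisotropyvertexgated} to place the vertex-gated curves into cases (i)--(iii) of Lemma~\ref{lem:unsignedisotropy}, and read off edge-isotropy from the resulting equidistribution of unsigned transformations. The only cosmetic differences are that the paper handles \curvename{Ca00.gs} by directly computing $\Gamma^1_*(\tau)=\{[2,1,3],[2,3,1],[3,1,2],[3,2,1]\}$ and invoking case~(ii) (rather than citing pattern-isotropy), and that it argues the edge count more tersely, without your explicit scale decomposition and base-pattern multiplicities; the substance is identical.
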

\begin{proof}
Let $\tau$ be a face-continuous three-dimensional Hilbert curve. By Theorem~\ref{thm:allfacecontinuous}, $\tau$ must be vertex-gated, or it is \curvename{Ca00.gs}. In the latter case $\Gamma^1_*(\tau) = \{[2,1,3], [2,3,1], [3,1,2], [3,2,1]\}$, as can be seen in Figure~\ref{fig:facetgatedcurve}, hence case (ii) of Lemma~\ref{lem:unsignedisotropy} applies. Otherwise $\tau$ is a vertex-gated curve, and case (i), (ii) or (iii) of Lemma~\ref{lem:unsignedisotropy} applies, by Lemma~\ref{lem:unsignedisotropyvertexgated}.

By the same analysis as in the proof of Lemma~\ref{lem:isotropy}, asymptotically, as $k$ goes to infinity, for each element $\gamma \in \Gamma^k_*(\tau)$ there is an equal number of $k$-th level subcubes $C$ that are traversed according to the image of $\tau$ under the transformation $\gamma$ (modulo reflections, reversals, translation and scaling)---note that this also applies in case (iii): one can simply do the analysis for odd and even $k$ separately. Thus, by the composition of the sets $\Gamma^k_*(\tau)$ as described by Lemma~\ref{lem:unsignedisotropy}, for each edge of the approximating curve of the base pattern, its images within the $k$-th level subcubes are equally distributed among parallels of the first, the second, and the third coordinate axis. Thus $\tau$ is edge-isotropic.
\end{proof}

Note that Theorem~\ref{thm:FCimpliesEI} implies that the edge-isotropic curves are exactly the face-continuous curves, since for non-face-continuous curves, the concept of edge-isotropy is not defined.

\paragraph{Standing curves}
In fact, Lemma~\ref{lem:unsignedisotropyvertexgated} has another interesting consequence. By definition, a curve $\tau$ is standing if $\Gamma^1_*(\tau)$ contains a single swap and/or identity, and nothing else. Thus, Lemma~\ref{lem:unsignedisotropyvertexgated} immediately implies:
\begin{corollary}\label{cor:standing}
No standing three-dimensional Hilbert curve is vertex-gated.
\end{corollary}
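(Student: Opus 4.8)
The plan is to derive Corollary~\ref{cor:standing} as a near-immediate consequence of Lemma~\ref{lem:unsignedisotropyvertexgated}, by checking that no case in the conclusion of Lemma~\ref{lem:unsignedisotropy} is compatible with the definition of a standing curve. First I would recall the two relevant definitions side by side. By definition, a self-similar traversal $\tau$ is \emph{standing} if it can be defined so that, for fixed $m,n$, every $\Pi_i$ (ignoring signs) is either the identity permutation or the single transposition swapping coordinates $m$ and $n$; equivalently, after normalizing so that $\{m,n\}=\{1,2\}$, we have $\Gamma^1_*(\tau) \subseteq \{[1,2,3],[2,1,3]\}$, i.e.\ $\Gamma^1_*(\tau)$ contains only identity and/or a single swap, and no shift. (Here I am using that $\Gamma^1_*(\tau)$ is exactly the set of unsigned permutations $|\Pi_i|$ occurring among the octants, possibly together with the unsigned permutation of the self-reversal transformation $\sigma$ when $\tau$ is symmetric; but for a standing curve the defining condition bounds all of these, so $\Gamma^1_*(\tau)$ has the claimed form either way.)

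Next I would invoke Lemma~\ref{lem:unsignedisotropyvertexgated}: if $\tau$ is vertex-gated, then $\Gamma^1_*(\tau)$ falls into case (i), (ii) or (iii) of Lemma~\ref{lem:unsignedisotropy}. I then check each case against the standing condition. Case (i) requires $\Gamma^1_*(\tau)$ to contain both shifts $[2,3,1]$ and $[3,1,2]$ — but a standing curve's $\Gamma^1_*(\tau)$ contains no shift at all, contradiction. Case (ii) requires either a swap together with a shift, or at least two distinct swaps together with identity — in the first sub-case we again have a shift, excluded; in the second sub-case we have two distinct swaps, but a standing curve admits only \emph{one} fixed swap (the transposition of $m$ and $n$), contradiction. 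Case (iii) requires at least two distinct swaps (and no shift, no identity) — again two distinct swaps are not allowed for a standing curve, contradiction. Since all three cases are impossible, a vertex-gated curve cannot be standing, which is exactly the contrapositive of the corollary.

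The proof is genuinely short and the only thing to be careful about is the translation between the combinatorial language of the \emph{standing} definition (``identity or swap of the $m$-th and $n$-th coordinate'') and the set language $\Gamma^1_*(\tau)$ used in Lemmas~\ref{lem:unsignedisotropy} and~\ref{lem:unsignedisotropyvertexgated}: one must note that two distinct octants could a priori use two different swaps and still each be a single transposition, so the real content of ``standing'' is that a \emph{single} swap (together with identity) suffices for all octants simultaneously, ruling out the ``two distinct swaps'' patterns that appear in cases (ii) and (iii). I would also remark, for completeness, that a symmetric curve's self-reversal transformation $\sigma$ may contribute to $\Gamma^1_*(\tau)$, but this only enlarges $\Gamma^1_*(\tau)$ and hence can only make it harder, not easier, to satisfy the standing condition, so the argument is unaffected. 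I expect no real obstacle here; the work was already done in Lemma~\ref{lem:unsignedisotropyvertexgated}, and the corollary is essentially a bookkeeping check that ``at least two shifts'', ``at least one swap and one shift'', ``at least two swaps and identity'', and ``at least two swaps, no shift, no identity'' are each incompatible with ``identity and/or one fixed swap.''
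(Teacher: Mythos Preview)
Your proof is correct and follows essentially the same route as the paper: the paper states just before the corollary that, by definition, $\tau$ is standing iff $\Gamma^1_*(\tau)$ contains a single swap and/or identity and nothing else, and then concludes immediately from Lemma~\ref{lem:unsignedisotropyvertexgated}. Your explicit case check of (i)--(iii) against that condition is exactly the verification the paper leaves implicit, and your remarks about the self-reversal transformation $\sigma$ go a bit beyond what the paper spells out but do not change the argument.
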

In fact, one can check the 17 edge- and facet-gated curves one by one and find that none of them are standing. Thus we find:
\begin{finding}
All standing three-dimensional Hilbert curves are vertex-edge-gated or vertex-facet-gated.
\end{finding}
Examples of standing curves of each type are shown in Figures \ref{fig:vertexedgegatedcurves}acd and~\ref{fig:boringcurves}a, respectively.

\paragraph{Coordinate-shifting curves}
In the proof of Lemma~\ref{lem:unsignedisotropyvertexgated} we derived necessary conditions on $\Gamma^1_*(\tau)$ for vertex-gated curves. In fact, these conditions are almost sufficient for the realization of a given gate sequence. More precisely, The gate sequences for not only vertex-gated curves, but also for vertex-facet-gated curves, fix one axis in each octant, namely the axis of the edge that connects the gates (if the curve is edge-crossing), the axis orthogonal to the facet that contains the gates (if the curve is facet-crossing), or the axis orthogonal to the facet that contains the facet gate (if the curve is vertex-facet-gated). Otherwise, as discussed in Sections \ref{sec:vertexgated} and~\ref{sec:vertexfacetgated}, for each octant, one is free to choose whether or not to reflect it in a diagonal plane that contains the gates---in other words, one can choose freely how to permute the two non-fixed axes. Thus, for each octant, one can choose between a swap and a non-swap (shift or identity). Choosing a shift or identity in each octant results in a coordinate-shifting curve and we obtain:
\begin{theorem}
Each gate sequence for a vertex-gated or vertex-facet-gated curve admits a coordinate-shifting curve.
\end{theorem}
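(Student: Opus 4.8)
The plan is to exploit the single degree of freedom that a gate sequence leaves in each octant of a vertex-gated or vertex-facet-gated curve, and to show that in every octant exactly one of the two permissible transformations has an unsigned permutation that is a shift. Concretely, I would recall from Sections~\ref{sec:vertexgated} and~\ref{sec:vertexfacetgated} that, once the gate sequence is fixed, the curve inside an octant $C_i$ is determined up to a reflection of $C_i$ in the diagonal symmetry plane of the octant that contains the two gates $\gate{i-1}$ and $\gate i$ (for vertex-gated curves there is also a per-octant reversal choice, which is forced or free according to the gate parities but is irrelevant here), and that these reflection choices may be made independently in the eight octants, each choice yielding a valid curve realizing the same gate sequence. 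Hence it suffices to prove that, octant by octant, the ``reflect'' and ``do not reflect'' options have unsigned permutations of which exactly one is a shift (the identity or a $3$-cycle); choosing that option in every octant then produces a coordinate-shifting curve realizing the given gate sequence.

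Next I would pin down which coordinate axis the gate sequence fixes in each octant. If the curve is edge-crossing, the gates of $C_i$ lie at the two ends of an octant edge parallel to some axis $k_i$, and since the gates of the whole cube lie at the ends of a cube edge parallel to some axis $k$, the transformation $\gamma_i$ must map axis $k$ to axis $k_i$; that is, the unsigned permutation $\sigma_i$ underlying $\gamma_i$ satisfies $\sigma_i(k)=k_i$. If the curve is facet-crossing, the gates of $C_i$ are opposite corners of an octant facet orthogonal to some axis $k_i$, and $\gamma_i$ must carry the axis $k$ orthogonal to the cube facet carrying the gates to $k_i$, so again $\sigma_i(k)=k_i$. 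For a vertex-facet-gated curve, by Theorem~\ref{thm:vertex-facet-gates} and the analysis of Section~\ref{sec:vertexfacetgated} one gate of $\tau$ is the centre of a cube facet orthogonal to an axis $k$, the matching gate of $C_i$ is the centre of an octant facet orthogonal to an axis $k_i$, and once more $\sigma_i(k)=k_i$. In every case the gate sequence constrains $\sigma_i$ to send one prescribed index to one prescribed index, and the only remaining freedom is the binary reflection choice described above.

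Then I would combine two elementary facts. Geometrically, a diagonal symmetry plane of the cube — such as the one through the two gates, which (as one checks in each of the three cases) is a plane fixing the axis $k_i$ — interchanges the other two coordinate axes and fixes $k_i$; therefore the reflection toggling the two options composes $\gamma_i$ with a transposition of the two coordinate axes distinct from $k_i$, which on unsigned permutations replaces $\sigma_i$ by a permutation $\neq\sigma_i$ that still sends $k$ to $k_i$. Combinatorially, for any prescribed indices $k,k_i$ the set $\{\sigma\in S_3:\sigma(k)=k_i\}$ has exactly two elements, and — checking the two cases $k=k_i$ (identity and a transposition) and $k\neq k_i$ (the transposition $(k\,k_i)$ and the $3$-cycle through $k,k_i$) — exactly one of them is a swap and exactly one is a shift. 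The two options for $C_i$ are thus precisely this swap and this shift, and selecting the shift in every octant, which we may do independently, yields the desired coordinate-shifting curve.

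I expect the only point requiring genuine care to be the geometric half of the last step: verifying uniformly, across the edge-crossing, facet-crossing, and vertex-facet-gated cases, that the available ``reflect'' freedom really is a reflection in a diagonal symmetry plane of the octant that fixes exactly the axis $k_i$ singled out by the gate sequence, so that toggling it acts as a transposition of the two remaining axes (and hence cannot fix which of ``swap'' or ``shift'' one lands in). Everything else — the independence of the per-octant choices, established already in Sections~\ref{sec:vertexgated} and~\ref{sec:vertexfacetgated}, and the short case analysis in $S_3$ — is immediate.
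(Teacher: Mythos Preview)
Your proposal is correct and follows essentially the same approach as the paper: identify the single axis $k_i$ that the gate sequence fixes in each octant (edge axis, facet-normal axis, or facet-gate-normal axis, respectively), observe that the per-octant diagonal-plane reflection freedom toggles between the two unsigned permutations in $S_3$ sending $k\mapsto k_i$, note that exactly one of these is a shift, and choose it. You supply more explicit verification (the geometric check that the reflection fixes $k_i$ and the $S_3$ case split) than the paper does, but the argument is the same.
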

On the other hand, one can examine the 17 edge- and facet-gated curves and find that none of them are coordinate-shifting; one can also verify by hand that it is not possible to assemble octants of vertex-edge-gated curves in such a way that they only differ by rotations of the coordinate axes (and reflections, traversals and translations). Thus we find:
\begin{finding}\label{thm:coordinaterotating}
A gate sequence for a three-dimensional Hilbert curve can be realized by a coordinate-shifting curve if and only if it is vertex-gated or vertex-facet-gated.
\end{finding}

\subsection{Locality-preserving properties}\label{sec:observationslocality}

\begin{table}
\caption{Worst-case locality metrics for selected curves, and analytical bounds for octant-by-octant cube-filling curves known from the literature. For the Butz curve (along with \curvename{Ca00.chI} and \curvename{Ca00.cTI}), bounds on \WLMan, \WLEuc\ and \WLMax\ were also calculated by Niedermeier et al.~\cite{niedermeier}. The numbers reported below improve on their bounds on \WLEuc\ and \WLMax.}\label{tab:metrics}
\begin{tabular}{|>{\ttfamily}l@{ }lc@{ \,}c@{ \,}c@{ \,}c@{ \,}c@{ \,}c|}
\hline
name & nickname & $\WLMan^{1/3}$ & $\WLEuc^{1/3}$ & $\WLMax^{1/3}$ & $\WS^{2/3}$ & $\WBV$ & $\WBS^{2/3}$ \\
\hline
\hline
Ca00.c4I & Butz                             & 4.62       & 2.97       & 2.89       & 1.48       & 3.11       & 3.14       \\
Ca00.cc.44.hh.db & Alfa                     & 4.64       & 2.84       & 2.32       & 1.48       & 3.11       & 2.69       \\
Ca00.c4Z & Harmonious                       & 4.63       & 3.07       & 3.04       & 1.48       & 3.50       & 3.46       \\
Ca00.cT4 & Sasburg                          & 4.58       & 3.00       & 2.66       & 1.45       & 3.50       & 3.08       \\
Ca00.cv.4h & Base camp                      & 5.27       & 3.21       & 3.04       & 1.62       & 5.25       & 3.68       \\
Ca00.gs & Beta                              & 4.48       & 2.65       & 2.41       & 1.48       & 3.14       & 2.54       \\
Cu00.cc.4d.4d.Z7 & Long-legs                & 5.36       & 3.28       & 3.04       &            & 5.67       & 4.03       \\
Ll36.cc.II.hT.33 & Rough-edge               & 6.73       & 4.29       & 3.04       & $\geq$2.09 &10.50       & 6.34       \\
Se33.c7T & Mosquito                         & 6.73       & 4.30       & 3.04       &            &10.50       & 6.34       \\
Si11.ct.P9 & Rollercoaster                  & 7.23       & 4.16       & 3.04       & 1.82       &14.00       & 5.81       \\
\hline
\multicolumn{2}{|l}{symmetric face-continuous curves} & 4.49--4.82 & 2.97--3.10 & 2.65--3.04 & 1.45--1.50 & 3.11--3.73 & 2.99--3.46 \\
\multicolumn{2}{|l}{vertex-edge-gated curves}& 4.69--7.23 & 2.95--4.16 & 2.55--3.04 & 1.53--2.07 & 4.31--14.11& 2.97--5.84 \\
\multicolumn{2}{|l}{all Hilbert curves}      & 4.48--7.23 & 2.65--4.30 & 2.32--3.04 & 1.45--$\geq$2.09&3.11--14.11&2.54--6.34 \\
\hline\hline
\multicolumn{8}{|c|}{known bounds from the literature}\\
\hline
\multicolumn{2}{|l}{not necessarily self-similar~\cite{gotsman}}            &            & $\leq$4.90 &            &            &            &            \\
\multicolumn{2}{|l}{not necessarily self-similar~\cite{niedermeier}}        & $\geq$3.49 & $\geq$2.23 &$\geq$2.02  &            &            &            \\
\multicolumn{2}{|l}{face-cont.\ order-preserving~\cite{chochia}}& $\geq$4.40 &&&&&\\
\multicolumn{2}{|l}{face-cont.\ order-preserving~\cite{chochia}}& $\leq$5.04 &&&&&\\
\hline
\end{tabular}
\end{table}

Table~\ref{tab:metrics} shows some results on metrics of locality-preserving properties as discussed in Section~\ref{sec:localityproperties}, computed with algorithms from Sasburg~\cite{sasburg} based on our previous work~\cite{boxquality}.
With our current implementation we cannot easily compute $\WS^{2/3}$ with reasonable precision for all curves; the value for the Rough-edge curve is a lower bound. The true value cannot be that much higher: 2.44 is a weak upper bound, as we show below (Theorem~\ref{thm:wsupperbound}). Other results are with an error margin of, theoretically, up to 2\%.

No curve is best on all metrics, but the two hyperorthogonal curves (Alfa and Beta) stand out.
Beta, the facet-gated hyperorthogonal curve, is the unique best curve with respect to bounding-box surface and $L_2$-dilation. On each metric, the curve is within 4\% from optimal.
Alfa, the vertex-gated hyperorthogonal curve is the unique best curve with respect to $L_\infty$-dilation, the unique second-best with respect to bounding-box surface, and optimal with respect to bounding-box volume. On each metric, the curve is within 7\% from optimal. The Alfa curve confirms a result from ``computer simulation'' reported by Gotsman and Lindenbaum~\cite{gotsman} that $\WLEuc^{1/3}$ is at most 2.84 for some three-dimensional Hilbert curve that was left unspecified.

The symmetric face-continuous curves are always within 37\% from optimal. Several examples are listed in the above table, including the Butz curve, which is optimal with respect to bounding-box volume, and the best metasymmetric curve (the Sasburg curve), which is optimal with respect to curve section surface.

Crazy curves, such as Rough-edge, Mosquito and Rollercoaster, can score up to 350\% worse than optimal on the bounding-box volume metric, but on the other metrics the differences between curves are less pronounced. Note that a curve with diagonal edges in the approximating curves is not automatically worse than a face-continuous curve. On the bounding-box surface, $L_\infty$-dilation, and $L_2$-dilation metrics, the best vertex-edge-gated curves actually score slightly better than the best symmetric face-continuous curves---but still always worse than the two hyperorthogonal curves.

Finally, here is the promised upper bound on $\WS^{2/3}$:

\begin{theorem}\label{thm:wsupperbound}
Any octant-wise traversal has $\WS^{2/3}$ at most $\frac23 \sqrt[3]{49}$.
\end{theorem}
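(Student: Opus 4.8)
The plan is to put the bound in scale-free form and prove it by induction over the octant-subdivision hierarchy. With $d=3$, unwinding the definition of $\WS$, the claim $\WS^{2/3}\le\tfrac23\sqrt[3]{49}$ is equivalent to
\[ \mathrm{surface}(C(a,b))\ \le\ 56^{2/3}\,(b-a)^{2/3}\qquad\text{for all }0\le a\le b\le 1, \]
because $6\cdot\tfrac23\sqrt[3]{49}=4\cdot 49^{1/3}=(64\cdot 49)^{1/3}=56^{2/3}$ (equivalently $\mathrm{surface}(C(a,b))^{3}\le 3136\,(b-a)^{2}$). I would prove this for all octant-based traversals at once: let $W$ be the supremum of $\mathrm{surface}(C(a,b))/(b-a)^{2/3}$ over all three-dimensional octant-based traversals and all $[a,b]\subseteq[0,1]$, and show $W\le 56^{2/3}$. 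That $W$ is finite follows from a crude estimate: a section of length $v$ lies inside at most nine level-$\ell$ cubes when $8^{-\ell}\le v<8^{-\ell+1}$, giving $\mathrm{surface}\le 54\cdot 4^{-\ell}\le 54\,v^{2/3}$, so $W\le 54$. The work is to push $54$ down to $56^{2/3}$; note $56=8\cdot 7$ encodes ``eight octants, at most seven of them full'', which is exactly what the recursion will need.

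The first reduction is scaling. Given $[a,b]$, let $\ell$ be the largest integer for which $[a,b]$ lies in the interval of a single level-$\ell$ cube $C^{*}$. Rescaling $C^{*}$ to the unit cube multiplies $\mathrm{surface}$ by $4^{\ell}$ and $(b-a)$ by $8^{\ell}$, hence leaves $\mathrm{surface}/(b-a)^{2/3}$ unchanged, and after rescaling the section meets at least two of the eight octants. So it suffices to bound the ratio for \emph{straddling} sections. For a straddling section meeting octants $C_{i},\dots,C_{j}$, split $[a,b]$ at the octant boundaries it crosses: $C(a,b)$ is the union of a proper end piece inside $C_{i}$ of length $v_{L}<\tfrac18$, the $q\le 7$ full octant cubes strictly between, and a proper end piece inside $C_{j}$ of length $v_{R}<\tfrac18$, with $v_{L}+v_{R}+q/8=b-a$. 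Each end piece is a scaled copy of a section of a sub-traversal (itself octant-based), so its surface is at most $\tfrac14W(8v_{L})^{2/3}=W v_{L}^{2/3}$, respectively $W v_{R}^{2/3}$; each full octant cube contributes at most $6\cdot(\tfrac12)^{2}=\tfrac32$. The base cases $q=8$ (the whole cube: surface $6$) and a section confined to one octant (handled by self-similarity of the bound, since $\tfrac14\cdot 56^{2/3}(8v)^{2/3}=56^{2/3}v^{2/3}$) are immediate.

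Summing contributions blindly gives $\mathrm{surface}(C(a,b))\le W(v_{L}^{2/3}+v_{R}^{2/3})+\tfrac32q$, and here is the main obstacle: since $v_{L}^{2/3}+v_{R}^{2/3}$ can reach $2^{1/3}(b-a)^{2/3}$ (a section straddling one boundary with $q=0$ and $v_{L}=v_{R}$), this naive recursion does not close — it only yields $W\le 2^{1/3}W+\cdots$. The recursion must be sharpened by using that \emph{consecutive pieces share boundary that is interior to $C(a,b)$ and so does not count toward $\mathrm{surface}(C(a,b))$}: the end piece in $C_{i}$ meets $C_{i+1}$ along part of the common facet of those two octants, a region that cancels from the surface, and similarly between each full octant cube and its neighbours and at the right end. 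The heart of the proof is therefore a quantitative lower bound on the area of such a shared region in terms of the length of the piece exiting through it — roughly, a length-$u$ section of an octant-based traversal that ends at a prescribed boundary point must meet the incident facet in a region of area $\gtrsim u^{2/3}$ — which is precisely enough to absorb the $2^{1/3}$ factor and make the fixed-point inequality $W\le 56^{2/3}$ go through. The delicate configurations are those in which two consecutive octants meet only in an edge or a vertex, so the naive cancellation is unavailable; these I would handle by descending one further level in the affected piece (where the relevant sub-octants do share a facet) and re-running the cancellation, or, failing a uniform such argument, by a separate geometric case analysis showing such sharp turns force the adjacent piece lengths to be unequal enough that $v_{L}^{2/3}+v_{R}^{2/3}\le(b-a)^{2/3}$ already. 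Once the shared-area lemma is in hand, the induction on $\ell$ (equivalently, on the refinement depth needed to resolve $[a,b]$) closes.
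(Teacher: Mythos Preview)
Your proof has a genuine gap. The recursion fails to close without the ``shared-area lemma,'' and that lemma is both unproven and, in the generality needed, false: for a general octant-based traversal, consecutive octants may share only a vertex, so the region common to the end piece in $C_i$ and the first piece in $C_{i+1}$ can have two-dimensional measure zero. Descending a level does not help, since the sub-octants meeting at the junction may again share only a vertex, and this can recur at every level. Your alternative fix---arguing that sharp turns force $v_L^{2/3}+v_R^{2/3}\le (b-a)^{2/3}$---is not carried out and has no obvious reason to hold. As things stand, the induction never improves on your crude bound $W\le 54$.

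The paper's argument is completely different and avoids any cancellation between adjacent pieces. It decomposes $C(a,b)$ into its \emph{maximal} subcubes: for each level $k$, the $n_k$ level-$k$ cubes fully contained in $C(a,b)$ whose parent cube is not. These lie in at most two level-$(k{-}1)$ cubes (the ones containing the endpoints $a$ and~$b$). The key geometric observation---which also shows why your bound $\tfrac32 q$ on the middle block of full octants is already wasteful---is that among the 36 octant facets inside a cube (24 exterior, 12 interior), any union of octants has at most 24 boundary facets, because in each of the 12 triples consisting of an interior facet and the two exterior facets in line with it, at most two can lie on the boundary. Thus the level-$k$ contribution to the surface is at most $\min(6n_k,48)/4^k$, while the volume is exactly $\sum_k n_k/8^k$. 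Maximizing the ratio over $n_k\in\{0,\dots,8\}$ is a calculus exercise; the maximum is attained at $n_k=8$ for all~$k$ and equals $\tfrac23\sqrt[3]{49}$. No recursion on $W$, no shared-area lemma, and nothing about how consecutive octants meet is used.
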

\begin{proof}
For a given section $s$ of the traversal, let a \emph{maximal} $k$-level subcube be a $k$-level subcube $Q$ that is completely contained in $s$ while the $(k-1)$-level subcube that contains $Q$ is not completely contained in $s$. For any $k$, let $n_k$ be the number of maximal $k$-level subcubes. Note that the $k$-level subcubes counted by $n_k$ are distributed over at most two $(k-1)$-level cubes, otherwise they would have to include eight $k$-level subcubes of the same $(k-1)$-level cube and therefore they would not be maximal.

Now consider the octants of a $(k-1)$-level subcube $Q$. Together these octants have 36 facets: 24 \emph{exterior} facets on the boundary of $Q$, and 12 \emph{interior} facets inside $Q$; the latter are each shared by two octants. Suppose $s'$ is a curve section that consists of $m$ octants of $Q$. Now consider any of the twelve interior facets $f$ together with the two exterior facets that coincide with $f$ in a projection orthogonal to $f$. Observe that, from these three facets, at most two can lie on the outside of $s'$. Hence $s'$ has a surface area of at most 24 octant facets, with a maximum of $6m$. Therefore, the $n_k$ $k$-level subcubes, distributed over at most two $(k-1)$-level subcubes, contribute at most $6 n_k$ facets of area $1/4^k$ each, with a maximum of $48/4^k$, the surface area of 8 $k$-level subcubes.

Recall from Section~\ref{sec:localityproperties} that $\WS^{2/3}$ is the maximum of $\frac16\cdot\mathrm{surface}(C(a,b))/(b-a)^{2/3}$. An upper bound on $\WS^{2/3}$ for any octant-by-octant traversal is therefore the following:\[
\WS^{2/3} =
\max_{a \in [0,1)}\max_{b \in (a,1]}\frac{\mathrm{surface}(C(a,b))/6}{(b-a)^{2/3}} \leq
\max_{n_1,n_2,n_3,... \in \{0,...8\}} \frac{\sum_{k=1}^\infty n_k / 4^k}{\left(\sum_{k=1}^\infty n_k / 8^k\right)^{2/3}}.
\] This expression is maximized with $n_k = 8$ for all $k$, in which case it evaluates to $\frac23 \sqrt[3]{49}$, which is slightly less than 2.44.
\end{proof}

Note that the bound of Theorem~\ref{thm:wsupperbound} is not tight: the calculation does not account for the fact that if we have $n_k = 8$ for all $k$, then there must be $k$-level subcubes with facets that are contained in facets of $(k-1)$-level subcubes. Such facets do not contribute to the surface area, hence a traversal with $\WS^{2/3} = \frac23 \sqrt[3]{49}$ cannot actually be realized.

\section{Software}\label{sec:software}

From the author's website at http://spacefillingcurves.net/, one may download the C++ sources of a tool to search the curves. The purpose of this tool is to allow us to verify the contents of the present article, to reverse-engineer the curve that underlies any existing, poorly documented implementation of a three-dimensional Hilbert curve, and to facilitate the exploration of three-dimensional Hilbert curves that have given properties.

The search tool takes as input a set of conditions which a curve should fulfill. These conditions could take the form of a prefix of a curve name, the order in which the subcubes in a grid of $8^k$ cubes are visited (for any natural number $k$), a curve description in the numerical style of Section~\ref{sec:definitionbypermutations}, and/or a subset of the properties\footnote{face-continuous, hyperorthogonal, symmetric, metasymmetric, maximally facet-harmonious, fully interior-diagonal-harmonious, well-folded, vertex-gated etc., edge-crossing etc., centred, order-preserving, pattern-isotropic, coordinate-shifting, standing.} described in Section~\ref{sec:optionalproperties}, with the exception of edge-isotropy, which is equivalent to face-continuity by Theorem~\ref{thm:FCimpliesEI}. The tool then searches the space of 10,694,807 three-dimensional Hilbert curves for matching curves, reports how many there are, and, depending on the user's preferences, outputs details for one or all of these curves. Such details may include whatever could be given as input (the name of the curve, the order in which subcubes in a grid are visited etc.), as well a POV-Ray \cite{povray} file for an illustration of the curve in the style of, for example, Figure \ref{fig:needthedots}c, and additional information if known (nickname, references).

The base pattern names (Table~\ref{tab:octantorders}) and the names of metasymmetric and vertex-facet-gated curves are hardcoded. Otherwise the tool computes gate sequences and computes or verifies curve properties on the fly, using the same algorithms that were used to generate the tables in Appendix~\ref{apx:gatesequences}.


Conditions for the search can be given on the command line or in an input file which may contain specifications for many successive searches. Thus the search tool allows one to run, for example, the following rudimentary test. First run the tool with no conditions specified to output the names of all curves, in order. Then run the tool again and output numerical descriptions of all three-dimensional Hilbert curves. Next run the search tool again, using the previous output as input, and generate the traversal orders of the $8\times 8\times 8$ grid for each curve---to make the test more interesting, the tool has an option to produce a random transformation (rotation, reflection and/or reversal) of each traversal order. Finally, run the search tool a fourth time, using the previous output as input, and generate the names of the curves. Verify that the output of the fourth run is the same as from the first run, to confirm that all generated curves are unique and are identified correctly.

\section{How many Hilbert curves are there in three dimensions?}\label{sec:howmanyin3D}

We will now try to answer the title question of this article.

\paragraph{There is 1 three-dimensional Hilbert curve}
The curve \curvename{Ca00.c4Z} (Figure~\ref{fig:edgecrossingcurves}a) is clearly the curve that is the ``most Hilbert'' of all. The curve visits the points on each of five of the six faces of the cube in the order of a two-dimensional Hilbert curve. It is the only curve that does this (Finding~\ref{fnd:harmonious}) and also has the three essential properties of Hilbert curves. Moreover, just like Hilbert's two-dimensional curve, it is also face-continuous, vertex-gated, edge-crossing, symmetric, well-folded, and pattern-isotropic. Its construction generalizes to higher dimensions~\cite{extradimensional}, which gives us a unique Hilbert curve for any number of dimensions.

\paragraph{There is 1 three-dimensional Hilbert curve}
The curve \curvename{Ca00.cv.4h} (Figure~\ref{fig:vertexedgegatedcurves}a) is clearly the curve that is the ``most Hilbert'' of all. It is the only curve that has the three essential properties of Hilbert curves and is also, just like Hilbert's two-dimensional curve, well-folded, centred (the point in the centre of the curve lies in the centre of the cube) and standing, where, just as in Hilbert's two-dimensional curve, the permutations (ignoring the signs) that define the transformations in the octants are restricted to swapping the first and the last axis in (and only in) the first and the last octant\footnote{The uniqueness of the curve is easy to verify: by Theorem~\ref{thm:CimpliesVE}, all centred curves are vertex-edge-gated; one can now simply check all the vertex-edge-gated curves of the well-folded pattern \curvename{Ca00}.}.

\paragraph{There is 1 three-dimensional Hilbert curve}
The curve \curvename{La13.cv.II} (Figure~\ref{fig:vertexedgegatedcurves}e) is clearly the curve that is the ``most Hilbert'' of all. It is the only curve that has the three essential properties of Hilbert curves and can be identified unambiguously by its base pattern, La13, eliminating any room for confusion about exactly which curve is intended.

\paragraph{There are 2 three-dimensional Hilbert curves}
The curves \curvename{Ca00.cc.44.hh.db} (Figure~\ref{fig:edgecrossingcurves}c) and \curvename{Ca00.gs} (Figure~\ref{fig:facetgatedcurve}), described by Bos and Haverkort~\cite{hyperorthogonal}, are clearly the curves that are the ``most Hilbert'' of all. They are the only self-similar well-folded hyperorthogonal curves in three dimensions~\cite{hyperorthogonal}. Bos and Haverkort describe how to construct two such curves in any number of dimensions greater than two. They also show that, regardless of the number of dimensions, each section of such a curve has a bounding box of volume at most four times the volume of the curve section itself.
Just like Hilbert's two-dimensional curve, the curves from Bos and Haverkort are also face-continuous and pattern-isotropic.
The three-dimensional curves have, in some ways, better locality-preserving properties than any other curve that has the three essential properties of Hilbert curves (see Section~\ref{sec:observationslocality}).

\paragraph{There are 3 three-dimensional Hilbert curves}
The curves \curvename{Ca00.chI} (Figure~\ref{fig:edgecrossingcurves}e), \curvename{Ca00.cTI} and \curvename{Ca00.c4I} (Figure~\ref{fig:edgecrossingcurves}b), together singled out by Niedermeier et al.~\cite{niedermeier}, are clearly the curves that are the ``most Hilbert'' of all: they are the only curves that are, just like the two-dimensional Hilbert curve, face-continuous, vertex-gated, edge-crossing, symmetric, well-folded, and coordinate-shifting.
This makes the permutations easy to implement efficiently in software. \curvename{Ca00.c4I} is the three-dimensional curve of Butz's construction, which is well-defined for any number of dimensions~\cite{Butz}.

\paragraph{There are 24 three-dimensional Hilbert curves}
Let $[X]$ denote any character from the string~$X$. The 24 curves $\curvename{Ca00.cT}[\curvename{bC47}]$, $\curvename{Ca11.cT}[\curvename{IPJ3}]$, $\curvename{Cd00.cP}[\curvename{IPJ3}]$, $\curvename{Cd11.cP}[\curvename{bC47}]$, $\curvename{Se00.cT}[\curvename{bPJ7}]$, and\break $\curvename{Se66.cT}[\curvename{IC43}]$ (see Figures \ref{fig:edgecrossingcurves}dfh and \ref{fig:facetcrossingcurves}abcd for examples) are clearly the curves that are the ``most Hilbert'' of all. These 24 curves are the only metasymmetric curves: they are the only curves that have the same degree of symmetry as the two-dimensional Hilbert curve, being composed of two congruent halves, four congruent quarters, and eight congruent octants.

\addvspace\baselineskip
Note that there is no overlap between the answers given so far: these answers do, in fact, mention $1+1+1+2+3+24 = 32$ different curves.

\paragraph{There are 920 three-dimensional Hilbert curves}
The 920 order-preserving curves whose names start with \curvename{Ca00.cc} or \curvename{Si00.cc} (see Figures~\ref{fig:edgecrossingcurves}a--g for examples), described by Alber and Niedermeier~\cite{Alber}, are clearly the curves that are the ``most Hilbert'' of all: they are the only curves that are, just like the two-dimensional Hilbert curve, face-continuous and order-preserving. Moreover, just like Hilbert's two-dimensional curve, they are also vertex-gated and edge-crossing.\footnote{Alber and Niedermeier counted 1\,536 curves with these properties, since they counted some curves twice which we consider to be equivalent: they counted a forward and a reverse copy of each of the asymmetric curves whose names start with \curvename{Ca00.cc.hh.I3}, \curvename{Ca00.cc.TT.I3}, \curvename{Ca00.cc.44.I3} (120 curves each) and \curvename{Si00.cc.LT.I3} (256 curves). Note that versions (a) and (b) of generator $\mathrm{Hil}^3_1.\mathrm{B}$ in their work are congruent under rotation around a line through the midpoints of the lower front and the upper back edge, therefore both versions generate the same curves whose names start with \curvename{Si00.cc.LT.I3}.\label{fn:alber920}}

\paragraph{There are 157,865 three-dimensional Hilbert curves}
The 157,865 curves whose names start with \curvename{Ca00} (see Figures \ref{fig:edgecrossingcurves}a--f, \ref{fig:vertexedgegatedcurves}a and \ref{fig:facetgatedcurve} for examples) are clearly the curves that are the ``most Hilbert'' of all: they are the only curves that are, just like the two-dimensional Hilbert curve, well-folded.\footnote{30,736 curves with each of the gate sequences \curvename{Ca00.cc.hh}, \curvename{Ca00.cc.TT} and \curvename{Ca00.cc.44}; 65,536 curves with gate sequence \curvename{Ca00.cc.h4}; 56 with names starting with \curvename{Ca00.cr} and 64 with names starting with \curvename{Ca00.cv} (see Table~\ref{tab:vertexedgegated1} in Appendix~\ref{apx:gatesequences}), and 1 curve \curvename{Ca00.gs}.}

\paragraph{There are 10,694,807 three-dimensional Hilbert curves}
These are all the curves that have the three-dimensional equivalent of the properties that uniquely define the two-dimensional Hilbert curve: continuity, self-similarity, based on subdivision into $2^d$ subcubes. Clearly, each of these curves is at least as much a Hilbert curve as any other.

\paragraph{There are infinitely many three-dimensional Hilbert curves}
Recall from Section~\ref{sec:justification} that we could also have decided to select vertex-gatedness and face-continuity, instead of self-similarity, as defining properties of the two-dimensional Hilbert curve. As we saw in Section~\ref{sec:vertexgated}, in three dimensions, there are multiple self-similar, face-continuous, vertex-gated, octant-by-octant traversals. Thanks to the vertex gates, you can assemble these into endless non-self-similar combinations\footnote{This is not the only way to construct non-self-similar, vertex-gated, face-continuous, octant-by-octant space-filling curves. One can also construct such curves using a combination of base patterns from \curvename{Ca00}, \curvename{Cl00}, and \curvename{Si00}, even though \curvename{Cl00} does not support self-similar face-continuous curves by itself.}. Therefore, in three dimensions, there are infinitely many octant-by-octant, vertex-gated, face-continuous space-filling curves. Clearly, each of them could be called a Hilbert curve.

\section{How many Hilbert curves are there in four dimensions?}\label{sec:howmanyin4D}

\paragraph{There are 4 four-dimensional Hilbert curves}
I am aware of four generalizations of Hilbert curves that have been described in the literature for any number of dimensions: Butz's generalization~\cite{Butz} (see Table~\ref{tab:4d}), and three of our own: the self-similar, well-folded, hyperorthogonal curves (one vertex-gated, one facet-gated)~\cite{hyperorthogonal}, and the harmonious Hilbert curve~\cite{extradimensional} (see Table~\ref{tab:4d}). All of these curves are well-folded: they differ only in the transformations within the subcubes.

\begin{table}
\caption{Four examples of four-dimensional Hilbert curves.
The description of the Harmonious Hilbert curve and the squared Hilbert curve are translated from my original manuscript~\cite{extradimensional}, Sections 5.1 and~6.3, where the permutations were given by their inverse, and where our coordinate axes $1,...,d$ were numbered from $d-1$ down to $0$.
The description of A\&N, the example from Alber and Niedermeier, is translated from Figure 5 in the original source~\cite{Alber}---it may be interesting to compare the different notation systems.}\label{tab:4d}
\def\arraystretch{2.5}
\centering
\begin{tabular}{|@{ }l@{ }l@{ }|}
\hline
Butz &
\descr{%
\fwd[2\\3\\4\\1]\edge[1]\fwd[3\\4\\1\\2]\edge[2]\fwd[3\\4\\1\\2]\edge[\m1]\fwd[4\\\m1\\\m2\\3]\edge[3]\fwd[4\\\m1\\\m2\\3]\edge[1]\fwd[\m3\\4\\1\\\m2]\edge[\m2]\fwd[\m3\\4\\1\\\m2]\edge[\m1]\fwd[\m1\\2\\\m3\\4]%
\edge[4]
\rev[\m1\\2\\\m3\\\m4]\edge[1]\rev[\m3\\\m4\\1\\\m2]\edge[2]\rev[\m3\\\m4\\1\\\m2]\edge[\m1]\rev[\m4\\\m1\\\m2\\3]\edge[\m3]\rev[\m4\\\m1\\\m2\\3]\edge[1]\rev[3\\\m4\\1\\2]\edge[\m2]\rev[3\\\m4\\1\\2]\edge[\m1]\rev[2\\3\\\m4\\1]%
}
\\
Harmonious &
\descr{%
\fwd[4\\3\\2\\1]\edge[1]\fwd[1\\4\\3\\2]\edge[2]\fwd[4\\3\\1\\2]\edge[\m1]\fwd[\m2\\\m1\\4\\3]\edge[3]\fwd[4\\\m2\\\m1\\3]\edge[1]\fwd[\m3\\1\\4\\\m2]\edge[\m2]\fwd[4\\1\\\m3\\\m2]\edge[\m1]\fwd[\m3\\2\\\m1\\4]%
\edge[4]
\rev[\m3\\2\\\m1\\\m4]\edge[1]\rev[\m4\\1\\\m3\\\m2]\edge[2]\rev[\m3\\1\\\m4\\\m2]\edge[\m1]\rev[\m4\\\m2\\\m1\\3]\edge[\m3]\rev[\m2\\\m1\\\m4\\3]\edge[1]\rev[\m4\\3\\1\\2]\edge[\m2]\rev[1\\\m4\\3\\2]\edge[\m1]\rev[\m4\\3\\2\\1]%
}
\\
A\&N &
\descr{%
\fwd[4\\2\\3\\1]\edge[1]\fwd[1\\4\\3\\2]\edge[2]\fwd[1\\4\\3\\2]\edge[\m1]\fwd[\m1\\4\\\m2\\3]\edge[3]\fwd[\m1\\4\\\m2\\3]\edge[1]\fwd[1\\4\\\m3\\\m2]\edge[\m2]\fwd[1\\4\\\m3\\\m2]\edge[\m1]\fwd[\m3\\2\\\m1\\4]%
\edge[4]
\rev[\m3\\2\\\m1\\\m4]\edge[1]\rev[1\\\m4\\\m3\\\m2]\edge[2]\rev[1\\\m4\\\m3\\\m2]\edge[\m1]\rev[\m1\\\m4\\\m2\\3]\edge[\m3]\rev[\m1\\\m4\\\m2\\3]\edge[1]\rev[1\\\m4\\3\\2]\edge[\m2]\rev[1\\\m4\\3\\2]\edge[\m1]\rev[\m4\\2\\3\\1]%
}
\\
Squared H. &
\vrule width0pt depth3ex
\descr{%
\fwd[2\\1\\4\\3]\edge[3]\fwd[3\\4\\2\\1]\edge[1]\fwd[3\\4\\1\\2]\edge[\m3]\fwd[1\\\m2\\4\\\m3]\edge[2]\fwd[4\\3\\1\\2]\edge[\m1]\fwd[\m2\\\m1\\4\\3]\edge[3]\fwd[\m2\\\m1\\3\\4]\edge[1]\fwd[3\\\m4\\1\\\m2]%
\edge[4]
\rev[3\\4\\1\\\m2]\edge[\m1]\rev[\m2\\\m1\\3\\\m4]\edge[\m3]\rev[\m2\\\m1\\\m4\\3]\edge[1]\rev[\m4\\3\\1\\2]\edge[\m2]\rev[1\\\m2\\\m4\\\m3]\edge[3]\rev[3\\\m4\\1\\2]\edge[\m1]\rev[3\\\m4\\2\\1]\edge[\m3]\rev[2\\1\\\m4\\3]%
}
\\
\hline
\end{tabular}
\end{table}

\paragraph{There are 7 four-dimensional Hilbert curves}
In addition to the above, I have singled out, or have seen singled out, three more curves:
the curve \emph{H4cdNew} from my work on R-trees with Van Walderveen~\cite{jea},
an example from Alber and Niedermeier~\cite{Alber} (see Table~\ref{tab:4d}), and
a curve which we will call the \emph{squared Hilbert curve}.

The last curve results from doubling the number of coordinates of the points on the two-dimensional Hilbert curve $\lambda$, by lifting each of those coordinates to two dimensions using the same Hilbert-curve mapping~$\lambda$. We use subscripts $1$ and $2$ to identify the first and the second coordinate, respectively, of $\lambda(t)$, that is, $\lambda(t) = (\lambda_1(t), \lambda_2(t))$. The resulting four-dimensional curve is now described by the mapping $\tau: [0,1] \rightarrow [0,1]^4$ where:\[
\tau(t) = \Big(\lambda_1\big(\lambda_1(t)\big),\ \ \lambda_2\big(\lambda_1(t)\big),\ \ \lambda_1\big(\lambda_2(t)\big),\ \ \lambda_2\big(\lambda_2(t)\big)\Big).
\]
It can be shown that the curve $\tau$ is a (non-well-folded) four-dimensional Hilbert curve (\cite{extradimensional}, Section 6.3); a description is given in Table~\ref{tab:4d}. Moreover, due to the fact that the two-dimensional Hilbert curve $\lambda$ visits the points on the diagonals in order of ascending second coordinate, the curve $\tau$ visits points of the type $(x,y,x,y)$, that is, with the first two coordinates equal to the last two coordinates, in the same order in which the two-dimensional Hilbert curve $\lambda$ visits the corresponding points $(x,y)$.

\paragraph{There are incredibly many four-dimensional Hilbert curves}
The number of Hilbert curves in higher dimensions is subject to a combinatorial explosion that depends on the requirements one imposes on higher-dimensional Hilbert curves. To get just a glimpse of how bad this explosion can get, consider the example of vertex-gated, edge-crossing curves. When we fix the gate sequence, we have fixed, in each subcube, the axis of the edge that connects the gates. For the transformation that maps $\tau$ to the curve within the subcube we can choose freely from the six permutations of the remaining three axes, and we can still choose whether or not to reverse the curve. That gives us 12 choices for each of 16 subcubes, making $12^{16} = 184,884,258,895,036,416$ combinations \emph{per gate sequence} in total. Then we have not even considered the number of possible gate sequences yet.

\section{Evaluation and outlook}\label{sec:evaluation}

In this explorative work we discussed the question how many Hilbert curves exist in three dimensions. This question is ill-defined and the answers are debatable. No three-dimensional Hilbert curve is perfect: one can always find a combination of properties of the two-dimensional Hilbert curve that cannot be realized in three dimensions. For example, we found that in three dimensions, no octant-by-octant, self-similar space-filling curve exists whose endpoints are vertices of the unit cube and whose midpoint is the centre of the unit cube. Searching for a well-defined question and its answers unlocked a world of 10,694,807 three-dimensional space-filling curves: a large set of curves, most of which are probably ugly, but the set is small enough to search for elegant curves with interesting properties. In particular, I selected 24 curves that I found to be somehow interesting examples, and I sketched these in Figures \ref{fig:edgecrossingcurves}, \ref{fig:facetcrossingcurves}, \ref{fig:vertexedgegatedcurves}, \ref{fig:boringcurves} and~\ref{fig:facetgatedcurve}.

Surely there are more interesting curves. The (prototype) software tool may help readers in searching the realm of three-dimensional Hilbert curves. Furthermore, unanswered questions about locality-preserving properties abound. How do the curves differ with respect to metrics based on the average (rather than worst-case) distance between points along the curve as a function of their distance in $d$-dimensional space, or vice versa? How do the curves differ with respect to the average (rather than worst-case) measures of the boundary or the bounding boxes of curve sections? We may also consider expanding our territory by dropping the requirement of self-similarity (possibly trading it for vertex-gatedness and face-continuity), and attempt to describe and explore the infinite number of three-dimensional space-filling curves that will then come within scope.

However, the main contribution of this work may not be the exhaustive classification and the sometimes goal-oriented, sometimes curiosity-driven exploration of a particular class of three-dimensional space-filling curves. From this work, we may also get new ideas for different ways of constructing Hilbert-like space-filling curves in \emph{arbitrary} numbers of dimensions. In two dimensions there is nothing to choose, and as such, the two-dimensional Hilbert curve by itself does not show us that much about what we could try to achieve in higher dimensions. In three dimensions, we can see more. We discovered a number of interesting space-filling curves, some of which have properties that, in prior work, were established to be relevant to applications.

In particular, in the past years we have succeeded in generalizing the hyperorthogonal Hilbert curves to higher dimensions~\cite{hyperorthogonal}. It is there, in four or more dimensions, that these curves show their strengths as compared to the common generalization from Butz, achieving an exponential improvement on the worst-case bounding-box volume ratio metric. For the harmonious Hilbert curves, a generalization to higher dimensions has been identified as well\footnote{Van Walderveen was the first to find an algorithm to construct a compact description of such a curve for any number of dimensions. Later I found a simpler algorithm with a not-so-simple correctness proof~\cite{extradimensional}.}.
In fact, it is the potential applications of harmony properties of four- and six-dimensional curves~\cite{jea} that led us to studying them.
Generating all 10,694,807 possible three-dimensional Hilbert curves,
we discovered the three-dimensional harmonious Hilbert curve. This put us on the right track for discovering a family of unique Hilbert curves that have the harmony properties required by our application for any number of dimensions~\cite{extradimensional}.

The world of three-dimensional Hilbert curves, unlocked in this article, may contain more treasures that signpost the way to interesting, novel generalizations of Hilbert's curve into higher dimensions. For example, can the well-folded, centred, standing curve \curvename{Ca00.cv.4h}, or the Sasburg curve, \curvename{Ca00.cT4}, be generalized to higher dimensions in a useful way? How does the world of metasymmetric curves develop in higher dimensions? Can we narrow it down to a family of Hilbert curves, one for each number of dimensions, that are in some sense the most symmetric Hilbert curves of all? We have seen how in two dimensions, facet-gated curves are only possible by giving up self-similarity~\cite{betaomega}, while in three dimensions, one self-similar facet-gated curve exists, which is not symmetric. How do the possibilities for facet-gated curves develop in higher dimensions? Is there a symmetric, facet-gated Hilbert curve in four dimensions?

\subsection*{Acknowledgements}
Freek van Walderveen was the first to find an efficient algorithm that generates the transformations for each subcube of the $d$-dimensional harmonious Hilbert curve, for any $d$. All calculations of metrics of locality-preservation were made possible by Simon Sasburg, who developed the algorithm for the surface ratio metric, and who extended and improved our previous algorithms for the dilation and bounding-box metrics to be able to handle three- and higher-dimensional curves. The numerical notation system introduced in Section~\ref{sec:definitionbypermutations} is based on a notation technique from Arie Bos, adapted to suit the needs of the present article.

\bibliographystyle{abbrv}

\appendix

\clearpage
\section{Example curves}\label{apx:examples}

Table~\ref{tab:examplecurves} lists the example curves in this article with their properties and their descriptions in the style of Section~\ref{sec:definitionbypermutations}.

\begin{table}[H]
\begin{centering}
\caption{Example curves in this article}
\label{tab:examplecurves}
\def\arraystretch{1.6}
\begin{tabular}{|>{\ttfamily}lllll|}
\hline
\textrm{name}    & nickname       & description & prop. & figures \\
\hline\hline
Ca00.chI & & \descr{\fwd[2\\3\\1]\edge[1]\fwd[3\\1\\2]\edge[2]\fwd[1\\2\\3]\edge[\m1]\fwd[\m3\\\m1\\2]\edge[3]\rev[3\\\m1\\2]\edge[1]\rev[1\\2\\\m3]\edge[\m2]\rev[\m3\\1\\2]\edge[\m1]\rev[2\\\m3\\1]} & fpsv & \ref{fig:edgecrossingcurves}e \\
Ca00.cT4 & Sasburg & \descr{\fwd[3\\2\\1]\edge[1]\fwd[3\\1\\2]\edge[2]\fwd[2\\1\\3]\edge[\m1]\fwd[2\\\m3\\\m1]\edge[3]\rev[2\\3\\\m1]\edge[1]\rev[2\\1\\\m3]\edge[\m2]\rev[\m3\\1\\2]\edge[\m1]\rev[\m3\\2\\1]} & bfimpv & \ref{fig:edgecrossingcurves}d \\
Ca00.cT7 & Imposter & \descr{\fwd[3\\2\\1]\edge[1]\fwd[1\\3\\2]\edge[2]\fwd[1\\2\\3]\edge[\m1]\fwd[2\\\m3\\\m1]\edge[3]\rev[2\\3\\\m1]\edge[1]\rev[1\\2\\\m3]\edge[\m2]\rev[1\\\m3\\2]\edge[\m1]\rev[\m3\\2\\1]} & fimpv & \ref{fig:imposter-harmony},\ref{fig:edgecrossingcurves}f \\
Ca00.c4I & Butz & \descr{\fwd[2\\3\\1]\edge[1]\fwd[3\\1\\2]\edge[2]\fwd[3\\1\\2]\edge[\m1]\fwd[\m1\\\m2\\3]\edge[3]\rev[\m1\\\m2\\\m3]\edge[1]\rev[\m3\\1\\2]\edge[\m2]\rev[\m3\\1\\2]\edge[\m1]\rev[2\\\m3\\1]} & bfpsv & \ref{fig:fivecurves},\ref{fig:edgecrossingcurves}b\vrule width0pt depth3.1ex \\
Ca00.c4Z & Harmonious & \descr{\fwd[3\\2\\1]\edge[1]\fwd[1\\3\\2]\edge[2]\fwd[3\\1\\2]\edge[\m1]\fwd[\m2\\\m1\\3]\edge[3]\rev[\m2\\\m1\\\m3]\edge[1]\rev[\m3\\1\\2]\edge[\m2]\rev[1\\\m3\\2]\edge[\m1]\rev[\m3\\2\\1]} & fhipv & \ref{fig:allharmonious},\ref{fig:edgecrossingcurves}a \\
Ca00.cc.44.hh.db & Alfa & \descr{\fwd[3\\2\\1]\edge[1]\fwd[3\\1\\2]\edge[2]\rev[3\\1\\\m2]\edge[\m1]\fwd[\m2\\\m1\\3]\edge[3]\rev[\m2\\\m1\\\m3]\edge[1]\fwd[\m3\\1\\\m2]\edge[\m2]\rev[\m3\\1\\2]\edge[\m1]\rev[2\\\m3\\1]} & bfilov & \ref{fig:needthedots}c,\ref{fig:edgecrossingcurves}c \\
Ca00.cv.4h & Base camp & \descr{\fwd[3\\2\\1]\edge[1]\rev[1\\\m2\\\m3]\edge[2]\fwd[1\\2\\\m3]\edge[\m1]\rev[\m1\\2\\\m3]\edge[3]\fwd[\m1\\2\\3]\edge[1]\rev[1\\2\\3]\edge[\m2]\fwd[1\\\m2\\3]\edge[\m1]\rev[\m3\\\m2\\1]} & cu & \ref{fig:fivecurves},\ref{fig:vertexedgegatedcurves}a \\
Ca00.gs & Beta & \descr{\rev[\m3\\\m1\\\m2]\edge[1]\rev[\m3\\\m2\\\m1]\edge[2]\fwd[\m3\\2\\\m1]\edge[\m1]\rev[2\\\m3\\1]\edge[3]\fwd[2\\3\\1]\edge[1]\rev[3\\2\\\m1]\edge[\m2]\fwd[3\\\m2\\\m1]\edge[\m1]\fwd[\m2\\\m1\\3]} & bfilo & \ref{fig:facetgatedcurve}\vrule width0pt depth3.1ex \\
Ca11.cTJ & Wind-fold & \descr{\fwd[3\\2\\1]\edge[1]\fwd[3\\1\\2]\edge[\m1\\2]\fwd[2\\\m1\\3]\edge[1]\fwd[2\\\m3\\1]\edge[3]\rev[2\\3\\1]\edge[\m1]\rev[2\\\m1\\\m3]\edge[1\\\m2]\rev[\m3\\1\\2]\edge[\m1]\rev[\m3\\2\\1]} & impv & \ref{fig:edgecrossingcurves}h \\
Cd00.cPJ & & \descr{\fwd[2\\1\\3]\edge[1]\fwd[1\\\m3\\2]\edge[2]\fwd[1\\2\\3]\edge[\m1]\fwd[\m3\\\m1\\\m2]\edge[\m2\\3]\rev[3\\\m1\\2]\edge[1]\rev[1\\\m2\\\m3]\edge[2]\rev[1\\3\\\m2]\edge[\m1]\rev[\m2\\1\\\m3]} & dimpv & \ref{fig:facetcrossingcurves}d \\
Cd00.ct.4h & Indoor stroll & \descr{\fwd[2\\1\\3]\edge[1]\rev[1\\\m2\\\m3]\edge[2]\fwd[1\\2\\\m3]\edge[\m1]\rev[\m1\\2\\\m3]\edge[\m2\\3]\fwd[\m1\\\m2\\3]\edge[1]\rev[1\\\m2\\3]\edge[2]\fwd[1\\2\\3]\edge[\m1]\rev[\m2\\1\\3]} & cu & \ref{fig:vertexedgegatedcurves}c \\
Cd00.rv.3C & & \descr{\fwd[\m1\\2\\3]\edge[1]\fwd[\m2\\1\\\m3]\edge[2]\rev[3\\1\\\m2]\edge[\m1]\rev[\m1\\\m3\\2]\edge[\m2\\3]\fwd[\m1\\3\\\m2]\edge[1]\rev[1\\3\\\m2]\edge[2]\fwd[1\\3\\2]\edge[\m1]\fwd[\m3\\\m2\\\m1]} & i & \ref{fig:boringcurves}b\vrule width0pt depth3.1ex \\
Cd11.cP4 & & \descr{\fwd[2\\1\\3]\edge[1]\fwd[1\\\m3\\2]\edge[\m1\\2]\fwd[\m1\\2\\3]\edge[1]\fwd[\m3\\1\\\m2]\edge[\m2\\3]\rev[3\\1\\2]\edge[\m1]\rev[\m1\\\m2\\\m3]\edge[1\\2]\rev[1\\3\\\m2]\edge[\m1]\rev[\m2\\1\\\m3]} & dimpv & \ref{fig:facetcrossingcurves}c \\
Ce11.ct.P9 & Outdoor stroll & \descr{\fwd[1\\2\\3]\edge[1]\rev[\m2\\1\\\m3]\edge[\m1\\2]\fwd[2\\\m1\\\m3]\edge[1]\rev[2\\1\\\m3]\edge[\m1\\3]\fwd[2\\\m1\\3]\edge[1]\rev[2\\1\\3]\edge[\m1\\\m2]\fwd[\m2\\\m1\\3]\edge[1]\rev[\m1\\2\\3]} & cu & \ref{fig:vertexedgegatedcurves}d \\
Cl00.cf.ff.dd & & \descr{\fwd[2\\1\\3]\edge[1]\rev[\m2\\\m1\\\m3]\edge[2]\fwd[2\\\m1\\\m3]\edge[\m1]\rev[\m2\\1\\\m3]\edge[3]\fwd[1\\\m2\\3]\edge[\m2]\rev[\m1\\2\\3]\edge[1]\fwd[1\\2\\3]\edge[2]\rev[1\\\m2\\3]} & u & \ref{fig:fivecurves},\ref{fig:boringcurves}a \\
Cu00.cc.4d.4d.Z7 & Long-legs & \descr{\rev[\m1\\3\\\m2]\edge[1]\fwd[1\\\m2\\3]\edge[2]\rev[1\\2\\3]\edge[\m1]\fwd[\m1\\2\\3]\edge[\m2\\3]\fwd[1\\\m2\\3]\edge[2]\fwd[2\\\m1\\\m3]\edge[1]\rev[2\\1\\\m3]\edge[\m2]\fwd[\m2\\\m3\\1]} & iv & \ref{fig:fivecurves},\ref{fig:facetcrossingcurves}e\vrule width0pt depth3.1ex \\
La13.cv.II & Perfect fit & \descr{\fwd[2\\3\\1]\edge[1]\rev[1\\3\\\m2]\edge[2]\fwd[1\\2\\3]\edge[\m1\\3]\rev[\m1\\2\\\m3]\edge[1\\\m2]\fwd[\m2\\1\\\m3]\edge[2]\rev[3\\1\\\m2]\edge[\m1\\\m3]\fwd[\m3\\\m1\\\m2]\edge[\m2\\3]\rev[\m2\\1\\\m3]} & i & \ref{fig:vertexedgegatedcurves}e \\
Ll36.cc.II.hT.33 & Rough-edge & \descr{\fwd[2\\1\\3]\edge[2\\3]\fwd[1\\3\\2]\edge[1\\\m3]\rev[\m1\\\m3\\2]\edge[\m2]\fwd[\m2\\\m3\\\m1]\edge[\m1\\3]\rev[\m2\\3\\1]\edge[2\\\m3]\rev[\m2\\\m3\\\m1]\edge[1\\3]\fwd[\m2\\3\\1]\edge[\m2]\rev[2\\\m1\\\m3]} & v & \ref{fig:facetcrossingcurves}h \\
Ll36.cc.II.CC.J3 & Big cross & \descr{\fwd[2\\1\\3]\edge[2\\3]\rev[\m2\\3\\\m1]\edge[1\\\m3]\fwd[\m2\\\m3\\1]\edge[\m2]\rev[2\\\m3\\1]\edge[\m1\\3]\fwd[2\\3\\\m1]\edge[2\\\m3]\rev[\m2\\\m3\\\m1]\edge[1\\3]\fwd[\m2\\3\\1]\edge[\m2]\rev[2\\\m1\\\m3]} & iv & \ref{fig:fivecurves},\ref{fig:facetcrossingcurves}f \\
Se00.cT7 & Pirouette & \descr{\fwd[3\\2\\1]\edge[1]\fwd[\m3\\1\\2]\edge[2]\fwd[1\\2\\3]\edge[3]\fwd[\m1\\3\\2]\edge[\m1\\\m3]\rev[1\\\m3\\2]\edge[3]\rev[\m1\\2\\\m3]\edge[\m2]\rev[3\\\m1\\2]\edge[1]\rev[\m3\\2\\\m1]} & dimpv & \ref{fig:facetcrossingcurves}a\vrule width0pt depth3.1ex \\
Se33.c7T & Mosquito & \descr{\fwd[2\\3\\1]\edge[1\\2]\fwd[2\\1\\3]\edge[3]\fwd[1\\3\\\m2]\edge[\m2\\\m3]\fwd[\m1\\\m3\\\m2]\edge[\m1\\3]\rev[1\\3\\\m2]\edge[2\\\m3]\rev[\m1\\\m3\\\m2]\edge[3]\rev[2\\\m1\\\m3]\edge[1\\\m2]\rev[2\\\m3\\\m1]} & ipv & \ref{fig:facetcrossingcurves}g \\
Se66.cT3 & Helix & \descr{\fwd[\m3\\2\\\m1]\edge[\m1]\fwd[3\\\m1\\2]\edge[1\\2]\fwd[\m1\\2\\3]\edge[3]\fwd[1\\3\\2]\edge[\m1\\\m3]\rev[\m1\\\m3\\2]\edge[3]\rev[1\\2\\\m3]\edge[1\\\m2]\rev[\m3\\1\\2]\edge[\m1]\rev[3\\2\\1]} & dmpv & \ref{fig:facetcrossingcurves}b \\
Si00.cc.LT.I3.II & & \descr{\fwd[3\\1\\2]\edge[1]\fwd[1\\2\\3]\edge[2]\fwd[2\\3\\1]\edge[3]\fwd[2\\3\\1]\edge[\m2]\fwd[\m2\\\m3\\1]\edge[\m1]\fwd[3\\\m1\\\m2]\edge[2]\fwd[1\\2\\3]\edge[\m3]\fwd[\m2\\\m3\\1]} & dfpsv & \ref{fig:edgecrossingcurves}g \\
Si11.ct.P9 & Rollercoaster & \descr{\fwd[3\\1\\2]\edge[1]\rev[\m3\\\m2\\1]\edge[2\\3]\fwd[2\\1\\3]\edge[\m3]\rev[2\\1\\\m3]\edge[\m1\\\m2\\3]\fwd[\m2\\3\\\m1]\edge[1]\rev[\m2\\3\\1]\edge[\m1\\2]\fwd[\m1\\2\\3]\edge[\m3]\rev[\m1\\3\\\m2]} & cdi & \ref{fig:vertexedgegatedcurves}b\vrule width0pt depth2.3ex \\
\hline
\end{tabular}
\end{centering}
\footnotesize
\par\addvspace\baselineskip
Listed properties are:
b:~optimal worst-case curve section shapes as measured by \WS, \WBV and/or \WBS;\quad
c:~centred;\quad
d:~fully interior-diagonal-harmonious;\quad
f:~face-continuous;\quad
h:~maximally facet-harmonious;\quad
i:~pattern-isotropic;\quad
l:~optimal dilation $\mathrm{WL}_1$, $\mathrm{WL}_2$ and/or $\mathrm{WL}_\infty$;\quad
m:~metasymmetric;\quad
o:~hyperorthogonal;\quad
p:~order-preserving;\quad
s:~coordinate-shifting;\quad
u:~standing;\quad
v:~vertex-gated.\quad
Symmetric curves are recognized by a name of the form $Pcmm.\curvename{c}sq$.
Well-folded curves are recognized by a name starting with \curvename{Ca00}.
\end{table}

\clearpage
\section{No full harmony}\label{apx:nofullharmony}

\begin{theorem}
No three-dimensional Hilbert curve can harmonize with the two-dimensional Hilbert curve on every facet.
\end{theorem}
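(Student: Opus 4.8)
The plan is to argue by contradiction: suppose $\tau$ is a three-dimensional Hilbert curve harmonizing with the two-dimensional Hilbert curve on all six facets. First I would show that $\tau$ is vertex-gated. The entrance gate $\tau^{+}(0)$ is the first point of the cube visited, hence the first point visited of every facet $F$ that contains it; since $\tau$ restricted to $F$ is an isometric copy of the two-dimensional Hilbert curve, whose entrance is a vertex of its square (Theorem~\ref{thm:uniqueqss}), $\tau^{+}(0)$ must be a vertex of every facet containing it, hence a vertex of the cube, and likewise for $\tau^{-}(1)$. By Theorem~\ref{thm:vertex-vertex-gates} the two gates therefore lie at opposite ends of an edge or of a facet-diagonal. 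The facet-diagonal (facet-crossing) case is immediately impossible: if the gates are the two ends of a diagonal of a facet $F_0$, then, as above, they are respectively the entrance and exit of the two-dimensional Hilbert curve $\tau$ restricted to $F_0$, but by Theorem~\ref{thm:uniqueqss} these must be the two ends of a single edge of $F_0$, whereas diagonally opposite corners are not. So $\tau$ is vertex-gated and edge-crossing, with both gates on a common edge $e_0$.

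The bulk of the proof is the edge-crossing case. I would fix coordinates so that $\tau^{+}(0)$ and $\tau^{-}(1)$ are the endpoints of $e_0$. On each facet $F$, $\tau$ restricted to $F$ is a two-dimensional Hilbert curve, and by the rigidity of that curve (Theorem~\ref{thm:uniqueqss}) everything about it is determined once we know its entrance vertex $a_F$ and its exit vertex $b_F$: its gate edge $a_F b_F$; the order in which it visits the four quadrants of $F$ (a $\Pi$-shape, hence the relative order, in the base pattern of $\tau$, of the four octants meeting $F$); and the fact that the quadrants at $a_F$ and at $b_F$ are entered with a diagonal reflection while the other two are not. On the two facets containing $e_0$ we have $a_F=\tau^{+}(0)$ and $b_F=\tau^{-}(1)$; on the facet through $\tau^{+}(0)$ that avoids $e_0$ we have $a_F=\tau^{+}(0)$ with $b_F$ one of its two neighbours on that facet, and symmetrically at $\tau^{-}(1)$; on the remaining two facets $a_F$ and $b_F$ form some non-gate adjacent pair. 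Combining the quadrant-order constraints across all six facets, I would show that the base pattern of $\tau$ is forced, up to symmetry, to be $G(3)$ (pattern \curvename{Ca00}); propagating the diagonal-reflection constraints through the eight octants then forces $\tau$ to be the harmonious Hilbert curve \curvename{Ca00.c4Z}. One finishes by checking that \curvename{Ca00.c4Z} fails to harmonize on its sixth facet --- a violation that only appears a couple of levels down in the approximating curves on that facet, and so must be verified in recursion rather than on the second-order approximating curve (cf.\ Figure~\ref{fig:imposter-harmony}). This contradicts the assumption.

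The delicate part is this last step. The edge orientations induced on the twelve cube edges are, by themselves, consistent on all six facets (one can exhibit such an orientation explicitly), so no contradiction appears until one tracks the base pattern together with the reflections used inside the octants and shows that these cannot be made simultaneously compatible with two-dimensional Hilbert structure on all six faces. Keeping this forcing argument short --- and carrying the concluding ``sixth facet'' check to a deep enough recursion level that the violation genuinely persists --- is where the effort lies. (If one is willing to quote it, the fact that harmony on five facets already characterizes \curvename{Ca00.c4Z}, Finding~\ref{fnd:harmonious}, collapses the edge-crossing case to a single direct check.)
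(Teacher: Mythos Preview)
Your opening reductions (vertex-gated, then edge-crossing) are sound and nicely argued. But the core claim that the six facet constraints force the base pattern to be \curvename{Ca00} is wrong, and this is a genuine gap. The \curvename{Si00} base pattern also satisfies the $\Pi$-shape quadrant-order constraint on all six facets at the first level. Concretely, with first octant $A$ and facet-neighbours $B\prec C\prec D$, the order $A\prec B\prec C'\prec A'\prec D'\prec C\prec B'\prec D$ traces \curvename{Si00}; it is vertex-gated and edge-crossing (first and last octants share a cube edge), and on every one of the six facets the four incident octants appear in a valid $\Pi$-order with adjacent endpoints. Your argument never addresses this case. The paper disposes of it separately: pushing the facet constraints into the second octant $B$ forces the sub-pattern there to end in a suboctant not adjacent to $C'$, so continuity breaks.

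For the \curvename{Ca00} case, the plan to ``force $\tau$ to be \curvename{Ca00.c4Z} and then observe that its sixth facet fails'' is conceptually muddled. Propagating the six-facet constraints one level down into the octants does not single out \curvename{Ca00.c4Z}; it produces an inconsistency directly---the induced sub-constraints in one octant demand an \curvename{Si00} pattern there, which violates self-similarity. That \emph{is} the contradiction; there is no intermediate stage at which a specific self-similar curve is determined and then found wanting. Your parenthetical shortcut via Finding~\ref{fnd:harmonious} is logically valid but leans on a computational enumeration of all curves, which is much heavier than the direct two-case argument the paper gives.

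The paper's proof, by contrast, skips the gate-type reduction entirely: it works purely with the octant visitation order, derives the partial order from the facets through the first octant and one opposite facet, splits into the \curvename{Ca00} and \curvename{Si00} completions, and kills each by a single second-level self-similarity check (Figure~\ref{fig:no-fully-harmonious}). Your vertex-gated/edge-crossing reduction is a pleasant alternative opening, but it does not buy you the uniqueness of the base pattern that your argument assumes.
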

\begin{proof}
Let $A$ be the first octant in the traversal, and let $B$, $C$, and $D$ be the octants that share an octant facet with $A$, in the order in which they are visited. Let $A'$, $B'$, $C'$ and $D'$ be the octants that are opposite of $A$, $B$, $C$ and $D$, respectively, with respect to the centre of the cube. If we know that an octant $X$ is visited before an octant $Y$, we will write $X \prec Y$. So we have $A \prec B \prec C \prec D$.

To match the two-dimensional Hilbert curve on the unit cube facets adjacent to $A$, we must have $B \prec D' \prec C$; $B \prec C' \prec D$; and $C \prec B' \prec D$; we can summarize this by $A \prec B \prec D' \prec C \prec B' \prec D$ and $B \prec C' \prec D$. In particular, we have $D' \prec C \prec B'$, so, to get their common unit cube facet with octants $A'$, $B'$, $C$ and $D'$ correct, we need to visit $A'$ either (a) after $B'$ or (b) before $D'$. These two options are illustrated in Figure \ref{fig:no-fully-harmonious}a and~\ref{fig:no-fully-harmonious}b, respectively, and we will now discuss them in detail.

\begin{figure}
\centering
\includegraphics[width=\hsize,page=1]{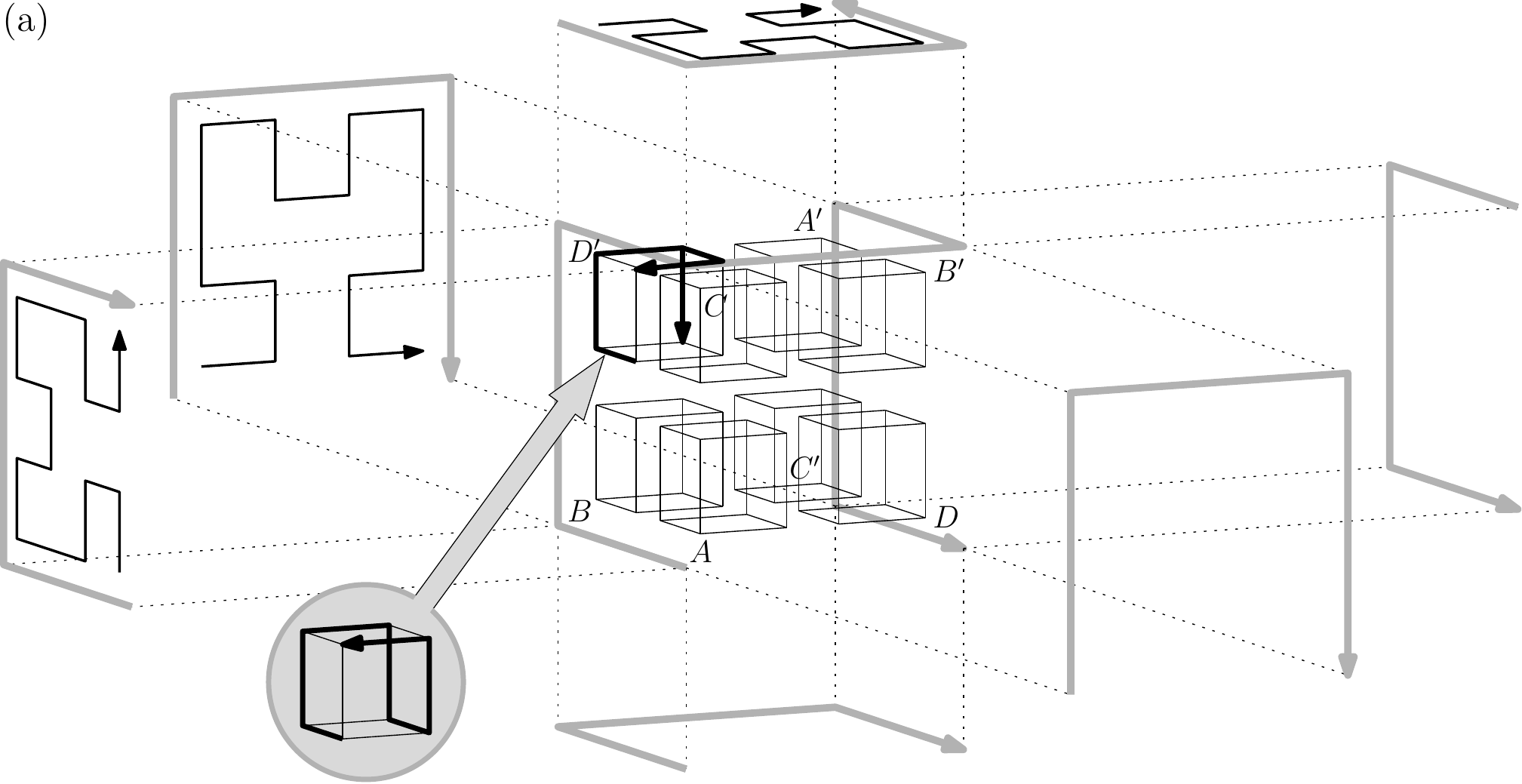}\\[3em]
\includegraphics[width=\hsize,page=2]{no-fully-harmonious.pdf}
\caption{(a) No three-dimensional Hilbert curve with the \curvename{Ca00}-pattern can harmonize with the two-dimensional Hilbert curve on all facets: in the third octant, one can get $A_2$ correct with a \curvename{Si00}-pattern but not with a \curvename{Ca00-pattern}. (b) No three-dimensional Hilbert curve with the \curvename{Si00}-pattern can harmonize with the two-dimensional Hilbert curve on all facets: to get $A_2$ correct in the second octant, one would have to place the pattern such that the exit gate does not connect to the third octant.}
\label{fig:no-fully-harmonious}
\end{figure}

Option (a): Since we now have $B \prec D' \prec C \prec B' \prec A'$, and $B \prec C'$, we can only get the common unit cube facet of $B$, $D'$, $A'$ and $C'$ correct if we visit these octants in that order. The partial order thus obtained can only be completed in one way: $A \prec B \prec D' \prec C \prec B' \prec A' \prec C' \prec D$. This traces out the \curvename{Ca00}-pattern, as illustrated in Figure~\ref{fig:no-fully-harmonious}a. Now the second-order approximating curves of the two-dimensional Hilbert curves on the unit cube facets bordering $D'$ induce a partial order on the suboctants of $D'$, as illustrated in the centre of the figure. The partial order can be completed to a full order, for example with an \curvename{Si00}-pattern, but not with a \curvename{Ca00}-pattern. Therefore a self-similar solution based on the \curvename{Ca00}-pattern is not possible.

Option (b): Since we now have $A' \prec D'$, $B \prec D'$ and $B \prec C'$, we can only get the common unit cube facet of $B$, $C'$, $A'$ and $D'$ correct if we visit these octants in that order. The partial order thus obtained can only be completed in one way: $A \prec B \prec C' \prec A' \prec D' \prec C \prec B' \prec D$. This traces out the \curvename{Si00}-pattern, as illustrated in Figure~\ref{fig:no-fully-harmonious}b. Now the second-order approximating curves of the two-dimensional Hilbert curves on the unit cube facets bordering $B$ induce a partial order on the suboctants of $B$, as illustrated in the centre of the figure. The partial order can be completed to a full order that follows an \curvename{Si00}-pattern, but only in such a way that the traversal of $B$ ends in a suboctant that is not adjacent to the next octant, $C'$. Therefore a continuous self-similar traversal based on the \curvename{Si00}-pattern is not possible.

So, in both cases we find that no traversal is possible that has the defining properties of a three-dimensional Hilbert curve and matches the two-dimensional Hilbert curve on every facet of the unit cube.
\end{proof}

\clearpage
\section{Equivalence relations between dilation, diameter ratio, and $L_i$-bounding ball ratio}
\label{apx:boundingball}

\begin{theorem}\label{thm:dilationequalsdiameter}
The $L_i$-dilation of any space-filling curve is equal to its $L_i$-diameter ratio, for any~$i$.
\end{theorem}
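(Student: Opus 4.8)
The plan is to establish the two inequalities $\mathrm{WL}_i \le \mathrm{WD}_i$ and $\mathrm{WD}_i \le \mathrm{WL}_i$ separately; together they give equality.

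First I would dispatch $\mathrm{WL}_i \le \mathrm{WD}_i$, which is immediate from the definitions: for every $a < b$ the two points $\tau(a)$ and $\tau(b)$ belong to $C(a,b) = \bigcup_{t \in [a,b]} \tau(t)$, so $\delta_i(\tau(a),\tau(b)) \le \mathrm{diam}_i(C(a,b))$, whence $\delta_i(\tau(a),\tau(b))^d/(b-a) \le \mathrm{diam}_i(C(a,b))^d/(b-a)$; taking the maximum over all admissible $a,b$ on both sides finishes this direction.

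For $\mathrm{WD}_i \le \mathrm{WL}_i$ I would fix $a < b$ and use continuity of $\tau$: the set $C(a,b) = \tau([a,b])$ is the image of a compact interval, hence compact, so its $L_i$-diameter is realized by some pair of points, say $\tau(s)$ and $\tau(t)$ with $s,t \in [a,b]$ and, without loss of generality, $s \le t$. If $s = t$ the diameter is zero and there is nothing to prove; otherwise $0 < t-s \le b-a$ gives
\[
\frac{\mathrm{diam}_i(C(a,b))^d}{b-a} \;=\; \frac{\delta_i(\tau(s),\tau(t))^d}{b-a} \;\le\; \frac{\delta_i(\tau(s),\tau(t))^d}{t-s} \;\le\; \mathrm{WL}_i ,
\]
and taking the maximum over $a<b$ yields the claim. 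Combining the two inequalities proves the theorem.

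I do not expect a genuine obstacle here; the only two points to watch are the degenerate case $s=t$ (i.e.\ $C(a,b)$ a single point), handled above, and the appeal to compactness to know the diameter is attained. If one prefers to avoid the latter, one can instead choose a sequence of pairs whose $L_i$-distances approach $\mathrm{diam}_i(C(a,b))$, apply the displayed estimate to each term, and pass to the limit. The argument uses only continuity of $\tau$, so it applies verbatim for any $i$ (indeed for any metric $\delta_i$).
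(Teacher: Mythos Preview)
Your proof is correct and follows essentially the same approach as the paper: both arguments hinge on the observation that the diameter of $C(a,b)$ is realized by a pair $\tau(s),\tau(t)$ with $s,t\in[a,b]$, so the dilation quotient for the sub-interval $[s,t]$ dominates the diameter-ratio quotient for $[a,b]$. Your two-inequality presentation is slightly more explicit than the paper's, but the content is the same.
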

\begin{proof}
Consider a space-filling curve $\tau$ and two points $a,b \in [0,1]$ with $a \leq b$. Let $x,y$, with $a \leq x \leq y \leq b$, be a pair of points that determines the diameter of $C(a,b)$ and is closest along the curve, that is, with minimum $y-x$. Since $C(x,y) \subseteq C(a,b)$ and $\delta_i(\tau(x),\tau(y)) \geq \delta_i(\tau(a),\tau(b))$, reducing the curve section under consideration from $C(a,b)$ to $C(x,y)$ can only increase its dilation $\delta_i(\tau(a),\tau(b))^d / (b-a)$ and its diameter ratio $\mathrm{diam}_i(C(a,b))^d / (b-a)$; when $C(a,b)$ shrinks to $C(x,y)$ both will rise to the same value $\delta_i(\tau(x),\tau(y))^d / (y-x)$. Therefore, $\mathrm{WL}_i$ and $\mathrm{WD}_i$ are determined by the same point pairs and have the same value.
\end{proof}

Since the $L_\infty$-diameter of the minimum bounding $L_\infty$-ball of any set $S$ is equal to the $L_\infty$-diameter of $S$, we also have:
\begin{theorem}
The $L_\infty$-bounding ball ratio of any space-filling curve is equal to its $L_\infty$-diameter ratio.
\end{theorem}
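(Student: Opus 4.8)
The plan is to reduce the statement to a single elementary geometric observation about $L_\infty$-balls, namely the one already flagged in the text: for any bounded set $S$, the minimum bounding $L_\infty$-ball of $S$ has $L_\infty$-diameter equal to $\mathrm{diam}_\infty(S)$. First I would recall that an $L_\infty$-ball of radius $r$ is exactly an axis-parallel cube of side $2r$, and that the $L_\infty$-distance between two opposite corners of an axis-parallel cube of side $s$ is $s$ (the coordinates differ by $s$ in every axis, and $L_\infty$ takes the maximum). Hence an $L_\infty$-ball which is a cube of side $s$ has $L_\infty$-diameter $s$ and $d$-dimensional volume $s^d$; in particular $\mathrm{vol}(B) = \mathrm{diam}_\infty(B)^d$ for every $L_\infty$-ball $B$.

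Next I would pin down the side length of the minimum bounding $L_\infty$-ball of a set $S$. Writing $\ell_j(S)$ for the extent of $S$ along the $j$-th coordinate axis, any axis-parallel cube containing $S$ must have side at least $\max_j \ell_j(S)$, and the cube of that side centred appropriately does contain $S$; so the minimum bounding $L_\infty$-ball of $S$ is a cube of side exactly $\max_j \ell_j(S)$. On the other hand $\mathrm{diam}_\infty(S) = \sup_{p,q\in S}\max_j|p[j]-q[j]| = \max_j \ell_j(S)$. Combining the two, $\mathrm{bball}_\infty(S)$ is a cube of side $\mathrm{diam}_\infty(S)$, and therefore
\[
\mathrm{vol}(\mathrm{bball}_\infty(S)) \;=\; \mathrm{diam}_\infty(S)^d .
\]

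Finally I would apply this with $S = C(a,b)$ for arbitrary $a\le b$ in $[0,1]$, yielding $\mathrm{vol}(\mathrm{bball}_\infty(C(a,b)))/(b-a) = \mathrm{diam}_\infty(C(a,b))^d/(b-a)$ for every such pair, so the two suprema over all pairs $a\le b$ coincide, i.e.\ $\mathrm{WBB}_\infty = \mathrm{WD}_\infty$. There is no real obstacle here: the only thing to be slightly careful about is handling unbounded or degenerate cases cleanly (sections of a space-filling curve are compact, so suprema are attained and everything is finite), and making explicit the easy equality $\mathrm{diam}_\infty(S) = \max_j \ell_j(S)$ that drives the whole argument — the rest is just unwinding the definitions of $\mathrm{WD}_\infty$ and $\mathrm{WBB}_\infty$ from Section~\ref{sec:localityproperties}.
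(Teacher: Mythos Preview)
Your proof is correct and follows essentially the same approach as the paper: both rest on the single observation that $\mathrm{vol}(\mathrm{bball}_\infty(S)) = \mathrm{diam}_\infty(S)^d$ (equivalently, that the $L_\infty$-diameter of the minimum bounding $L_\infty$-ball of $S$ equals $\mathrm{diam}_\infty(S)$), from which equality of the two ratios is immediate. You simply spell out in detail, via $\max_j \ell_j(S)$, what the paper compresses into one sentence.
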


I conjecture that the $L_2$-bounding ball ratio also equals the $L_2$-diameter ratio, but I can prove this only for two-dimensional space-filling curves:

\begin{theorem}
For two-dimensional space-filling curves, $\mathrm{WL}_2 = \mathrm{WD}_2 = \mathrm{WBB}_2$.
\end{theorem}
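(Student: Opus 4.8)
The plan is to reduce at once to showing $\mathrm{WD}_2 = \mathrm{WBB}_2$, since $\mathrm{WL}_2 = \mathrm{WD}_2$ is already the content of Theorem~\ref{thm:dilationequalsdiameter} (which holds for every $i$ and every dimension). I would treat the two inequalities separately. For any $0 \le a < b \le 1$ write $D = \mathrm{bball}_2(C(a,b))$, with centre $o$ and radius $\varrho$, so $\mathrm{vol}(D) = \pi\varrho^2$. Since $C(a,b) \subseteq D$, every pair of points of $C(a,b)$ is at distance at most $2\varrho$, which immediately relates $\mathrm{diam}_2(C(a,b))^2/(b-a)$ and $\varrho^2/(b-a)$ in one direction; the work is all in the other direction.

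The heart of the argument is a structural reduction based on smallest enclosing disks. Classically, $D$ is supported by a set of at most three points of $C(a,b)$ on $\partial D$ whose convex hull contains $o$. If two points $P,Q\in C(a,b)$ support $D$, they are antipodal on $\partial D$, hence $\mathrm{diam}_2(C(a,b)) \ge |PQ| = 2\varrho$ and the desired comparison is trivial for that section. So assume $D$ is supported by three points $P_1,P_2,P_3 \in C(a,b)\cap\partial D$ with $o$ in the interior of the triangle $P_1P_2P_3$. Choose for each $P_k$ a parameter in $[a,b]$ at which $\tau$ attains $P_k$, and relabel so these parameters satisfy $t_1 < t_2 < t_3$. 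Then $C(t_1,t_2)$ contains $P_1$ and $P_2$, and $C(t_2,t_3)$ contains $P_2$ and $P_3$, so with $s = t_2 - t_1$, $u = t_3 - t_2$ (both positive, as the three points are distinct) we get $\mathrm{WD}_2 \ge \max\{\,|P_1P_2|^2/s,\ |P_2P_3|^2/u\,\}$, while $s + u = t_3 - t_1 \le b-a$.

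It remains to show this maximum dominates $\mathrm{vol}(D)/(b-a)$. Expressing the two chords through the central angles they subtend and using that $o$ lies inside the triangle — equivalently, that each arc $P_iP_j$ is shorter than a semicircle — a short trigonometric computation gives $|P_1P_2|^2 + |P_2P_3|^2 > 4\varrho^2$ (the point being that $\sin^2\alpha + \sin^2\gamma > 1$ whenever $\alpha,\gamma\in(0,\pi/2)$ and $\alpha+\gamma > \pi/2$, since then $\sin^2\alpha+\sin^2\gamma = 1 - \cos(\alpha+\gamma)\cos(\alpha-\gamma)$ with a negative first cosine and a positive second one). Hence if both $|P_1P_2|^2/s$ and $|P_2P_3|^2/u$ were below $M$ we would have $|P_1P_2|^2 + |P_2P_3|^2 < M(s+u)$, forcing $M > 4\varrho^2/(s+u) \ge 4\varrho^2/(b-a)$; so the maximum is at least $4\varrho^2/(b-a)$, which dominates $\pi\varrho^2/(b-a) = \mathrm{vol}(D)/(b-a)$. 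Taking the supremum over all $(a,b)$ then gives the bound on $\mathrm{WD}_2$ from below, and combining it with the easy containment direction and with $\mathrm{WL}_2 = \mathrm{WD}_2$ proves the theorem.

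The step I expect to be the main obstacle is not the geometry of the split but the bookkeeping of constants so that the two sides meet \emph{exactly}: the trivial containment yields only $\mathrm{diam}_2(C(a,b))^2 \le \tfrac{4}{\pi}\,\mathrm{vol}(\mathrm{bball}_2(C(a,b)))$, and the three-point argument above likewise naturally produces a factor $4/\pi$, so closing the gap to a clean identity requires squeezing both estimates — a sign that the quantity which matches $\mathrm{WD}_2$ on the nose is the squared diameter of the enclosing ball rather than its area, and that the two differ exactly by the volume of the unit disk. A secondary point to get right is the reduction that replaces $C(a,b)$ by the sub-section $C(t_1,t_3)$ spanned by the supporting points without enlarging or shrinking $D$, which is what makes the inequality $s+u \le b-a$ legitimate.
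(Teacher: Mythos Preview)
Your approach is essentially the paper's: both invoke Theorem~\ref{thm:dilationequalsdiameter} for $\mathrm{WL}_2=\mathrm{WD}_2$, then treat the three-point support case of the smallest enclosing disk via the law of sines and the inequality $\sin^2\zeta+\sin^2\xi>1$ for an acute triangle (your half-central-angles $\alpha,\gamma$ are exactly the paper's inscribed angles $\zeta,\xi$). The paper phrases it as a contradiction---assume the worst-case bounding-ball ratio is realised only by three-point sections and derive $\sin^2\zeta+\sin^2\xi\le 1$---whereas you bound $\mathrm{WD}_2$ from below directly through the sub-sections $[t_1,t_2]$ and $[t_2,t_3]$; the substance is identical.

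Your worry about the factor $4/\pi$ is justified, but note that it appears \emph{symmetrically} in your two estimates, so they do meet. In the two-point case $\mathrm{diam}_2(C(a,b))=2\varrho$, and in the three-point case your splitting gives $\mathrm{WD}_2\ge 4\varrho^2/(b-a)$; either way $\mathrm{WD}_2\ge 4\varrho^2/(b-a)$ for every section, hence $\mathrm{WD}_2\ge\tfrac{4}{\pi}\,\mathrm{WBB}_2$. Conversely the containment $\mathrm{diam}_2(C(a,b))\le 2\varrho$ gives $\mathrm{diam}_2(C(a,b))^2/(b-a)\le 4\varrho^2/(b-a)\le\tfrac{4}{\pi}\,\mathrm{WBB}_2$, hence $\mathrm{WD}_2\le\tfrac{4}{\pi}\,\mathrm{WBB}_2$. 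So your argument actually proves the clean identity $\mathrm{WD}_2=\tfrac{4}{\pi}\,\mathrm{WBB}_2$ under the paper's stated volume-based definition of $\mathrm{WBB}_2$. The paper's own proof silently replaces the volume ratio by $\mathrm{diam}_2^2(\mathrm{bball}_2(C(x,z)))/(z-x)$, which absorbs the $\pi/4$ and makes the displayed equality $\mathrm{WD}_2=\mathrm{WBB}_2$ literal; your diagnosis in the last paragraph is exactly right, and nothing further is needed to close your argument.
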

\begin{proof}
We will first prove that for the $L_2$-metric in two dimensions, the worst-case bounding ball ratio $\mathrm{WBB}_2$ is realized by a curve section $C(a,b)$ whose bounding ball is determined by exactly two points of $C(a,b)$. Suppose that, on the contrary, the worst bounding ball ratio is only realized by curve sections $C(a,b)$ whose bounding balls are determined by \emph{three} points $x,y,z \in [a,b]$. Let $\Delta$ be the triangle with vertices $\tau(x)$, $\tau(y)$ and $\tau(z)$, and let $\xi$, $\upsilon$ and $\zeta$ be the angles of $\Delta$ at the vertices $\tau(x)$, $\tau(y)$ and $\tau(z)$, respectively.
The smallest bounding ball is then the circumscribed circle of $\Delta$, which must be an acute triangle, and the circumscribed circle has diameter $\delta_2(\tau(x),\tau(y))/\sin(\zeta) = \delta_2(\tau(y),\tau(z))/\sin(\xi)$.
If $[a,b] \neq [x,z]$, then shrinking $C(a,b)$ to $C(x,z)$ would increase the bounding ball ratio, so we must have $a = x$ and $b = z$. Since the distance between the end points of a curve section is a lower bound on the bounding ball ratio, and, by assumption, no worst-case bounding ball is determined by only two points, we get:
$\delta_2^2(\tau(x),\tau(y))/\sin^2(\zeta)/(z-x) = \mathrm{diam}_2^2(\mathrm{bbaldiam}(C(x,z)))/(z-x) > \delta_2^2(\tau(x),\tau(y)) /(y-x)$, and thus,
$(y-x)/(z-x) > \sin^2(\zeta)$.
Analogously, we get $(z-y)/(z-x) > \sin^2(\xi)$, and thus,
$\sin^2(\zeta) + \sin^2(\xi) < ((z-y)+(y-x))/(z-x) = 1$.
However, since $\Delta$ is acute, we have $\pi/2 > \xi = \pi-\upsilon-\zeta > \pi/2 - \zeta$, and therefore
$\sin^2(\zeta) + \sin^2(\xi) > \sin^2(\zeta) + \sin^2(\pi/2-\zeta) = \sin^2(\zeta) + \cos^2(\zeta) = 1$: a contradiction. Therefore, there must be curve sections $C(a,b)$ that determine the worst-case bounding ball ratio and have a bounding ball determined by only two points, which must also determine the diameter of $C(a,b)$. Hence the worst-case bounding ball ratio $\mathrm{WBB}_2$ is a lower bound on the worst-case diameter ratio $\mathrm{WD}_2$, which, by Theorem~\ref{thm:dilationequalsdiameter}, equals the dilation $\mathrm{WL}_2$.

Since the diameter ratio of any curve section is also a lower bound on the bounding ball ratio, it follows that under the $L_2$-metric in two dimensions, the worst-case dilation, the worst-case diameter ratio and the worst-case bounding ball ratio are all equal.
\end{proof}

\clearpage
\section{Verifying the list of base patterns}\label{apx:verification}
To verify Table~\ref{tab:octantorders}, we will first analyse how many base patterns there could be. We ignore reversal for the moment, and get back to that later. Let a vector $(x_1,x_2,x_3)$ represent the octant that includes the unit cube vertex $(\frac12 x_1,\frac12 x_2,\frac12 x_3)$, assuming a unit cube of volume 1, centered at the origin. We say that a base pattern is in \emph{directed canonical form} if it starts with octant $(-1,-1,-1)$, and if octants $(-1,-1,+1)$, $(-1,+1,-1)$ and $(+1,-1,-1)$ appear in that order, possibly with other octants in between. We can reflect and rotate any given octant order into directed canonical form in two steps, as follows. First, if the first octant is $(x_1,x_2,x_3)$, then, for any $i$ such that $x_i = 1$, we reflect the pattern in the axis-parallel plane through the origin that is orthogonal to the $i$-th coordinate axis. In effect, we change the coordinates of any octant $(y_1,y_2,y_3)$ to $(-x_1y_1,-x_2y_2,-x_3y_3)$. As a result, the pattern now starts with octant $(-1,-1,-1)$. Second, we permute the coordinate axes as needed to ensure that octants $(-1,-1,+1)$, $(-1,+1,-1)$ and $(+1,-1,-1)$ appear in that order.

Now observe that there are 840 directed canonical octant orders, as each of them can be constructed by starting with the sequence $S_1,...,S_4 = (-1,-1,-1), (-1,-1,+1), (-1,+1,-1), (+1,-1,-1)$, and then inserting, one after another, the remaining four octants $S_5 = (-1,+1,+1), S_6 = (+1,-1,+1), S_7 = (+1,+1,-1)$ and $S_8 = (+1,+1,+1)$. For the $i$-th octant to be inserted ($i \geq 5$), there are $i-1$ positions to choose from (after each of the octants $S_1,...,S_{i-1}$). Thus the total number of canonical octant orders is $4\cdot 5\cdot 6\cdot 7 = 840$.

The base patterns in directed canonical form can be numbered consecutively from 0 to 839 as follows. Let $r_i$ be the position where $S_i$ is inserted in the aforementioned incremental construction of a given canonical order $\rho$; more precisely, let $r_i$ be the number of octants from $S_2,...,S_{i-1}$ that precede $S_i$ in $\rho$. Then we identify $\rho$ by the number $((r_5 \cdot 5 + r_6) \cdot 6 + r_7) \cdot 7 + r_8$.

To verify the correctness and completeness of the list of base patterns in Table~\ref{tab:octantorders}, we compare it to the simple numbering scheme presented above. The search tool described in Section~\ref{sec:software} has an option\footnote{Start it with: \texttt{hilbex patterns}} to iterate over all possible names of base patterns according to Table~\ref{tab:octantorders}, and to output, for each name, the numerical identifier of the directed canonical form of the pattern. More precisely, if the name indicates a symmetric order, the tool outputs the numerical identifier of the directed canonical form of that order. If the name indicates an asymmetric order, the tool outputs the numerical identifiers of the directed canonical forms of that order and its reverse. One may now verify that the tool outputs each of the numbers $\{0,...,839\}$ exactly once. Note that this is consistent with the counts of symmetric and asymmetric patterns in the table: one directed canonical form for each of 104 symmetric patterns, and two canonical forms for each of 368 asymmetric patterns, adds up to 840 directed canonical forms. This confirms that our naming scheme has a unique name for each possible base pattern, given that we consider patterns that only differ by reflection, rotation and/or reversal to be equivalent.

\clearpage
\section{Full list of gate sequences}\label{apx:gatesequences}
\label{apx:schemes}

Table~\ref{tab:vertexgated} lists all gate sequences for vertex-gated curves: each entry of the table consists of a prefix and one or more completions.

Tables \ref{tab:vertexedgegated1} and~\ref{tab:vertexedgegated2} list all gate sequences for vertex-edge-gated curves: each entry of the table consists of a prefix, one or more options for the symbol specifying the gates in the first half of the curve, and one or more options for the symbol specifying the gates in the second half of the curve. Each combination of a prefix, one symbol from the second column, and one symbol from the third column, constitutes a gate sequence name for a vertex-edge-gated curve.

Table \ref{tab:edgegated} lists the remaining gate sequences, that is, for vertex-facet-gated, edge-gated, and facet-gated curves.

Figure~\ref{fig:realizablebasepatterns} shows all base patterns that are realized by one or more Hilbert curves.

\begin{table}[h]
\centering
\caption{Gate sequences for vertex-gated curves. Symmetric sequences in boldface.}\label{tab:vertexgated}
\newcommand\symmetric[1]{\leavevmode\rlap{\raisebox{-0.09ex}{\rlap{#1}\kern0.05ex #1}}\rlap{#1}\kern0.05ex #1}
\begin{tabular}{|>{\footnotesize\ttfamily}l@{ }>{\footnotesize\ttfamily}l||>{\footnotesize\ttfamily}l@{ }>{\footnotesize\ttfamily}l||>{\footnotesize\ttfamily}l@{ }>{\footnotesize\ttfamily}l|}
\hline
\multicolumn{6}{|c|}{edge-crossing curves} \\
\hline
Ca00.cc. & \symmetric{hh} h4 \symmetric{TT} \symmetric{44}  & La11.cc. & \symmetric{hh} \symmetric{TT} & Se09.cc. & LL TT \\
Ca01.cc. & hh h4 4h 44                             & La77.cc. & \symmetric{hh} \symmetric{TT} & Si00.cc. & LT \\
Ca11.cc. & \symmetric{hh} h4 \symmetric{TT} \symmetric{44}  & Ll77.cc. & hT                      & Yi00.cc. & LT \\
Ck01.cc. & hT 4T                                   & Se06.cc. & hh                      & Yz00.cc. & LT hL \\
Ck11.cc. & hT T4                                   & Se08.cc. & hh                      & & \\
\hline\hline
\multicolumn{6}{|c|}{facet-crossing curves} \\
\hline
Cd00.cc. & \symmetric{II} Ih IP IX I4          & La36.cc. & hI hX                         & Se6b.cc. & Ph          \\
         & \symmetric{hh} hP hX h4 \symmetric{PP} & La37.cc. & LE                            & Sebb.cc. & \symmetric{bb} \\
         & PX P4 \symmetric{XX} X4 \symmetric{44} & La6c.cc. & Ih Xh                         & Si06.cc. & IC IE LP CP PC   \\
Cd02.cc. & I4 h4 P4 X4 44                   & La7c.cc. & EL EC                         &          & PE EP XC XE JP   \\
Cd11.cc. & \symmetric{PP}                      & La7d.cc. & EC                            & Si0b.cc. & Tb                              \\
Cd22.cc. & \symmetric{44}                      & Ll16.cc. & LL LJ EL EJ JL JJ             & Si16.cc. & IC IE XC XE                     \\
Ce00.cc. & \symmetric{bb} bd \symmetric{dd}       & Ll36.cc. & II IX bI bX                   & Si1b.cc. & Lh Jh                           \\
Ce01.cc. & IP hP PP XP 4P                   & Ll37.cc. & LE                            & Si36.cc. & 7T                              \\
Ce02.cc. & bd dd                            & Ll6d.cc. & IT XT                         & Si3b.cc. & Zh                              \\
Ce12.cc. & P4                               & Ll7c.cc. & PI Pb PT EL EC                & Yh06.cc. & TC CT                           \\
Ce22.cc. & \symmetric{dd}                      & Ll7d.cc. & EC                            & Yh26.cc. & JT                              \\
Cu00.cc. & Ib Id hb hd Pb                   & Ne00.cc. & \symmetric{bb} bd \symmetric{dd}    & Yi26.cc. & XC                              \\
         & Pd Xb Xd 4b 4d                   & Ne05.cc. & LL LC JL JC                   & Yi36.cc. & XC                              \\
Cu01.cc. & bP dP                            & Ne55.cc. & \symmetric{II} IT \symmetric{TT}    & Yk36.cc. & JC                              \\
Cu02.cc. & Id hd Pd Xd 4d                   & Se00.cc. & \symmetric{TT}                   & Yo36.cc. & JC                              \\
Cu20.cc. & 4b 4d                            & Se01.cc. & LL LJ CL CJ                   & Yz06.cc. & TC                              \\
Cu21.cc. & dP                               &          & EL EJ JL JJ                   & Yz60.cc. & TC                              \\
Cu22.cc. & 4d                               & Se03.cc. & LZ CZ EZ JZ                   & Yz62.cc. & TJ CX                           \\
La16.cc. & LL LJ EL EJ JL JJ                & Se33.cc. & \symmetric{77}                   & Yz63.cc. & CX                              \\
La17.cc. & IP PP XP                         & Se66.cc. & \symmetric{TT}                                                   && \\
\hline
\end{tabular}
\end{table}

\begin{table}
\centering
\caption{Vertex-edge-gated curves \curvename{C}, \curvename{L} or \curvename{N}}\label{tab:vertexedgegated1}
\noindent\begin{tabular}{|>{\footnotesize\ttfamily}l@{$[$}>{\footnotesize\ttfamily}l@{$][$}>{\footnotesize\ttfamily}l@{$]$ }|}
\hline
Ca00.cr.&I3&bCPJ47\\
Ca00.cr.&LhTX&EdZ9\\
Ca00.cr.&bCPJ47&I3\\
Ca00.cr.&EdZ9&LhTX\\
Ca00.cv.&I3&EdZ9\\
Ca00.cv.&LhTX&I3\\
Ca00.cv.&bCPJ47&LhTX\\
Ca00.cv.&EdZ9&bCPJ47\\
Ca01.cr.&bCPJ47&b\\
Ca01.cv.&bCPJ47&Lh\\
Ca01.cv.&EdZ9&I\\
Ca01.rc.&I3&ICPJ4\\
Ca01.rc.&LhTX&TX\\
Ca01.vc.&bCPJ47&Ed\\
Ca01.vc.&EdZ9&CPJ43\\
Ca02.cr.&I3&47\\
Ca02.cr.&LhTX&dZ\\
Ca02.cv.&I3&dZ\\
Ca02.cv.&EdZ9&47\\
Ca11.cr.&TXZ9&Lh\\
Ca11.cr.&CPJ43&b\\
Ca11.cr.&7&I\\
Ca11.cv.&ICPJ4&Lh\\
Ca11.cv.&LhEd&b\\
Ca11.cv.&TX&I\\
Ca12.cr.&Ed&dZ\\
Ca12.cv.&TX&47\\
Cd00.ct.&I3&EdZ9\\
Cd00.ct.&bCPJ47&LhTX\\
Cd00.cv.&I3&EdZ9\\
Cd00.cv.&bCPJ47&LhTX\\
Cd01.ct.&bCPJ47&LhEd\\
Cd01.cv.&bCPJ47&Lh\\
Cd01.tc.&LhTX&CPJ43\\
Cd01.vc.&LhTX&ICPJ4\\
Cd02.cv.&I3&dZ\\
Cd11.ct.&CPJ43&LhEd\\
Cd11.ct.&7&TX\\
Cd11.cv.&ICPJ4&Lh\\
Ce00.cr.&LhTX&EdZ9\\
Ce00.cr.&bCPJ47&I3\\
Ce00.ct.&bCPJ47&EdZ9\\
Ce00.ct.&EdZ9&I3\\
Ce01.cr.&I3&I\\
Ce01.cr.&bCPJ47&b\\
Ce01.cr.&EdZ9&Lh\\
Ce01.ct.&I3&Ed\\
Ce01.ct.&LhTX&CPJ43\\
Ce01.ct.&bCPJ47&TXZ9\\
Ce01.ct.&EdZ9&7\\
\hline
\end{tabular}
\noindent\begin{tabular}{|>{\footnotesize\ttfamily}l@{$[$}>{\footnotesize\ttfamily}l@{$][$}>{\footnotesize\ttfamily}l@{$]$ }|}
\hline
Ce01.rc.&I3&CPJ43\\
Ce01.rc.&LhTX&TXZ9\\
Ce01.rc.&bCPJ47&7\\
Ce01.rc.&EdZ9&Ed\\
Ce01.tc.&I3&TX\\
Ce01.tc.&LhTX&b\\
Ce01.tc.&bCPJ47&LhEd\\
Ce01.tc.&EdZ9&ICPJ4\\
Ce02.cr.&LhTX&dZ\\
Ce11.cr.&CPJ43&b\\
Ce11.ct.&ICPJ4&TXZ9\\
Ce11.ct.&TX&7\\
Ce12.cr.&Ed&dZ\\
Ce12.cr.&7&47\\
Ck00.cr.&bCPJ47&I3\\
Ck00.cv.&bCPJ47&EdZ9\\
Ck01.cr.&bCPJ47&7\\
Ck01.cr.&EdZ9&3\\
Ck01.cv.&bCPJ47&Z9\\
Ck01.rc.&I3&ICPJ4\\
Ck01.rc.&LhTX&b\\
Ck01.vc.&bCPJ47&7\\
Ck01.vc.&EdZ9&CPJ43\\
Ck11.cr.&ICPJ4&7\\
Ck11.cr.&LhEd&Z9\\
Ck11.cr.&TX&3\\
Ck11.cv.&TXZ9&7\\
Ck11.cv.&CPJ43&Z9\\
Ck11.cv.&7&3\\
Ck12.cr.&b&hT\\
Ck12.cv.&7&bC\\
Cu00.ct.&I3&I3\\
Cu00.ct.&bCPJ47&EdZ9\\
Cu00.cv.&I3&bCPJ47\\
Cu00.cv.&bCPJ47&EdZ9\\
Cu00.rc.&I3&bCPJ47\\
Cu00.rc.&bCPJ47&LhTX\\
Cu00.tc.&LhTX&EdZ9\\
Cu00.tc.&EdZ9&bCPJ47\\
Cu01.ct.&I3&7\\
Cu01.ct.&LhTX&CPJ43\\
Cu01.ct.&bCPJ47&TXZ9\\
Cu01.ct.&EdZ9&Ed\\
Cu01.cv.&I3&3\\
Cu01.cv.&bCPJ47&Z9\\
Cu01.cv.&EdZ9&7\\
Cu01.rc.&I3&CPJ43\\
Cu01.rc.&LhTX&TXZ9\\
Cu01.rc.&bCPJ47&Ed\\
Cu01.rc.&EdZ9&7\\
\hline
\end{tabular}
\noindent\begin{tabular}{|>{\footnotesize\ttfamily}l@{$[$}>{\footnotesize\ttfamily}l@{$][$}>{\footnotesize\ttfamily}l@{$]$ }|}
\hline
Cu01.tc.&I3&b\\
Cu01.tc.&LhTX&TX\\
Cu01.tc.&bCPJ47&LhEd\\
Cu01.tc.&EdZ9&ICPJ4\\
Cu02.cv.&I3&bC\\
Cu10.ct.&ICPJ4&EdZ9\\
Cu10.cv.&CPJ43&EdZ9\\
Cu10.rc.&b&bCPJ47\\
Cu10.tc.&TXZ9&bCPJ47\\
Cu11.ct.&ICPJ4&TXZ9\\
Cu11.ct.&TX&Ed\\
Cu11.cv.&CPJ43&Z9\\
Cu11.rc.&b&CPJ43\\
Cu11.tc.&TXZ9&ICPJ4\\
Cu11.tc.&7&b\\
Cu20.rc.&47&LhTX\\
Cu21.rc.&47&Ed\\
Cu21.rc.&dZ&7\\
La11.cr.&9&hTX\\
La11.cr.&3&bCJ\\
La11.cv.&I&9\\
La11.cv.&L&3\\
La13.cv.&I&I\\
La16.cr.&9&TX\\
La16.cr.&3&CJ\\
La16.ct.&I&9\\
La16.ct.&L&3\\
La17.cr.&9&9\\
La17.cr.&3&3\\
La17.ct.&I&TX\\
La17.ct.&L&CJ\\
La17.rc.&hTX&h\\
La17.rc.&bCJ&b\\
La17.tc.&hTX&7\\
La17.tc.&bCJ&Z\\
La1c.ct.&9&E\\
La1c.ct.&3&LhT\\
La1d.ct.&3&Lh\\
La37.tc.&X&7\\
La67.rc.&TX&h\\
La67.rc.&CJ&b\\
La67.vc.&h&7\\
La67.vc.&b&Z\\
La77.cr.&7&I\\
La77.cr.&Z&L\\
La7c.cv.&7&bCP\\
La7c.cv.&Z&I\\
La7d.cv.&7&CP\\
Ll16.ct.&I&9\\
Ll16.cv.&3&Z\\
\hline
\end{tabular}
\noindent\begin{tabular}{|>{\footnotesize\ttfamily}l@{$[$}>{\footnotesize\ttfamily}l@{$][$}>{\footnotesize\ttfamily}l@{$]$ }|}
\hline
Ll17.ct.&I&TX\\
Ll17.rc.&bCJ&b\\
Ll17.tc.&hTX&7\\
Ll1c.ct.&9&bCP\\
Ll1c.ct.&3&LhT\\
Ll1c.cv.&I&bCP\\
Ll1c.cv.&L&LhT\\
Ll1d.ct.&3&Lh\\
Ll1d.cv.&I&CP\\
Ll37.tc.&X&7\\
Ll67.rc.&CJ&b\\
Ll67.vc.&h&7\\
Ll77.cr.&b&3\\
Ll7c.cv.&h&dZ9\\
Ll7c.cv.&b&J47\\
Ll7d.cv.&b&J4\\
Na00.cr.&TX&Ed\\
Na00.cr.&Ed&TX\\
Na00.cv.&TX&I3\\
Na00.cv.&CPJ4&Lh\\
Na00.cv.&Ed&b7\\
Na01.cv.&TX&b\\
Na01.cv.&CPJ4&Lh\\
Na01.rc.&Ed&Ed\\
Na01.vc.&I3&TX\\
Na01.vc.&Z9&CPJ4\\
Na05.cv.&Ed&hT\\
Na11.cv.&TX&I\\
Na11.cv.&CPJ4&Lh\\
Na11.cv.&Ed&b\\
Na15.cv.&TX&hT\\
Na15.cv.&Ed&bC\\
Ne00.cr.&Ed&TX\\
Ne00.ct.&TX&b7\\
Ne00.ct.&CPJ4&Z9\\
Ne01.ct.&TX&3\\
Ne01.ct.&CPJ4&Z9\\
Ne01.ct.&Ed&7\\
Ne01.rc.&TX&TX\\
Ne01.rc.&Ed&Ed\\
Ne01.tc.&I3&TX\\
Ne01.tc.&b7&Ed\\
Ne01.tc.&Z9&CPJ4\\
Ne05.ct.&TX&dZ\\
Ne05.ct.&Ed&47\\
Ne11.ct.&CPJ4&Z9\\
Ne11.ct.&Ed&3\\
Ne15.ct.&TX&47\\
\multicolumn{3}{|c|}{}\\
\multicolumn{3}{|c|}{}\\
\hline
\end{tabular}
\end{table}

\begin{table}
\centering
\caption{Vertex-edge-gated curves with partition \curvename{S} or \curvename{Y}}\label{tab:vertexedgegated2}
\noindent\begin{tabular}{|>{\footnotesize\ttfamily}l@{$[$}>{\footnotesize\ttfamily}l@{$][$}>{\footnotesize\ttfamily}l@{$]$ }|}
\hline
Se00.cr.&ICPJ4&Ed7\\
Se00.cr.&LhEd7&IZ9\\
Se00.ct.&bTXZ9&Ed7\\
Se00.ct.&CPJ43&IZ9\\
Se01.cr.&ICPJ4&TX\\
Se01.cr.&bZ9&Ed\\
Se01.cr.&TX3&CPJ4\\
Se01.ct.&IEd&Lh7\\
Se01.ct.&Lh7&I\\
Se01.ct.&bTXZ9&3\\
Se01.ct.&CPJ43&bZ9\\
Se01.rc.&IZ9&TX\\
Se01.rc.&bCPJ4&Ed\\
Se01.rc.&Ed7&CPJ4\\
Se01.tc.&IZ9&CPJ4\\
Se01.tc.&LhTX3&TX\\
Se01.tc.&Ed7&Ed\\
Se02.cr.&LhEd7&X\\
Se03.cr.&bZ9&J\\
Se06.cr.&LhEd7&7\\
Se06.cv.&bTXZ9&7\\
Se06.cv.&CPJ43&Z9\\
Se06.rc.&Lh3&TX3\\
Se06.rc.&bTX&CPJ4\\
Se06.vc.&bTXZ9&CPJ4\\
Se06.vc.&CPJ43&IEd\\
Se07.cv.&IEd&7\\
Se07.cv.&bTXZ9&3\\
Se07.cv.&CPJ43&Z9\\
Se07.rc.&Lh3&Ed\\
Se07.rc.&bTX&CPJ4\\
Se07.rc.&CPJ47&TX\\
Se07.vc.&Lh7&Ed\\
Se07.vc.&bTXZ9&CPJ4\\
Se07.vc.&CPJ43&TX\\
Se0b.cv.&IEd&9\\
Se0b.cv.&bTXZ9&3\\
Se11.cr.&CPJ4&TX\\
Se11.ct.&CPJ4&bZ9\\
Se11.ct.&Ed&3\\
Se12.cr.&TX&X\\
Se16.cr.&TX&7\\
Se16.cr.&Ed&3\\
Se16.cv.&TX&3\\
Se16.cv.&CPJ4&Z9\\
Se16.cv.&Ed&7\\
Se16.rc.&TX&Z9\\
Se16.rc.&CPJ4&Ed7\\
\hline
\end{tabular}
\noindent\begin{tabular}{|>{\footnotesize\ttfamily}l@{$[$}>{\footnotesize\ttfamily}l@{$][$}>{\footnotesize\ttfamily}l@{$]$ }|}
\hline
Se16.rc.&Ed&CPJ4\\
Se16.vc.&IZ9&CPJ4\\
Se16.vc.&Lh3&bTX\\
Se16.vc.&b&Lh\\
Se16.vc.&7&IEd\\
Se17.cv.&CPJ4&Z9\\
Se17.cv.&Ed&3\\
Se17.rc.&Ed&CPJ4\\
Se17.vc.&IZ9&CPJ4\\
Se17.vc.&7&TX\\
Se1b.cv.&TX&9\\
Se26.rc.&E&TX3\\
Se26.vc.&I&IEd\\
Se27.rc.&E&Ed\\
Se27.vc.&I&TX\\
Se36.rc.&P&Z9\\
Se36.vc.&L&Lh\\
Se66.cr.&TX3&b\\
Se66.ct.&IEd&TX3\\
Se66.ct.&CPJ4&Z9\\
Se67.ct.&IEd&7\\
Se67.ct.&bTX&3\\
Se67.ct.&CPJ4&Z9\\
Se67.rc.&I&TX\\
Se67.rc.&b&Ed\\
Se67.tc.&TX3&TX\\
Se67.tc.&Ed7&Ed\\
Se67.tc.&Z9&CPJ4\\
Se6b.ct.&IEd&X\\
Se6b.ct.&bTX&J\\
Se77.ct.&TX&7\\
Se77.ct.&CPJ4&Z9\\
Se7b.ct.&Ed&J\\
Si00.cr.&ICPJ4&Ed7\\
Si00.cr.&LhEd7&IZ9\\
Si00.cr.&bZ9&bCPJ4\\
Si00.cr.&TX3&LhTX3\\
Si00.ct.&IEd&bCPJ4\\
Si00.ct.&Lh7&LhTX3\\
Si00.ct.&bTXZ9&Ed7\\
Si00.ct.&CPJ43&IZ9\\
Si01.cr.&ICPJ4&TX\\
Si01.cr.&bZ9&CPJ4\\
Si01.ct.&Lh7&Lh7\\
Si01.ct.&CPJ43&bZ9\\
Si01.rc.&LhTX3&Ed\\
Si01.rc.&Ed7&CPJ4\\
\multicolumn{3}{|c|}{}\\
\hline
\end{tabular}
\noindent\begin{tabular}{|>{\footnotesize\ttfamily}l@{$[$}>{\footnotesize\ttfamily}l@{$][$}>{\footnotesize\ttfamily}l@{$]$ }|}
\hline
Si01.tc.&IZ9&CPJ4\\
Si01.tc.&bCPJ4&TX\\
Si02.cr.&LhEd7&X\\
Si02.cr.&TX3&J\\
Si03.cr.&LhEd7&X\\
Si03.cr.&TX3&J\\
Si06.cr.&LhEd7&7\\
Si06.cr.&bZ9&3\\
Si06.cr.&TX3&Z9\\
Si06.cv.&Lh7&3\\
Si06.cv.&bTXZ9&7\\
Si06.cv.&CPJ43&Z9\\
Si06.tc.&IEdZ9&Lh\\
Si06.tc.&Lh3&CPJ4\\
Si06.tc.&bTX&IEd\\
Si06.tc.&CPJ47&bTX\\
Si06.vc.&ICPJ4&TX3\\
Si06.vc.&LhEd7&CPJ4\\
Si06.vc.&bZ9&Z9\\
Si06.vc.&TX3&Ed7\\
Si07.cv.&Lh7&7\\
Si07.cv.&CPJ43&Z9\\
Si07.tc.&Lh3&CPJ4\\
Si07.tc.&CPJ47&Ed\\
Si07.vc.&LhEd7&CPJ4\\
Si07.vc.&TX3&TX\\
Si0b.cv.&bTXZ9&3\\
Si11.cr.&CPJ4&TX\\
Si11.ct.&CPJ4&bZ9\\
Si12.cr.&Ed&J\\
Si13.cr.&TX&X\\
Si16.cr.&Ed&Z9\\
Si16.cv.&CPJ4&Z9\\
Si16.tc.&Lh7&CPJ4\\
Si16.tc.&bZ9&Lh\\
Si16.vc.&IZ9&Z9\\
Si16.vc.&Lh3&CPJ4\\
Si17.cv.&CPJ4&Z9\\
Si17.tc.&Lh7&CPJ4\\
Si17.vc.&Lh3&CPJ4\\
Si1b.cv.&Ed&3\\
Si26.vc.&9&Ed7\\
Si26.vc.&3&TX3\\
Si27.vc.&9&TX\\
Si36.vc.&9&Ed7\\
Si36.vc.&3&TX3\\
Si37.vc.&3&Ed\\
\multicolumn{3}{|c|}{}\\
\hline
\end{tabular}
\noindent\begin{tabular}{|>{\footnotesize\ttfamily}l@{$[$}>{\footnotesize\ttfamily}l@{$][$}>{\footnotesize\ttfamily}l@{$]$ }|}
\hline
Yh06.ct.&h7&Ed\\
Yh06.ct.&bZ&P4\\
Yh06.tc.&TX&L\\
Yh06.tc.&CJ&I\\
Yh06.vc.&TX&3\\
Yh06.vc.&CJ&9\\
Yh26.vc.&X3&9\\
Yh26.vc.&Z9&3\\
Yi00.cr.&bZ&L3\\
Yi00.ct.&h7&P4\\
Yi02.cr.&h7&Z9\\
Yi02.cr.&bZ&J4\\
Yi03.cr.&h7&Z9\\
Yi06.cr.&bZ&Z\\
Yi06.tc.&CJ&L\\
Yi06.vc.&TX&9\\
Yi26.vc.&J4&3\\
Yi26.vc.&Z9&9\\
Yi36.vc.&J4&3\\
Yk03.cr.&bZ&b\\
Yk03.cv.&h7&T\\
Yk36.vc.&d&3\\
Yo36.vc.&d&3\\
Yz00.cr.&bZ&I9\\
Yz00.cv.&h7&CJ\\
Yz00.rc.&L3&h7\\
Yz00.tc.&P4&bZ\\
Yz02.cr.&h7&Z9\\
Yz02.cr.&bZ&X3\\
Yz02.cv.&h7&X3\\
Yz02.cv.&bZ&J4\\
Yz03.cr.&h7&Z9\\
Yz03.cv.&bZ&J4\\
Yz06.cr.&bZ&7\\
Yz06.ct.&h7&Ed\\
Yz06.tc.&CJ&I\\
Yz06.vc.&TX&3\\
Yz20.rc.&J4&h7\\
Yz20.rc.&Z9&bZ\\
Yz26.vc.&J4&9\\
Yz26.vc.&Z9&3\\
Yz60.ct.&9&Ed\\
Yz60.cv.&L&P4\\
Yz60.rc.&Z&h7\\
Yz62.cv.&I&CP\\
Yz62.cv.&L&IE\\
Yz63.cv.&I&CP\\
\multicolumn{3}{|c|}{}\\
\hline
\end{tabular}
\end{table}

\begin{table}
\centering
\caption{Gate sequences for vertex-facet-gated, edge-gated, and facet-gated curves.}\label{tab:edgegated}
\begin{tabular}{|>{\footnotesize\ttfamily}l>{\footnotesize\ttfamily}l>{\footnotesize\ttfamily}l>{\footnotesize\ttfamily}l|}
\hline
\multicolumn{4}{|c|}{vertex-facet-gated gate sequences (256 curves per gate sequence)} \\
\hline
Cl00.cf.ee & Cl00.cf.ef & Cl00.cf.fe & Cl00.cf.ff \\
\hline
\hline
\multicolumn{4}{|c|}{edge-gated curves (1 curve per gate sequence)} \\
\hline
Cd00.rt.Ib & Cd00.rt.Eb & Cd00.rv.Xb & Cd00.rv.3b \\
Cd00.rt.IC & Cd00.rt.EC & Cd00.rv.XC & Cd00.rv.3C \\
Cd00.rt.IP & Cd00.rt.EP & Cd00.rv.XP & Cd00.rv.3P \\
Cd00.rt.IX & Cd00.rt.EX &            &            \\
Cd00.rt.I3 & Cd00.rt.E3 &            &            \\
\hline
\hline
\multicolumn{4}{|c|}{facet-gated curves (1 curve per gate sequence)} \\
\hline
Ca00.gs &&& \\
\hline
\end{tabular}
\end{table}

\begin{figure}[b]
\centering
\includegraphics[scale=0.9,page=1]{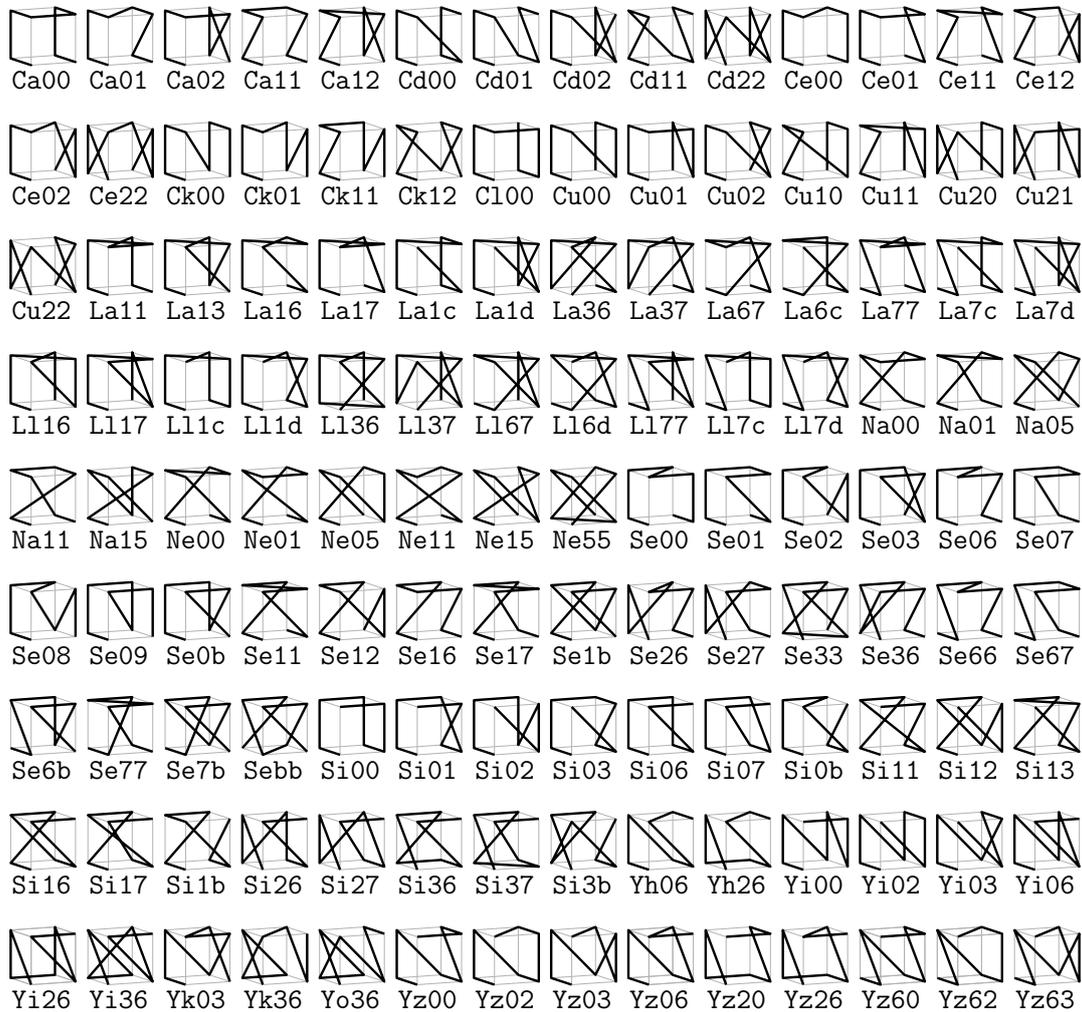}
\caption{There are 472 possible base patterns (see Table~\ref{tab:octantorders}), but only the 126 patterns shown above can be realized by Hilbert curves (see Tables~\ref{tab:vertexgated}--\ref{tab:edgegated}).}
\label{fig:realizablebasepatterns}
\end{figure}

\clearpage
\section{Analytical confirmation of certain observations}\label{apx:verifyobservations}

\begin{theorem}\label{thm:noX}
No three-dimensional Hilbert curve follows partition \curvename{X}.
\end{theorem}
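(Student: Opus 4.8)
The plan is to rule out partition \curvename{X} by going through the six possible gate types one by one, using the results of Section~\ref{sec:inventory}, after first recording what partition \curvename{X} forces on octant adjacencies. Put the hypothetical \curvename{X}-traversal $\tau$ in canonical form (since ``following partition \curvename{X}'' is invariant under rotation, reflection and reversal, we may assume $\tau$ itself is canonical): the first four octants $C_1,\dots,C_4$, in order, are the octants at the cube vertices $\tfrac12(-1,-1,-1),\tfrac12(+1,+1,-1),\tfrac12(-1,+1,+1),\tfrac12(+1,-1,+1)$, these four form one of the two ``tetrahedral'' quadruples of mutually edge-adjacent octants (equivalently, the quadruple of octants whose cube vertex has an even number of positive coordinates), and $C_5,\dots,C_8$ form the other. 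Two recorded facts will be used throughout: (i) any two octants lying in the same half are edge-adjacent, i.e.\ differ in exactly two coordinates, so every connecting gate $\gate1,\gate2,\gate3$ (and $\gate5,\gate6,\gate7$) lies on a segment two of whose coordinates equal $0$; and (ii) $\gate0=\tau(0)$ lies in the corner octant $C_1=[-\tfrac12,0]^3$, while $C_8$ sits at a cube vertex with an odd number of positive coordinates.

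Four of the six gate types collapse immediately. Edge-facet-gated curves do not exist (Theorem~\ref{thm:edge-facet-gates}). The only facet-gated curve is \curvename{Ca00.gs} (Theorem~\ref{thm:facet-facet-gates}), which has a \curvename{C}-partition. Every vertex-facet-gated curve has base pattern \curvename{Cl00} (the no-turns analysis of Section~\ref{sec:vertexfacetgated}) and every edge-gated curve has base pattern \curvename{Cd00} (Section~\ref{sec:edgegated}); neither base pattern is an \curvename{X}-pattern. So if $\tau$ follows partition \curvename{X} it must be vertex-gated or vertex-edge-gated.

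For vertex-gated $\tau$, the gate $\gate0$ must be the unique cube vertex of $C_1$, namely $(-\tfrac12,-\tfrac12,-\tfrac12)$, and $\gate8$ is the cube vertex at which $C_8$ sits. By Theorem~\ref{thm:vertex-vertex-gates}, $\tau$ is edge-crossing or facet-crossing. If facet-crossing, then $\gate0$ and $\gate8$ differ in exactly two coordinates, so $\gate8$ still has an odd number of negative coordinates, i.e.\ an even number of positive ones; hence $C_8$ lies in the same tetrahedral quadruple as $C_1$, contradicting $C_8$ being in the second half. If edge-crossing, then within $C_1$ the gates $\gate0$ and $\gate1$ are the endpoints of an edge of $C_1$, hence differ in exactly one coordinate, so $\gate1$ is a midpoint of a cube edge at $(-\tfrac12,-\tfrac12,-\tfrac12)$ and has exactly one zero coordinate; but by fact~(i), $\gate1\in C_1\cap C_2$ has two zero coordinates --- contradiction. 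So no vertex-gated curve follows partition \curvename{X}.

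The remaining, and hardest, case is vertex-edge-gated. Here Theorem~\ref{thm:vertex-edge-gates} supplies strong local structure --- the traversal is reversed in every second octant, and inside each octant the image of the vertex gate is an octant vertex while the image of the edge gate is the midpoint of an octant edge, the two lying on a common octant facet but not on a common octant edge. The plan is to propagate this from the entrance inward: $\gate0$ is either $(-\tfrac12,-\tfrac12,-\tfrac12)$ or a midpoint of a cube edge at that vertex lying in $C_1$, and from there the admissible positions of $\gate1$, then $\gate2$, then $\gate3$ (and, if needed, $\gate4$ together with the resulting forced identity of $C_8$) are each cut down to a very short list by the two requirements ``a vertex or an edge-midpoint of $C_i$ in the Theorem~\ref{thm:vertex-edge-gates} family'' and ``two coordinates equal to $0$'' (fact~(i)); these lists are then shown to be jointly unsatisfiable within the first few octants. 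I expect this bookkeeping to be the main obstacle: unlike the other five gate types, the vertex-edge configuration is loose enough that no single parity or orientation invariant settles it, so the argument, while entirely elementary, is an explicit octant-by-octant case check of where the connecting gates can sit.
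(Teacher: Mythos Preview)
Your argument is correct and complete for five of the six gate types, and for those it is essentially the same as the paper's (you quote the classification results of Section~\ref{sec:inventory} for the edge-, facet-, vertex-facet-, and edge-facet-gated cases where the paper argues directly from geometry, but either route works). The genuine gap is the vertex-edge-gated case, which you leave as a plan for an ``explicit octant-by-octant case check'' that you never carry out.

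That case is not hard; in fact your own fact~(i) finishes it in one line, exactly as for the edge-crossing vertex-gated case. Reverse the curve if necessary so that the vertex gate is the entrance (partition \curvename{X} is preserved under reversal modulo a reflection). Then $\gate0$ is the cube vertex $(-\tfrac12,-\tfrac12,-\tfrac12)$, which is an octant vertex of $C_1$, so inside $C_1$ it is the image of the vertex gate and $\gate1$ is the image of the edge gate, i.e.\ the midpoint of an octant edge of $C_1$. By Theorem~\ref{thm:vertex-edge-gates}, $\gate1$ lies on a common octant facet with $\gate0$; the three facets of $C_1$ through $\gate0$ are $\{x=-\tfrac12\}$, $\{y=-\tfrac12\}$, $\{z=-\tfrac12\}$, and every octant-edge midpoint on one of these facets has exactly one coordinate equal to $0$. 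This contradicts fact~(i), which forces $\gate1\in C_1\cap C_2$ to have two zero coordinates. This is precisely what the paper means by ``by an analogous argument'': the distance constraint from Theorem~\ref{thm:vertex-edge-gates} plays the same role that the edge-length constraint played in the edge-crossing case, and no further octants need to be examined.
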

\begin{proof}
With partition \curvename{X}, each pair of octants in the first half of the traversal shares an octant edge, but no octant facet. This immediately rules out curves with facet gates.

Edge-crossing vertex-gated curves are not possible, since neither of the vertices of the octant edge that is shared by the first and the second octant is on a common octant edge with the entrance gate (the outer vertex) of the first octant. In other words, the second octant is too far from the entrance gate to be reachable by an edge-crossing vertex-gated curve in the first octant. By an analogous argument, vertex-edge-gated curves are not possible, given Theorem~\ref{thm:vertex-edge-gates}.

Facet-crossing vertex-gated curves are not possible, since in such curves, the last octant is opposite to the first octant with respect to a facet diagonal, but with partition \curvename{X}, all such octants are in the first half of the traversal. By the same argument, edge-gated curves are not possible, given Theorem~\ref{thm:edge-edge-gates}, which says that the first and the last octant must be opposite of each other with respect to a facet diagonal.

Finally, cube-crossing vertex-gated curves are not possible by Theorem~\ref{thm:vertex-vertex-gates}.
\end{proof}

\begin{theorem}\label{thm:symmetric}
All symmetric curves are vertex-gated curves whose names start with \curvename{Ca}, \curvename{Cd}, \curvename{Ce}, \curvename{La}, \curvename{Ne} or \curvename{Se}.
\end{theorem}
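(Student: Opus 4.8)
The plan is to prove Theorem~\ref{thm:symmetric} in two stages: first show that a symmetric curve must be vertex-gated, and then determine which base pattern prefixes can occur. For the first stage, recall from Section~\ref{sec:octantproperties} that a symmetric traversal is order-preserving, and note that a symmetry of a continuous curve $\tau$ is realized by a rotary reflection $\gamma$ with $\tau(t) = \gamma(\tau(1-t))$; in particular $\gamma$ maps the entrance gate to the exit gate. Since $\gamma$ is an isometry of the cube, it maps faces to faces of the same dimension, so the two gates have the same type: the curve is vertex-gated, edge-gated, facet-gated, or $\ldots$ wait, it could also be of a mixed type only if $\gamma$ could map a vertex to an edge-midpoint, which is impossible. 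So a symmetric curve is vertex-gated, edge-gated, or facet-gated. Now I would eliminate the edge-gated and facet-gated cases by inspection: there is only one facet-gated curve, \curvename{Ca00.gs}, which is not symmetric (as noted in Section~\ref{sec:facetgated}, it is ``not symmetric''), and there are only sixteen edge-gated curves, all with base pattern \curvename{Cd00} (Theorem~\ref{thm:edge-edge-gates}); one checks directly from Table~\ref{tab:edgegated} — or using the fact that symmetric curves must be order-preserving while all edge-gated curves require reversals in alternate octants (from the proof of Lemma~\ref{lem:edge-edge-gates-distance}) — that none of the sixteen is symmetric. Hence all symmetric curves are vertex-gated.

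For the second stage, I would use Table~\ref{tab:vertexgated} together with the characterization of symmetric curves from Section~\ref{sec:edgecrossing}: a symmetric vertex-gated curve arises exactly from a symmetric gate sequence $Pcmm.\curvename{cc}.ss$ (with $c$ symmetric and $m=n$) and has a name of the form $P\curvename{a}mm.\curvename{cc}.ss.\curvename{II}.qq$ for edge-crossing curves, or similarly with transformation \curvename{d} or \curvename{e} for facet-crossing curves. So it suffices to read off the first two symbols $Pc$ of every symmetric gate sequence listed in boldface in Table~\ref{tab:vertexgated}. Scanning the edge-crossing block, the boldface entries are \curvename{Ca00.cc}, \curvename{Ca11.cc}, \curvename{La11.cc}, \curvename{La77.cc}; scanning the facet-crossing block, they are \curvename{Cd00.cc}, \curvename{Cd11.cc}, \curvename{Cd22.cc}, \curvename{Ce00.cc}, \curvename{Ce22.cc}, \curvename{Ne00.cc}, \curvename{Ne55.cc}, \curvename{Se00.cc}, \curvename{Se33.cc}, \curvename{Se66.cc}, \curvename{Sebb.cc}. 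The distinct two-letter prefixes among all of these are exactly \curvename{Ca}, \curvename{Cd}, \curvename{Ce}, \curvename{La}, \curvename{Ne}, \curvename{Se}, which is the claim. I would also remark that by construction of the naming scheme (Section~\ref{sec:basepatternames}), symmetric base patterns require $c$ to be one of the symmetric transformations \curvename{a}, \curvename{d}, \curvename{e}, and this is consistent with the observed second letters.

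The main obstacle is making the first stage rigorous without simply appealing to the case analysis of Section~\ref{sec:inventory} for every gate type. The subtle point is that a \emph{symmetric} curve need not have its two gates of the same type a priori unless one argues carefully that the symmetry transformation $\gamma$ is a genuine isometry of the unit cube (not merely of the curve's image) — this follows because $\gamma$ must map the unit cube onto itself (the curve fills it) and is an isometry, hence is one of the 48 cube symmetries, and therefore permutes the faces of each dimension. Once that is in hand, ruling out edge- and facet-gated symmetric curves is a short finite check, and the rest is bookkeeping against Table~\ref{tab:vertexgated}. A secondary (but genuinely routine) obstacle is confirming that \curvename{Ca00.gs} is indeed asymmetric and that none of the sixteen \curvename{Cd00.r}$*$ curves is symmetric; I would handle this by the order-preservation argument rather than by examining each curve, since symmetric $\Rightarrow$ order-preserving and these curves provably are not order-preserving.
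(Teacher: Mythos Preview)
Your first stage is essentially the same as the paper's: both argue that the symmetry $\gamma$ is a cube isometry, hence the two gates have the same type, and then eliminate the facet-gated case (unique curve, asymmetric) and the edge-gated case. One caution: your alternative order-preserving argument for the edge-gated case is not well supported. In the proof of Lemma~\ref{lem:edge-edge-gates-distance} the alternation of forward and reverse copies is assumed only under the hypothesis $a\neq z$, and the lemma then concludes $a=z=1/3$, so the alternation is not forced a posteriori. The paper instead dispatches the edge-gated case geometrically: by Theorem~\ref{thm:edge-edge-gates} one gate lies on the facet shared by the first and last octant while the other lies on an edge orthogonal to that facet, so the two gates are not related by any symmetry of the cube that could reverse the curve.

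Your second stage takes a genuinely different route. You read the prefixes off the boldface entries of Table~\ref{tab:vertexgated}, i.e., off the computer-generated enumeration of symmetric gate sequences. This is correct but defeats the purpose of the appendix, which is titled ``Analytical confirmation of certain observations'' and aims to re-derive such facts without relying on the exhaustive search. The paper's argument is structural: for a vertex-gated curve, a short parity computation on coordinate sums shows $\gate 4$ must be a facet midpoint of the unit cube; symmetry forces $\gate 4$ to be a fixed point of $\gamma$; only the transformations \curvename{a}--\curvename{e} (and the non-symmetric \curvename{q}--\curvename{s}) fix a facet midpoint; combined with Theorem~\ref{thm:noX} this yields the list, after a short ad hoc elimination of \curvename{Na}. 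The paper's argument thus explains \emph{why} only these six prefixes occur, whereas yours merely certifies that they do. Your closing remark that ``symmetric base patterns require $c$ to be one of \curvename{a}, \curvename{d}, \curvename{e}'' is also off: Table~\ref{tab:octantorders} admits symmetric base patterns with $c\in\{\curvename{k},\curvename{l},\curvename{o},\curvename{i}\}$ as well (e.g.\ \curvename{Ck}, \curvename{Ll}, \curvename{Si}); the restriction to \curvename{a}, \curvename{d}, \curvename{e} is precisely what the fixed-point argument supplies.
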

\begin{proof}
A symmetric curve must be vertex-gated, edge-gated, or facet-gated. We know there is only one facet-gated curve (Theorem~\ref{thm:facet-facet-gates}), which is asymmetric, and the gates of edge-gated curves are positioned asymmetrically (Theorem~\ref{thm:edge-edge-gates}). This only leaves vertex-gated curves to consider.

A vertex-gated curve starts at a vertex of the unit cube whose coordinates sum up to $\frac12 \pmod 1$, and the coordinate sums of the entrance and exit gates of each octant differ by $\frac12$. Hence all even-indexed gates must be at an octant vertex whose coordinates sum up to $\frac12 \pmod 1$, that is, at a vertex or at a facet midpoint of the unit cube. This holds, in particular, for $\gate 4$, which must be at a facet midpoint, because at a unit cube vertex, it could not connect two octants. If a vertex-gated curve is symmetric, then the facet midpoint $\gate 4$ must be a fixed point of the symmetry.

Only the transformations \curvename{a}--\curvename{e} and \curvename{q}--\curvename{s} in Table~\ref{tab:transformations} have facet midpoints as fixed points, but transformations \curvename{q}--\curvename{s} are not symmetric (they are not equal to their own inverse). Curves with partition \curvename{X} do not exist (Theorem~\ref{thm:noX}). As we can verify with the help of Table~\ref{tab:transformations}, this leaves \curvename{Ca}, \curvename{Cd}, \curvename{Ce}, \curvename{La}, \curvename{Na}, \curvename{Ne} and \curvename{Se} as possible prefixes of names of symmetric curves. From these prefixes, \curvename{Na} can be ruled out because symmetric vertex-gated curves with transformation \curvename{a} must be edge-crossing, and vertex-gated, edge-crossing curves with partition \curvename{N} can easily be seen to be impossible by arguments similar to those in the proof of Theorem~\ref{thm:noX}.
\end{proof}

\begin{theorem}\label{thm:CimpliesVE}
All centred three-dimensional Hilbert curves are vertex-edge-gated.
\end{theorem}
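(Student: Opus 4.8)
The plan is to argue that a centred curve $\tau$ --- one with $\tau(1/2)$ equal to the centre of the unit cube --- must have $\gate 4$ at the centre of the cube, and then to trace through the possible gate types and show that this forces the vertex-edge-gated configuration. Since the curve is octant-based, $\tau(1/2) = \gate 4$, so being centred is exactly the condition $\gate 4 = (0,0,0)$. The first observation is that the centre of the cube is not on any face of the unit cube other than ``inside'', so in particular $\gate 4$ is an interior point. This rules out facet-gated curves immediately (in a facet-gated curve $\gate 4$ is a facet midpoint by Corollary~\ref{cor:facet-facet-gates-distance}, not the centre), and likewise edge-gated curves (Theorem~\ref{thm:edge-edge-gates} places both end gates, and hence by the induction in its proof all the $\gate i$, on edges or at distance $1/3$ from vertices, never at the centre). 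It also rules out the sole facet-gated curve.

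Next I would eliminate the remaining non-vertex-edge-gated configurations one at a time using the gate-parity/coordinate-sum bookkeeping that recurs throughout Section~\ref{sec:inventory}. For vertex-gated curves (Theorem~\ref{thm:vertex-vertex-gates}), every octant's entrance and exit gate have coordinate sums differing by $1/2 \pmod 1$, and $\gate 0$ is a cube vertex with coordinate sum $\frac12 \pmod 1$; hence $\gate 4$ has coordinate sum $\frac12 \pmod 1$ as well, so it must be a cube vertex or a facet midpoint --- never the centre, whose coordinate sum is $0$. So no vertex-gated curve is centred. The same argument applies to vertex-facet-gated curves, where again reversal alternates and $\gate 0, \gate 2, \gate 4, \gate 6, \gate 8$ are at octant vertices (by the analysis in Section~\ref{sec:vertexfacetgated}): $\gate 4$ is a cube vertex or facet midpoint, not the centre. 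Thus among the six gate-type classes catalogued in Section~\ref{sec:inventory}, only the vertex-edge-gated class can contain a centred curve --- recall that edge-facet-gated curves do not exist at all (Theorem~\ref{thm:edge-facet-gates}).

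The one place where a genuine coordinate-sum-versus-parity check is needed, rather than an appeal to an existing theorem, is the vertex-gated and vertex-facet-gated cases, but here the bookkeeping is short: in both, reversal must alternate between consecutive octants to match gate types, so the even-indexed gates all share the coordinate-sum class of $\gate 0$. I would write this out carefully once and reuse it. The main obstacle I anticipate is making sure the ``every octant changes the coordinate sum by $\tfrac12 \pmod 1$'' claim is justified cleanly in the vertex-gated and vertex-facet-gated settings: it follows because in a vertex-to-vertex (or vertex-to-facet-midpoint) transition within an octant, a fixed number of coordinates flip by $\tfrac12$, and self-similarity makes that number the same across all octants, hence its parity is fixed --- this is exactly the argument already used in the proof of Lemma~\ref{lem:edge-edge-gates-distance}, so I would cite that reasoning rather than repeat it. Once those two classes are excluded and the facet-gated and edge-gated classes are excluded by the explicit gate locations from Corollary~\ref{cor:facet-facet-gates-distance} and Theorem~\ref{thm:edge-edge-gates}, the only surviving possibility is vertex-edge-gated, which completes the proof. (As the excerpt notes after Finding~\ref{thm:coordinaterotating}, examples such as \curvename{Ca00.cv.4h} show this class is in fact realized, so the conclusion is not vacuous.)
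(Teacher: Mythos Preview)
Your overall strategy---eliminate each gate-type class except vertex-edge-gated---is sound and matches the paper's in spirit, though the paper takes a shortcut at the outset: since the centre of the cube is a vertex of every octant, $\gate 4$ being the centre forces the curve to have a vertex gate, so only the vertex-gated, vertex-edge-gated, and vertex-facet-gated classes need separate treatment. This saves the individual elimination of facet-gated, edge-gated, and edge-facet-gated curves.

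There is, however, a genuine gap in your handling of the vertex-facet-gated case. You assert that the coordinate-sum argument carries over and that $\gate 4$ must be ``a cube vertex or facet midpoint''. It is neither: the analysis in Section~\ref{sec:vertexfacetgated} shows that for the only realisable base pattern \curvename{Cl00}, $\gate 4$ lies at the \emph{midpoint of an edge} of the unit cube, whose coordinate sum is $0 \pmod 1$---the same residue class as the centre. The coordinate-sum bookkeeping therefore cannot distinguish them. The reason your transplant of the Lemma~\ref{lem:edge-edge-gates-distance} reasoning fails is that in a vertex-to-facet-centre step within an octant the displacement has components of magnitude $\tfrac12$ \emph{and} $\tfrac14$, so ``a fixed number of coordinates flip by $\tfrac12$'' is simply false here, and the per-octant coordinate-sum change modulo $1$ is not constant. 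What actually rules out the centre is the structural observation the paper uses: in a vertex-facet-gated curve the octants pair up across their shared facet gates, and the vertex gate of each pair lies on the octant facet opposite the shared one---hence on the \emph{boundary} of the unit cube. You need this argument (or a direct citation of the fact from Section~\ref{sec:vertexfacetgated} that $\gate 4$ lies on an edge), not the parity count.

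A smaller point: for vertex-gated curves you write that ``reversal must alternate between consecutive octants to match gate types''. It need not, since both gates are vertices. Your coordinate-sum conclusion for that class is still correct, but the justification is the one you give later (a fixed number of coordinates change by $\tfrac12$ per octant, by self-similarity), not alternation of direction.
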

\begin{proof}
A three-dimensional Hilbert curve is centred if and only if $\gate 4$, the exit gate of the fourth octant, is the centre of the unit cube. Since the centre of the unit cube is a vertex of each octant that touches it, a three-dimensional Hilbert curve can only be centred if it has at least one vertex gate, that is, the curve must be vertex-gated, vertex-edge-gated, or vertex-facet-gated.

As argued in the proof of Theorem~\ref{thm:symmetric}, if a curve is vertex-gated, the point $\gate 4$ halfway on the curve must be the centre of a facet of the unit cube and cannot be in the interior of the cube. In vertex-facet-gated curves, octants appear in pairs that share the octant facet that contains the facet gates, while the vertex gates are on the opposite sides of these octants, on the boundary of the unit cube. This leaves vertex-edge-gated curves as the only class of curves that may have $\gate 4$ in the centre of the unit cube.
\end{proof}

\bgroup\renewcommand{\thefinding}{\ref{fnd:metasymmetry}}
\begin{finding}
The only metasymmetric three-dimensional Hilbert curves are the curves
$\curvename{Ca00.cT}[\curvename{bC47}]$, $\curvename{Ca11.cT}[\curvename{IPJ3}]$, $\curvename{Cd00.cP}[\curvename{IPJ3}]$, $\curvename{Cd11.cP}[\curvename{bC47}]$, $\curvename{Se00.cT}[\curvename{bPJ7}]$, and $\curvename{Se66.cT}[\curvename{IC43}]$.
\end{finding}

\begin{howfound}
Unfortunately I cannot provide a non-tedious way to verify these findings at this point. Curves that are both symmetric and metasymmetric can be identified by checking the gate sequences for symmetric vertex-gated curves whose names start with \curvename{Ca}, \curvename{Cd}, \curvename{Ce}, \curvename{La}, \curvename{Ne} or \curvename{Se}, following Theorem~\ref{thm:symmetric}. As one may verify with the help of Table~\ref{tab:vertexgated} in Appendix~\ref{apx:gatesequences}, there are 28 symmetric gate sequences. In six of these, the gates are placed such that a metasymmetric curve results, provided the reflections in the third to eighth octant are chosen to agree with those in the first two octants. This results in four metasymmetric curves (corresponding to four options for  reflections in the first two octants) for each of the six gate sequences.

However, recall from Section~\ref{sec:optionalproperties} that the definition of metasymmetry does not require the curve to be symmetric: the curve may also be ``pseudo-symmetric'', that is, the correspondence between the first and the second half of the curve may have the form of a similarity transformation that does not equal its own inverse. The reader may verify that pseudo-symmetric curves that are not vertex-gated cannot exist, and that for vertex-gated pseudo-symmetric curves, much of the arguments in the proof of Theorem~\ref{thm:symmetric} still applies, to the effect that no pseudo-symmetric three-dimensional Hilbert curves exist that are not truly symmetric.
\end{howfound}
\egroup

\end{document}